\documentclass[manuscript,screen]{acmart}

\usepackage{mathtools}
\usepackage{xfrac}
\usepackage{physics}
\usepackage{tensor}
\usepackage{thm-restate}
\usepackage{stmaryrd}
\usepackage{booktabs}
\usepackage{tikz}
\usetikzlibrary{arrows.meta,cd,calc,positioning,decorations.pathmorphing,fit,shapes.misc,patterns,patterns.meta,quotes,intersections}
\usepackage{subcaption}
\usepackage{apxproof}
\usepackage{soul}
\usepackage{extarrows}
\usepackage{mathpartir}
\usepackage{centernot}
\usepackage{amsmath}

\input{macro}

\newcommand{\ltssize}{\small}
\tikzset{arrow/.style={-stealth}}
\tikzset{darrow/.style={>->}}
\tikzset{tarrow/.style={-{stealth}{stealth}}}
\tikzset{parrow/.style={-stealth,decoration={snake,amplitude=1pt,segment
length=8pt,post length=4pt},decorate,thick}}
\tikzset{conf/.style={font=\ltssize}}
\tikzset{ci/.style={font=\ltssize,draw, circle}}
\tikzset{psplit/.style={draw, circle, fill=black, inner sep=1.5pt}}
\tikzset{hsplit/.style={inner sep=0pt}}
\tikzset{csplit/.style={fill, circle, inner sep=2pt}}
\tikzset{weight/.style={font=\ltssize,midway}}
\tikzset{csarrow/.style={-stealth,decorate,decoration={zigzag,amplitude=0.7pt,segment
length=1.2mm,pre length=4pt,post length=2pt}}}

\setcopyright{acmlicensed}
\copyrightyear{2018}
\acmYear{2018}
\acmDOI{XXXXXXX.XXXXXXX}
\acmConference[Conference acronym 'XX]{Make sure to enter the correct
conference title from your rights confirmation email}{June 03--05,
2018}{Woodstock, NY}
\acmISBN{978-1-4503-XXXX-X/2018/06}

\theoremstyle{acmdefinition}
\newtheorem{fact}{Fact}
\newtheorem{assumption}{Assumption}

\begin{document}

\title{Verification of Quantum Protocols Adopting Physically Admissible Schedulers}

\author{Lorenzo Ceragioli}
\email{lorenzo.ceragioli@imtlucca.it}
\orcid{0000-0002-1288-9623}
\affiliation{%
	\institution{IMT School for Advanced Studies Lucca}
	\city{Lucca}
	\country{Italy}
}

\author{Fabio Gadducci}
\email{fabio.gadducci@unipi.it}
\orcid{0000-0003-0690-3051}
\affiliation{%
	\institution{University of Pisa}
	\city{Pisa}
	\country{Italy}}

\author{Giuseppe Lomurno}
\email{giuseppe.lomurno@phd.unipi.it}
\orcid{0009-0000-0573-7974}
\affiliation{%
	\institution{University of Pisa}
	\city{Pisa}
	\country{Italy}}

\author{Gabriele Tedeschi}
\email{gabriele.tedeschi@phd.unipi.it}
\orcid{0009-0002-5345-9141}
\affiliation{%
	\institution{IRIF, CNRS -- Université Paris Cité}
	\city{Paris}
	\country{France}}

\renewcommand{\shortauthors}{Ceragioli et al.}

%
%
%

\begin{abstract}
	Recent developments in the field of quantum communication demonstrate that secure communication protocols based on quantum features are already practical.
	Reliable verification techniques are of paramount importance for these technologies,
	given their high implementation cost and critical contexts of application.
	Extensions of process calculi such as CCS and $\pi$-calculus have been proposed in the literature, together with various notions of behavioural equivalence.
	However, their standard probabilistic models turn out to introduce some non-deterministic capabilities that are not aligned with the observational properties of physical quantum systems, leading to bisimilarity notions that distinguish processes that should be physically equivalent.
	Nonetheless, we argue that non-deterministic features are fundamental to account for inputs, environments and adversarial behaviour.
	
	To address this issue, we propose lqCCS, a process calculus that integrates concurrency, non-determinism and quantum capabilities.
	The calculus is enriched with a linear type system that enforces the no-cloning principle and resolves ambiguities in the visibility of ancillary qubits.
	We introduce a novel semantics in terms of distributions, where explicit physically admissible schedulers constrain probabilistic composition and forbid ill-defined non-deterministic moves, while preserving the expressivity needed to model real-world protocols.
	We investigate a scheduled version of saturated bisimilarity, deeming two lqCCS processes behaviourally equivalent if no observer can tell them apart.
	The adequacy of the approach is verified by lifting a known result from quantum mechanics to lqCCS, i.e. that
	equivalent processes acting on indistinguishable mixtures of quantum states are correctly recognized as bisimilar.
	
	Finally, we give an alternative semantics and a labelled bisimilarity based on a quantum generalization of probability distributions.
	This provides an equivalent characterization of our behavioural equivalence that is a congruence with respect to the parallel operator, enabling compositional reasoning without the need to explicitly check all possible contexts.
	We describe a rich class of lqCCS processes for which equivalence is decidable
	using standard techniques, and we analyse real-world quantum communication protocols.
\end{abstract}

\begin{CCSXML}
	<ccs2012>
	<concept>
	<concept_id>10003752.10003753.10003758</concept_id>
	<concept_desc>Theory of computation~Quantum computation theory</concept_desc>
	<concept_significance>500</concept_significance>
	</concept>
	<concept>
	<concept_id>10011007.10011074.10011099.10011692</concept_id>
	<concept_desc>Software and its engineering~Formal software
	verification</concept_desc>
	<concept_significance>500</concept_significance>
	</concept>
	<concept>
	<concept_id>10003033.10003039.10003041</concept_id>
	<concept_desc>Networks~Protocol correctness</concept_desc>
	<concept_significance>500</concept_significance>
	</concept>
	<concept>
	<concept_id>10003752.10003753.10003761.10003764</concept_id>
	<concept_desc>Theory of computation~Process calculi</concept_desc>
	<concept_significance>500</concept_significance>
	</concept>
	<concept>
	<concept_id>10003752.10003753.10003757</concept_id>
	<concept_desc>Theory of computation~Probabilistic computation</concept_desc>
	<concept_significance>500</concept_significance>
	</concept>
	</ccs2012>
\end{CCSXML}

\ccsdesc[500]{Theory of computation~Process calculi}
\ccsdesc[500]{Theory of computation~Quantum computation theory}
\ccsdesc[500]{Software and its engineering~Formal software verification}
\ccsdesc[100]{Networks~Protocol correctness}
\ccsdesc[100]{Theory of computation~Probabilistic computation}

\keywords{Quantum Communication, Linear Process Calculi, Behavioural
	Equivalence, Probabilistic Bisimulation.}

\received{20 February 2007}
\received[revised]{12 March 2009}
\received[accepted]{5 June 2009}

\maketitle

\section{Introduction}
The recent advances in \emph{quantum computation} and \emph{quantum communication} mark a shift in the capabilities of quantum technologies, representing what many consider the second quantum revolution~\cite{revolution}.
This wave of development promises to harness quantum phenomena, particularly superposition and entanglement, for achieving groundbreaking results in algorithmic complexity and efficient implementation of secure data transmission.

At the forefront of this technological transformation lies \emph{quantum communication}, one of the four pillars of quantum technologies~\cite{pillars}.
The unique properties of quantum systems, where information cannot be observed without stochastically altering the system state, provide physical safeguards against eavesdropping and enable the development of protocols that require exponentially less communication resources with respect to their classical counterparts~\cite{brassard2003quantum}.
Protocols have been developed targeting several use cases, such as key distribution, cryptographic coin tossing, private information retrieval, byzantine agreement~\cite{bb84,gaoQuantumPrivateQuery2019,byzantine2005}.
The benefits of these developments also impact the field of quantum algorithms.
By leveraging multiple local systems connected both with quantum and classical channels, distributed implementations of algorithms can alleviate some scalability and fidelity issues still present in state-of-the-art quantum computers.
In a rapidly evolving landscape, quantum communication channels are already practical, with both academic and real-world deployments.
The vision of a worldwide network of quantum-capable machines, the \emph{Quantum Internet}, capable of performing both cryptographic protocols and distributed computing is no longer confined to theoretical speculation~\cite{zhangFutureQuantumCommunications2022, kimblequantum2008}.

This flourishing development calls for theoretically grounded modelling and verification techniques, especially due to the substantial implementation costs of quantum technologies and their expected deployment in security-critical contexts.
However, while the research on quantum circuits is mature, the verification of interactive quantum protocols received less attention.
Following the well-established pattern of investigation for classical distributed systems with probabilistic and cryptographic features~\cite{CCS, larsenbisimulation1991, spicalc}, we aim to develop a verification framework made of three components: $(i)$ a description language for concurrent or distributed quantum processes; $(ii)$ an adequate semantics; and $(iii)$ a behavioural equivalence for comparing processes and verify their expected properties.
In line with our intention, this framework will serve as a minimal starting point, needed as a stable foundation for building more advanced tools and verification mechanisms.
The adequacy of the framework requires two properties:
the \emph{correctness} with respect to the prescriptions of quantum mechanics,
and the \emph{expressivity} needed to model concrete real-world protocols.
We thus prove the correctness of our proposal, in particular with respect to the no-cloning theorem and the uncertainty principle, so that the system state can be inspected only by performing measurements.
To guarantee the needed expressivity, we allow combining quantum and classical features, since quantum protocols rely on standard control flow and interaction with the external environment, usually modelled as non-deterministic agents.

The algebraic formalisms of process calculi, like CCS~\cite{CCS} and CSP~\cite{CSP}, have proven successful as modelling languages for classical distributed and concurrent systems.
This success has naturally led to proposals for quantum counterparts, with several quantum process calculi sharing common features to model quantum capable participants of protocols~\cite{lalireprocess2004,gaycommunicating2005,fengprobabilistic2007}.
Despite these efforts, the community has not yet reached a consensus on the standard approach for quantum process calculi, and quantum protocols continue to be predominantly described in natural language, which is both error-prone and incompatible with automated verification techniques~\cite{zoo}.

As a modelling formalism, we propose lqCCS, a process calculus
that seamlessly integrates classical communication, parallelism, and non-determinism with quantum capabilities.
We enrich our calculus with a linear type system to enforce the no-cloning principle, and to resolve ambiguities regarding the visibility of ancillary qubits~\cite{ceragioliQuantumBisimilarityBarbs2024}.
We demonstrate the expressivity of lqCCS by modelling and analysing three real-world protocols.

As a critical aspect of our modelling, we stress the necessity in our language of purely non-deterministic features, which are used to encode choices that are not specified at modelling time.
When no assumption can be made about the system behaviour, we must consider each possible alternative as a separate branch, thus expressing our properties in terms of an implicit universal quantification.
This is fundamentally different from stochastic behaviour, where we assume that the available options come from a known probability distribution, each with its own likelihood.
Non-deterministic capabilities are useful in modelling protocols where the user input is arbitrary, and in which implementation details are not made explicit.
In addition, most proposed quantum protocols are cryptographic in nature.
Therefore, we must account for attackers that are capable of tricking the system into following an adversarial execution path, even if it is unlikely when considering random probabilities.
Lastly, non-determinism helps in modelling the behaviour of parallel processes,  especially in dealing with race condition, i.e. when the sequence or timing of uncontrollable events may lead to unexpected or inconsistent results.
The latter is even more critical in the quantum setting, where concurrent transformation of entangled states can influence the behaviour of other agents even without any explicit interaction or synchronization.

Regarding semantics and behavioural equivalences, the ones of existing quantum processes calculi are based on probabilistic labelled transition systems (pLTSs) and probabilistic bisimilarity.
The choice seems natural, given the stochastic nature of quantum measurements;
however, their use proves inadequate for quantum systems due to their failure to respect the observational limitations inherent in quantum theory.
The core issues lie in the interplay between non-determinism, quantum features, and probabilistic behaviour.
Previous proposals implicitly allow resolving non-determinism according to the unknown state of qubits, thus permitting a form of errorless quantum state inspection that causes no measurement-induced disturbance.
The violation of this fundamental principle of quantum information manifests in practical terms as the presence of spurious moves in the semantics of processes, and causes behavioural equivalence to distinguish between quantum states that are indistinguishable according to quantum mechanics~\cite{kubotaapplicationnoyear,davidsonformal2012}.
To ensure that our framework strictly adheres to the prescriptions of quantum mechanics, we propose a new semantics of processes that makes explicit the scheduling choices performed by protocol participants, as done in the classical setting~\cite{chatzikokolakisBisimulationDemonicSchedulers2009}.
Then, we substitute the traditional pLTS-based semantics for one based on a labelled transition system where states hold distributions of processes.
In doing so, we employ a constrained version of probabilistic composition, which only allows composing moves characterized by the same non-deterministic choices.
These well-behaved schedulers strictly prohibit the unfeasible state inspections found in previous models, ensuring that the system state can only be accessed via proper measurement operators.

To give a natural notion of indistinguishability, we resort to saturated bisimilarity, a form of contextual equivalence that deems processes equivalent if there is no observer capable of distinguishing between the two~\cite{bonchigeneral2014}.
More in detail, termination is the only property of lqCCS processes that is immediately observable, while other properties are inspected indirectly by putting the processes in parallel with other lqCCS processes called observers.
This approach favours correctness and intuitiveness over decidability, reducing equivalence to an immediate notion of operational comparison on one hand, while at the same time requiring a universal quantification over all possible observers on the other hand.
In this context, we discuss the adequacy of our approach, and we prove that using our physically feasible schedulers is sufficient to lift a known indistinguishability results from quantum states to lqCCS processes, namely that probabilistic mixtures of quantum states are indistinguishable if they can be represented as the same density operator.
The same lifted property does not hold for the traditional approach from probabilistic systems that is often employed in quantum process calculi~\cite{dengopen2012,fengtoward2015-1,dengbisimulations2018,long24}.

Whereas contextual equivalences, like our saturated bisimilarity, have a clear operational interpretation that simplify reasoning on their correctness with respect to the real-world discriminating power, traditional labelled bisimilarities in terms of atomic observables offer an elegant mathematical formalization, and are easier to compute~\cite{bonchigeneral2014}.
Therefore, we give an alternative succinct semantics that relies on a quantum generalization of probability distributions, and we characterize
behavioural equivalence in terms of labelled bisimilarity.
Finally, we prove the two characterizations equivalent, relieving the prover from the burden of manually considering each possible distinguishing context.
Noticeably, our labelled bisimilarity is closed under parallel composition, a feature often missing in literature that permits compositional reasoning on protocols by possibly verifying the behaviour of each agent in isolation.

To demonstrate the expressivity and practical utility of our framework, we apply lqCCS to the formalization and verification of real-world cryptographic systems. Using the Quantum Coin Flip (QCF) protocol as a primary example, we illustrate how our proposed bisimilarity can be utilized to rigorously model both the implementation and the specification of the system, successfully proving essential properties such as protocol fairness, data confidentiality, and the reliable detection of malicious cheating strategies.

\paragraph*{Synopsis}
In~\autoref{sec:overview} we introduce our process calculi and exemplify our approach by analysing the quantum coin flip protocol.
In~\autoref{sec:back} we give some background about probability distributions and quantum computing.
We formally define lqCCS with schedulers in~\autoref{sec:calc}, and discuss the adequacy of saturated bisimilarity in~\autoref{sec:sat}.
In~\autoref{sec:lab} we give an alternative semantics and equivalent labelled bisimilarity.
In~\autoref{sec:real} we apply our approach to three real-world protocols.
Finally, we compare related works with ours in~\autoref{sec:rel}, and we wrap-up the paper in~\autoref{sec:conc}.
Appendices contain the full proofs of our results.

\paragraph*{Remark}
This is an extended version of~\cite{aplas24};
this paragraph discusses the novel contributions.
In~\autoref{sec:calc}, we relax our assumption about deterministic schedulers by allowing external non-determinism when receiving values on a channel.
In~\autoref{sec:sat}, we show that the problems caused by non-determinism also
affect the modelling of isolated parallel processes that do not communicate (see~\autoref{ex:broken-ftl}).
In a nutshell, the measurement-induced collapse is detected by a non-deterministic	observer as if a message was sent faster than light.
As a main novel contribution,~\autoref{sec:lab} describes a new semantics for lqCCS based on a quantum generalization of probability distributions, which internalize the equivalence stated in~\autoref{thm:propertyA}, and proves it fully abstract with respect to the probabilistic one (see~\autoref{thm:demonicfa}).
Our labelled bisimilarity requires relations to be closed for the application of arbitrary quantum superoperators,
which hinder its decidability.
Here, we prove that it cannot be omitted in general (\autoref{ex:superopneed}),
but also that it is not needed for a rich class of processes, for which our bisimilarity is decidable (\autoref{thm:demonicgfa}).
Finally, in~\autoref{sec:real} we apply our approach beyond the standard verification of equivalence between specification and implementation:
we model and verify the quantum coin flip protocol, showing how we can resort to our formal verification technique
to characterize the desired security properties of a cryptographic protocol.

\section{Preview of Our Verification Framework}\label{sec:overview}
In this section, we provide an overview of our formal approach to the verification of quantum protocols by presenting our modelling language and its intended usage on a concrete real-world example, the quantum coin flip protocol.

\subsection{The {lqCCS} Process Calculus}

As a modelling language, we propose lqCCS, a process calculus inspired by qCCS~\cite{fengbisimulation2012} where the primitive features of communication, parallelism and non-determinism of (value passing) CCS~\cite{vpCCS} are extended with quantum capabilities.

\begin{definition}[LqCCS Process Syntax]\label{def:lqccs_syntax}
	Let $\btype \ni b, \ntype \ni n, \qtype \ni q$ be atomic boolean values, natural numbers and qubit names, respectively.
	Let $\chanset \ni c$ be communication channels, and $\varset \ni x$ a set of variables.

	The syntax of lqCCS processes is defined by the productions
	\begin{align*}
		P      &
		\Coloneqq
		\nil_{\tilde{e}}
		\mid \alpha. P
		\mid \ite{e}{P}{P}
		\mid P + P
		\mid P \parallel P
		\mid P \setminus c       \\
		\alpha &
		\Coloneqq
		\tau
		\mid c?x
		\mid c!e
		\mid \E(\tilde{e})
		\mid \meas{\tilde{e}}{x} \\
		e      &
		\Coloneqq x
		\mid b
		\mid n
		\mid q
		\mid \neg e
		\mid e \lor e
		\mid e \leq e
		\mid e = e
	\end{align*}
	where $\tilde{e}$ denotes a (possibly empty) tuple $e_1, \ldots, e_n$ of expressions.
\end{definition}

Processes can perform atomic operations, written as prefixes and followed by processes representing their continuation.
As it is common in CCS, we assume that a process can
perform an invisible internal action with $\tau$;
receive over a channel $c$ a value to bind to a variable $x$ with $c?x$;
send a value obtained by evaluating an expression $e$ over a channel $c$ with $c!e$.
Note that channels abstractly model concrete communication means for classical and quantum values: the latter allow propagating qubit values (in some physical representation like light polarization) and are guaranteed to preserve their quantum features like superposition.
In addition, a process can exploit its quantum capabilities to manipulate the state of some qubits via trace-preserving superoperators and
measurement.
We allow processes to modify the state of qubits denoted by $\tilde{e}$ (where each expression is either a qubit $q$ or a variable $x$) by applying any trace-preserving superoperator $\E$ with $\E(\tilde{e})$.
For example, the process can update the state of qubits by performing unitary transformations like $\cnot(q_0,q_1)$, the controlled not over $q_0$, and $H(q_0)$, the Hadamard transformation over $q_0$.
We write $\meas{\tilde{e}}{x}$ for the operation that performs the measurement $\M$ to the qubits of $\tilde{e}$: the classical outcome is stored in the number variable $x$, and the state of the qubits is updated accordingly.
For example, the process can apply $\mstd, \mhad$ or $\mhadi$, i.e. the projective measurements in the bases $\{ \kz, \ko \}, \{ \kpl, \km \}$ and $\{ \kip, \kim \}$.
Finally, the process $\nil_{\tilde{e}}$ is an idle deadlock process that maintains ownership of the qubits in ${\tilde{e}}$.
When $\tilde{e}$ is the empty sequence, we write $\nil$ to stress the equivalence with the nil process of standard CCS\@.

Processes can be composed with parallel, non-deterministic, and boolean operators.
With the parallel composition operator ($P \parallel Q$) each component can act independently or synchronize through communication over the same channel.
The non-deterministic choice operator ($P + Q$) allows the process to select a specific component behaviour according to a yet unspecified policy.
When a choice is performed according to a boolean expression the processes are composed with the conditional operator ($\ite{e}{P}{Q}$).
Finally, the communication between two or more channels can be isolated from other processes with the restriction operator ($P \setminus c$), meaning that the channel $c$ is private and can be used only inside $P$.
For brevity, we sometimes write $\ite{e}{\alpha_1.R}{\alpha_2.R}$ as $(\ite{e}{\alpha_1}{\alpha_2}).R$, and we often omit the trailing $\nil$, writing, e.g. $a!1$ for $a!1.\nil$.

To comply with the no-cloning theorem, we enforce a single ownership principle, forbidding parallel processes from sharing qubit names.
Because of that, $\nil_{\tilde{e}}$ make the qubits in ${\tilde{e}}$ inaccessible to other processes: we say that it \emph{discards} them.
Explicitly writing the discarded qubits allows distinguishing those qubits that are hidden to the environment from the ones that can be accessed by external observers, solving an ambiguity in related works, see~\autoref{sec:rel}.

\begin{figure}
	\begin{tabular}{c c}
		$\begin{aligned}
				 \proc{Alice} \Coloneqq \    & \measrand{s}.                                                                                                \\[-.1cm]
				                             & \ite{s = 0}
				 {\un{H}(q).\measstd{q}{w}.                                                                                                                 \\[-.1cm]&\quad}
				 {\un{I}(q).\meashad{q}{w}.}                                                                                                                \\[-.1cm]
				                             & \text{AtoB}!q.                                                                                               \\[-.1cm]
				                             & \text{guess}?g.                                                                                              \\[-.1cm]
				                             & \text{witness}!w.                                                                                            \\[-.1cm]
				                             & \text{secret}!s.                                                                                             \\[-.1cm]
				                             & a!(g=s))                                                                                                     \\[.1cm]
				 \proc{QCF}      \Coloneqq\  & (\proc{Alice} \parallel \proc{Bob}) \setminus \{ \text{AtoB}, \text{guess}, \text{secret}, \text{witness} \}
			 \end{aligned}$ &
		$\begin{aligned}
				 \proc{Bob} \Coloneqq \  & \text{AtoB}?x.                                                              \\[-.1cm]
				                         & \measrand{g}.                                                               \\[-.1cm]
				                         & \ite{g=0}
				 {\meashad{x}{\mathit{p}}.                                                                             \\[-.1cm]&\quad}
				 {\measstd{x}{\mathit{p}}.}                                                                            \\[-.1cm]
				                         & \text{guess}!g.                                                             \\[-.1cm]
				                         & \text{witness}?w.                                                           \\[-.1cm]
				                         & \text{secret}?s.                                                            \\[-.1cm]
				                         & (b!(g=s) \parallel \text{cheat}!(g \neq s \land p \neq w) \parallel \nil_x) \\[.1cm]
				 \phantom{\proc{QCF}}
			 \end{aligned}$
	\end{tabular}
	\caption{The Quantum Coin Flip protocol $\proc{QCF}$, modelled using lqCCS.}
	\label{fig:QCF}
\end{figure}

\subsection{Modelling and Analysing the Quantum Coin Flip Protocol}

To exemplify our approach, we use a non-trivial concrete case, the Quantum Coin Flip protocol~\cite{bb84}, which is also a basis for implementing the more involved Quantum Byzantine Agreement~\cite{byzantine2005}.
The objective of the protocol is to randomly select a winner between two participants, named Alice and Bob, which do not trust each other.
In a nutshell, Alice chooses a secret value between $0$ and $1$, and Bob tries to guess it: if he is right, he wins the game, otherwise he loses.
As we will detail later, the
underlying idea is to exploit the quantum features to guarantee
different properties, e.g.\ that the game is fair (if both players follow the rules then each of them has equal winning probability), and that the players cannot cheat without risking getting caught.

Notably, lqCCS is expressive enough to model such kind of real-world protocols.
\begin{example}
	The formalization of the Quantum Coin Flip protocol in lqCCS is in~\autoref{fig:QCF}, where we assume the qubit $q$ is prepared as $\kz$.
	Alice starts with a simple probabilistic coin toss,
	formally represented as a measurement on no qubits $\measrand{s}$: the \emph{secret basis} $s$ will be instantiated as $0$ or $1$ with equal probability.
	If $s$ is $0$ then Alice will use the standard basis, otherwise, the Hadamard basis is used.
	The unitary transformation before measurement guarantees that each measurement result has the same probability of being observed.
	Depending on $s$, the qubit $q$ is thus in either $\{\kz, \ko\}$ or $\{\kpl, \km\}$, each with equal probability, and the classical \emph{witness} $w$ is a bit that corresponds to the quantum state (by convention, the first element of the basis corresponds to $0$, while the second to $1$).
	Alice, next, sends the prepared qubit to Bob, and keeps both $s$ and $w$ private.
	Bob, who is waiting on the main communication channel, receives the qubit and randomly selects a basis $g$: he will win if the guessed basis $g$ coincides with Alice's secret one $s$.
	If Bob selects the standard basis, he measures the received qubit on the Hadamard one, and vice versa.
	The classical result of the measurement is stored in the variable $p$, serving as a \emph{fingerprint} of the original bit randomly generated by Alice, which will be used for checking if she is cheating.
	The guess of Bob is communicated to Alice via the $\text{guess}$ channel.
	At this point, Alice reveals the secret $s$ and the witness $w$, and she computes the winner: she sends $1$ over her channel $a$ if Bob is right, $0$ otherwise.
	Bob receives the secret, he also computes and advertises the winner on his channel $b$, with the same encoding as Alice.
	Finally, he checks the received witness $w$: if the secret basis is not $g$, then it must be the one used by Bob for measuring the received qubit, and $w$ must coincide with $p$.
	If this is not the case, a $1$ bit is sent over the $\text{cheat}$ channel, signalling that a misbehaviour has been detected. \exqed
	%

\end{example}

A distinctive feature of process calculi is that they can be used to model both the implementation of a protocol and its specification: the intended behaviour of the system as seen from an external observer, in a black-box manner.
%
In this setting, verification coincides with proving that no observer can tell the difference between implementation and specification, which is indeed the primary goal of our proposal.
%
To this end, we introduce a notion of contextual equivalence known as \emph{saturated bisimilarity}: namely, two processes are deemed bisimilar if no process in parallel can distinguish one from the other.
As we will see, the standard probabilistic bisimilarity that is often employed in the literature does not coincide with the distinctive observational capability permitted by the inherent limitations of quantum theory.
To address this problem, we propose a new relation $\simds$, based on physically admissible observers, characterized by a constrained probabilistic composition of non-deterministic choices, and we prove its adequacy to the prescriptions of quantum theory.
\begin{example}
	The lqCCS modelling of the specification for the protocol is below, and corresponds to the behaviour of a fair coin that selects Alice or Bob as the winner with equal probability.
	In addition, the bit $0$ is the only one that is sent over the $\text{cheat}$ channel, because Alice in~\autoref{fig:QCF} is not cheating.
	\begin{align*}
		\proc{FairCoin} \Coloneqq \measrand{x}.\tau^{8}.(a!x \parallel b!x \parallel \text{cheat}!0)
	\end{align*}
	In~\autoref{sec:real}, we will show that $\proc{QCF}$ and $\proc{FairCoin}$ are observationally equivalent: the protocol works as desired. \exqed
\end{example}

For some protocols, provably behaving as expected, when all participants act as desired, is not enough.
In security contexts, like in coin flip and key exchange protocols, there are other important properties which can be investigated by resorting to our proposed bisimilarity.
Ideally, we want to guarantee that in $\proc{QCF}$ cheating is either impossible, or at least that malicious attempts have a reasonable probability of being detected.
Checking that Bob cannot cheat to increase his winning chances amounts to verifying confidentiality of Alice's secret value $s$.
We can prove confidentiality by considering two versions of the same process that differ only in their private variables:
confidentiality is violated when the two are not bisimilar, i.e. if and only if a distinguishing context is found.
%
\begin{example}
	To guarantee that the protocol does not favour Bob, we show that he is not able to infer the Alice's secret bit $s$ from the received qubit $q$, even though it corresponds to the state of her chosen basis.
	To prove this it is sufficient to take the two versions of $\proc{Alice}$ when $s$ is $0$ or is $1$, and to consider the part of the process that happens before the secret is revealed.
	We name them $\proc{Alice}_0$ and $\proc{Alice}_1$, defined below:
	\begin{align*}
		\proc{Alice}_0 = \un{H}(q).\measstd{q}{w}.\text{AtoB}!q
		\qquad
		\proc{Alice}_1 = \un{I}(q).\meashad{q}{w}.\text{AtoB}!q
	\end{align*}
	We will show that, when starting with $q$ in the $\kz$ state, the two processes above are bisimilar, thus there exists no observer process
	that distinguishes the two versions so as, e.g., to execute $\text{guess}!0$ when Alice is observed to behave as $\proc{Alice}_0$ and $\text{guess}!1$ when she behaves as $\proc{Alice}_1$.
	Indeed, there is no clever way of inferring the choice of Alice by inspecting the received qubits: Bob cannot do better than choosing his basis at random. \exqed

\end{example}

The confidentiality result heavily depends on the constraints that we impose on non-determinism in our behavioural equivalence.
In particular, quantum theory prescribes indistinguishability classes of quantum states, thus a reasonable behavioural equivalence must reflect these classes.
Previous results~\cite{ceragioliQuantumBisimilarityBarbs2024} have shown that this requires constraining non-determinism, which is also the solution that we apply in this work.
Indeed, the traditional probabilistic bisimilarity incorrectly distinguishes the two versions of Alice in the previous example.
As a result of this limitation, one would mistakenly conclude that $\proc{QCF}$ favours Bob over Alice, which is provably not the case.

The security of a protocol can also be investigated by substituting an honest participant with a specific attacker, as we show below by substituting Alice with two malicious versions of her: Alison and Alix, both modelled in \autoref{fig:Alices}.
\begin{example}
	Assume that Alice wants to cheat.
	The most immediate strategy is to consider the guess of Bob and simply reveal the opposite basis, independently of the actual secret $s$.
	This variant is named $\proc{Alison}$, and is modelled by the left process in~\autoref{fig:Alices}.
	\begin{align*}
		\proc{QCF}' \Coloneqq\  & (\proc{Alison} \parallel \proc{Bob}) \setminus \{ \text{AtoB}, \text{guess}, \text{secret}, \text{witness} \}
	\end{align*}
	The attack of Alison is not dangerous: she is the legitimate winner half of the time, whereas in the other cases Bob is capable of detecting the attack with half probability due to the witness (the overall probability of seeing $1$ over the $\text{cheat}$ channel is thus $1/4$).
	Indeed, if the basis chosen by Bob does not coincide with the secret one of Alison, then the witness $w$ is not correlated to the fingerprint $p$ computed by Bob, and the two coincide only by chance if Alison is lucky.
	To show the success probability of Alison's attack, we will prove that $\proc{QCF}'$ is bisimilar to:
	\begin{align*}
		\meas[\M_{\frac{3}{4}}]{}{x}.\tau^{8}.(a!0 \parallel b!0 \parallel \text{cheat}!x)
	\end{align*}
	where $\meas[\M_{\frac{3}{4}}]{}{x}$ is the measurement on no qubits that instantiates $x$ to $0$ with probability $3/4$ and to $1$ with probability $1/4$.
	While we stick with the simplest case favouring simplicity, it is easy to show that increasing the number of qubits used for the witnesses leads to an exponential increase in the probability of Bob to detect the cheating attempt, since the witnesses she sends must exactly match \emph{each} of Bob's measurement results.

	However, Alice can adopt a better strategy, which always guarantees winning the toss with no risk of being discovered.
	For this attack, Alice must have access to an additional qubit $q'$.
	We name this variant $\proc{Alix}$, modelled by the rights process in~\autoref{fig:Alices}.
	\begin{align*}
		\proc{QCF}'' \Coloneqq\  & (\proc{Alix} \parallel \proc{Bob}) \setminus \{ \text{AtoB}, \text{guess}, \text{secret}, \text{witness} \}
	\end{align*}
	The idea behind this attack is that of preparing a state that collapses in two equivalent states when Bob performs his measurement (for any basis that Bob can choose).
	This allows Alix to forge a valid witness for the basis chosen by Bob, even if it is unknown.
	To show that this attack always succeeds, we will prove that $\proc{QCF}''$ is bisimilar to:
	\begin{align*}
		\tau^{9}.(a!0 \parallel b!0 \parallel \text{cheat}!0)
	\end{align*}
	Intuitively, Alix exploits quantum entanglement to correlate two qubits such that when measured they result in the same state. \exqed
\end{example}

\begin{figure}
	\begin{tabular}{c c}
		$\begin{aligned}
				 \proc{Alison} \Coloneqq \  & \measrand{s}.                       \\[-.1cm]
				                            & \ite{s = 0}
				 {\un{H}(q).\meas[\mstd]{q}{w}.                                   \\[-.1cm]&\quad}
				 {\un{I}(q).\meas[\mhad]{q}{w}.}                                  \\[-.1cm]
				                            & \text{AtoB}!q.                      \\[-.1cm]
				                            & \text{guess}?g.                     \\[-.1cm]
				                            & \text{witness}!w.                   \\[-.1cm]
				                            & (\text{secret}!(1-g) \parallel a!0)
			 \end{aligned}$ &
		$\begin{aligned}
				 \proc{Alix} \Coloneqq \  & \un{H}(q).                          \\[-.1cm]
				                          & \un{CNOT}(q,q').                    \\[-.1cm]
				                          & \text{AtoB}!q.                      \\[-.1cm]
				                          & \text{guess}?g.                     \\[-.1cm]
				                          & \ite{g = 0}
				 {\meas[\mhad]{q'}{w}.                                          \\[-.1cm]&\quad}
				 {\meas[\mstd]{q'}{w}.}                                         \\[-.1cm]
				                          & \text{witness}!w.                   \\[-.1cm]
				                          & (\text{secret}!(1-g) \parallel a!0)
			 \end{aligned}$
	\end{tabular}
	\caption{Two malicious versions of Alice, modelled using lqCCS.}
	\label{fig:Alices}
\end{figure}

Note the critical difference of the result of our analysis with respect to one resorting to standard probabilistic bisimilarity.
Whereas the standard approach deems that Bob can distinguish the state of the qubit sent by Alice, thus implicitly inferring the secret and winning the game,
our proposal correctly predicts that Alice (and only Alice) has a winning strategy.

\section{Background}\label{sec:back}
We recall some background on probability distributions
and quantum computing, referring to~\cite{nielsenquantum2010} for further reading.

\subsection{Probability Distributions}\label{probabilisticBackground}
Given a function $\Delta : S \to [0,1]$ from a set $S$ to the interval of real numbers $[0,1]$, we let its \emph{support} be the set $\supp{\Delta} = \{ s \in S \mid \Delta(s) > 0 \}$.
When the support is finite, we define the \emph{mass} of $\Delta$ as $\mass{\Delta} = \sum_{s \in S}\Delta(s)$.
A \emph{probability distribution} $\Delta \in \sdist{S}$ over a set $S$ is a function $\Delta : S \to [0,1]$ with finite support such that $\mass{\Delta} \leq 1$.
A \emph{full probability distribution} $\Delta \in \dist{S} \subsetneq \sdist{S}$ is a probability distribution with mass equal to $1$.

We let $\epsilon$ be the empty distribution with mass equal zero, i.e. such that $\epsilon(s) = 0$ for all $s \in S$.
For each $s \in S$, we let $\delem{p}{s}$ be the \emph{point distribution} such that $(\delem{p}{s})(s) = p$.
Finally, we define the following two operations over probability distributions:
the scaling of $\Delta$ with $p \in [0,1]$ is defined as the distribution $p \Delta$ with $(p \Delta) (s) = p (\Delta (s))$;
the sum of $\Delta$ and $\Theta$ is a partial operation defined when $\mass{\Delta} + \mass{\Theta} \leq 1$ as the distribution $\Delta \oplus \Theta$ with $(\Delta \oplus \Theta) (s) = \Delta(s) + \Theta(s)$.
In general, we write $\osum_{i \in I} \delem{p_i}{\Delta_i}$ for $p_1 \Delta_1 \oplus p_2 \Delta_2 \dots \oplus p_n \Delta_n$ when $i = \{1, 2 \dots, n\}$.
When writing $\Delta \oplus \Theta$, it is assumed that the operation is defined on the operands.

A relation $\mathord{\rel} \subseteq \sdist{S} \times \sdist{S}$ is \emph{left-linear} if
$\Delta_i \rel \Theta_i$ for $i \in I$ and
$\Delta = \osum_{i \in I} \delem{p_i}{\Delta_i}$ implies $\Theta = \osum_{i \in I} \delem{p_i}{\Theta_i}$ is defined and $\Delta \rel \Theta$, 
for any finitely indexed set of probabilities $p_i \in [0,1]$ and distributions $\Delta_i$ and $\Theta_i$.
A relation $\mathord{\rel} \subseteq \sdist{S} \times \sdist{S}$ \emph{right-linear} if
$\Delta_i \rel \Theta_i$ for $i \in I$ and
$\Theta = \osum_{i \in I} \delem{p_i}{\Theta_i}$ implies
$\Delta = \osum_{i \in I} \delem{p_i}{\Delta_i}$ is defined
and $\Delta \rel \Theta$,
for any finitely indexed set of probabilities $p_i \in [0,1]$ and distributions $\Delta_i$ and $\Theta_i$.
A relation is \emph{linear} if it is both left- and right-linear.
Notice that linearity implies convexity in the sense of~\cite{bonchipower2017},
because both $\Delta$ and $\Theta$ are always defined if $\sum_{i \in I} p_i = 1$.
A relation $\rel$ is \emph{left-decomposable} if for any 
$p$, $\Delta_1$, $\Delta_2$ and $\Theta$,
$(\delem{p}{\Delta_1} \oplus \delem{(1-p)}{\Delta_2}) \rel \Theta$ implies $\Theta = \delem{p}{\Theta_1} \oplus \delem{(1-p)}{\Theta_2}$ for some $\Theta_1, \Theta_2$ with $\Delta_i \rel \Theta_i$ for $i = 1,2$.
Right-decomposability is defined symmetrically, and a relation is \emph{decomposable} when it is both left- and right-decomposable.

Given $\mathord{\rel} \subseteq A \times \sdist{B}$, its \emph{left-linear lifting} $\lift(\rel) \subseteq \sdist{A} \times \sdist{B}$ is the smallest left-linear relation such that $\delem{1}{s} \mathrel{\lift(\rel)} \Theta$ if $s \rel \Theta$.
Regarding this lifting, we can prove the following lemma

\begin{restatable}{lemma}{lliftDecomp}\label{lem:llift_decomp}
	For any relation $\mathord{\rel} \subseteq S \times \dist{S}$, $\lift(\rel)$ is left-decomposable.
\end{restatable}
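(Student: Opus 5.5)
The plan is to first give an explicit characterization of $\lift(\rel)$ and then verify decomposability on that characterization. Let $\mathcal{L}$ be the set of pairs $(\Delta,\Theta)$ for which there exist a finite index set $I$, weights $p_i \in [0,1]$, states $s_i \in S$ and distributions $\Theta_i$ with $s_i \rel \Theta_i$, such that
\[
\Delta = \osum_{i\in I} \delem{p_i}{\delem{1}{s_i}}
\quad\text{and}\quad
\Theta = \osum_{i\in I} \delem{p_i}{\Theta_i}.
\]
The $s_i$ need not be distinct. I will show $\lift(\rel) = \mathcal{L}$.

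For the inclusion $\mathcal{L} \subseteq \lift(\rel)$, note that $\delem{1}{s_i} \mathrel{\lift(\rel)} \Theta_i$ holds by the generating clause. Left-linearity then gives $(\Delta,\Theta)\in\lift(\rel)$. Definedness of $\Theta$ is automatic here: each $\Theta_i$ is full because $\rel \subseteq S\times\dist{S}$, so $\mass{\Theta} = \sum_i p_i = \mass{\Delta} \le 1$.

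For the converse inclusion it suffices to show that $\mathcal{L}$ contains the generators and is left-linear; minimality of $\lift(\rel)$ then yields $\lift(\rel)\subseteq\mathcal{L}$. The generators are in $\mathcal{L}$ via the singleton index set with $p=1$. For left-linearity, take $\Delta_j \mathrel{\mathcal{L}} \Theta_j$ for $j\in J$, each witnessed by a family $(p_{j,i}, s_{j,i}, \Theta_{j,i})_{i\in I_j}$, and suppose $\Delta = \osum_j \delem{q_j}{\Delta_j}$ is defined. Flattening over the index set $\bigsqcup_j I_j$ with weights $q_j p_{j,i}$ witnesses $\Delta \mathrel{\mathcal{L}} \osum_j \delem{q_j}{\Theta_j}$. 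The sum $\osum_j \delem{q_j}{\Theta_j}$ is defined because masses are preserved: $\mass{\Theta_j} = \mass{\Delta_j}$. This mass-preservation argument is the main point requiring care, and it is exactly where fullness of the codomain $\dist{S}$ is used.

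Next I prove left-decomposability. Suppose $\delem{p}{\Delta_1}\oplus\delem{(1-p)}{\Delta_2} \mathrel{\lift(\rel)} \Theta$, and take a witness $(p_i,s_i,\Theta_i)_{i\in I}$ from $\mathcal{L}$. Without loss of generality, merge indices with equal $s_i$ by normalising: for each $s$, set $r_s = \sum_{s_i=s} p_i$ and replace the corresponding terms by the single distribution $\Theta_s = \sum_{s_i=s} (p_i/r_s)\Theta_i$. We have $s \rel \Theta_s$ only if $\rel$ is convex, which it need not be, so I avoid merging and split each index instead.

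To split, set $a = p\,\Delta_1(s_i)$ and $b = (1-p)\,\Delta_2(s_i)$, and let $\lambda_i = a/(a+b)$ when $a+b>0$ (and $\lambda_i = 1/2$ otherwise). For $p\in(0,1)$, define
\[
\Theta_1' = \osum_i \delem{(\lambda_i p_i/p)}{\Theta_i}
\quad\text{and}\quad
\Theta_2' = \osum_i \delem{((1-\lambda_i)p_i/(1-p))}{\Theta_i}.
\]
Since $\sum_{s_i=s} p_i = p\,\Delta_1(s)+(1-p)\,\Delta_2(s)$, the families $(\lambda_i p_i/p,\, s_i,\, \Theta_i)$ and $((1-\lambda_i)p_i/(1-p),\, s_i,\, \Theta_i)$ witness $\Delta_1 \mathrel{\mathcal{L}} \Theta_1'$ and $\Delta_2 \mathrel{\mathcal{L}} \Theta_2'$. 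Moreover $\delem{p}{\Theta_1'} \oplus \delem{(1-p)}{\Theta_2'} = \Theta$ by linearity of the sums.

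In the degenerate cases $p=0$ or $p=1$, the weight on the vanishing component is zero. For $p=1$, take $\Theta_1 = \Theta$ and choose $\Theta_2$ with $\Delta_2 \mathrel{\lift(\rel)} \Theta_2$ arbitrarily. If such a $\Theta_2$ might not exist, I would note that the decomposition still requires it. The statement presumably intends $p\in(0,1)$, or else the degenerate cases are handled by the convention $0\cdot\Theta_2=\epsilon$ and assume $\Delta_2$ is related to something; I would flag this boundary case as the likely obstacle and settle it according to the paper's convention.
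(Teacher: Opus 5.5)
Your argument for $p\in(0,1)$ is correct and is essentially the paper's proof. The paper also first establishes the normal form $\Delta=\osum_{j}\delem{q_j}{s_j}$, $\Theta=\osum_{j}\delem{q_j}{\Theta_j}$ with $s_j\rel\Theta_j$ (\autoref{lem:llift_normal}). It then gives $\Theta_j$ the weight $q_j\,\Delta_i(s_j)/\Delta(s_j)$ in the $i$-th component, which is exactly your $\lambda_i p_i/p$. Your derivation of the normal form (show that $\mathcal{L}$ contains the generators and is left-linear, then use minimality) is tidier than the paper's rule induction. The abandoned merging sentence can simply be deleted.

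The boundary case you flagged is a defect of the statement itself, and no convention rescues it. Take $S=\{a,b\}$ and $\rel=\{(a,\delem{1}{a})\}$; by your characterization, $\lift(\rel)=\{(\delem{r}{a},\delem{r}{a})\mid r\in[0,1]\}$. With $p=1$, $\Delta_1=\delem{1}{a}$ and $\Delta_2=\delem{1}{b}$, the hypothesis holds, but no $\Theta_2$ with $\delem{1}{b}\mathrel{\lift(\rel)}\Theta_2$ exists. The paper's proof has the same blind spot: its rewriting of $\Delta_i$ divides by $\Delta(s)$, which vanishes on $\supp{\Delta_i}\setminus\supp{\Delta}$ when $p_i=0$.

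So the lemma should assume either $p\in(0,1)$, or that every element of $S$ is related to some distribution. The latter holds where the lemma is used, with $\rel=\operatorname{bot}(\cmoveto[\mu]{s})$. There, for $p=1$, you take $\Theta_1=\Theta$ and $\Theta_2=\osum_{u\in\supp{\Delta_2}}\delem{\Delta_2(u)}{\Theta_u}$ for any choice of $u\rel\Theta_u$, and symmetrically for $p=0$.

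Note also that $\operatorname{bot}$ relates deadlocked configurations to $\epsilon\notin\dist{S}$, so the hypothesis $\rel\subseteq S\times\dist{S}$ actually fails in that application. Your definedness arguments still go through, because $\mass{\Theta_j}\le\mass{\Delta_j}$ suffices wherever you used equality.
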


\subsection{Quantum Computing}

An isolated physical system is associated with a \emph{Hilbert space} $\hilb$, i.e. a complex vector space equipped with an inner product $\braket\blank$.
We indicate column vectors as $\kp$ and their conjugate transpose as $\bra\psi = \kp^\dagger$.
The states of a system are \emph{unit vectors} in $\hilb$, i.e. vectors $\kp$ such that $\braket\psi = 1$.
A two-dimensional physical system is known as a \emph{qubit}, and we denote its Hilbert space as $\qubit = \mathbb{C}^2$.
The vectors $\kz = (1,0)^T$ and $\ko = (0,1)^T$ form the \emph{computational basis} of $\qubit$.
Other important states are $\kpl = \frac{1}{\sqrt{2}}(\kz + \ko)$ and $\km = \frac{1}{\sqrt{2}}(\kz - \ko)$, which form the \emph{Hadamard basis}.
In quantum terminology, the states in the Hadamard basis are \emph{superpositions} with respect to the computational basis, as they are a linear combination of $\kz$ and $\ko$.
A third basis of $\qubit$ contains $\ket{i} = \frac{1}{\sqrt{2}}(\kz + i\ko)$ and $\ket{-i} = \frac{1}{\sqrt{2}}(\kz -i\ko)$.
All $\kpl, \km, \kip$ and $\kim$ are superpositions of $\kz, \ko$.

We represent the state space of a composite physical system as the \emph{tensor product} of the state spaces of its components.
Consider the Hilbert spaces $\hilb_A$ with $\{\ket{\psi_i}\}_{i \in I}$ one of its bases, and $\hilb_B$ with $\{\ket{\phi_j}\}_{j \in I}$ one of its bases.
We let their  tensor product $\hilb_A \otimes \hilb_B$ be the Hilbert space with bases $\{\ket{\psi_i} \otimes \ket{\phi_j}\}_{(i,j) \in I \times J}$, where $\kp \otimes \kf$ is the Kronecker product.
We often omit the tensor product and write $\ket{\psi\phi}$ for $\ket{\psi}\otimes\ket{\phi}$.
We write $\qubits{n}$ for the $2^n$-dimensional Hilbert space defined as the tensor product of $n$ copies of $\qubit$ (i.e.\ the possible states of $n$ qubits).
A quantum state in $\hilb_A \otimes \hilb_B$ is  \emph{separable} when it can be expressed as the Kronecker product of two vectors of $\hilb_A$ and $\hilb_B$.
Otherwise, it is \emph{entangled}, like the Bell state $\kp = \frac{1}{\sqrt{2}}(\ket{00} + \ket{11})$.

In quantum physics, the evolution of an isolated system is described by a unitary transformation.
For each linear operator $A$ on $\hilb$, its \emph{adjoint} $A^\dag$ is the unique linear operator such that $\mel{\psi}{A}{\phi} = \braket{A^\dag\psi}{\phi}$.
A linear operator $U$ is \emph{unitary} when $UU^\dag = U^\dag U = \I$, with $\I$ the identity matrix.
Quantum computers allow the programmer to manipulate registers via unitaries like
$H$, $X$, $Z$ and $\cnot$, 
which are defined by the following equalities and by linearity in 
all the other cases: 
\begin{gather*}
H \kz = \kpl \quad H \ko = \km \quad X \kz = \ko \quad X \ko = \kz \quad Z \kpl = \km \quad Z \km = \kpl\\
\cnot \ket{10} = \ket{11}\quad \cnot \ket{11} = \ket{10}\quad \cnot\ket{0\psi} = \ket{0\psi}
\end{gather*}

\subsection{Density operator formalism}

The density operator formalism puts together quantum systems and probability distributions by considering mixed states, i.e.\ \emph{probability distributions of quantum states}.
A point distribution $\point{\kp}$ (called a pure state) is represented by the matrix $\kbp$.
In general, a mixed state $\Delta \in \sdist{\qubits{n}}$ for $n$ qubits is represented by the matrix $\rho_\Delta \in \cfield^{2^n\times 2^n}$, known as its \emph{density operator}, with $\rho_\Delta = \sum_{\kp} \Delta(\kp) \kbp$.
We write $\sdm{\hilb}$ for the set of density operators of $\hilb$ and $\dm{\hilb}$ for the set $\{\rho \in \sdm{\hilb} \mid \tr(\rho) = 1\}$, corresponding to full probability distributions of quantum states.
For example, the mixed state $\delem{\frac{1}{3}}{\kz} \oplus \delem{\frac{2}{3}}{\kpl}$ is represented by the density operator $\frac{1}{3} \kbz + \frac{2}{3} \kbpl$.
The empty distribution $\epsilon$ corresponds instead to the all-zero matrix $\Z$.

Note that the encoding of probabilistic mixtures of quantum states as density operators is not injective.
For example, $\frac{1}{2}\I$ is called the \emph{maximally mixed state} and represents both distributions $\Delta_{C} = \delem{\frac{1}{2}}{\kz} \oplus \delem{\frac{1}{2}}{\ko}$ and $\Delta_{H} = \delem{\frac{1}{2}}{\kpl} \oplus \delem{\frac{1}{2}}{\km}$.
This is a desired feature, as the laws of quantum mechanics consider indistinguishable all the distributions that result in the same density operator.
\begin{fact}\label{thm:qind}
	Two distributions of pure quantum states $\Delta, \Theta \in \dist{\hilb}$
	are indistinguishable for any physical observer whenever
	\[
		\rho_\Delta = \sum_{\kp \in \supp{\Delta}} \Delta(\kp) \kbp = \sum_{\kp \in \supp{\Theta}} \Theta(\kp) \kbp = \rho_\Theta.
	\]
\end{fact}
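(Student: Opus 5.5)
The plan is to make ``physical observer'' precise and then show that everything it can record depends on the ensemble only through its density operator. I take an observer to be any finite, possibly adaptive, protocol built from the operations the postulates of quantum mechanics allow:
\begin{itemize}
	\item appending ancillary systems in a fixed state $\sigma$;
	\item applying trace-preserving superoperators $\E$ (unitaries being a special case);
	\item performing measurements $\{M_m\}_m$, recording the outcome $m$ and letting later choices depend only on outcomes recorded so far.
\end{itemize}
What the observer can ``see'' is the joint probability distribution of the recorded outcome sequences. The claim then becomes: for every such observer, the outcome distributions produced from $\Delta$ and from $\Theta$ coincide whenever $\rho_\Delta = \rho_\Theta$.

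The key steps go by induction on the length of the protocol. The invariant is that after each prefix of outcomes $\vec m$, the weighted conditional state is a linear function of $\rho_\Delta$. Concretely, it is the sub-normalised operator $\rho_{\vec m} = \sum_{\kp}\Delta(\kp)\,\Phi_{\vec m}(\kbp) = \Phi_{\vec m}(\rho_\Delta)$ for a completely positive map $\Phi_{\vec m}$ that depends only on the protocol and on $\vec m$, and the probability of $\vec m$ is $\tr(\rho_{\vec m})$. I would check the invariant against each kind of step.
\begin{itemize}
	\item \emph{Base case.} Before any step, $\Phi$ is the identity, so the weighted state is $\rho_\Delta$ itself.
	\item \emph{Ancilla.} Appending an ancilla maps $\rho \mapsto \rho\otimes\sigma$, which is linear.
	\item \emph{Superoperator.} Applying $\E$ maps $\rho \mapsto \E(\rho)$, linear by definition. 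Acting on $\kbp$ and summing with weights $\Delta(\kp)$ gives $\E(\rho_\Delta)$.
	\item \emph{Measurement.} Outcome $m$ on the pure component $\kp$ has probability $\bra\psi M_m^\dagger M_m\kp$ and leaves the state $M_m\kp/\|M_m\kp\|$. Weighting by $\Delta(\kp)$ and summing, the weighted branch for $m$ is $M_m\rho_\Delta M_m^\dagger$, and its probability is $\sum_{\kp}\Delta(\kp)\tr(M_m\kbp M_m^\dagger) = \tr(M_m\rho_\Delta M_m^\dagger)$. Again only $\rho_\Delta$ appears.
	\item \emph{Adaptivity.} The next operation depends only on $\vec m$, and $\vec m$ is the same symbol in both runs, so the same $\Phi_{\vec m}$ is applied in both.
\end{itemize}
Hence $\tr(\Phi_{\vec m}(\rho_\Delta)) = \tr(\Phi_{\vec m}(\rho_\Theta))$ for every $\vec m$, and the two output distributions coincide.

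The main obstacle is not the calculation but the modelling step: justifying that the class of observers above is exhaustive. This is where the crucial restriction lives. The observer never has access to \emph{which} pure state $\kp$ was sampled from $\Delta$; its choices may depend only on classical outcomes of measurements. If one allowed branching on the hidden sample, as the unconstrained schedulers discussed in the introduction implicitly do, the induction breaks. The map would no longer be a single $\Phi_{\vec m}$ applied to the whole mixture but a different map per component, and linearity in $\rho_\Delta$ would fail; this is exactly the spurious discrimination between $\Delta_C$ and $\Delta_H$. I would therefore state the Fact relative to this postulate-level notion of observer, citing~\cite{nielsenquantum2010}, and flag that the rest of the paper's task is to ensure lqCCS contexts and schedulers stay inside this class.
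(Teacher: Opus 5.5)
Your argument is correct. It is the standard textbook justification: every recorded statistic is $\tr(\Phi_{\vec m}(\rho_\Delta))$ for one linear map $\Phi_{\vec m}$ fixed by the protocol and the outcomes.

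The paper does not prove this statement. It records it as a \emph{Fact} imported from quantum mechanics, with a pointer to \cite{nielsenquantum2010}, and never formalises ``physical observer'' at the level of abstract quantum operations. The useful comparison is therefore with how the paper \emph{uses} the Fact.

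\begin{itemize}
\item \textbf{Your route.} You fix an operational class of observers: ancillas, trace-preserving maps, and adaptive measurements whose later choices depend only on recorded outcomes. You then show by induction on protocol length that all statistics factor through $\rho_\Delta$.
\item \textbf{The paper's route.} Observers are formalised only inside lqCCS, as contexts driven by schedulers, and the lifted statement \autoref{thm:propertyA} is proved with the same mechanism as yours. \autoref{thm:alwaysmove} gives, for a fixed context and scheduler, one family of trace non-increasing superoperators $\mathcal{E}_i$. This family is applied uniformly to every component of the mixture. Linearity of the $\mathcal{E}_i$ and of the trace does the rest, inside a coinductive argument (bisimulation up to linear closure) rather than your induction.
\item \textbf{The correspondence.} The scheduler constraint is the lqCCS counterpart of your requirement that choices depend only on $\vec m$. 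Your remark that branching on the hidden sample destroys linearity is exactly what \autoref{ex:broken-nondet} exhibits for the unscheduled semantics.
\item \textbf{What each buys.} Your approach gives a precise, self-contained version of the physical claim, at the cost of an exhaustiveness assumption on observers, which you rightly flag. The paper's approach avoids that assumption by letting the calculus's syntax fix the class of observers, at the cost of proving only the calculus-level statement.
\end{itemize}

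One presentational fix: in the measurement case the inductive step should read $M_m\rho_{\vec m}M_m^\dagger = (\mathcal{M}_m\circ\Phi_{\vec m})(\rho_\Delta)$. As written, $M_m\rho_\Delta M_m^\dagger$ only covers a measurement performed as the first step.
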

Since density operators encode distributions of pure states, the same result is easily extended to indistinguishable distributions of mixed states.

When modelling composite systems, density operators are composed with the Kronecker product as well.
Unlike pure states, they can also describe local information about subsystems.
Let $\hilb_{AB} = \hilb_A \otimes \hilb_B$ represent a composite system, with subsystems $A$ and $B$.
Given a (not necessarily separable) $\rho^{AB} \in \hilb_{AB}$, the state of the subsystem $A$ is described as the \emph{reduced density operator} $\rho^A = \tr_B(\rho^{AB})$, with $\tr_B$ the \emph{partial trace over $B$}, defined as the linear transformation such that $\tr_B(\ketbra{\psi}{\psi'} \otimes \ketbra{\phi}{\phi'}) = \ketbra{\psi}{\psi'}\tr(\ketbra{\phi}{\phi'})$.

When applied to pure separable states, the partial trace returns the actual state of the subsystem.
When applied to an entangled state, instead, it produces a mixed state, because ``forgetting'' the information on the subsystem $B$ leaves only partial
information on subsystem $A$.
For example, the partial trace over the first qubit of $\kbphip$ is the maximally mixed state.

The evolution of mixed states is given by \emph{trace non-increasing superoperators}, i.e.\ functions on density operators.
A superoperator $\E : \sdm{\hilb} \to \sdm{\hilb}$ on a $d$-dimensional Hilbert space $\hilb$ is a function defined by its \emph{Kraus operator sum decomposition} $\{E_i\}_{i = 1, \ldots, d^2}$, satisfying that $\E(\rho) = \sum_i E_i\rho E_i^\dag$ and $\sum_i E_i^\dag E_i \sqsubseteq \I$, where $A \sqsubseteq B$ means that $B - A$ is a positive semidefinite matrix.
We call $\soset{\hilb}$ the set of trace non-increasing superoperators on $\hilb$, and $\tsoset{\hilb} \subsetneq \soset{\hilb}$ the set of all \emph{trace-preserving} superoperators, i.e.\ such that $\sum_i E_i^\dag E_i = \I$.
Intuitively, trace-preserving superoperators are maps between distributions of quantum states that preserve the mass of the distributions, e.g.\ full-distributions to full-distributions, while trace non-increasing superoperators may map to distributions with less mass, e.g.\ full distributions to distributions.

Notably, the tensor product of superoperators is obtained by tensoring their Kraus decompositions, and any unitary transformation $U$ can be seen as a superoperator, that we still denote as $U$, with $\{ U \}$ as its Kraus decomposition.

\emph{Quantum measurements} describe how to extract information from a physical system.
Performing a measurement on a quantum state returns a probabilistic classical result and causes the quantum state to change (i.e.\ to \emph{collapse}).
A measurement with $k$ different outcomes is a set $\M = \{M_m\}_{m=0}^{k-1}$ of $k$ linear operators, satisfying the \emph{completeness} equation $\sum_{m=0}^{k-1} M_m^\dag M_m = \I$.
If the state of the system is $\rho$ before the measurement, then the probability of $m$-th outcome occurring is $p_m = \tr(M_m \rho M_m^\dagger)$. If $m$ is the outcome, then the state after the measurement will be $M_m \rho M_m^\dagger /p_m$.
Note that each operator $M_m$ defines a trace non-increasing superoperator $\M_m$, with $\M_m(\rho) = M_m \rho M_m^\dagger$, and the resulting state after the $m$-th outcome is the normalization of $\M_m(\rho)$.

The simplest measurements project a state into the elements of a basis, e.g.\ $\mstd = \{\kbz, \kbo\}$ and $\mhad = \{\kbpl, \kbm\}$ for the computational and Hadamard basis of $\qubit$.
As expected, applying $\mstd$ to $\kz$ always returns the classical outcome $0$ and the state $\kz$.
When applying the same measurement on $\kpl$, instead, the outcome is $0$ and the state is $\kz$, or $1$ and $\ko$ with equal probability.
Symmetrically, measuring $\kz$ with $\mhad$ leads to either $0$ and $\kpl$, or $1$ and $\km$, with equal probability.
Finally, applying the measurement $\mhadi = \{ \kbip, \kbim \}$ to each of $\kz$, $\ko$, $\kpl$ and $\km$ returns $0$ and $\kip$, or $1$ and $\kim$ with equal probability.



\section{Quantum Process Calculus and Probabilistic Semantics}\label{sec:calc}
In this section, we describe the internal syntax of lqCCS processes, i.e. one in which available actions are individually tagged.
We also define a type system for enforcing single ownership of qubits, thus guaranteeing adherence to the no-cloning theorem.
Our process calculus is enriched with a linear type system, reflecting the \emph{no-cloning theorem} of quantum mechanics, which forbids quantum values to be copied or broadcasted.
We define a formal semantics where transitions are decorated with schedulers.
Tags in the syntax name all the possible non-deterministic choices, while schedulers in the semantics allow choosing among the available tags.

\subsection{Internal Syntax and Type System}\label{Syntax}

\begin{figure}[!t]
	\begin{mathpar}
		\inferrule[Nil]{\tilde{e} \in \tilde{\Sigma}}{\Sigma \vdash \nil_{\tilde{e}}}
		\and
		\inferrule[Tau]{\Sigma \vdash P}{\Sigma \vdash t \tags \tau . P}
		\and
		\inferrule[TauPair]{\Sigma \vdash P}{\Sigma \vdash (t, t') \tags \tau . P}
		\and
		\inferrule[QOp]{\E : \opset(n) \\ |E| = n \\ \tilde{e} \in \tilde{E} \\ E \subseteq \Sigma \\ \Sigma \vdash P}{\Sigma \vdash t \tags \E(\tilde{e}) . P}
		\and
		\inferrule[QMeas]{\M : \measset(n) \\ |E| = n \\ \tilde{e} \in \tilde{E} \\ E \subseteq \Sigma \\ x : \ntype \\ \Sigma \vdash P}{\Sigma \vdash t \tags \meas{\tilde{e}}{x} . P}
		\and
		\inferrule[CRecv]{c : \chtype{T} \\ x : T \in \{\btype,\ntype\} \\ \Sigma \vdash P}{\Sigma \vdash t \tags c?x . P}
		\and
		\inferrule[QRecv]{c : \chtype{\qtype} \\ x : \qtype \\ \Sigma \cup \{x\} \vdash P}{\Sigma \vdash t \tags c?x . P}
		\and
		\inferrule[CSend]{c : \chtype{T} \\ e : T \in \{\btype,\ntype\} \\ \Sigma \vdash P}{\Sigma \vdash t \tags c!e . P}
		\and
		\inferrule[QSend]{c : \chtype{\qtype} \\ e \in \Sigma \\ \Sigma \setminus \{e\} \vdash P}{\Sigma \vdash t \tags c!e . P}
		\and
		\inferrule[Ite]{e : \btype \\ \Sigma \vdash P \\ \Sigma \vdash Q}{\Sigma \vdash \ite{e}{P}{Q}}
		\and
		\inferrule[Sum]{\Sigma \vdash P \\ \Sigma \vdash Q}{\Sigma \vdash P + Q}
		\and
		\inferrule[Par]{\Sigma_1 \cap \Sigma_2 = \emptyset \\ \Sigma_1 \vdash P \\ \Sigma_2 \vdash Q}{\Sigma_1 \cup \Sigma_2 \vdash P \parallel Q}
		\and
		\inferrule[Restr]{\Sigma \vdash P}{\Sigma \vdash P \setminus c}
	\end{mathpar}
	\caption{Typing rules for lqCCS}
	\label{fig:typingrules}
\end{figure}

We assume a denumerable set $\tagset$ ranged over by $t$, $t'$, $t_0$, $t_1$, etc.
The syntax of tagged lqCCS processes is obtained by augmenting the actions in \autoref{def:lqccs_syntax} with syntactic tags.
\begin{align*}
	\alpha &
	\Coloneqq
	t \tags \tau
	\mid (t, t') \tags \tau
	\mid t \tags c?x
	\mid t \tags c!e
	\mid t \tags \E(\tilde{e})
	\mid t \tags \meas{\tilde{e}}{x}
\end{align*}
A symbol $\E$ denotes a trace-preserving superoperator on
$\qubits{n}$ for some $n > 0$, and we write $\E : \opset(n)$ to
indicate that $\E$ is a superoperator of arity $n$.
A symbol $\M$ denotes a measurement $\{M_0,\ldots,M_{k-1}\}$ with $k$ different outcomes;
we write $\M : \measset(n)$ to indicate that each $M_i$ acts on $n$ qubits, and denote $|\M|$ the cardinality $k$ of $\M$.
Concerning the tags, for the silent action $\tau$ we consider two possibilities.
This distinction is useful for using $\tau$ actions to write an abstract specification of a concrete protocol: $t \tags \tau$ models an action with tag $t$, $(t, t')\tags\tau$ models a synchronization.
Tags are intended to explicitly name the non-deterministic choices that a scheduler can perform when running a process.
In the following, we assume that processes come with their associated tags.

The visibility of qubits is enforced explicitly through the linear type system in~\autoref{fig:typingrules}.
The typing judgment $\Sigma \vdash P$ indicates that the process $P$ is well-typed under the usage of the set of qubits $\Sigma \subseteq \qtype$.
Given a set of terms $E$, we write
$\tilde{E}$ for the set of all tuples $\tilde{e}$ composed of all and only the elements in $E$ 
(in any possible order without repetitions).
Atomic values are typed by their domain set:
$\ntype$ for natural numbers, $\btype$ for boolean values, and $\qtype$ for qubit names.
Similarly, we assume that variables come associated with the type of their possible values.
Channels $c \in \chanset$ are typed as $\chtype{\ntype}$, $\chtype{\btype}$ and $\chtype{\qtype}$, according to the values that they can propagate.
Note that processes composed in parallel must use two disjoint sets of qubits, thus enforcing single ownership.

Since the type of a process is unique, we will call $\Sigma_P$ the only context which types $P$ (when such a type exists).
\begin{restatable}{theorem}{uniquetype}
	Whenever $\Sigma \vdash P$ and $\Sigma' \vdash P$ then $\Sigma = \Sigma'$
\end{restatable}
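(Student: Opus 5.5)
We argue by structural induction on $P$, simultaneously for all pairs of contexts $\Sigma, \Sigma'$ with $\Sigma \vdash P$ and $\Sigma' \vdash P$. The typing rules of \autoref{fig:typingrules} are syntax-directed: for each shape of $P$ exactly one rule has a conclusion that matches it. In particular, the shape of the prefix and the type of the channel (or of the sent expression) together decide which of \textsc{CRecv}/\textsc{QRecv} or \textsc{CSend}/\textsc{QSend} applies. So both derivations end with the same rule, and we can invert that rule in each case.

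\textbf{Base case.} For $\nil_{\tilde{e}}$, rule \textsc{Nil} requires $\tilde{e} \in \tilde{\Sigma}$. By definition, a tuple in $\tilde{\Sigma}$ contains all and only the elements of $\Sigma$, so $\Sigma$ equals the set of components of $\tilde{e}$. The same holds for $\Sigma'$, hence $\Sigma = \Sigma'$.

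\textbf{Inductive cases.} For \textsc{Tau}, \textsc{TauPair}, \textsc{QOp}, \textsc{QMeas}, \textsc{CRecv}, \textsc{CSend} and \textsc{Restr}, the premise has the form $\Sigma \vdash P'$ for the continuation or body $P'$. The induction hypothesis applied to $P'$ gives $\Sigma = \Sigma'$ directly; the extra side conditions such as $E \subseteq \Sigma$ play no role.

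\textbf{Quantum communication.} For \textsc{QRecv}, the premises give $\Sigma \cup \{x\} \vdash P'$ and $\Sigma' \cup \{x\} \vdash P'$, so the induction hypothesis yields $\Sigma \cup \{x\} = \Sigma' \cup \{x\}$. To cancel $x$ we need $x \notin \Sigma$ and $x \notin \Sigma'$. I expect this to be the main obstacle: as written, the rule allows $x$ to be a bound variable that also occurs in $\Sigma$, and then uniqueness could fail. I would resolve it with the usual Barendregt convention: binders are chosen fresh, so a bound variable $x$ never belongs to the ambient qubit context (one could alternatively add $x \notin \Sigma$ as an implicit side condition of \textsc{QRecv}). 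Under this convention $\Sigma = \Sigma'$.

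For \textsc{QSend}, we have $e \in \Sigma$ and $e \in \Sigma'$, and the induction hypothesis gives $\Sigma \setminus \{e\} = \Sigma' \setminus \{e\}$. Adding $e$ back to both sides gives $\Sigma = \Sigma'$.

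\textbf{Binary operators.} For \textsc{Ite} and \textsc{Sum}, the induction hypothesis on either branch suffices. For \textsc{Par}, the derivations split $\Sigma = \Sigma_1 \cup \Sigma_2$ and $\Sigma' = \Sigma_1' \cup \Sigma_2'$ with $\Sigma_i \vdash P_i$ and $\Sigma_i' \vdash P_i$. The induction hypothesis gives $\Sigma_i = \Sigma_i'$ for each component, and therefore $\Sigma = \Sigma'$.
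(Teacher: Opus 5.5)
Your proof is correct and follows the paper's argument essentially step for step: the same induction on the typing derivations, the same \textsc{Nil} base case, cancellation of $x$ in \textsc{QRecv}, re-adding $e$ in \textsc{QSend}, componentwise induction for \textsc{Par}, and the induction hypothesis directly in the remaining cases. The paper's \textsc{QRecv} case simply asserts $x \notin \Sigma$ and $x \notin \Sigma'$ without justification, so your freshness (Barendregt) convention spells out an assumption the paper leaves implicit. That assumption is genuinely needed: without it, $t \tags c?x.\nil_x$ with $c : \chtype{\qtype}$ would be typable under both $\emptyset$ and $\{x\}$.
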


\begin{example}\label{ex:quantumlottery}
	Consider a quantum lottery $\proc{QL}  = \proc{Pr} \parallel \proc{An} $ formed by processes $\proc{Pr}$, which prepares a qubit used as a source of randomness, and $\proc{An}$, which receives it, measures it, and  announces the winner between Alice and Bob.
	\begin{align*}
		\proc{Pr} & = (t_1\tags X(q).t_3\tags \text{c}!q.\nil) + (t_2\tags H(q).t_3\tags \text{c}!q.\nil)                                  \\
		\proc{An} & = t_4\tags \text{c}?x.t_4\tags \measstd{x}{y}.\ite{y = 0}{t_5\tags \text{a}!1.\nil_{x}}{ t_6\tags \text{b}!1.\nil_{x}}
	\end{align*}
	Intuitively, $\proc{Pr}$ can prepare and send a qubit by applying either $X$ or $H$ to its qubit $q$, and $\proc{An}$ announces that Alice wins with $a!1$ if the received qubit is found in state $\kz$, or that Bob wins with $b!1$ if it is found in state $\ko$.
	The unique typing of $\proc{QL}$ is $\{q\} \vdash
		\proc{QL}$, with $a, b : \chtype{\ntype}$, $c : \chtype{\qtype}$, $y : \ntype$,
	and $q, x : \qtype$. \exqed
\end{example}

\begin{figure}[t]
	\begin{mathpar}
		\inferrule[Tau]{}{\conf{\rho, t \tags \tau . P} \moveto{t} \sconf{\rho, P}}
		\and
		\inferrule[TauSynch]{}{\conf{\rho, (t, t') \tags \tau . P} \moveto{(t, t')} \sconf{\rho, P}}
		\and
		\inferrule[Sop]{}{\conf{\rho, t \tags \E(\tilde{q}) . P} \moveto{t} \sconf{\E^{\tilde{q}}(\rho), P}}
		\and
		\inferrule[Meas]
		{}
		{\conf{\rho, t \tags \meas{\tilde{q}}{x}.P}
			\moveto{t}
			\osum_{m = 0}^{|\mathbb{M}|-1}\delem{\tr(M_m^{\tilde{q}}(\rho))}
			{\conf{\frac{M_m^{\tilde{q}}(\rho)}{\tr(M_m^{\tilde{q}}(\rho))}, P[\sfrac{m}{x}]}{}}
		}
		\and
		\inferrule[Send]{e \Downarrow v}{\conf{\rho, t \tags c!e . P} \moveto[c!v]{t} \sconf{\rho, P}}
		\and
		\inferrule[Recv]{c : \chtype{\qtype} \Rightarrow v \in \Sigma_\rho \setminus \Sigma_P}{\conf{\rho, t \tags c?x.P} \moveto[c?v]{t} \sconf{\rho, P[\sfrac{v}{x}]}}
		\and
		\inferrule[IteT]
		{  e \Downarrow \true \\
			\conf{\rho, P} \moveto[\mu]{s} \Delta}
		{\conf{\rho, \ite{e}{P}{Q}} \moveto[\mu]{s} \Delta}
		\and
		\inferrule[IteF]{e \Downarrow \false \\ \conf{\rho, Q} \moveto[\mu]{s} \Delta}{\conf{\rho, \ite{e}{P}{Q}} \moveto[\mu]{s} \Delta}
		\and
		\inferrule[SumL]{\conf{\rho, P} \moveto[\mu]{s} \Delta}{\conf{\rho, P + Q} \moveto[\mu]{s} \Delta}
		\and
		\inferrule[SumR]{\conf{\rho, Q} \moveto[\mu]{s} \Delta}{\conf{\rho, P + Q} \moveto[\mu]{s} \Delta}
		\and
		\inferrule[ParL]{\conf{\rho, P} \moveto[\mu]{s} \Delta \\ \mu \not \in \{ c?v \mid  v \in \Sigma_Q \}}{\conf{\rho, P \parallel Q} \moveto[\mu]{s} \Delta \parallel Q}
		\and
		\inferrule[ParR]{\conf{\rho, Q} \moveto[\mu]{s} \Delta \\ \mu \not \in \{ c?v \mid  v \in \Sigma_P \}}{\conf{\rho, P \parallel Q} \moveto[\mu]{s} P \parallel \Delta}
		\and
		\inferrule[SynchL]{\conf{\rho, P} \moveto[c!v]{t} \sconf{\rho, P'} \\ \conf{\rho, Q} \moveto[c?v]{t'} \sconf{\rho, Q'}}{\conf{\rho, P \parallel Q} \moveto{(t,t')} \sconf{\rho, P' \parallel Q'}}
		\and
		\inferrule[SynchR]{\conf{\rho, P} \moveto[c?v]{t} \sconf{\rho, P'} \\ \conf{\rho, Q} \moveto[c!v]{t'} \sconf{\rho, Q'}}{\conf{\rho, P \parallel Q} \moveto{(t,t')} \sconf{\rho, P' \parallel Q'}}
		\and
		\inferrule[Restr]{\conf{\rho, P} \moveto[\mu]{s} \Delta \\ \mu \not \in \{c!v, c?v \mid v\}}{\conf{\rho, P \setminus c} \moveto[\mu]{s} \Delta}
	\end{mathpar}
	\caption{Rules of lqCCS semantics.}
	\label{semantics}
\end{figure}

\subsection{Operational Semantics}\label{sec:opsemantics}

We describe a labelled semantics for lqCCS in terms of \emph{configurations} $\conf{\rho, P} \in \confset$, each composed by a global quantum state and a tagged lqCCS process.
Given a set $\Sigma = \{q_1,\ldots,q_n\} \subseteq \qtype$ and its associated Hilbert space $\hilb_\Sigma = \qubits{n}$, a global quantum state $\rho$ is a density operator in $\dm{\hilb_\Sigma}$.
The type system is extended to configurations and distributions by considering the qubits of the underlying quantum state.
Let $\Sigma_\rho$ be the set of qubits appearing in $\rho$.
\begin{definition}[Configuration Typing]
	Let $\conf{\rho, P} \in \confset$, with $P$ typed and $\Sigma_P \subseteq \Sigma_\rho$, then
	we let $(\Sigma_\rho, \Sigma_P) \vdash \conf{\rho, P}$.

	Let $\Delta \in \sdist{\confset}$, then we let $(\Sigma, \Sigma') \vdash \Delta$ if $(\Sigma, \Sigma') \vdash
		\mathcal{C}$ for any $\mathcal{C} \in \supp{\Delta}$.
\end{definition}
Hereafter, we restrict ourselves to well-typed distributions, and denote with $\Sigma_{\overline{P}}$ the set $\Sigma_\rho \setminus \Sigma_P$, i.e.\ the qubits in the quantum state that are not owned by the process and thus freely accessible by the environment.

Our semantics is decorated with \emph{syntactic schedulers}, which resolve non-determinism by choosing a tag.
The syntax of schedulers $s \in \choiceset$ for tagged lqCCS is defined as follows
\[
	s \Coloneqq t \mid (t, t)
\]
A scheduler can select an action with a tag $t$, or a synchronization with a pair $(t_1, t_2)$.
Intuitively, tags represent available visible options, upon which the scheduler can choose (possibly using a pair of them for synchronization).
We call $\tagset(P)$ and $\tagset(s)$ the sets of tags that syntactically appear in the process $P$ and in the scheduler $s$, respectively;
we let $\tagset(\Delta) = \bigcup_{\conf{\rho, P} \in \supp{\Delta}}\tagset(P)$.

We assume a set $\actset$ of actions, containing $\tau$, $c!v$ and $c?v$ for any channel $c$ and value $v$.
The semantics of lqCCS is a probabilistic LTS (pLTS), i.e. a triple $(\confset, Act, \longrightarrow)$, with $\mathord{\longrightarrow} \subseteq \confset \times \actset \times \choiceset \times \dist{\confset}$.
A transition $(\conf{\rho, P}, \mu, s, \Delta) \in \mathord{\longrightarrow}$ is also denoted as $\conf{\rho, P} \moveto[\mu]{s} \Delta$.

The transition relation $\longrightarrow$ is the smallest relation over configurations with closed processes that satisfies the rules in~\autoref{semantics}.
As expected, tagged actions require a matching tag on the transition.
Expressions $e$ are evaluated through a big step semantics $e \Downarrow v$ with $v$ a value, i.e.\ either $n \in \ntype$, $b \in \btype$, or  $q \in \qtype$.
We restrict to arithmetic and logical operations, and therefore omit the rules and assume that free variables are not evaluated.
Note that the values of the qubits can only be observed through measurements, which alter such values and have a probabilistic outcome.
In particular, the expression $q = q'$ just compares two qubit names, and not their values.
In rule $\textsc{Sop}$, the superoperator $\E$ is applied to the qubits in $\tilde{q}$. Since $\tilde{q}$ can be smaller than the whole $\Sigma_\rho$, we define $\E^{\tilde{q}}$ as the superoperator that acts on the whole $\rho$ but ``ignores'' the qubits outside $\tilde{q}$.
More precisely, $\E^{\tilde{q}}$ is obtained by composing:
$(i)$ a suitable set of SWAP unitaries to bring the qubits $\tilde{q}$ in the first positions;
$(ii)$ the tensor product of the superoperator $\E$ with the identity on
untouched qubits on the right; and
$(iii)$ the inverse of the SWAP operators of point $(i)$ to recover the original order of qubits~\cite{lalireRelationsQuantumProcesses2006}.
In rule $\textsc{Meas}$, given a measurement $\M = \{M_m\}$, for each $m$, $\M_m$ stands for the trace non-increasing superoperator such that $\M_m(\sigma) = M_m \sigma M_m^\dagger$, and $\M_m^{\tilde{q}}$ is defined as before.
In the rules $\textsc{ParL}$ and $\textsc{ParR}$ the parallel composition of a distribution $\Delta = \osum_{i \in I} \delem{p_i}{\conf{\rho_i, P_i}}$ and a process $Q$ is defined as $\osum_{i \in I} \delem{p_i}{\conf{\rho_i, P_i \parallel Q}}$, and similarly for the restriction $\Delta \setminus c$.
Notice that in $\textsc{ParL}$ we require that $P$ does not receive a qubit that is already owned by $Q$, and symmetrically in $\textsc{ParR}$.

\begin{example}\label{ex:quantumlottery2}
	The pLTS semantics of the quantum lottery $\proc{QL}$ from~\autoref{ex:quantumlottery} on quantum state $\kbz$ is in~\autoref{fig:ex-ql}, where
	$\proc{An}'$ stands for $t_4\tags\measstd{x}{y}.\proc{An}''$, and $\proc{An}''$ is $\ite{y = 0}{t_5\tags a!1.\nil_x}{t_6\tags b!1.\nil_x}$.
	To simplify the presentation, we draw $\xlongrightarrow{s\tags\mu}$ instead of $\moveto[\mu]{s}$.
	A scheduler for $\proc{QL}$ must decide $(i)$ which unitary
	$\proc{Pr}$ applies to the qubit $q$, either $X$ (i.e. $s = t_1$) or $H$ (i.e. $s = t_2$);
	and $(ii)$ whether the qubit is sent to $\proc{An}$ or to the external environment since the channel is not restricted, i.e. the available moves are with label $\tau$ and scheduler $(t_3, t_4)$, or with label $c!q$ and $s = t_3$.
	Intuitively, if $\proc{Pr}$ chooses $X$, then Bob will win, otherwise either Alice or Bob will win with the same probability.
	Note that at the beginning, in configuration $\conf{\kbz, \proc{QL}}$, the choice $t_4$ cannot be taken because the quantum state has no qubit apart from those in $\Sigma_{\proc{QL}}$. \exqed
\end{example}

\begin{figure}[t]
	\centering
\begin{tikzpicture}[node distance=5mm]
  \node [conf] (c1) at (0, 0) {$\conf{\kbz, \proc{QL}}$};
  \node [psplit, below left=4mm and 8mm of c1] (p12) {};
  \node [psplit, below right=4mm and 8mm of c1] (p13) {};
  \node [conf, below= of p12] (c2) {$\conf{\kbo, t_3\tags c!q.\nil \parallel\proc{An}}$};
  \node [conf, below= of p13] (c3) {$\conf{\kbpl, t_3\tags c!q.\nil \parallel\proc{An}}$};
  \node [psplit, below=7mm of c2] (p2) { };l
  \node [psplit, below left=0mm and 12mm of c2] (p2') { };
  \node [psplit, below=7mm of c3] (p3) { };
  \node [psplit, below right=0mm and 12mm of c3] (p3') { };
  \node [conf, below left=3mm and -8mm of p2'] (c4) {$\conf{\kbo, \nil \parallel\proc{An}}$};
  \node [psplit, below=11mm of p2'] (p4') { };
  \node [conf, below=of p2] (c5) {$\conf{\kbo, \nil \parallel\proc{An}'}$};
  \node [conf, below=of p3] (c6) {$\conf{\kbpl, \nil \parallel\proc{An}'}$};
  \node [conf, below right=3mm and -8mm of p3'] (c7) {$\conf{\kbpl, \nil \parallel\proc{An}}$};
  \node [psplit, below=11mm of p3'] (p7') { };
  \node [psplit, below=of c5] (p5) { };
  \node [psplit, below=of c6] (p6) { };
  \node [conf, below left=6mm and -3mm of p5] (c8) {$\conf{\kbo, \nil \parallel\proc{An}''[\sfrac{1}{y}]}$};
  \node [conf, below right=6mm and -3mm of p6] (c9) {$\conf{\kbz, \nil \parallel\proc{An}''[\sfrac{0}{y}]}$};
  \node [psplit, below=of c8] (p8) { };
  \node [psplit, below=of c9] (p9) { };
  \node [conf, below=of p8] (c8f) {$\conf{\kbo, \nil \parallel \nil_q}$};
  \node [conf, below=of p9] (c9f) {$\conf{\kbz, \nil \parallel \nil_q}$};

  \draw [arrow] (c1) -- (p12) node[weight,above left] {$t_1\tags\tau$};
  \draw [arrow] (c1) -- (p13) node[weight,above right] {$t_2\tags\tau$};
  \draw [parrow] (p12) -- (c2) node[weight,above left=.3mm and 1mm] { };
  \draw [parrow] (p13) -- (c3) node[weight,above left=.3mm and 1mm] { };
  \draw [arrow] (c2) -- (p2) node[weight,left] {$(t_3, t_4)\tags\tau$};
  \draw [arrow] (c3) -- (p3) node[weight,right] {$(t_3, t_4)\tags\tau$};
  \draw [arrow] (c2.west) -- (p2') node[weight,above left] {$t_3\tags c!q$};
  \draw [arrow] (c3.east) -- (p3') node[weight,above right] {$t_3 \tags c!q$};
  \draw [parrow] (p2') -- (c4) node[weight,left] { };
  \draw [parrow] (p2) -- (c5) node[weight,above left] { };
  \draw [parrow] (p3) -- (c6) node[weight,right] { };
  \draw [parrow] (p3') -- (c7) node[weight,above right] { };
  \draw [arrow] (c5) -- (p5) node[weight,left] {$t_4\tags\tau$};
  \draw [arrow] (c6) -- (p6) node[weight,right] {$t_4\tags\tau$};
  \draw [parrow] (p5) -- (c8) node[weight,right] { };
  \draw [parrow] (p6) -- (c8) node[weight,above right] {$1/2$};
  \draw [parrow] (p6) -- (c9) node[weight,above right] {$1/2$};
  \draw [arrow] (c8) -- (p8) node[weight,left] {$t_6\tags b!1$};
  \draw [arrow] (c9) -- (p9) node[weight,right] {$t_5\tags a!1$};
  \draw [parrow] (p8) -- (c8f) node[weight,right] { };
  \draw [parrow] (p9) -- (c9f) node[weight,above right] { };
  \draw [arrow] (c4) -- (p4') node[weight,left] {$t_4\tags c?q$};
  \draw [arrow] (c7) -- (p7') node[weight,right] {$t_4\tags c?q$};
  \draw [parrow] (p4') -- (c5) node[weight,above right] { };
  \draw [parrow] (p7') -- (c6) node[weight,above right] { };
\end{tikzpicture}
	\caption{Semantics of the quantum lottery process.}
	\label{fig:ex-ql}
\end{figure}
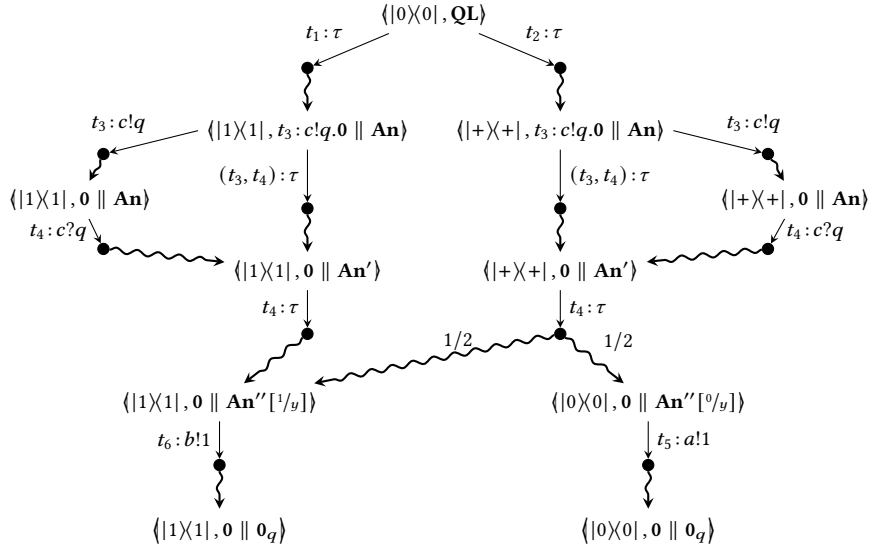

It is worth noting that the typing is preserved by $\tau$-transitions.
\begin{restatable}[Typing Preservation]{theorem}{typingpreservation}\label{thm:typepreservation}
	If $(\Sigma_\rho, \Sigma_P) \vdash \conf{\rho, P}$ and $\conf{\rho, P} \moveto{s} \Delta$ then
	$(\Sigma_\rho, \Sigma_P) \vdash \Delta$.
\end{restatable}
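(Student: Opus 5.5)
The proof goes by induction on the derivation of the transition $\conf{\rho, P} \moveto{s} \Delta$. The statement is about $\tau$-labelled transitions, but such a derivation may contain premises with visible labels. These occur in \textsc{SynchL}/\textsc{SynchR}, and under \textsc{ParL}/\textsc{ParR}, \textsc{Restr}, \textsc{Ite} and \textsc{Sum}. So we prove a stronger auxiliary statement for every label $\mu$:
\begin{itemize}
\item if $\conf{\rho, P} \moveto[\mu]{s} \Delta$ with $\mu = \tau$, or with $\mu$ a classical send or receive, then $(\Sigma_\rho, \Sigma_P) \vdash \Delta$;
\item if $\mu = c!q$ with $q$ a qubit, then every configuration in $\supp{\Delta}$ has the form $\conf{\rho, P'}$ with $\Sigma_{P'} = \Sigma_P \setminus \{q\}$;
\item if $\mu = c?v$ with $v$ a qubit, then $\Sigma_{P'} = \Sigma_P \cup \{v\}$.
\end{itemize}
Before the induction we need two routine lemmas. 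The first is substitution of classical values: if $\Sigma \vdash P$ and $x$ has type $\btype$ or $\ntype$, then $\Sigma \vdash P[\sfrac{v}{x}]$. The second is substitution of a qubit for a qubit variable: if $\Sigma \cup \{x\} \vdash P$ and $v \notin \Sigma$, then $\Sigma \cup \{v\} \vdash P[\sfrac{v}{x}]$. Both are proved by induction on typing derivations. We also record that superoperators and measurements do not change the set of qubits of the global state, so $\Sigma_{\E^{\tilde q}(\rho)} = \Sigma_\rho$.

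The axioms are handled directly.
\begin{itemize}
\item \textsc{Tau}, \textsc{TauSynch} and \textsc{Sop}: inversion of the typing rule gives $\Sigma \vdash P$, and the quantum state keeps its qubit set.
\item \textsc{Meas}: every branch $P[\sfrac{m}{x}]$ is typed by $\Sigma$ by the classical substitution lemma, since $x : \ntype$. Normalising $M_m^{\tilde q}(\rho)$ does not change its qubits, and branches of probability zero are not in the support.
\item \textsc{Send}: a classical send keeps $\Sigma$, while a quantum send drops $e = q$ by inversion of \textsc{QSend}.
\item \textsc{Recv}: a quantum receive adds $v$ by the qubit substitution lemma. Its side condition $v \in \Sigma_\rho \setminus \Sigma_P$ gives $v \notin \Sigma_P$, which is what the lemma needs, and it also keeps $\Sigma_{P'} \subseteq \Sigma_\rho$.
\end{itemize}

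For the inductive cases, \textsc{IteT}, \textsc{IteF}, \textsc{SumL}, \textsc{SumR} and \textsc{Restr} follow immediately by inverting the typing rules \textsc{Ite}, \textsc{Sum} and \textsc{Restr}, all of which keep the same $\Sigma$.

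\textsc{ParL} is the delicate case; \textsc{ParR} is symmetric. Here $\Sigma_P = \Sigma_1 \cup \Sigma_2$ with $\Sigma_1 \cap \Sigma_2 = \emptyset$, and the induction hypothesis types every $P'$ in the support of the premise's target.
\begin{itemize}
\item For $\tau$ and classical labels, $P' \parallel Q$ is typed by $\Sigma_1 \cup \Sigma_2$.
\item For a quantum send, $\Sigma_1 \setminus \{q\}$ remains disjoint from $\Sigma_2$.
\item For a quantum receive $c?v$, disjointness of $\Sigma_1 \cup \{v\}$ and $\Sigma_2$ needs $v \notin \Sigma_Q$, which is exactly the side condition of \textsc{ParL}. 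We also need $v \notin \Sigma_1$, which the induction hypothesis supplies through the \textsc{Recv} condition. The new process is typed by $\Sigma_P \cup \{v\}$, as required.
\end{itemize}

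Finally, for \textsc{SynchL}/\textsc{SynchR} with a qubit $q$ exchanged, the sender's premise gives $\Sigma_{P'} = \Sigma_1 \setminus \{q\}$ and the receiver's gives $\Sigma_{Q'} = \Sigma_2 \cup \{q\}$. These are disjoint because $q \in \Sigma_1$ and $\Sigma_1 \cap \Sigma_2 = \emptyset$. Their union is $\Sigma_1 \cup \Sigma_2$, so the type of the configuration is preserved.

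The main obstacle is the bookkeeping in the receive case. The \textsc{Recv} side condition is stated relative to the local subprocess, $v \in \Sigma_\rho \setminus \Sigma_P$ with $P$ the receiving component, not relative to the whole term. So we must check that the \textsc{ParL}/\textsc{ParR} side conditions together with it really give $v \notin \Sigma_{\text{whole}}$ at the point where a receive is synchronised or propagated. The qubit substitution lemma has to be stated carefully enough to cover this.
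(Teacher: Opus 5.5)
Your proof is correct and follows the paper's route. The paper also strengthens the claim to all labels, via the update function $f(\Sigma,\mu)$ in its Typing Quasi-Preservation theorem (add $q$ on a quantum receive, remove it on a quantum send, identity otherwise), proves it by induction on the derivation using the qubit substitution lemma, and then specializes to $\mu=\tau$, where $f(\Sigma_P,\tau)=\Sigma_P$. Your treatment of the quantum-receive subcase of \textsc{ParL} via the side condition $v\notin\Sigma_Q$, and your classical substitution lemma, fill in steps the paper leaves implicit. One small correction: in the synchronization case, disjointness needs only $\Sigma_1\cap\Sigma_2=\emptyset$, while $q\in\Sigma_1$ is what makes the union equal $\Sigma_1\cup\Sigma_2$, so your invariant for quantum sends should carry $q\in\Sigma_P$.
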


As hinted by the previous theorem, type preservation does not hold in general.
However, the updated typing context is uniquely determined by the transition label, and it is still a subset of the qubits available in the quantum state.
\begin{restatable}[Typing Quasi-Preservation]{theorem}{quasitypingpreservation}\label{thm:quasipreservation}
	Let $\Sigma \subseteq \qtype$ and $\mu \in \actset$. Then there exists $\Sigma'$ such that for all $(\Sigma'', \Sigma) \vdash \mathcal{C}$ and $\Delta \in \dist{\confset}$ with $\mathcal{C} \moveto[\mu]{s} \Delta$ it holds $(\Sigma'', \Sigma') \vdash \Delta$.
\end{restatable}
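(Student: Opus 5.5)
The statement to prove is Typing Quasi-Preservation: for a fixed process typing context $\Sigma$ and label $\mu$, the resulting context $\Sigma'$ depends only on $\Sigma$ and $\mu$. I will define $\Sigma'$ explicitly by cases on $\mu$:
\[
\Sigma' = \Sigma \setminus \{v\} \text{ if } \mu = c!v \text{ with } v \in \qtype, \qquad
\Sigma' = \Sigma \cup \{v\} \text{ if } \mu = c?v \text{ with } v \in \qtype,
\]
and $\Sigma' = \Sigma$ otherwise, i.e.\ for $\tau$ and for classical sends and receives. For $\mu = \tau$ the claim is then exactly \autoref{thm:typepreservation}, so only visible labels need new work. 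Once $\Sigma'$ is fixed, I will prove, by induction on the derivation of $\mathcal{C} \moveto[\mu]{s} \Delta$, that $(\Sigma'', \Sigma) \vdash \conf{\rho, P}$ implies that every configuration in $\supp{\Delta}$ is typed by $(\Sigma'', \Sigma')$.

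The axioms are handled as follows.
\begin{itemize}
\item For \textsc{Send}, inversion of \textsc{QSend} (respectively \textsc{CSend}) gives $\Sigma \setminus \{v\} \vdash P'$ (respectively $\Sigma \vdash P'$). Here one uses that a closed qubit-typed expression evaluates to a qubit name, so $e \Downarrow v$ gives $v = e \in \Sigma$.
\item For \textsc{Recv} on a quantum channel, inversion of \textsc{QRecv} gives $\Sigma \cup \{x\} \vdash P$. A substitution lemma (to be proved by a routine induction on typing derivations) then yields $\Sigma \cup \{v\} \vdash P[\sfrac{v}{x}]$. The side condition $v \in \Sigma_\rho \setminus \Sigma_P$ of the rule ensures both that $v \notin \Sigma$, so the \textsc{Par} disjointness conditions inside $P$ remain intact, and that $\Sigma \cup \{v\} \subseteq \Sigma''$, so the configuration is well-typed.
\item For classical receive, the substitution lemma is applied with classical values and leaves the context unchanged.
\end{itemize}
In all cases the quantum state is unchanged, so $\Sigma_\rho = \Sigma''$ is preserved.

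The structural rules go as follows. \textsc{IteT}, \textsc{IteF}, \textsc{SumL}, \textsc{SumR} and \textsc{Restr} follow directly from the induction hypothesis, since the typing rules give the same context to the components. \textsc{SynchL} and \textsc{SynchR} carry label $\tau$ and are covered by \autoref{thm:typepreservation}.

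The main case is \textsc{ParL} (and symmetrically \textsc{ParR}). Here $\Sigma = \Sigma_1 \cup \Sigma_2$ with $\Sigma_1 \vdash P$ and $\Sigma_2 \vdash Q$. The induction hypothesis gives the targets of $P$'s move typed by $\Sigma_1'$, and we must show $\Sigma_1' \cap \Sigma_2 = \emptyset$ and $\Sigma_1' \cup \Sigma_2 = \Sigma'$.
\begin{itemize}
\item For a send, $\Sigma_1' = \Sigma_1 \setminus \{v\}$ with $v \in \Sigma_1$, so disjointness is immediate and the union equals $\Sigma \setminus \{v\}$.
\item For a receive of $v$, the side condition $\mu \notin \{c?v \mid v \in \Sigma_Q\}$ gives $v \notin \Sigma_2$. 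The premise of \textsc{Recv} gives $v \notin \Sigma_1$. Hence $\Sigma_1 \cup \{v\}$ is disjoint from $\Sigma_2$ and the union is $\Sigma \cup \{v\}$.
\end{itemize}

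The point I expect to need care is that $\Sigma'$ must not depend on $\mathcal{C}$. This holds because the case split uses only $\mu$ and $\Sigma$. The typing of the value $v$ is read off from the label itself, since values are typed by their domain, and the inductive side conditions above justify the union and difference computations uniformly.
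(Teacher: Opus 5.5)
Your visible-label cases are fine (your \textsc{ParL} treatment of a quantum receive, via the side condition $v \notin \Sigma_Q$ and the \textsc{Recv} premise, even covers a case the paper's own \textsc{ParL} argument passes over). The gap is the $\tau$ case, which is not actually proved. You discharge $\mu = \tau$, including \textsc{SynchL} and \textsc{SynchR}, by appeal to \autoref{thm:typepreservation}. In the paper, however, that theorem is \emph{derived from} the proof of the present statement, by noting that the context assigned to $\tau$ is just $\Sigma_P$. So the appeal is circular.

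What the citation hides is exactly the substantive part of the paper's proof:
\begin{itemize}
\item the axioms \textsc{Tau}, \textsc{TauSynch}, \textsc{Sop} and \textsc{Meas}, where $\rho$ changes but $\Sigma_\rho$ does not, and \textsc{Meas} needs the classical substitution $P[m/x]$ to preserve typing;
\item the structural rules at label $\tau$;
\item above all, synchronization on a quantum channel, where the qubit changes owner. The sender's continuation is typed by $\Sigma_1 \setminus \{v\}$ and the receiver's by $\Sigma_2 \cup \{v\}$. You must check that these are disjoint and that their union is again $\Sigma$, which needs $v \in \Sigma_1$ from the send premise.
\end{itemize}

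The repair uses only your own material. A derivation with a visible label has only subderivations carrying that same label, since the synchronization rules conclude $\tau$. So your induction for visible labels is self-contained and can be carried out first.

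Strengthen its statement so that it also records two facts: a quantum send $c!v$ by a process typed by $\Sigma$ has $v \in \Sigma$, and a quantum receive $c?v$ has $v \notin \Sigma$. You already use both facts in \textsc{ParL}, and they must propagate through nested \textsc{ParL}/\textsc{ParR}, so they belong in the induction hypothesis.

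Then prove the $\tau$ case by a second induction, applying the visible-label result to the two premises of \textsc{SynchL}/\textsc{SynchR}. Typing Preservation then follows as a corollary, as in the paper, instead of being assumed.
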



Following standard practice, we assume hereafter that processes are tagged in a deterministic way, meaning that the behaviour of configurations is determined by the choice of the scheduler.
For this, we take the standard approach of~\cite{chatzikokolakisMakingRandomChoices2007,chatzikokolakisBisimulationDemonicSchedulers2009}, and we extend it to deal with the value-passing features of lqCCS.
Intuitively, a deterministic process associates a different tag to all of its available non-deterministic choices, thus any given scheduler $s$ forces a unique action $\mu$ and a unique transition $\cmoveto[\mu]{s}$.
However, this is too restrictive for a labelled semantics where receptions are caused by the choices of other processes.
Our schedulers, indeed, solve only the \emph{internal} non-determinism (like $t\tags c!0.P + t'\tags c!1.Q$), and do not address the \emph{external} one, which comes from outside input (like $c?x.P$).
	Thus, only for input transitions, we allow a deterministic process $P$ to have multiple reductions for the same scheduler $s$, one for each value it can receive, like $\cmoveto[c?0]{s}$ and $\cmoveto[c?1]{s}$.

\begin{definition}[Deterministic Process]
	A process $P$ is \emph{deterministic} if for all $\rho \in \hilb_{\Sigma}$ with $\Sigma \supseteq \Sigma_P$, $s \in \tagset(P)$, $\mu, \mu' \in \actset$ and $\Delta, \Delta' \in \sdist{\confset}$, if $\conf{\rho, P} \cmoveto[\mu]{s} \Delta$ and $\conf{\rho, P} \cmoveto[\mu']{s} \Delta'$, then either $\mu = \mu'$ and $\Delta = \Delta'$, or $\mu \neq \mu'$ and there exists $c$, $v$ and $v'$ such that $\mu = c?v$ and $\mu' = c?v'$.
\end{definition}

\begin{assumption}\label{ass:det}
	All processes considered in the following are deterministic.
\end{assumption}

In quantum process calculi, it is useful to consider a pLTS as a simple LTS between distributions~\cite{fengtoward2015-1,dengbisimulations2018,ceragioliQuantumBisimilarityBarbs2024}.
	We will detail two different way to lift our scheduled pLTSs semantics: one preserves the scheduler information and the other one ignores it.
	We will explore the difference of these two systems in the next section, showing how consistent scheduling policies are crucial in the quantum setting.

First, we define a semantics in which $\Delta \dmoveto[\mu]{s} \Delta'$ means that a system in state $\Delta$ can transition to $\Delta'$ if all the elements of the support of $\Delta$ perform the move chosen by the scheduler $s$ with action $\mu$.
As an initial step, we add a ``deadlock'' rule, allowing configurations to evolve to the empty distribution if no legal move is available for the same label and scheduler:
Formally, we let $\operatorname{bot}(\cmoveto[]{}) \in \confset \times \actset \times \choiceset \times \sdist{\confset}$ as the least relation such that $\operatorname{bot}(\cmoveto[]{}) \supseteq \mathord{\cmoveto[]{}}$ and $\mathcal{C} \mathrel{\operatorname{bot}(\cmoveto[\mu]{s})} \epsilon$ if there is no $\Delta$ such that $\mathcal{C} \cmoveto[\mu]{s} \Delta$.
Notice that, by definition, $(\Sigma, \Sigma') \vdash \epsilon$ for any $(\Sigma, \Sigma')$, since the support of $\epsilon$ is empty.
%
%
%
Then, we lift the semantics to deal with distributions of configurations, defining $\dmoveto[]{}$ as the relation in $\sdist{\confset} \times \actset \times \choiceset \times \sdist{\confset}$ such that for all $\mu$ and $s$, it holds that
$\mathord{\dmoveto[\mu]{s}} = \mathord{\lift(\operatorname{bot}(\cmoveto[\mu]{s}))}$.
The obtained transition relation is linear and left-decomposable, by \autoref{lem:llift_decomp}.


Due to~\autoref{ass:det}, the evolution of distributions under a given action is uniquely determined by the scheduler.
\begin{restatable}{proposition}{determdistr}\label{thm:determdistr}
	For any distribution $\Delta \in \sdist{\confset}$ composed of deterministic processes, $\mu \in \actset$, $s \in \choiceset$, and $\Theta, \Theta' \in \sdist{\confset}$, if $\Delta \dmoveto[\mu]{s} \Theta$ and $\Delta \dmoveto[\mu]{s} \Theta'$ then $\Theta = \Theta'$.
\end{restatable}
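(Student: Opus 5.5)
The plan is to first establish a pointwise determinism property for the relation $\operatorname{bot}(\cmoveto[\mu]{s})$. After that, we show that the left-linear lifting of a functional relation is functional. For the first step, let $\mathcal{C} \in \confset$ and fix $\mu$ and $s$. If $\mathcal{C}$ has no transition with label $\mu$ and scheduler $s$, the only pair related to $\mathcal{C}$ by $\operatorname{bot}(\cmoveto[\mu]{s})$ is $(\mathcal{C}, \epsilon)$. Otherwise the bottom rule is not applicable, and any two targets $\Delta, \Delta'$ satisfy $\mathcal{C} \cmoveto[\mu]{s} \Delta$ and $\mathcal{C} \cmoveto[\mu]{s} \Delta'$ with the same label $\mu$. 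By \autoref{ass:det} and the definition of deterministic process, the case $\mu \neq \mu'$ is excluded, so $\Delta = \Delta'$. Here the fact that the label is fixed in $\dmoveto[\mu]{s}$ matters: the external non-determinism of receptions yields different labels $c?v$ and $c?v'$, so it does not break functionality. (If $s \notin \tagset(P)$, no transition exists and we are in the $\epsilon$ case.) Thus $\operatorname{bot}(\cmoveto[\mu]{s})$ is the graph of a total function $f : \confset \to \sdist{\confset}$.

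For the second step, we characterise the lifting explicitly. Define $\mathcal{R} = \{(\Delta, \osum_{\mathcal{C} \in \supp{\Delta}} \delem{\Delta(\mathcal{C})}{f(\mathcal{C})}) \mid \Delta \in \sdist{\confset}\}$. The sum is always defined, because each $f(\mathcal{C})$ has mass at most $1$ and $\mass{\Delta} \leq 1$. We check that $\mathcal{R}$ contains the generators, since $\delem{1}{\mathcal{C}} \mathrel{\mathcal{R}} f(\mathcal{C})$. We then check that $\mathcal{R}$ is left-linear: if $\Delta_i \mathrel{\mathcal{R}} \Theta_i$ and $\Delta = \osum_i \delem{p_i}{\Delta_i}$, then by rearranging finite sums $\osum_i \delem{p_i}{\Theta_i} = \osum_{\mathcal{C}} \delem{\Delta(\mathcal{C})}{f(\mathcal{C})}$, using $\Delta(\mathcal{C}) = \sum_i p_i \Delta_i(\mathcal{C})$. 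Since $\lift$ is the smallest such relation, $\lift(\operatorname{bot}(\cmoveto[\mu]{s})) \subseteq \mathcal{R}$, and $\mathcal{R}$ is functional by construction. Hence $\Delta \dmoveto[\mu]{s} \Theta$ and $\Delta \dmoveto[\mu]{s} \Theta'$ give $\Theta = \Theta'$.

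The main obstacle is the second step: we must make sure that "smallest left-linear relation" is well defined, meaning that left-linear relations containing the generators are closed under intersection, and that it really is contained in $\mathcal{R}$ rather than our having to reason by an inductive presentation of $\lift$. Closure under intersection is immediate from the definition of left-linearity. The only subtle point is the definedness of the sums in the left-linearity check, which holds by the mass bound above. A minor point is the use of Assumption~\ref{ass:det} for configurations whose quantum state involves qubits beyond $\Sigma_P$; this is covered because the definition quantifies over all $\Sigma \supseteq \Sigma_P$, and well-typedness gives $\Sigma_P \subseteq \Sigma_\rho$.
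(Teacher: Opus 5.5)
Your proof is correct, but its second step takes a different route from the paper. Your first step is the paper's \autoref{lem:bot_deterministic}: by \autoref{ass:det}, $\operatorname{bot}(\cmoveto[\mu]{s})$ relates each configuration to exactly one target.

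For the lifting, the paper writes $\Delta = \osum_{i} \delem{p_i}{\mathcal{C}_i}$ and uses left-decomposability of $\dmoveto[\mu]{s}$ (\autoref{lem:llift_decomp}, proved via the normal form of \autoref{lem:llift_normal}). This splits both $\Theta$ and $\Theta'$ into per-configuration components $\Theta_i, \Theta'_i$, which are then identified pointwise. You never decompose anything. You use only that $\lift$ is the \emph{least} left-linear relation containing the generators, and you check that your explicit, manifestly functional relation $\mathcal{R}$ (pairing $\Delta$ with $\osum_{\mathcal{C}} \delem{\Delta(\mathcal{C})}{f(\mathcal{C})}$) is left-linear and contains them.

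The paper's route reuses decomposability, which it needs anyway elsewhere (e.g.\ in \autoref{lem:distr_cases} and in the up-to arguments). Yours is self-contained and gives the successor in closed form. It also avoids two points the paper glosses over:
\begin{itemize}
\item \autoref{lem:llift_decomp} and \autoref{lem:llift_normal} are stated for relations into full distributions, whereas $\operatorname{bot}(\cmoveto[\mu]{s})$ also produces $\epsilon$ and $\Delta$ may be a subdistribution.
\item The step from $\delem{1}{\mathcal{C}_i} \dmoveto[\mu]{s} \Theta_i$ to $\mathcal{C}_i \mathrel{\operatorname{bot}(\cmoveto[\mu]{s})} \Theta_i$, needed before \autoref{lem:bot_deterministic} applies, is left implicit in the paper.
\end{itemize}

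As a small bonus, applying left-linearity to the generators also gives $\mathcal{R} \subseteq \lift(\operatorname{bot}(\cmoveto[\mu]{s}))$. The two relations therefore coincide, so you get existence of the successor as well, although the proposition only asks for uniqueness.
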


Finally, to discuss the impact of using schedulers, and their importance for correctly representing feasible choices of quantum-capable agents, we define an unscheduled 
lifted semantics for comparison.
Let $\dumoveto[]$ be the relation in
$\sdist{\confset} \times \actset \times \sdist{\confset}$ such that for each $\mu$ it holds that
$\mathord{\dumoveto[\mu]} = \lift{(\bigcup_{s}\operatorname{bot}(\xlongrightarrow{\mu}_s))}$.
Intuitively, $\dumoveto[\mu]$ allows to probabilistic compose moves regardless of whether the schedulers coincide.

\begin{figure}[t]
	\centering
\begin{tikzpicture}[node distance=5mm]
  \node [conf] (c1) at (0, 0) {$\conf{\kbz, \proc{QL}}$};
  \node [conf, below left=8mm and 3mm of c1] (c2) {$\conf{\kbo, t_3\tags c!q.\nil \parallel\proc{An}}$};
  \node [conf, below right=8mm and 3mm of c1] (c3) {$\conf{\kbpl, t_3\tags c!q.\nil \parallel\proc{An}}$};
  \node [conf, below left=2mm and 0mm of c2] (c4) {$\conf{\kbo, \nil \parallel\proc{An}}$};
  \node [conf, below=8mm of c2] (c5) {$\conf{\kbo, \nil \parallel\proc{An}'}$};
  \node [conf, below=8mm of c3] (c6) {$\conf{\kbpl, \nil \parallel\proc{An}'}$};
  \node [conf, below right=2mm and 0mm of c3] (c7) {$\conf{\kbpl, \nil \parallel\proc{An}}$};
  \node [conf, below=8mm of c5] (c8) {$\conf{\kbo, \nil \parallel\proc{An}''[\sfrac{1}{y}]}$};
  \node [conf, below=8mm of c6] (c9) {$\delem{\frac{1}{2}}{\conf{\kbz, \nil \parallel\proc{An}''[\sfrac{0}{y}]}} \oplus \delem{\frac{1}{2}}{\conf{\kbo, \nil \parallel\proc{An}''[\sfrac{1}{y}]}}$};
  \node [conf, below=8mm of c8] (c8f) {$\conf{\kbo, \nil \parallel \nil_q}$};
  \node [conf, below left=8mm and -22mm of c9] (c9f) {$\delem{\frac{1}{2}}{\conf{\kbz, \nil \parallel \nil_q}}$};
  \node [conf, below right=8mm and -22mm of c9] (c9e) {$\delem{\frac{1}{2}}{\conf{\kbo, \nil \parallel \nil_q}}$};

  \draw [darrow] (c1) -- (c2) node[weight,above left] {$t_1\tags\tau$};
  \draw [darrow] (c1) -- (c3) node[weight,above right] {$t_2\tags\tau$};
  \draw [darrow] (c2) -- (c5) node[weight,left] {$(t_3, t_4)\tags\tau$};
  \draw [darrow] (c3) -- (c6) node[weight,right] {$(t_3, t_4)\tags\tau$};
  \draw [darrow] (c2.west) -- (c4) node[weight,above left] {$t_3\tags c!q$};
  \draw [darrow] (c3.east) -- (c7) node[weight,above right] {$t_3 \tags c!q$};
  \draw [darrow] (c5) -- (c8) node[weight,left] {$t_4\tags\tau$};
  \draw [darrow] (c6) -- (c9) node[weight,right] {$t_4\tags\tau$};
  \draw [darrow] (c8) -- (c8f) node[weight,left] {$t_6\tags b!1$};
  \draw [darrow] (c9) -- (c9f) node[weight,right] {$t_5\tags a!1$};
  \draw [darrow] (c9) -- (c9e) node[weight,left] {$t_6\tags b!1$};
  \draw [darrow] (c4) -- (c5) node[weight,below left] {$t_4\tags c?q$};
  \draw [darrow] (c7) -- (c6) node[weight,below right] {$t_4\tags c?q$};
\end{tikzpicture}
	\caption{Lifted semantics of the quantum lottery process.}
	\label{fig:ex-ql-lifted}
\end{figure}
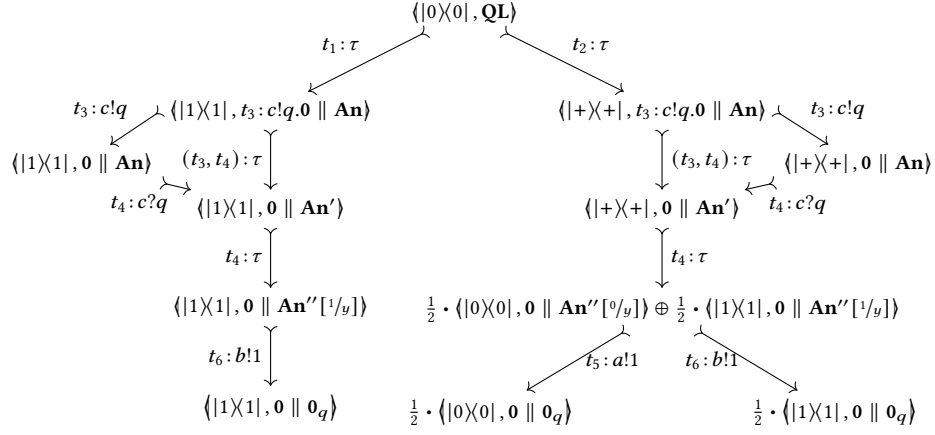

\begin{example}\label{ex:quantumlottery3}
	Recall \autoref{ex:quantumlottery2}.
	The \emph{scheduled} lifted semantics to distributions of configurations is in~\autoref{fig:ex-ql-lifted}.
	The application of the lifting operations is evident in the last step of the protocol of the right branch.
	After the measurement the semantics has reached the full distribution of configuration $\delem{\frac{1}{2}}{\conf{\kbz, \nil \parallel\proc{An}''[\sfrac{0}{y}]}} \oplus \delem{\frac{1}{2}}{\conf{\kbo, \nil \parallel\proc{An}''[\sfrac{1}{y}]}}$, where, as expected, both Alice and Bob have half of the probability of being selected.
	Each element of the distribution has a continuation with a distinct tag and action, therefore by definition of $\operatorname{bot}(\blank)$ the other pairing can only move to the empty distribution $\epsilon$, e.g.\ 
	$\delem{\frac{1}{2}}{\conf{\kbz, \nil \parallel\proc{An}''[\sfrac{0}{y}]}} \dmoveto[b!1]{t_6} \epsilon$.
	Thus, by lifting, full distribution can evolve either in the distribution $\delem{\frac{1}{2}}{\conf{\kbz, \nil \parallel \nil_{q}}}$ with tag $t_5$ and action $a!1$ or to the distribution $\delem{\frac{1}{2}}{\conf{\kbo, \nil \parallel \nil_{q}}}$ with tag $t_6$ and action $b!1$. \exqed \anote{appunto per il futuro: bisognerebbe far vedere anche la semantica unscheduled}
\end{example}

\section{Contextual Equivalences for Quantum Systems}\label{sec:sat}

We now pursue a suitable behavioural equivalence for concurrent quantum systems.
We start by defining a saturated bisimilarity \emph{\`a la}~\cite{bonchigeneral2014}, a natural notion of contextual equivalence that relates systems when they are indistinguishable for any observer.
We show that the naive definition of saturated probabilistic bisimilarity is not consistent with the prescriptions of quantum theory.
Next, we show that schedulers guarantee adherence with the prescriptions of quantum theory, as contexts cannot choose their move based on the unknown states of unmeasured qubits, and we investigate some desired properties of the scheduled saturated bisimilarity.
The main drawback of our saturated approach is that the quantification over all possible observers make the relation difficult to compute.
Thus, in the next section, we will give an equivalent characterization in terms of labelled bisimulations.

\subsection{Saturated Bisimilarity}\label{sec:equivalences}

Before presenting saturated bisimilarity, we need a couple of definitions.
We start by introducing well-typed relations, i.e. those that only associate equally typed distributions of configurations.
Recall that the empty distribution $\epsilon$ is typed by any pair of sets of qubits $(\Sigma, \Sigma')$ with $\Sigma' \subseteq \Sigma$.
\begin{definition}[Well-typed Relation]\label{def:welltyped}
	A relation $\mathord{\rel} \subseteq \sdist{\confset} \times \sdist{\confset}$ is well-typed if $\Delta \rel \Theta$ implies $(\Sigma, \Sigma') \vdash \Delta$ and $(\Sigma, \Sigma') \vdash \Theta$ for some $\Sigma, \Sigma'$.
\end{definition}

In saturated bisimilarities, we call \emph{contexts} processes with a typed hole; they play the role of process-discriminating observers.
\begin{definition}[Context]\label{def:context}
	A context $B[\blank]_{\Sigma}$ is generated by the productions
	$B[\blank]_{\Sigma} \Coloneqq [\blank]_{\Sigma}\;|\;[\blank]_{\Sigma} \parallel P$ typed
	according to the rules in \autoref{fig:typingrules},
	extended as such:
	\begin{mathpar}
		\inferrule[Id]
		{}
		{\Sigma \vdash [\blank]_{\Sigma}}

		\and

		\inferrule[Hole]
		{\Sigma' \setminus \Sigma \vdash P \\ \Sigma \subseteq \Sigma'}
		{\Sigma' \vdash [\blank]_{\Sigma} \parallel P}.
	\end{mathpar}
\end{definition}
A context is applied to a process $P$ by replacing the hole with $P$.
Intuitively, a context $\Sigma' \vdash B[\blank]_{\Sigma}$ is a function that given a process $P$ returns a process $B[P]$ obtained by replacing $P$ for $[\blank]$, where $\Sigma$ is the typing context of the valid inputs and $\Sigma'$ the one of the outputs.
Note that a context can own some qubits and each qubit cannot be referred to in both $P$ and $B[\blank]$.
We apply $\Sigma' \vdash B[\blank]_{\Sigma}$ to configurations $(\Sigma_\rho, \Sigma_P) \vdash \conf{\rho, P}$ obtaining $(\Sigma_\rho, \Sigma') \vdash \conf{\rho, B[P]}$ when $\Sigma' \subseteq \Sigma_\rho$ and $\Sigma = \Sigma_P$, i.e.\ when the qubits referred by $B[\blank]$ are defined in $\rho$ and the process $P$ is as prescribed.
We write $B[\conf{\rho, P}]$ for $\conf{\rho, B[P]}$, and $B[\Delta]$ for the distribution obtained by applying $B[\blank]$ to the support of $\Delta$.
It is trivial to show that if $\Delta$ and $\Theta$ are typed by the same typing context, then $B[\Delta]$ is defined if and only if $B[\Theta]$ is defined.
In the following we let $\tagset(B[\blank])$ be $\tagset(P)$ if $B[\blank] = [\blank] \parallel P$ and $\emptyset$ if $B[\blank] = [\blank]$.

We formalize compatible contexts that can be applied to distributions as follows.
\begin{definition}[Compatible Context]
	A context $\Sigma' \vdash B[\blank]_{\Sigma}$ is compatible with respect to a distribution $\Delta \in \sdist{\confset}$ if
	$(\Sigma'', \Sigma) \vdash \Delta$ for some $\Sigma''$ such that $\Sigma' \subseteq \Sigma''$, and if $\tagset(\Delta) \cap \tagset(B[\blank]) = \emptyset$.
\end{definition}

The requirement for tags to be distinct between distributions and contexts ensures that distributions preserve their determinism when inserted into compatible contexts.
Intuitively, this guarantees that we are not introducing non-deterministic behaviour besides the one regulated by schedulers and by the external choices.
\begin{restatable}{lemma}{determpar}\label{lem:deterPar}
	For all deterministic $P, R \in \procset$, if $\tagset(P) \cap \tagset(R) = \emptyset$
	then $P \parallel R$ is deterministic.
\end{restatable}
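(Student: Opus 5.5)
The plan is a case analysis on how a transition of $\conf{\rho, P \parallel R}$ with scheduler $s$ is derived. By inspecting the rules in \autoref{semantics}, a transition of a parallel composition comes from exactly one of \textsc{ParL}, \textsc{ParR}, \textsc{SynchL} or \textsc{SynchR}. Fix $\rho$ with $\Sigma_\rho \supseteq \Sigma_{P \parallel R}$, a scheduler $s$, and two transitions $\conf{\rho, P \parallel R} \cmoveto[\mu]{s} \Delta$ and $\conf{\rho, P \parallel R} \cmoveto[\mu']{s} \Delta'$. The goal is to show that either $\mu = \mu'$ and $\Delta = \Delta'$, or both labels are receptions on the same channel.

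The first step is a tag-ownership lemma, proved by a routine induction on the derivation. If $\conf{\rho, Q} \cmoveto[\mu]{s} \Delta$, then:
\begin{itemize}
\item $\tagset(s) \subseteq \tagset(Q)$;
\item if $s = t$ is a single tag, the move comes from a prefix tagged $t$ in $Q$;
\item if $s = (t, t')$ is a pair, it comes either from a $(t,t')\tags\tau$ prefix or from a synchronization in which $t$ and $t'$ are the tags of an output and an input in the two components.
\end{itemize}

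The second step uses disjointness to attribute each transition to one side. If $s = t$ is a single tag, then $t$ lies in exactly one of $\tagset(P)$ and $\tagset(R)$, say $\tagset(P)$. This rules out \textsc{ParR} and both \textsc{Synch} rules, the latter because they produce pairs. Both transitions therefore arise via \textsc{ParL} from transitions $\conf{\rho, P} \cmoveto[\mu]{t} \Delta_0$ and $\conf{\rho, P} \cmoveto[\mu']{t} \Delta_0'$, with $\Delta = \Delta_0 \parallel R$ and $\Delta' = \Delta_0' \parallel R$. Determinism of $P$ (applicable since $\Sigma_\rho \supseteq \Sigma_P$) gives either $\mu = \mu'$ and $\Delta_0 = \Delta_0'$, hence $\Delta = \Delta'$, or two receptions on the same channel, as required.

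If $s = (t, t')$, there are more cases:
\begin{itemize}
\item Both tags belong to the same component, say $P$. The transition cannot be a synchronization between $P$ and $R$, since that would need one tag from each side, so it goes through \textsc{ParL}. Determinism of $P$ concludes as before; the label is $\tau$, so only the equality alternative can occur.
\item $t \in \tagset(P)$ and $t' \in \tagset(R)$, or the reverse. Only \textsc{SynchL} or \textsc{SynchR} applies. The ordering of the pair together with the tag-ownership lemma fixes which component outputs: with $t \in \tagset(P)$, \textsc{SynchL} must be used, with $P \cmoveto[c!v]{t}$ and $R \cmoveto[c?v]{t'}$. Determinism of $P$ forces a unique output action $c!v$ and target $\sconf{\rho, P'}$. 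Determinism of $R$, where multiple transitions are allowed only for different received values, combined with the fact that the received value must equal $v$, forces a unique $\sconf{\rho, R'}$. Both transitions therefore have label $\tau$ and the same target $\sconf{\rho, P' \parallel R'}$.
\end{itemize}

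The main obstacle is the pair-scheduler case. One has to exclude a pair $(t,t')$ being realized both by a \textsc{TauPair} prefix inside one component and by a cross synchronization, and to exclude one tag appearing in both $P$ and $R$. Disjointness of tags settles both points directly. A subtler issue is the side condition of \textsc{ParL}/\textsc{ParR} on receiving owned qubits, together with the \textsc{Recv} premise depending on $\Sigma_P$ rather than $\Sigma_{P\parallel R}$. I expect these only to remove transitions, never to add any, so they cannot break uniqueness. I would check this carefully in the \textsc{Synch} case, where the reception by $R$ is derived from $\conf{\rho,R}$ with its own typing.
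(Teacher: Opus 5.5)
Your single-tag case and the case of a pair whose two tags lie in the same component are fine, and your overall case split matches the paper's. The gap is in the cross-synchronization case, at the claim that ``the ordering of the pair together with the tag-ownership lemma fixes which component outputs''. That claim is false. In both \textsc{SynchL} and \textsc{SynchR} the scheduler is $(t,t')$, with $t$ the tag of the \emph{left} component's move and $t'$ the tag of the right one's. So the pair records position, not direction; your own tag-ownership lemma only says ``an output and an input''. With $t \in \tagset(P)$ and $t' \in \tagset(R)$, nothing you have said excludes a mixed situation:
\begin{itemize}
\item one of the two given transitions comes from \textsc{SynchL}, with $\conf{\rho,P} \cmoveto[c!v]{t}$ and $\conf{\rho,R} \cmoveto[c?v]{t'}$;
\item the other comes from \textsc{SynchR}, with $\conf{\rho,P} \cmoveto[d?w]{t}$ and $\conf{\rho,R} \cmoveto[d!w]{t'}$.
\end{itemize}
Your subsequent uniqueness argument silently assumes that both transitions use the same rule.

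The missing step is one more appeal to determinism, and it is exactly the crux of the paper's proof. In the mixed situation, $P$ would have two moves under the same tag $t$ whose labels $c!v$ and $d?w$ differ and are not both receptions. This contradicts determinism of $P$ (and symmetrically of $R$). Once this is in place, both transitions use \textsc{SynchL}, and your remaining argument goes through as in the paper: $P$ has a unique output, the received value is pinned to $v$, and $R$ then has a unique target since equal labels force equal targets. So the real obstacle in this case is the direction issue. The \textsc{TauPair}-versus-synchronization clash you flagged is not the difficulty; disjointness of tags settles it immediately.
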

\begin{proofsketch}
	By case analysis on the scheduler.
	The interesting case is the synchronization between $P$ and $R$.
	By \autoref{ass:det} there cannot be both a valid \textsc{SynchL} and \textsc{SynchR} transition otherwise there would be two moves with distinct labels with the same scheduler in both $P$ and $R$.
	Furthermore, reception and sending must have the same value in synchronizations, and since by \autoref{ass:det} there can only be one specific sending value, then also the reception must be compatible.
\end{proofsketch}


We are now ready to introduce our first equivalence.
A saturated bisimulation is a relation over distributions where related pairs must have the same type and mass, and must reduce in related distributions under every possible compatible context.
We start with unscheduled saturated bisimilarity, which is the standard definition for probabilistic systems, used e.g. in~\cite{fengprobabilistic2007}.

\begin{definition}[Unscheduled Saturated Bisimilarity]
	A well-typed relation $\mathord{\rel} \subseteq \sdist{\confset} \times \sdist{\confset}$ is a (unscheduled) \emph{saturated bisimulation} if $\Delta \rel \Theta$ implies
	$\mass{\Delta} = \mass{\Theta}$ and, for any context $B[\blank]$
	compatible with both $\Delta$ and $\Theta$, it holds that
	\begin{itemize}
		\item whenever $B[\Delta] \dumoveto \Delta'$, there exists $\Theta'$
		      such that $B[\Theta] \dumoveto \Theta'$ and $\Delta' \rel \Theta'$;
		\item whenever $B[\Theta] \dumoveto \Theta'$, there exists $\Delta'$
		      such that $B[\Delta] \dumoveto \Delta'$ and $\Delta' \rel \Theta'$.
	\end{itemize}

	Let \emph{unscheduled saturated bisimilarity}, denoted $\simus$, be the largest unscheduled saturated bisimulation.

\end{definition}

\begin{figure}[t]
	\centering
	\begin{tikzpicture}[node distance=5mm]

		%

		\node [psplit] at (0,0) (p1) {};
		\node [conf, above=1mm of p1] (pd1) {$B[\Delta_{01}]$};
		\node [conf, below left=3mm and 0mm of p1] (c12) {$\conf{\kbz, B[t_0\tags c!q]}$};
		\node [conf, below right=3mm and 0mm of p1] (c13) {$\conf{\kbo, B[t_0\tags c!q]}$};
		\node [psplit, below=of c12] (p12) {};
		\node [psplit, below=of c13] (p13) {};
		\node [conf, below=of p12] (c2) {$\confel_0$};
		\node [conf, below=of p13] (c3) {$\confel_1$};
		\node [psplit, below left=5mm and 3mm of c2] (p2) {};
		\node [psplit, below=of $(c2.south)!0.5!(c3.south)$] (p3) {};
		\node [psplit, below right=5mm and 3mm of c3] (p4) {};
		\node [conf, below left=3mm and 3mm of p2] (c4) {$\confel^0_0$};
		\node [conf, below right=3mm and 6mm of p2] (c5) {$\confel^0_i$};
		\node [conf, below left=3mm and 6mm of p4] (c6) {$\confel^1_{\uminus i}$};
		\node [conf, below right=3mm and 3mm of p4] (c7) {$\confel^1_1$};
		\node [psplit, below=of c4] (p5) {};
		\node [psplit, below=of c5] (p6) {};
		\node [conf, below=of p5] (c8) {$\circ$};
		\node [conf, below=of p6] (c9) {$\circ$};


		\draw [parrow] (p1) -- (c12) node[weight,above left=-.3mm and -1mm] {$1/2$};
		\draw [parrow] (p1) -- (c13) node[weight,above right=-.3mm and -1mm] {$1/2$};
		\draw [arrow] (c12) -- (p12) node[weight,left=1mm] { $(t_0,t_1)\tags\tau$ };
		\draw [arrow] (c13) -- (p13) node[weight,right=1mm] { $(t_0,t_1)\tags\tau$ };
		\draw [parrow] (p12) -- (c2) node[weight,above left=.3mm and 1mm] { };
		\draw [parrow] (p13) -- (c3) node[weight,above right=.3mm and 1mm] { };
		\draw [arrow] (c2) -- (p2) node[weight,above left] { $t_2\tags\tau$ };
		\draw [arrow] (c2) -- (p3) node[weight,above] { $t_3\tags\tau$ };
		\draw [arrow] (c3) -- (p3) node[weight,above] { $t_3\tags\tau$ };
		\draw [arrow] (c3) -- (p4) node[weight,above right] { $t_2\tags\tau$ };
		\draw [parrow] (p2) -- (c4) node[weight,above left=-1.5mm and 1mm] { };
		\draw [parrow] (p3) -- (c5) node[weight,above left=-1.5mm and 1mm] {$1/2$};
		\draw [parrow] (p3) -- (c6) node[weight,above right=-1.5mm and 1mm] {$1/2$};
		\draw [parrow] (p4) -- (c7) node[weight,above right=-1.5mm and 1mm] { };
		\draw [arrow] (c4) -- (p5) node[weight,left] { $t_4\tags\tau$ };
		\draw [arrow] (c5) -- (p6) node[weight,right] { $t_4\tags\tau$ };
		\draw [parrow] (p5) -- (c8) node[weight,above right=-1.5mm and 1mm] { };
		\draw [parrow] (p6) -- (c9) node[weight,above right=-1.5mm and 1mm] { };
	\end{tikzpicture}
	\hspace{.6cm}
	\begin{tikzpicture}[node distance=5mm]
		\node [psplit] at (0,0) (p1) {};
		\node [conf, above=1mm of p1] (pd1) {$B[\Delta_{\pm}]$};
		\node [conf, below left=3mm and 0mm of p1] (c12) {$\conf{\kbpl, B[t_0\tags c!q]}$};
		\node [conf, below right=3mm and 0mm of p1] (c13) {$\conf{\kbm, B[t_0\tags c!q]}$};
		\node [psplit, below=of c12] (p12) {};
		\node [psplit, below=of c13] (p13) {};
		\node [conf, below=of p12] (c2) {$\confel_+$};
		\node [conf, below=of p13] (c3) {$\confel_-$};
		\node [psplit, below=of c2] (p2) {};
		\node [psplit, below=of c3] (p3) {};
		\node [conf, below left=3mm and 6mm of p2] (c4) {$\confel^0_0$};
		\node [conf, below right=3mm and 6mm of p2] (c5) {$\confel^1_1$};
		\node [conf, below left=3mm and 6mm of p3] (c6) {$\confel^0_i$};
		\node [conf, below right=3mm and 6mm of p3] (c7) {$\confel^1_{\uminus i}$};
		\node [psplit, below=of c4] (p5) {};
		\node [psplit, below=of c6] (p6) {};
		\node [conf, below=of p5] (c8) {$\circ$};
		\node [conf, below=of p6] (c9) {$\circ$};

		\draw [parrow] (p1) -- (c12) node[weight,above left=-.3mm and -1mm] {$1/2$};
		\draw [parrow] (p1) -- (c13) node[weight,above right=-.3mm and -1mm] {$1/2$};
		\draw [arrow] (c12) -- (p12) node[weight, left=1mm] { $(t_0,t_1)\tags\tau$ };
		\draw [arrow] (c13) -- (p13) node[weight,right=1mm] { $(t_0,t_1)\tags\tau$ };
		\draw [parrow] (p12) -- (c2) node[weight,above left=.3mm and 1mm] { };
		\draw [parrow] (p13) -- (c3) node[weight,above right=.3mm and 1mm] { };
		\draw [arrow] (c2) -- (p2) node[weight, left] { $t_2\tags\tau$ };
		\draw [arrow] (c2) -- (p3) node[weight,above  left=1mm and 1mm] { $t_3\tags\tau$ };
		\draw [arrow] (c3) -- (p2) node[weight,above right=1mm and 1mm] { $t_3\tags\tau$ };
		\draw [arrow] (c3) -- (p3) node[weight,right] { $t_2\tags\tau$ };
		\draw [parrow] (p2) -- (c4) node[weight,above left=-1.5mm and 1mm] {$1/2$};
		\draw [parrow] (p2) -- (c5) node[weight,above right=-1.5mm and 1mm] {$1/2$};
		\draw [parrow] (p3) -- (c6) node[weight,above left=-1.5mm and 1mm] {$1/2$};
		\draw [parrow] (p3) -- (c7) node[weight,above right=-1.5mm and 1mm] {$1/2$};
		\draw [arrow] (c4) -- (p5) node[weight, left] { $t_4\tags\tau$ };
		\draw [arrow] (c6) -- (p6) node[weight,right] { $t_4\tags\tau$ };
		\draw [parrow] (p5) -- (c8) node[weight,above right=-1.5mm and 1mm] { };
		\draw [parrow] (p6) -- (c9) node[weight,above right=-1.5mm and 1mm] { };
	\end{tikzpicture}

	\caption{Indistinguishable qubit sources $\Delta_{01}$ and $\Delta_{\pm}$ are distinguished by the context $B[\blank]$.}
	\label{fig:ex-zopm}
\end{figure}
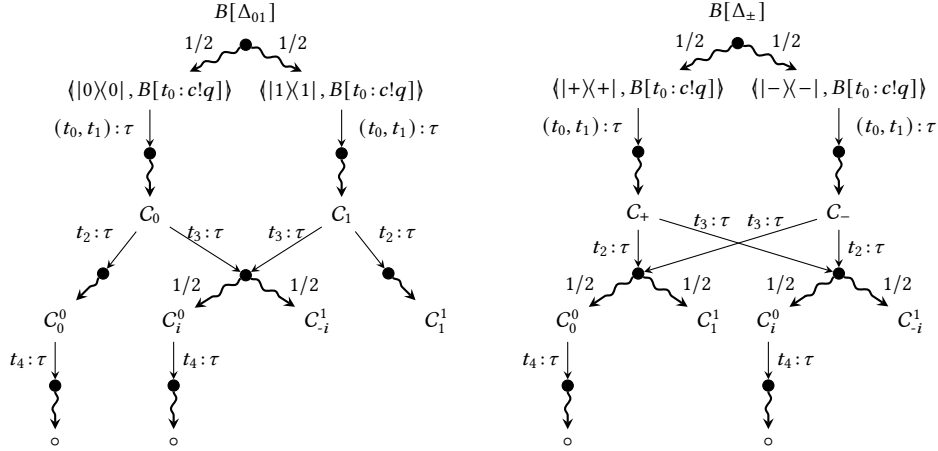

We now compare the bisimilarity above with the prescriptions of quantum theory, showing that it is not consistent with~\autoref{thm:qind}.

We consider a specific counterexample of processes that should be indistinguishable according to quantum theory:
a pair of non-biased random qubit sources, the first sending a qubit in state $\kz$ or
$\ko$, the second in state $\kpl$ or $\km$.
Quantum theory prescribes that such two sources cannot be distinguished by any observer, as in both cases the received qubit is in state $\frac{1}{2}I$, and therefore behaves the same (see~\cite{nielsenquantum2010}).
One expects the lqCCS encoding of these sources to be bisimilar, however, this is not the case for $\simus$.
\begin{example}\label{ex:broken-nondet}
	The lqCCS encodings of these sources follows:
	\begin{gather*}
		\Delta_{01}  = \delem{\frac{1}{2}}{\conf{\kbz, t_0\tags c!q}} \oplus \delem{\frac{1}{2}}{\conf{\kbo, t_0\tags c!q}}
		\qquad\text{ and }\qquad
		\Delta_{\pm} = \delem{\frac{1}{2}}{\conf{\kbpl, t_0\tags c!q}} \oplus \delem{\frac{1}{2}}{\conf{\kbm, t_0\tags c!q}}
	\end{gather*}
	The unscheduled bisimilarity erroneously discriminates the two sources.
	To see that $\Delta_{01} \nsimus \Delta_{\pm}$, take $B[\blank] = [\blank] \parallel t_1\tags c?x.(P+Q)$ where
	\begin{gather*}
		P =  t_2\tags \measstd{x}{y} .R
		\qquad\text{ and }\qquad
		Q =  t_3\tags \meashadi{x}{y} .R
		\qquad\text{ with }\qquad
		R =  (\ite{y=0}{t_4\tags \tau}{\nil}) \parallel \nil_{x}
	\end{gather*}
	Notice that $P$ and $Q$ perform different measurements and
	make the outcome observable by enabling a $\tau$-transition only when $y = 0$.
	In a nutshell, we distinguish the sources since a series of reductions
	$B[\Delta_{01}] \dumoveto \Delta_{01}' \dumoveto \Delta_{01}'' \dumoveto \Theta_{01}$ exists where $\mass{\Theta_{01}} = 3/4$, while
	all reductions
	$B[\Delta_{\pm}] \dumoveto \Delta_{\pm}' \dumoveto \Delta_{\pm}'' \dumoveto \Theta_{\pm}$ are such that
	$\mass{\Theta_{\pm}} = 1/2$.

	Formally, let $\confel_{\psi}$ be $\conf{\kbp, P+Q}[\sfrac{q}{x}]$
	and $\confel^n_{\psi}$ be $\conf{\kbp, R[\sfrac{n}{y}]}$ for $n \in \{0,1\}$, then
	\autoref{fig:ex-zopm} shows the $\tau$-transitions of the two systems: $B[\Delta_{01}]$ reduces to $\Delta_{01}' = \delem{\frac{1}{2}}{\confel_0} \oplus \delem{\frac{1}{2}}{\confel_1}$, and $B[\Delta_{\pm}]$ can only reduce to $\Delta_{\pm}' = \delem{\frac{1}{2}}{\confel_+} \oplus \delem{\frac{1}{2}}{\confel_-}$ to match this move.
	As seen in figure, each $\confel_\psi$ has available reductions both with $t_2$ and $t_3$.
	Consider the following ones:
	\begin{align*}
		{\confel_0} & \moveto{t_2} \sconf{\kbz, R[\sfrac{0}{y}]} = \delem{1}{\confel_0^0}
		,                                                                                    \\
		{\confel_1} & \moveto{t_3} \delem{\frac{1}{2}}{\conf{\kbip, R[\sfrac{0}{y}]}} \oplus
		\delem{\frac{1}{2}}{\conf{\kbim, R[\sfrac{1}{y}]}} = \delem{\frac{1}{2}}{\confel_i^0} \oplus \delem{\frac{1}{2}}{\confel_{\uminus i}^1}
	\end{align*}
	In the unscheduled semantics, $\Delta_{01}'$ can mix the two choices and reduce $\confel_0$ with $t_2$ and $\confel_1$ with $t_3$, formally,
	$\Delta_{01}' \dumoveto \delem{\frac{1}{2}}{\confel^0_0} \oplus \delem{\frac{1}{4}}{\confel^0_i} \oplus \delem{\frac{1}{4}}{\confel^1_{\uminus i}} = \Delta_{01}''$.
	This move corresponds to taking the left reductions both probabilistic branches of the pLTS of~\autoref{fig:ex-zopm}.
	However, choosing these tags requires knowing if we are in $C_0$ or $C_1$, which should be only possible by measuring the received qubit, which is not the case.
	Finally, note that $\Delta_{01}'' \dumoveto \Theta_{01}$ with $\mass{\Theta_{01}} = 3/4$, because $\confel_\psi^n$ can move (with $t_4$) only if $n = 0$.

	Consider now $\Delta_{\pm}'$.
	The following moves are all those available for $\confel_{+}$:
	\begin{align*}
		\confel_{+} & \moveto{t_2} \delem{\frac{1}{2}}{\conf{\kbz, R[\sfrac{0}{y}]}} \oplus
		\delem{\frac{1}{2}}{\conf{\kbo, R[\sfrac{1}{y}]}} = \delem{\frac{1}{2}}{\confel_0^0} \oplus \delem{\frac{1}{2}}{\confel_{1}^1} \\
		\confel_{+} & \moveto{t_3} \delem{\frac{1}{2}}{\conf{\kbip, R[\sfrac{0}{y}]}} \oplus
		\delem{\frac{1}{2}}{\conf{\kbim, R[\sfrac{1}{y}]}} = \delem{\frac{1}{2}}{\confel_i^0} \oplus \delem{\frac{1}{2}}{\confel_{\uminus i}^1},
	\end{align*}
	Noticeably, the available moves of $\confel_{-}$ coincide.

	It is easy to check that $\Delta_{\pm}'$ cannot replicate the behaviour of $\Delta_{01}'$: for any
	choice of $t_2$ and $t_3$ it will reduce to a $\Delta_{\pm}''$ where only half of the configurations in the support can move, i.e. such that  $\Delta_{\pm}'' \dumoveto \Theta_{\pm}$ only if $\mass{\Theta_{\pm}} = 1/2$. \exqed
\end{example}

\autoref{ex:broken-nondet} is paradigmatic, where different mixtures of quantum states are discriminated because the moves are chosen
according to the value of some received qubit, which in theory should be unknown.
Our schedulers do not allow processes to replicate this behaviour, forcing distributions to make a reasonable choice based on classically determined tags.
On a practical account, this feature is relevant for real-world applications, as several quantum protocols rely on the indistinguishability of these sources (or similar ones).

The counterexample highlighted by the example above is not typical of our specific formulation.
Indeed, the two sources are commonly discriminated by 
the earliest probabilistic bisimilarities
 proposed for quantum protocols, e.g., the ones of~\cite{fengprobabilistic2007,dengopen2012}.
This problem is solved by more recent proposals~\cite{fengtoward2015-1,dengbisimulations2018}, which correctly identify the two processes.
However, similar cases remain unsolved in literature, calling for a principled approach like the one we propose.
As an example, we present a simple protocol that does not feature communication, yet it permits
unscheduled bisimilarity to instantly distinguish which operation has been locally performed, achieving physically implausible faster-than-light communication.
Furthermore, this example proves that schedulers are needed also when no quantum communication is involved.

\begin{example}\label{ex:broken-ftl}
	The protocol is set between two actors, Alice and Bob, who share an entangled pair of qubits ($q_1,q_2$) in the entangled state $\kphip$.
	Before moving apart, they synchronize their clocks and agree that at some given time Alice will measure her qubit in one of two agreed bases, and Bob will immediately try to guess which measurement Alice performed.
	If the two succeed in finding a physically realizable procedure for Bob to correctly guess the measurement, then Alice can instantly communicate a bit to Bob, regardless of the distance separating them.
	Such a procedure should be impossible according to the commonly understood laws of physics.

	Assume Alice and Bob agree that Alice will measure her qubit in either the computational or diagonal basis, thus implicitly sending respectively $0$ or $1$.
	We will model the actors in lqCCS as follows
	\begin{align*}
		\proc{A}_{01}  & \coloneqq t_0 \tags \measstd{q_1}{x}.\nil_{q_1}                                                                          \\
		\proc{A}_{\pm} & \coloneqq t_0 \tags \meashad{q_1}{x}.\nil_{q_1}                                                                          \\
		\proc{B}       & \coloneqq (t_1 \tags \measstd{q_2}{y}.t_1 \tags c!y.\nil_{q_2}) + (t_2 \tags \meashadi{q_2}{y}.t_2 \tags c!y.\nil_{q_2})
	\end{align*}

	Superluminal communication is possible if Bob is capable of distinguishing the two Alice processes, that is, if we have that $\conf{\kbphip, \proc{A}_{01} \parallel \proc{B}} \nsimus \conf{\kbphip,\proc{A}_{\pm} \parallel \proc{B}}$.

	Assume Alice performs her measure, as expected
	\begin{align*}
		\sconf{\kbphip, \proc{A}_{01} \parallel \proc{B}} & \dumoveto \Delta = \delem{\frac{1}{2}}{\conf{\ketbra{00}, \nil_{q_1} \parallel \proc{B}}} \oplus \delem{\frac{1}{2}}{\conf{\ketbra{11}, \nil_{q_1} \parallel \proc{B}}} \\
		\sconf{\kbphip,\proc{A}_{\pm} \parallel \proc{B}} & \dumoveto \Theta = \delem{\frac{1}{2}}{\conf{\ketbra{++}, \nil_{q_1} \parallel \proc{B}}} \oplus \delem{\frac{1}{2}}{\conf{\ketbra{--}, \nil_{q_1} \parallel \proc{B}}}
	\end{align*}

	In the unscheduled semantics, the distribution $\Delta$ can evolve by selecting distinct tags for each element of its distribution, for example by performing the measurement on the standard basis for the distribution with state $\ketbra{00}$ and the measurement in the imaginary basis for the distribution with state $\ketbra{11}$.
	\begin{align*}
		\delem{\frac{1}{2}}{\conf{\ketbra{00}, \nil_{q_1} \parallel \proc{B}}} \oplus \delem{\frac{1}{2}}{\conf{\ketbra{11}, \nil_{q_1} \parallel \proc{B}}} & \dumoveto \Delta' = \delem{\frac{1}{2}}{C^0_{\ket{00}}} \oplus \delem{\frac{1}{4}}{C^0_{\ketbra{1i}}} \oplus \delem{\frac{1}{4}}{C^1_{\ketbra{11}}}
	\end{align*}
	where $C^i_{\kp}$ is the configuration $\conf{\kbp, \nil_{q_1} \parallel c!i.\nil{q_2}}$.
	Take the context $B[\blank] = [\blank] \parallel c?x.\ite{x = 0}{\tau.\nil}{\nil}$, that immediately goes in deadlock only if the output on the channel $c$ is $1$.
	After preforming the synchronization, only the distribution starting from $C^0$ can perform another move, therefore the total mass of the distribution is $3/4$.

	This particular execution cannot be replicated by $\Theta$ for any possible combination of tags.
	\begin{align*}
		\delem{\frac{1}{2}}{\conf{\ketbra{++}, \nil_{q_1} \parallel \proc{B}}} \oplus \delem{\frac{1}{2}}{\conf{\ketbra{--}, \nil_{q_1} \parallel \proc{B}}} & \dumoveto \delem{\frac{1}{4}}{C^0_{\ket{+0}}} \oplus \delem{\frac{1}{4}}{C^1_{\ket{+1}}} \oplus \delem{\frac{1}{4}}{C^0_{\ketbra{-0}}} \oplus \delem{\frac{1}{4}}{C^1_{\ketbra{-1}}}                        \\
		\delem{\frac{1}{2}}{\conf{\ketbra{++}, \nil_{q_1} \parallel \proc{B}}} \oplus \delem{\frac{1}{2}}{\conf{\ketbra{--}, \nil_{q_1} \parallel \proc{B}}} & \dumoveto \delem{\frac{1}{4}}{C^0_{\ket{+0}}} \oplus \delem{\frac{1}{4}}{C^1_{\ket{+1}}} \oplus \delem{\frac{1}{4}}{C^0_{\ketbra{\uminus i}}} \oplus \delem{\frac{1}{4}}{C^1_{\ketbra{-\uminus i}}}         \\
		\delem{\frac{1}{2}}{\conf{\ketbra{++}, \nil_{q_1} \parallel \proc{B}}} \oplus \delem{\frac{1}{2}}{\conf{\ketbra{--}, \nil_{q_1} \parallel \proc{B}}} & \dumoveto \delem{\frac{1}{4}}{C^0_{\ket{+i}}} \oplus \delem{\frac{1}{4}}{C^1_{\ket{+\uminus i}}} \oplus \delem{\frac{1}{4}}{C^0_{\ketbra{-0}}} \oplus \delem{\frac{1}{4}}{C^1_{\ketbra{-1}}}                \\
		\delem{\frac{1}{2}}{\conf{\ketbra{++}, \nil_{q_1} \parallel \proc{B}}} \oplus \delem{\frac{1}{2}}{\conf{\ketbra{--}, \nil_{q_1} \parallel \proc{B}}} & \dumoveto \delem{\frac{1}{4}}{C^0_{\ket{+i}}} \oplus \delem{\frac{1}{4}}{C^1_{\ket{+\uminus i}}} \oplus \delem{\frac{1}{4}}{C^0_{\ketbra{\uminus i}}} \oplus \delem{\frac{1}{4}}{C^1_{\ketbra{-\uminus i}}}
	\end{align*}
	It is evident that applying the same context $B[\blank]$ as for $\Delta$, after the synchronization the mass of the resulting distribution is always $1/2$.
	Therefore, we showed that $\conf{\kbphip, \proc{A}_{01} \parallel \proc{B}} \nsimus \conf{\kbphip, \proc{A}_{\pm} \parallel \proc{B}}$, and proved that an unscheduled semantics allows superluminal communication. \exqed
\end{example}

\subsection{Scheduled Bisimilarity}\label{sec:dem}
The two examples on qubit sources show that classical probabilistic approaches fail to correctly model quantum observability, due to processes capable to act differently in each element of a distribution.
In the following we define a scheduled bisimilarity, where distributions are required to evolve according to the same scheduler.

\begin{definition}[Scheduled Saturated Bisimilarity]
	A well-typed relation $\mathord{\rel} \subseteq \sdist{\confset} \times \sdist{\confset}$ is a \emph{scheduled saturated bisimulation} if $\Delta \rel \Theta$ implies $\mass{\Delta} = \mass{\Theta}$ and for any context $B[\blank]$
	compatible with both $\Delta$ and $\Theta$, and for any scheduler $s$, it holds that
	\begin{itemize}
		\item whenever $B[\Delta] \dmoveto{s} \Delta'$, there exists $\Theta'$
		      such that $B[\Theta] \dmoveto{s} \Theta'$ and $\Delta' \rel \Theta'$;
		\item whenever $B[\Theta] \dmoveto{s} \Theta'$, there exists $\Delta'$
		      such that $B[\Delta] \dmoveto{s} \Delta'$ and $\Delta' \rel \Theta'$.
	\end{itemize}

	Let \emph{scheduled saturated bisimilarity}, denoted $\simds$, be the largest scheduled saturated bisimulation.

\end{definition}

Schedulers suffice for guaranteeing indistinguishability of qubit sources sending $\kz$ and $\ko$ or $\kpl$ and $\km$ with equal probability.
\begin{example}\label{ex:fixed-nondet}
	Consider a pair of non-biased random qubit sources as those of~\autoref{ex:broken-nondet}.
	Fittingly, the lqCCS encodings of these sources are scheduled saturated bisimilar: $\Delta_{01} \simds \Delta_{\pm}$.
	Intuitively, this holds because scheduler consistency in the definition of $\dmoveto{}$ forbids to combine moves labelled by different choices:
	$\Delta_{01}'$ cannot mix the two choices to obtain an outcome that $\Delta_{\pm}'$ cannot replicate as it does in~\autoref{ex:broken-nondet}.
	In a nutshell, this allows to prove that both $\Delta_{01}$ and $\Delta_{\pm}$ are bisimilar to the point distribution with quantum state $\frac{1}{2}\I$, as one may expect from~\autoref{thm:qind}.
	\begin{align*}
		\Delta_{01}  & = \delem{\frac{1}{2}}{\conf{\kbz, t_0\tags c!q}} \oplus \delem{\frac{1}{2}}{\conf{\kbo, t_0\tags c!q}} \simds
		\sconf{\frac{1}{2}\I, t_0\tags c!q}                                                                                           \\
		\Delta_{\pm} & = \delem{\frac{1}{2}}{\conf{\kbpl, t_0\tags c!q}} \oplus \delem{\frac{1}{2}}{\conf{\kbm, t_0\tags c!q}} \simds
		\sconf{\frac{1}{2}\I, t_0\tags c!q} \exqed
	\end{align*}
\end{example}

The following theorem generalizes the equality of the example above, and lifts to lqCCS the indistinguishability relations between quantum states of~\autoref{thm:qind}.
This property states that the same process, acting on two different but indistinguishable mixed quantum states, exhibits behaviour that cannot be distinguished by any observer, therefore yielding bisimilar distributions.

\begin{restatable}{theorem}{propertyA}\label{thm:propertyA}
	If $\sum_i p_i \cdot \rho_i = \sum_j q_j \cdot \rho_j$ then $\osum_{i} \delem{p_i}{\conf{\rho_i, P}} \simds \osum_{j} \delem{q_j}{\conf{\sigma_j, P}}$.
\end{restatable}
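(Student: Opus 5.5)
We construct an explicit relation and show that it is a scheduled saturated bisimulation. Define $\rel$ as the set of pairs $(\Delta,\Theta)$ of well-typed distributions with the following property: there is a finite set of processes $P_1,\dots,P_n$ such that
\begin{align*}
\Delta = \osum_{k}\osum_{i} \delem{p_{k,i}}{\conf{\rho_{k,i}, P_k}}
\quad\text{and}\quad
\Theta = \osum_{k}\osum_{j} \delem{q_{k,j}}{\conf{\sigma_{k,j}, P_k}},
\end{align*}
and for every $k$ the unnormalised density operators agree, $\sum_i p_{k,i}\rho_{k,i} = \sum_j q_{k,j}\sigma_{k,j}$. The statement is the special case with one process $P_1 = P$. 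The relation is well-typed because both sides carry the same processes, and all quantum states for a fixed process live on the same $\Sigma_\rho$. Mass equality follows by taking traces of the operator equalities. Grouping by the process also makes $\rel$ closed under sums and scalings, so it is linear.

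Next we check closure under a compatible context $B[\blank]$. Applying $B$ replaces each $P_k$ by $B[P_k]$ and leaves the quantum states unchanged, so $(B[\Delta], B[\Theta])$ is again in $\rel$. It therefore suffices to prove the transfer property for pairs in $\rel$ directly, without a context, for an arbitrary scheduler $s$ and label $\mu$.

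Fix $s$ and $\mu$. By left-linearity of $\dmoveto{s}$ and by \autoref{thm:determdistr}, the move of $\Delta$ is the sum over $k,i$ of the unique $\operatorname{bot}$-moves of the points $\conf{\rho_{k,i},P_k}$. The key lemma is a \emph{state-linearity} property of the single-configuration semantics. For a fixed process $P_k$, scheduler $s$ and label $\mu$, whether a move exists does not depend on $\rho$. This holds because the rules inspect only the syntax, the evaluation of classical expressions, and the sets $\Sigma_\rho,\Sigma_P$ (for \textsc{Recv}), never the values in $\rho$. Moreover, when a move exists, the result has the form $\osum_m \delem{\tr(\F_m(\rho))}{\conf{\F_m(\rho)/\tr(\F_m(\rho)), P'_m}}$. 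Here the trace non-increasing superoperators $\F_m$ (identity, $\E^{\tilde q}$, or $\M_m^{\tilde q}$) and the continuations $P'_m$ (the latter depending on the outcome $m$ and, for inputs, on the received value $v$ fixed by $\mu$) are determined by $P_k,s,\mu$ alone. We would prove this by induction on the derivation. The \textsc{Par}, \textsc{Synch}, \textsc{Sum} and \textsc{Ite} cases just propagate, using determinism (\autoref{ass:det}) to ensure uniqueness.

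Given the lemma, $\Delta$ moves to $\osum_{k,m}\osum_i \delem{p_{k,i}\tr(\F_{k,m}(\rho_{k,i}))}{\conf{\cdots, P'_{k,m}}}$, where the index $(k,m)$ plays the role of the new process index. The corresponding unnormalised operator is $\sum_i p_{k,i}\F_{k,m}(\rho_{k,i})$. By linearity of $\F_{k,m}$ this equals $\sum_j q_{k,j}\F_{k,m}(\sigma_{k,j})$, which is exactly the operator for the unique move of $\Theta$ under the same $s$ and $\mu$. When no move exists, both sides go to $\epsilon$ for that $k$. The terms with zero trace drop out of the support on both sides consistently. Two distinct indices $(k,m)$ may yield the same continuation process; we then merge them, and the operator equalities add up, so the result is still in $\rel$. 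The symmetric clause of the bisimulation is identical, so $\rel \subseteq \simds$.

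The main obstacle is the state-linearity lemma, in particular two points. First, one must confirm that the premises of \textsc{ParL}, \textsc{ParR} and \textsc{Recv} refer only to typing sets and not to $\rho$. Second, one must handle normalisation correctly, so that weights times normalised states recombine into $\F(\rho)$ even when $\tr(\F(\rho))=0$ for some summands and not others.
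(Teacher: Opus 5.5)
Your proof is correct. It shares its core with the paper's proof: the $\rho$-independent family of superoperators and continuations (your state-linearity lemma is the paper's \autoref{thm:alwaysmove}), combined with linearity of superoperators and trace. The difference lies in how the argument is packaged.

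The paper treats only the binary shape $\delem{1}{\conf{p\rho+(1-p)\sigma,P}}$ versus $\delem{p}{\conf{\rho,P}}\oplus\delem{1-p}{\conf{\sigma,P}}$. It shows that after a scheduled step the two derivatives decompose as a convex combination of pairs of that same shape, so its relation is a bisimulation up to the linear closure $\lcl$. It then obtains the general mixture statement from linearity of $\simds$ (\autoref{thm:linearity}) together with transitivity, which it takes for granted.

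You instead work with a relation that is already closed under everything the paper recovers via up-to reasoning: pairs with equal unnormalised operator per process. This is exactly the kernel of the map $\alpha$ of \autoref{def:alphaGamma}. Being the kernel of a function, it is automatically transitive and closed under linear combinations. So you need no up-to technique, no transitivity of $\simds$, and no induction from binary to $n$-ary mixtures. As a by-product you get $\Delta\simds\gamma(\alpha(\Delta))$ directly, which the paper only derives later from this very theorem (in \autoref{lm:alphaGammaInverse}).

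The price is the bookkeeping you already flagged: merging indices $(k,m)$ with coinciding continuations, and discarding zero-trace branches on both sides consistently. There are also two routine checks you should state explicitly. Derivatives remain well-typed by \autoref{thm:typepreservation}. The processes $B[P_k]$ remain deterministic by \autoref{lem:deterPar}, since compatible contexts use disjoint tags, so \autoref{thm:determdistr} still applies after a context is plugged in.

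Checking arbitrary labels $\mu$ is more than needed, since saturated bisimulation only inspects $\tau$-steps under contexts, but it is harmless. The paper's route, by contrast, reuses the up-to machinery it needs anyway for \autoref{thm:linearity} and only has to inspect the simplest shape of pair.
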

\begin{proofsketch}
	We prove that $\Delta = \delem{p}{\conf{\rho , P}} \oplus \delem{(1-p)}{\conf{\sigma, P}} \simds \delem{1}{\conf{p \cdot \rho + (1 - p) \cdot \sigma, P}} =\Theta$, and the theorem follows by transitivity.
	First, we show that $\Delta$ replicates the moves of $\Theta$ by the linearity of superoperators, i.e. $p \cdot \E(\rho) + (1-p) \cdot \E(\sigma) = \E(p \cdot \rho + (1-p) \cdot \sigma)$, and similarly for measurements.
	Then, we show that $\Theta$ simulates $\Delta$.
	Here, the presence of schedulers is key: it forbids
	$\Delta$ from combining different non-deterministic choices to perform a move that would not be available to $\Theta$ (see~\autoref{ex:broken-nondet}).
\end{proofsketch}

This result directly derives from the use of schedulers, and it is not common in quantum versions of CCS: it holds only for specific distributions in~\cite{ceragioliQuantumBisimilarityBarbs2024} and in~\cite{dengbisimulations2018}.

Finally, we notice that $\simds$ is a linear relation, meaning that linear combinations of bisimilar distributions yield bisimilar distributions.
\begin{restatable}{theorem}{linearityCongruence}\label{thm:linearity}
	The bisimilarity $\simds$ is linear.
\end{restatable}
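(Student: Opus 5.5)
We show that the linear closure of $\simds$ is itself a scheduled saturated bisimulation; since $\simds$ is the largest one, this closure is contained in it and linearity follows.

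\textbf{Candidate relation.} Define $\mathcal{L}$ as the set of pairs $(\osum_{i\in I}\delem{p_i}{\Delta_i},\ \osum_{i\in I}\delem{p_i}{\Theta_i})$ with $I$ finite and $\Delta_i \simds \Theta_i$ for every $i$. We require both sums to be defined. Since $\mass{\Delta_i}=\mass{\Theta_i}$, one sum is defined exactly when the other is. This observation already yields the definedness clauses of left- and right-linearity.

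\textbf{Basic properties of $\mathcal{L}$.} The relation $\mathcal{L}$ is well typed, because each pair $\Delta_i,\Theta_i$ shares a typing $(\Sigma,\Sigma')$.
\begin{itemize}
\item There is a subtlety: distinct indices $i$ might carry distinct typings. In that case the combined distribution would not be typed at all. We therefore restrict $\mathcal{L}$ to families with a common typing, which is implicit when linearity is stated for well-typed distributions. The empty distribution $\epsilon$ is typed by any pair, so it causes no trouble.
\item Mass equality holds by additivity of $\mass{\cdot}$.
\end{itemize}

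\textbf{Context and tag compatibility.} Take a context $B[\blank]$ compatible with $\Delta=\osum_i \delem{p_i}{\Delta_i}$ and with $\Theta=\osum_i\delem{p_i}{\Theta_i}$. Then $B[\blank]$ is compatible with every $\Delta_i$ and $\Theta_i$. Indeed, the tags of each summand are contained in $\tagset(\Delta)$, and the typing is shared. Moreover, $B[\Delta]=\osum_i\delem{p_i}{B[\Delta_i]}$ and likewise for $\Theta$, since contexts act pointwise on supports.

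\textbf{Transfer property.} This is the core step. Suppose $B[\Delta]\dmoveto[\mu]{s}\Delta'$. The relation $\dmoveto[\mu]{s}$ is the left-linear lifting $\lift(\operatorname{bot}(\cmoveto[\mu]{s}))$, so by \autoref{lem:llift_decomp} it is left-decomposable. Applying this repeatedly (an induction on $|I|$, normalising the weights) splits the move as $\Delta'=\osum_i\delem{p_i}{\Delta'_i}$ with $B[\Delta_i]\dmoveto[\mu]{s}\Delta'_i$.

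The terms with $p_i=0$ need care in the splitting. They can be discarded beforehand, or else their moves always exist thanks to the deadlock rule $\operatorname{bot}$.

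Since $\Delta_i\simds\Theta_i$, for each $i$ we obtain $\Theta'_i$ with $B[\Theta_i]\dmoveto[\mu]{s}\Theta'_i$ and $\Delta'_i\simds\Theta'_i$. By linearity of the lifted transition relation (left-linearity of $\lift$), we get $B[\Theta]\dmoveto[\mu]{s}\osum_i\delem{p_i}{\Theta'_i}$. This sum is defined because $\mass{\Theta'_i}=\mass{\Delta'_i}$. The resulting pair lies in $\mathcal{L}$ by construction. The symmetric clause is proved in the same way.

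\textbf{Where the difficulty lies.} The main technical point is the decomposition step. Left-decomposability is stated only for binary splits with weights $p$ and $1-p$, whereas the sum $\osum_i\delem{p_i}{\cdot}$ may have total weight below $1$. We handle this by rescaling.
\begin{itemize}
\item Write $\Delta=\delem{p_1}{\Delta_1}\oplus\delem{(1-p_1)}{\Delta''}$, where $\Delta''$ is the renormalised remainder. If $\sum_i p_i<1$, pad the family with $\epsilon$, which moves only to $\epsilon$.
\item Use that the lifting commutes with scaling: $\Delta\dmoveto{}\Delta'$ implies $p\Delta\dmoveto{}p\Delta'$, which follows from left-linearity.
\end{itemize}
Finally, because all processes are deterministic, \autoref{thm:determdistr} ensures that the components $\Delta'_i$ obtained this way are uniquely determined. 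This removes any ambiguity in the induction.
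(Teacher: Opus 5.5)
Your argument is correct and is essentially the paper's. The paper proves that the weak linear closure $\lcl$ is $b$-compatible (\autoref{lem:ll_compatible}) using exactly your decompose--match--recombine step, and then concludes $\lcl(\simds)=\simds$; this amounts to your direct proof that $\mathcal{L}=\lcl(\simds)$ is itself a scheduled saturated bisimulation, with the padding and rescaling you add being harmless but unnecessary, since the proof of \autoref{lem:llift_decomp} already yields the $n$-ary decomposition $\Delta'=\osum_{i}\delem{p_i}{\Delta'_i}$ directly.
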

\begin{proof}
	Ultimately, the result follows from decomposability and linearity of $\dmoveto[\mu]{s}$.
	In the appendix, we prove a stronger result, namely that taking the \emph{linear closure} of a relation $\rel$ is a valid \emph{up-to technique}, in the sense of ~\cite{sangiorgienhancements2011}.
\end{proof}

\begin{figure}[t]
	\centering
	\begin{subfigure}{\textwidth}
		\centering
	\begin{tikzpicture}
		\node [conf] at (0,0) (c0) {$\conf{\kbphip,\proc{A}_{01} \parallel \proc{B}}$};
		\node [conf, below=12mm of c0] (c1) {$\delem{\frac{1}{2}}{\conf{\ketbra{00},\nil_{q_1} \parallel \proc{B}}} \oplus \delem{\frac{1}{2}}{\conf{\ketbra{11},\nil_{q_1} \parallel \proc{B}}}$};
		\node [conf, below left=6mm and 2mm of c0] (c2) {$\delem{\frac{1}{2}}{\conf{\ketbra{00},A \parallel C^0}} \oplus \delem{\frac{1}{2}}{\conf{\ketbra{11},A \parallel C^1}}$};
		\node [conf, below right=6mm and 2mm of c0] (c3) {$\delem{\frac{1}{2}}{\conf{\ketbra{ii},A \parallel C^0}} \oplus \delem{\frac{1}{2}}{\conf{\ketbra{\uminus i\uminus i},A \parallel C^1}}$};
		\node [conf, below=12mm of c2] (c4) {$\delem{\frac{1}{2}}{C^0_{\ket{00}}} \oplus \delem{\frac{1}{2}}{C^1_{\ket{11}}}$};
		\node [conf, below=12mm of c3] (c5) {$\delem{\frac{1}{4}}{C^0_{\ket{0i}}} \oplus \delem{\frac{1}{4}}{C^1_{\ket{1i}}} \oplus \delem{\frac{1}{4}}{C^0_{\ket{0\uminus i}}} \oplus \delem{\frac{1}{4}}{C^1_{\ket{1\uminus i}}}$};

		\draw [darrow] (c0) -- (c1) node[weight, left] {$t_0\tags\tau$};
		\draw [darrow] (c0) -- (c2) node[weight, above left] {$t_1\tags\tau$};
		\draw [darrow] (c0) -- (c3) node[weight, above right] {$t_2\tags\tau$};
		\draw [darrow] (c2) -- (c4) node[weight, left] {$t_0\tags\tau$};
		\draw [darrow] (c1) -- (c4) node[weight, below right] {$t_1\tags\tau$};
		\draw [darrow] (c3) -- (c5) node[weight, right] {$t_0\tags\tau$};
		\draw [darrow] (c1) -- (c5) node[weight, below left] {$t_2\tags\tau$};
	\end{tikzpicture}
	\end{subfigure}
	\vspace{.5cm}

	\begin{subfigure}{\textwidth}
		\centering
	\begin{tikzpicture}
		\node [conf] at (0,0) (c0) {$\conf{\kbphip,\proc{A}_{\pm} \parallel \proc{B}}$};
		\node [conf, below=12mm of c0] (c1) {$\delem{\frac{1}{2}}{\conf{\ketbra{++},\nil_{q_1} \parallel \proc{B}}} \oplus \delem{\frac{1}{2}}{\conf{\ketbra{--},\nil_{q_1} \parallel \proc{B}}}$};
		\node [conf, below left=6mm and 2mm of c0] (c2) {$\delem{\frac{1}{2}}{\conf{\ketbra{00},A \parallel C^0}} \oplus \delem{\frac{1}{2}}{\conf{\ketbra{11},A \parallel C^1}}$};
		\node [conf, below right=6mm and 2mm of c0] (c3) {$\delem{\frac{1}{2}}{\conf{\ketbra{ii},A \parallel C^0}} \oplus \delem{\frac{1}{2}}{\conf{\ketbra{\uminus i\uminus i},A \parallel C^1}}$};
		\node [conf, below=12mm of c2] (c4) {$\delem{\frac{1}{4}}{C^0_{\ket{+0}}} \oplus \delem{\frac{1}{4}}{C^1_{\ket{-0}}} \oplus \delem{\frac{1}{4}}{C^0_{\ket{+1}}} \oplus \delem{\frac{1}{4}}{C^1_{\ket{-1}}}$};
		\node [conf, below=12mm of c3] (c5) {$\delem{\frac{1}{4}}{C^0_{\ket{+i}}} \oplus \delem{\frac{1}{4}}{C^1_{\ket{-i}}} \oplus \delem{\frac{1}{4}}{C^0_{\ket{+\uminus i}}} \oplus \delem{\frac{1}{4}}{C^1_{\ket{-\uminus i}}}$};

		\draw [darrow] (c0) -- (c1) node[weight, left] {$t_0\tags\tau$};
		\draw [darrow] (c0) -- (c2) node[weight, above left] {$t_1\tags\tau$};
		\draw [darrow] (c0) -- (c3) node[weight, above right] {$t_2\tags\tau$};
		\draw [darrow] (c2) -- (c4) node[weight, left] {$t_0\tags\tau$};
		\draw [darrow] (c1) -- (c4) node[weight, below right] {$t_1\tags\tau$};
		\draw [darrow] (c3) -- (c5) node[weight, right] {$t_0\tags\tau$};
		\draw [darrow] (c1) -- (c5) node[weight, below left] {$t_2\tags\tau$};

	\end{tikzpicture}
	\end{subfigure}
	\caption{Scheduled semantics of the protocol described in \autoref{ex:broken-ftl}}
	\label{fig:FTL}
\end{figure}
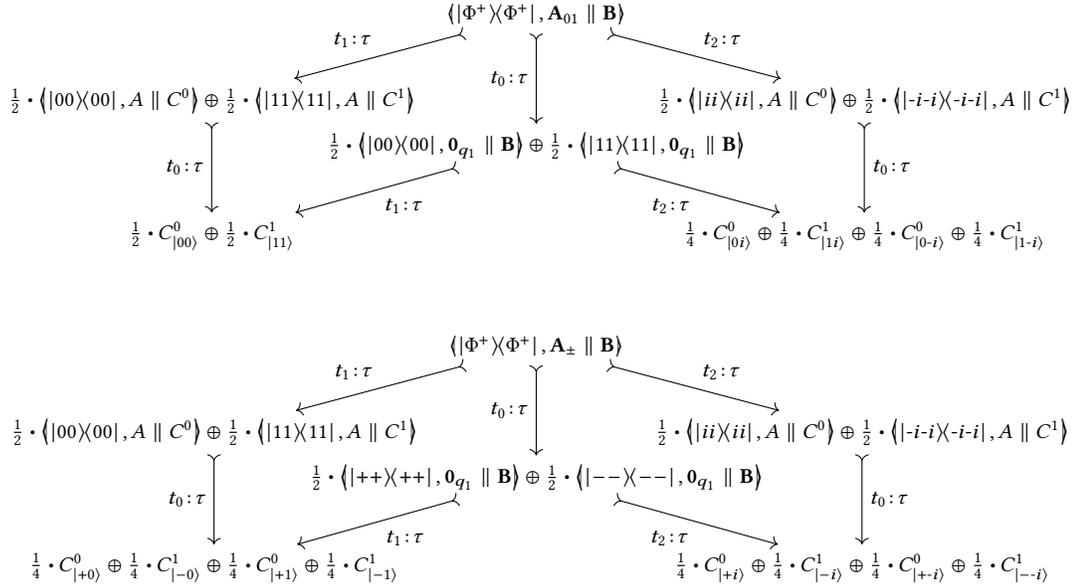

\begin{example}
	Recall the of the scheduled semantics of the processes $\proc{A}_{01} \parallel \proc{B}$ and $\proc{A}_{\pm} \parallel \proc{B}$ are shown in~\autoref{fig:FTL},
	where $\confel^i$ denotes the process $\nil_{q_1} \parallel c!i.\nil_{q_2}$, and $C^i_{\kp}$ is the configuration $\conf{\kbp, \nil_{q_1} \parallel c!i.\nil_{q_2}}$.
	From the figures it is evident that the branching structure of the two semantics is identical, therefore we will focus on the mass of the distribution.
	As for the unscheduled semantics, the only information that context can use to distinguish the two processes is the result communicated on the channel $c$.
	However, all final distributions of configurations have the same uniform distribution of $0$ and $1$ outcomes, therefore cannot be discriminated. \exqed
\end{example}

\section{Deciding Behavioural Equivalence Through Labelled Bisimilarities}\label{sec:lab}
\begin{figure}[t]
	\begin{mathpar}
		\inferrule[Tau]{}{\delem{\rho}{t \tags \tau . P} \qcmoveto{t} \delem{\rho}{P}}
		\and
		\inferrule[TauSynch]{}{\delem{\rho}{(t, t') \tags \tau . P} \qcmoveto{(t, t')} \delem{\rho}{P}}
		\and
		\inferrule[Sop]{}{\delem{\rho}{t \tags \E(\tilde{q}) . P} \qcmoveto{t} \delem{\E^{\tilde{q}}(\rho)}{P}}
		\and
		\inferrule[Meas]
		{}
		{\delem{\rho}{t \tags \meas{\tilde{q}}{x}.P}
			\qcmoveto{t}
			\osum_{m = 0}^{|\mathcal{M}|-1}\delem{\mathcal{M}_m^{\tilde{q}}(\rho)}
			{P[\sfrac{m}{x}]}{}
		}
		\and
		\inferrule[Send]{e \Downarrow v}{\delem{\rho}{t \tags c!e . P} \qcmoveto[c!v]{t} \delem{\rho}{P}}
		\and
		\inferrule[Recv]{c : \chtype{\qtype} \Rightarrow v \in \Sigma_\rho}{\delem{\rho}{t \tags c?x.P} \qcmoveto[c?v]{t} \delem{\rho}{P[\sfrac{v}{x}]}}
		\and
		\inferrule[IteT]
		{	e \Downarrow \true \\
			\delem{\rho}{P} \qcmoveto[\mu]{s} \qd}
		{\delem{\rho}{\ite{e}{P}{Q}} \qcmoveto[\mu]{s} \qd}
		\and
		\inferrule[IteF]{e \Downarrow \false \\ \delem{\rho}{Q} \qcmoveto[\mu]{s} \qd}{\delem{\rho}{\ite{e}{P}{Q}} \qcmoveto[\mu]{s} \qd}
		\and
		\inferrule[SumL]{\delem{\rho}{P} \qcmoveto[\mu]{s} \qd}{\delem{\rho}{P + Q} \qcmoveto[\mu]{s} \qd}
		\and
		\inferrule[SumR]{\delem{\rho}{Q} \qcmoveto[\mu]{s} \qd}{\delem{\rho}{P + Q} \qcmoveto[\mu]{s} \qd}
		\and
		\inferrule[ParL]{\delem{\rho}{P} \qcmoveto[\mu]{s} \qd \\ \mu \not \in \{ c?v \mid  v \in \Sigma_Q \}}{\delem{\rho}{P \parallel Q} \qcmoveto[\mu]{s} \qd \parallel Q}
		\and
		\inferrule[ParR]{\delem{\rho}{Q} \qcmoveto[\mu]{s} \qd \\ \mu \not \in \{ c?v \mid  v \in \Sigma_P \}}{\delem{\rho}{P \parallel Q} \qcmoveto[\mu]{s} P \parallel \qd}
		\and
		\inferrule[SynchL]{\delem{\rho}{P} \qcmoveto[c!v]{t} \delem{\rho}{P'} \\ \delem{\rho}{Q} \qcmoveto[c?v]{t'} \delem{\rho}{Q'}}{\delem{\rho}{P \parallel Q} \qcmoveto{(t,t')} \delem{\rho}{P' \parallel Q'}}
		\and
		\inferrule[SynchR]{\delem{\rho}{P} \qcmoveto[c?v]{t} \delem{\rho}{P'} \\ \delem{\rho}{Q} \qcmoveto[c!v]{t'} \delem{\rho}{Q'}}{\delem{\rho}{P \parallel Q} \qcmoveto{(t,t')} \delem{\rho}{P' \parallel Q'}}
		\and
		\inferrule[Restr]{\delem{\rho}{P} \qcmoveto[\mu]{s} \qd \\ \mu \not \in \{c!v, c?v \mid v\}}{\delem{\rho}{P \setminus c} \qcmoveto[\mu]{s} \qd}
		\and
		\inferrule[Id]
		{\qd \qcmoveto[\mu]{s} \qd'}
		{\qd \qmoveto[\mu]{s} \qd'}
		\and
		\inferrule[Not]
		{\qd \qcnmoveto[\mu]{s}}
		{\qd \qmoveto[\mu]{s} \epsilon}
		\and
		\inferrule[Conv]
		{\qd \qmoveto[\mu]{s} \qd' \\
			\qt \qmoveto[\mu]{s} \qt'}
		{\qd \oplus \qt \qmoveto[\mu]{s} \qd' \oplus \qt'}
	\end{mathpar}
	\caption{Rules of Density Operator Semantics.}
	\label{fig:qsemantics}
\end{figure}

%

Having chosen $\simds$ as our preferred behavioural equivalence, we search a computable alternative characterization.
We present a novel semantics that is based on a quantum extension of probability distributions; subsequently, we give a labelled bisimilarity, and we discuss its adherence with respect to the scheduled bisimilarity.
Finally, we prove its decidability for a rich class of lqCCS processes.

\subsection{Density Operator Semantics}

	The indistinguishability relation of~\autoref{thm:qind} suggests a new operational semantics.
Instead of dealing with multiple indistinguishable distributions of lqCCS configurations, 
we represent them with the same mathematical object, just as a single density operators represents multiple indistinguishable distributions of quantum values.
This leads to a semantics based on a quantum generalization of probability distributions, one that satisfies~\autoref{thm:propertyA} by construction.

We introduce \emph{quantum distributions}, which have density operators as weights, instead of real numbers in $[0,1]$.
More in detail, density operators exhibit a \emph{Partial Commutative Monoid} (PCM) structure: they can be summed with matrix addition, but only if the result is still in $\sdm{\hilb}$.
This generalizes the case of probabilities, i.e. the PCM of real numbers with sum in the interval $[0, 1]$.
Building upon this similarity, we can define quantum distributions as follows.

\begin{definition}[Quantum Distribution]\label{quantum_dst}
	Given a Hilbert space $\hilb_{\Sigma}$, the set of \emph{quantum distributions} over $S$ is
	\[
		\qdist[\Sigma]{S} =
		\left\{\,
		\qd \in \sdm{\hilb_{\Sigma}}^{S}
		\;\middle| \;
		\supp{\qd} \text{ is finite and }
		\sum_{s \in S} \tr(\qd(s)) \leq 1 \right\}
	\]
	where $\supp{\qd}$ is the support of $\qd$, i.e. the set $\{\,s \in S\;|\;\qd(s) \neq \Z\,\}$.
\end{definition}
We maintain the same notation:
we write $\epsilon$ for the empty distribution associating each element of $S$ with the matrix $\Z$;
we let $\delem{\rho}{s}$ be a distribution such that $(\delem{\rho}{s})(s) = \rho$ and $(\delem{\rho}{s})(s') = \Z$ if $s \neq s'$;
finally, we write
$\osum_{i \in I} \delem{p_i}{\qd_i}$ for the distribution such that $(\osum_{i \in I} \delem{p_i}{\qd_i})(s) = \sum_{i \in I}p_i\qd_i(s)$, with
$\{p_i\}_{i \in I}$ a finite set of non-negatives reals and $\sum_{i \in I} p_i = 1$.
%
Note that $\qdist[\emptyset]{S}$ is isomorphic to $\dist{S}$ since $1$-dimensional density operators are just real numbers in $[0,1]$.

Our new semantics is given directly on quantum distributions of lqCCS processes $\qdist[\Sigma]{\procset}$, where both the qubits state and the classical probability is encoded in the weights of the distribution.
The type system is extended to quantum distributions:
distributions defined on a Hilbert space $\hilb_{\Sigma}$ must contain only processes manipulating qubits in $\Sigma$.
\begin{definition}[Quantum Distribution Typing]
	We say that a quantum distribution of lqCCS processes $\qd \in \qdist[\Sigma]{\procset}$ has type $(\Sigma, \Sigma')$ (written $(\Sigma, \Sigma') \vdash \qd$) if $\Sigma' \vdash P$ for all processes $P \in \supp{\qd}$, and if $\Sigma' \subseteq \Sigma$.
\end{definition}

Hereafter, we assume a fixed set of qubits, and drop the $\Sigma$ from $\qdist[\Sigma]{\procset}$.
Moreover, we restrict ourselves to well-typed distributions.
The new labelled semantics for lqCCS is given in terms of LTS of quantum distributions of processes, i.e. a triple  $(\qdist{\procset}, \actset, \qmoveto[]{})$, with $\mathord{\qmoveto[]{}} \subseteq \qdist{\procset} \times \actset \times \choiceset \times \qdist{\procset}$.

To define the transition relation $\qmoveto[]{}$, we first consider the smallest transition relation $\qcmoveto[]{}$ over quantum distributions of closed processes that satisfies the related rules in~\autoref{fig:qsemantics}.
Note that these rules are almost in one-to-one correspondence with the ones in~\autoref{semantics}, and that they define the behaviour of distributions with a single element in the support.
The main difference with the configuration semantics is made clear by the \textsc{Meas} rule, where both the resulting quantum state and the probability of observing it are encoded as a single partial density operator.
The transition relation $\qmoveto[]{}$ is finally obtained by applying the last three rules in~\autoref{fig:qsemantics}, which includes a ``deadlock'' rule for distributions that cannot evolve otherwise, and recovers convexity by stating that the components of a quantum distribution can evolve altogether with the same scheduler and label.

\begin{example}
	Take the following process which measures and outputs a qubit, ignoring the classical outcome $x$ of the measurement.
	\begin{align*}
		\proc{P} \Coloneqq t_0 \tags \measstd{q}{x}.t_0 \tags c!q
	\end{align*}
	Consider its evolution with the starting state $\kbpl$, essentially modelling the first source of~\autoref{ex:broken-nondet}.:
	\begin{align*}
		\delem{\kbpl}{\proc{P}}
		\ \qmoveto{t_0} 	 \ \delem{\frac{1}{2}\kbz}{t_0 \tags c!q} \oplus \delem{\frac{1}{2}\kbo}{t_0 \tags c!q}
		\ \qmoveto[c!q]{t_0} \  \delem{\frac{1}{2}\I}{\nil}
	\end{align*}
	The first step follows the \textsc{Meas} rule.
	Note that, in the second distribution, the processes of the two branches coincide, since the measurement outcome $x$ is ignored.
	This means that we can rewrite $\delem{\frac{1}{2}\kbz}{t_0 \tags c!q} \oplus \delem{\frac{1}{2}\kbo}{t_0 \tags c!q}$ as $\delem{\frac{1}{2}\I}{t_0 \tags c!q}$, as they denote the same quantum distribution.
	In other words, our new semantics naturally embodies the equivalence between indistinguishable probability distributions of \autoref{thm:propertyA}. \exqed
\end{example}

\subsection{Labelled Bisimilarity}

We now move to define a suitable labelled bisimilarity relation based on the quantum distribution semantics.
As a first ingredient of our bisimilarity, we extend well-typed relations to quantum distributions.
\begin{definition}[Well-typed Relation]\label{def:welltypedq}
	A relation $\mathord{\rel} \subseteq \qdist{\confset} \times \qdist{\confset}$ is well-typed if $\qd \rel \qt$ implies $(\Sigma, \Sigma') \vdash \qd$ and $(\Sigma, \Sigma') \vdash \qt$ for some $\Sigma, \Sigma'$.
\end{definition}

As it is common, bisimilarity should be closed over transitions, meaning that each bisimilar processes must be capable of matching the moves of the others.

\begin{definition}[Transition-Closed Relation]
	A relation $\mathord{\rel} \subseteq \qdist{\procset} \times \qdist{\procset}$ is \emph{transition-closed} if $\qd \rel \qt$ implies
	that for each $s$ and $\mu$:
	\begin{itemize}
		\item whenever $\qd \qmoveto[\mu]{s} \qd'$, there exists $\qt'$ such that $\qt \qmoveto[\mu]{s} \qt'$ and $\qd' \rel \qt'$;
		\item whenever $\qt \qmoveto[\mu]{s} \qt'$, there exists $\qd'$ such that $\qd \qmoveto[\mu]{s} \qd'$ and $\qd' \rel \qt'$.
	\end{itemize}
\end{definition}

Before defining labelled bisimulation, we introduce some notation and discuss some which properties of quantum distributions can be immediately  distinguished by contexts.
Our goal is finding atomic observable properties that match the discriminating power of a context-closed relation like $\simds$.

Given a quantum distribution $\qd$ of lqCCS processes, we let its mass $\mass{\qd}$ be defined as the sum of its weights $\sum_{P \in \procset} \qd(P)$, as it is for probability distributions.
Note that, if $(\Sigma, \Sigma')$ is the type of $\qd$, then $\mass{\qd}$ is a (partial) density operator in $\sdm{\hilb_{\Sigma}}$.
We are particularly interested in a fragment of $\mass{\qd}$, namely in $\tr_{\Sigma'}(\mass{\qd})$, which we call the \emph{environment} of $\qd$, i.e. the state of the qubits that are unused by the processes of $\qd$ and that the contexts can observe immediately.
The environment of bisimilar quantum distributions must coincide, otherwise a context performing a simple measurement would distinguish them.
\begin{definition}[Environment-preserving Relation]
	The \emph{environment} of an quantum distribution $\qd \in \qdist[\Sigma]{\procset}$ is $\env{\qd} = 0$ if $\qd = \epsilon$,
	otherwise it is $\env{\qd} = \tr_{\Sigma'}(\mass{\qd})$ with $\Sigma'$ the only set of qubit names such that $(\Sigma, \Sigma') \vdash \qd$.

	A relation $\mathord{\rel} \subseteq \qdist{\procset} \times \qdist{\procset}$ is a \emph{environment-preserving} if $\qd \rel \qt$ implies $\env{\qd} = \env{\qt}$.
\end{definition}
%

We allow applying superoperators to configurations and distributions.
Given $(\Sigma, \Sigma') \vdash \qd = \osum_{i \in I} \delem{\rho_i}{P_i}$ and $\E \in \soset{\hilb_{\Sigma}}$, we let $\E(\qd)$ be the quantum distribution $\osum_{i \in I} \delem{\E(\rho_i)}{P_i}$ where all the weights are updated by the superoperator.
In the following, we allow updating $\qd$ also with superoperators that are defined on a subset of the qubits, i.e. with $\E \in \soset{\hilb_{\Sigma''}}$ with $\Sigma'' \subseteq \Sigma$.
In this case, we assume $\E$ to be padded as needed by tensoring its Kraus representation with the identity matrix.
We are interested in relations that are closed over the application of superoperators defined on the environment.
Intuitively, this is needed for matching contexts that modify qubits through trace preserving superoperators and measurements before sending them to processes.

\begin{definition}[Superoperator-closed Relation]
	A superoperator $\E$ is \emph{compatible} with respect to a distribution $\qd$ if $\E \in \soset{\hilb_{\Sigma \setminus \Sigma'}}$ with $(\Sigma, \Sigma') \vdash \qd$.

	A relation $\mathord{\rel} \subseteq \qdist{\procset} \times \qdist{\procset}$ is \emph{closed for compatible superoperators} if $\qd \rel \qt$ implies
	$\E(\qd) \rel \E(\qt)$ for any superoperator compatible with both $\qd$ and $\qt$.
\end{definition}


We can now define \emph{labelled scheduled bisimilations}.
In addition to the usual transition closure, labelled bisimulations are required to satisfy two additional conditions:
paired distributions must share the same environment; and the relation
must be closed for the application of superoperators over the qubits of the environment.
Intuitively, this is needed for proving that labelled bisimilarity has the same observing power of saturated bisimilarity,
as a context $B[\blank] = [\blank] \parallel R$ can read and modify the qubits of the environment.

\begin{definition}[Labelled Scheduled Bisimilarity]\label{def:simdl}
	A relation $\mathord{\rel} \subseteq \qdist{\procset} \times \qdist{\procset}$ is a \emph{labelled scheduled bisimulation} if it is well-typed, transition-closed, environment-preserving and closed for compatible superoperators.
	Let \emph{labelled scheduled bisimilarity}, denoted $\simdl$, be the largest labelled scheduled bisimulation.
\end{definition}

We now provide an example of bisimulation between two simple distributions, highlighting how $\simdl$ encompasses both quantum indistinguishability (as it respects \autoref{thm:propertyA} by construction) and classical equivalences (like the commutativity of the parallel operator).

\begin{example}
	Assume a process $\proc{P} \coloneqq t\tags\measstd{q}{x}.\ite{x = 0}{(t\tags c!q \parallel t'\tags d!1)}{(t'\tags d!1 \parallel t\tags c!q)}$, which takes a qubit $q$ and collapses it in the computational basis, then sends it on a quantum channel $c$ and signals termination on a classical channel $d$ (for "done").
	This process does inspect the classical outcome $x$ of the measurement, but it does nothing meaningful with it: the difference between the "then" branch and the "else" branch is only syntactical, as both branches communicate the same messages.
	For these reasons, if we execute $P$ on the input $\kbpl$ it is functionally the same as setting the qubit to $\oot\I$.

	More in detail, we can define another process $\proc{R} \coloneqq t\tags\mathcal{S}(q).(t\tags c!q \parallel t'\tags d!1)$, which uses the constant "Set" superoperator $\mathcal{S}$ such that $\mathcal{S}(\rho) = \oot\I$ for any $\rho$.
	We can then prove that $\delem{\kbpl}{\proc{P}} \simdl \delem{\kbpl}{\proc{R}}$, highlighting that $\proc{P}$ does not leak the measurement outcome to the environment.

	The transition systems of the two distributions are as follows
	\[
		\begin{tikzcd}
			& \delem{\kbpl}{P} \arrow[d, "t:\tau", two heads, tail]                                                                                       &                                                                 &                                                                          & \delem{\kbpl}{R} \arrow[d, "t:\tau", two heads, tail]                                             &                                                           \\
			& {\qd} \arrow[ld, "t:c!q"', two heads, tail] \arrow[rd, "t':d!1", two heads, tail] &                                                                 &                                                                          & \delem{\oot\I}{R'} \arrow[ld, "t:c!q"', two heads, tail] \arrow[rd, "t':d!1", two heads, tail] &                                                           \\
			\delem{\oot\I}{t'\tags \nil \parallel d!0} \arrow[rd, "t':d!1"', two heads, tail] &                                                                                                                                             & \delem{\oot\I}{t\tags c!q\parallel \nil} \arrow[ld, "t:c!q", two heads, tail] & {\delem{\oot\I}{t'\tags \nil \parallel d!0}} \arrow[rd, "t':d!1"', two heads, tail] &                                                                                                   & {\delem{\oot\I}{t\tags c!q\parallel \nil}} \arrow[ld, "t:c!q", two heads] \\
			& \delem{\oot\I}{\nil\parallel \nil}                                                                                                                        &                                                                 &                                                                          & \delem{\oot\I}{\nil\parallel \nil}                                                                              &
		\end{tikzcd}
	\]
	where $\qd = \delem{\oot\kbz}{P'[0/x]} \oplus \delem{\oot\kbo}{P'[1/x]}$, $P' = \ite{x = 0}{(t\tags c!q \parallel t'\tags d!1)}{(t'\tags d!1 \parallel t\tags c!q)}$ and $R' = (t\tags c!q \parallel t'\tags d!1)$.

	We can build a bisimulation $\rel$ that relates all the states above, and includes also the identity relation.
	We define $\rel$ the smallest relation such that $\delem{\kbpl}{\proc{P}} \ \rel \ \delem{\kbpl}{\proc{R}}$,
	$\qd \ \rel\ \delem{\oot\I}{\proc{R'}}$, and $\delem{\rho}{\proc{Q}} \ \rel \ \delem{\rho}{\proc{Q}}$ for any well-typed $\delem{\rho}{\proc{Q}}$.

	To prove that $\rel$ is a bisimulation, we must show that is well-typed, transition-closed, environment-preserving, and closed for compatible superoperators.
	It is easy to check that it is well-typed, and it is transition-closed as well, simply by inspecting the above transition systems.
	It is environment-preserving: for the first two couples, we have $\env{\delem{\kbpl}{\proc{P}}} = 1 = \env{\delem{\kbpl}{\proc{R}}} = \env{\qd} = \env{\delem{\oot\I}{\proc{R'}}}$; for the identity couples, they have trivially the same environment.
	Finally, the relation $\rel$ is closed for compatible superoperators: for the first two couples, there are no compatible superoperators, as there are no environment qubits; for the identity couples, we have $\delem{\E(\rho)}{\proc{Q}} \rel \delem{\E(\rho)}{\proc{Q}}$ for any superoperator $\E$.

	Thus, we can conclude that $\delem{\kbpl}{\proc{P}} \simdl \delem{\kbpl}{\proc{R}}$.
	Notice how we had to resort to a bisimulation containing infinite couples, due to the requirement of superoperator-closedness.
	Defining a finite $\rel$ relating only the states in the transition system, for example, would have not been sufficient:
	if we relate $\delem{\oot\I}{\nil} \rel \delem{\oot\I}{\nil}$, we need to do the same for all possible compatible superoperator $\E$. \exqed
\end{example}

We prove that
saturated bisimilarity in the probabilistic semantics coincides with labelled bisimilarity in the quantum distribution semantics.
\begin{restatable}[Full Abstraction]{theorem}{demonicfa}\label{thm:demonicfa}
	For any $\rho, \sigma, P, Q$ it holds that
	$$\sconf{\rho, P} \simds \sconf{\sigma, Q} \text{ if and only if }\delem{\rho}{P} \simdl \delem{\sigma}{Q}$$
\end{restatable}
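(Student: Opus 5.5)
The plan is to connect the two semantics through an explicit translation and then prove each inclusion by exhibiting a bisimulation. Define $\llbracket\cdot\rrbracket : \sdist{\confset} \to \qdist{\procset}$ on well-typed distributions by
\[
\llbracket \osum_i \delem{p_i}{\conf{\rho_i, P_i}} \rrbracket = \osum_i \delem{p_i\rho_i}{P_i},
\]
with weights summed when processes coincide. This map is surjective onto well-typed quantum distributions. Moreover, two distributions have the same image exactly when, for each process, their weighted density operators agree. By \autoref{thm:propertyA} and the linearity of $\simds$ (\autoref{thm:linearity}), such distributions are $\simds$-related.

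The first lemma is \emph{operational correspondence}: $\Delta \dmoveto[\mu]{s} \Delta'$ holds if and only if $\llbracket\Delta\rrbracket \qmoveto[\mu]{s} \llbracket\Delta'\rrbracket$. The case of a single configuration goes by rule induction, pairing the rules of \autoref{semantics} with those of \autoref{fig:qsemantics}. For \textsc{Meas}, the weight $\tr(\M_m(\rho))\cdot \M_m(\rho)/\tr(\M_m(\rho)) = \M_m(\rho)$ matches directly. The lifting then follows from \textsc{Conv} and \textsc{Not} against $\lift$ and $\operatorname{bot}$, using left-decomposability (\autoref{lem:llift_decomp}) and determinism (\autoref{thm:determdistr}).

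For the direction ($\Rightarrow$), show that $\rel = \{(\llbracket\Delta\rrbracket,\llbracket\Theta\rrbracket) \mid \Delta \simds \Theta\}$ is a labelled scheduled bisimulation.
\begin{itemize}
\item Transition closure uses the empty context $B[\blank]=[\blank]$ together with the correspondence lemma. A move of $\llbracket\Delta\rrbracket$ is first pulled back to a move of some preimage of it, which is $\simds$-related to $\Delta$ by the remark above.
\item Environment preservation is argued by contradiction. If $\env{\llbracket\Delta\rrbracket}\neq\env{\llbracket\Theta\rrbracket}$, there is a measurement on the environment qubits with differing outcome statistics. A context $[\blank]\parallel t\tags\meas{\tilde q}{y}.\ite{y=m}{t'\tags\tau}{\nil}_{\tilde q}$ then produces distributions of different mass after the scheduled $t'$ step, contradicting mass equality.
\item Superoperator closure is handled similarly. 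First, $\E$ is realized by a context $[\blank]\parallel t\tags\E(\tilde q).\nil_{\tilde q}$ when it is trace preserving. For a trace non-increasing $\E$, complete it to a measurement $\{\E,\E'\}$ and branch on the outcome so that only the $\E$ branch continues. The observer's residual process is then stripped using a lemma that $\simds$ is preserved by removing an idle parallel component with fresh tags.
\end{itemize}

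For the direction ($\Leftarrow$), define $\rel' = \{(\Delta,\Theta)\mid \llbracket\Delta\rrbracket \simdl \llbracket\Theta\rrbracket\}$ and prove that it is a scheduled saturated bisimulation.
\begin{itemize}
\item Mass equality follows from $\tr(\env{\cdot}) = \mass{\cdot}$.
\item The core is a \emph{congruence lemma}: $\simdl$ is closed under $[\blank]\parallel R$ for compatible $R$. Prove it by taking the relation $\{(\qd\parallel R', \qt\parallel R') \mid \qd\simdl\qt\}$, closed under linear combinations and compatible superoperators, and showing it is a labelled bisimulation up to these closures.
\item In the congruence lemma, moves of $R'$ alone act as a superoperator or measurement on qubits owned by $R'$. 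These qubits are environment qubits of $\qd$, so the case is handled by superoperator closure; measurement branches are handled by applying the trace non-increasing $\M_m$ componentwise. Synchronizations split on the direction of communication. When a qubit passes from $R'$ to the process, the types change according to \autoref{thm:quasipreservation}. Environment preservation for the composite follows because partial traces over the qubits of $R'$ commute with superoperator closure.
\end{itemize}

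The main obstacle is this congruence lemma, and in particular the interaction between $R'$ acting on environment qubits and the fact that after a measurement by $R'$ the continuations of $R'$ differ across branches. The relation must therefore relate sums $\osum_m (\M_m(\qd)\parallel R'_m)$ rather than a single product. I would therefore define the candidate relation as the closure of $\{(\qd\parallel R',\qt\parallel R')\}$ under $\oplus$ with matching components, and prove that this closure is an up-to technique, mirroring the linear-closure argument of \autoref{thm:linearity}.
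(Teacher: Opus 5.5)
Your ($\Leftarrow$) direction, which proves directly in the quantum-distribution semantics that $\simdl$ is preserved by $[\cdot]\parallel R$ up to sums and environment superoperators, is a workable alternative to the paper's detour: the paper introduces an intermediate probabilistic labelled bisimilarity $\simpl$, shows it equals $\simds$ with the up-to $\lcl\circ\ccl$ technique, and then transports it along the translation. The problems are in ($\Rightarrow$). First, a scheduled saturated bisimulation only constrains the unlabelled, i.e.\ $\tau$, moves $B[\Delta]\dmoveto{s}\Delta'$. The empty context therefore gives you transition closure only for $\mu=\tau$. The clauses for $\mu=c?v$ and $\mu=c!v$ need synchronizing observers. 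For inputs, use $[\cdot]\parallel t\tags c!v$. For outputs, use an observer such as $t\tags c?x.\ite{x=v}{t'\tags\tau.t''\tags c!x}{t''\tags c!x}$, which exposes the emitted value (so the matching move of $\Theta$ emits the same $v$). When $v$ is a qubit, this observer must then give the received qubit back.

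Second, your superoperator-closure argument fails as written. The observer $t\tags\E(\tilde q).\nil_{\tilde q}$ discards the environment qubits, and the stripping lemma you invoke is false when the idle component owns qubits. If $q$ is an environment qubit, then $\sconf{\kbz,\nil\parallel\nil_q}\simds\sconf{\kbo,\nil\parallel\nil_q}$, since no compatible observer can ever touch $q$. Yet $\sconf{\kbz,\nil}\not\simds\sconf{\kbo,\nil}$, since an observer can measure $q$. So you only obtain $\E(\Delta)\parallel\nil_{\tilde q}\simds\E(\Theta)\parallel\nil_{\tilde q}$, which says nothing about the environment. Rule \textsc{QOp} forces an observer that acts on $\tilde q$ to keep owning $\tilde q$, so the only way to hand the qubits back is to send them. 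Continue with $\tilde t\tags c!\tilde q$ on fresh tags, and prove that $\Delta\parallel t\tags c!q\simds\Theta\parallel t\tags c!q$ implies $\Delta\simds\Theta$. This holds because any later observer owning $q$ can be replaced by one that first reabsorbs it with a fresh $t'\tags c?x$ and then runs the original observer with $x$ in place of $q$, renaming tags to keep compatibility. The same re-emission lemma closes the output case above. It is the missing idea: without it, neither the superoperator clause nor the visible-action clauses of $\simdl$ are established for your relation.
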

\begin{proofsketch}
	To simplify the proof, we introduce a probabilistic labelled bisimilarity $\simpl$ as a bridge between the probabilistic saturated bisimilarity $\simds$ and the quantum labelled one $\simdl$.
	We prove the equivalence between $\simds$ and $\simpl$ using standard methods for probabilistic systems, like up-to techniques.
	We then show a fully-abstract translation between the probabilistic and quantum distributions that preserves and reflects their defined equivalences ($\simpl$ and $\simdl$ respectively).
\end{proofsketch}

\subsection{Deciding Equivalence of Protocols}

None of the conditions of~\autoref{def:simdl} can be omitted.
As shown by the following counterexample,
the closure over environments superoperators is needed to recover the distinguishing capability of the saturated approach,
making our labelled bisimilarity not adequate for verification.
\begin{example}\label{ex:superopneed}
	Take the following processes which measure a qubit received from the environment in two orthogonal basis
	\begin{align*}
		\proc{P} \Coloneqq t \tags c?x.t \tags \measstd{x}{y}.t \tags c!y.\nil_{x} \qquad\text{ and }\qquad
		\proc{Q} \Coloneqq t \tags c?x.t \tags \meashad{x}{y}.t \tags c!y.\nil_{x}
	\end{align*}
	Consider their evolution with starting state $\kbip$:
	\begin{align*}
		\delem{\kbip}{\proc{P}} & \qmoveto[c?q]{t} \delem{\kbip}{t \tags \measstd{q}{y}.t \tags c!y.\nil_{q}}
		\qmoveto{t} \delem{\frac{1}{2}\kbz}{t \tags c!0.\nil_{q}} \oplus \delem{\frac{1}{2}\kbo}{t \tags c!1.\nil_{q}} \\
		\delem{\kbip}{\proc{Q}} & \qmoveto[c?q]{t} \delem{\kbip}{t \tags \meashad{q}{y}.t \tags c!y.\nil_{q}}
		\qmoveto{t} \delem{\frac{1}{2}\kbpl}{t \tags c!0.\nil_{q}} \oplus \delem{\frac{1}{2}\kbm}{t \tags c!1.\nil_{q}}
	\end{align*}
	It is evident that
	the two configurations are equated if in place of $\simdl$ we consider the largest environment-preserving relation that is closed with respect to $\qmoveto[\mu]{s}$ for all $\mu$ and $s$
	(note that the different resulting quantum state is invisible due to the discard at the end).

	However, since the sent qubit $q$ is received from the environment, it can be transformed by the context before communication.
	Thus, $\delem{\kbip}{\proc{P}}$ is bisimilar to $\delem{\kbip}{\proc{Q}}$ only if $\E(\delem{\kbip}{\proc{P}}) \simdl \E(\delem{\kbip}{\proc{Q}})$ for any compatible $\E$.
	To see that indeed this is not the case, consider the superoperator $\E$ that applies the unitary $SH$ to $q$ (recall that $SH\kip = \kz$ and $SH\kim = \ko$);
	then the semantics become:
	\begin{align*}
		\E(\delem{\kbip}{\proc{P}}) = \delem{\kbz}{\proc{P}} & \qmoveto[c?q]{t} \delem{\kbip}{t \tags \measstd{q}{y}.t \tags c!y.\nil_{q}} \qmoveto{t} \delem{\kbz}{t \tags c!0.\nil_{q}}
		\\
		\E(\delem{\kbip}{\proc{Q}}) = \delem{\kbz}{\proc{Q}} & \qmoveto[c?q]{t} \delem{\kbip}{t \tags \meashad{q}{y}.t \tags c!y.\nil_{q}} \qmoveto{t} \delem{\frac{1}{2}\kbpl}{t \tags c!0.\nil_{q}} \oplus \delem{\frac{1}{2}\kbm}{t \tags c!1.\nil_{q}}
	\end{align*}
	It is evident that the two configurations are not bisimilar: the former can only send $0$ over $c$ while the latter can send $1$ with probability $\sfrac{1}{2}$.

	The capability of modifying the qubits in the environment is evident with saturated bisimilarity, where the two configurations are distinguished by $B[\blank] = [\blank] \parallel SH(q).c!q$. \exqed

\end{example}

There is a practical limitation in using the scheduled bisimilarity of~\autoref{def:simdl} to verify distributed quantum systems and protocols:
the universal quantification over all possible environment superoperators make the relation difficult to prove.
In addition, the~\autoref{ex:superopneed} shows that the closure must be considered, otherwise there is the risk of equating systems that can be distinguished by some context.
Luckily, this closure is only needed because of processes that can receive qubits from the environment, such as those of~\autoref{ex:superopneed}.
Indeed, lots of concrete cases do not need this feature, and if we remove it we can give an equivalent decidable relation, as we show below.

\begin{definition}[Input-Restricted Processes]
	Let $\procsetnr \subset \procset$ be the set of lqCCS \emph{input-restricted processes}, i.e.
	with no unrestricted input channels of type $\chtype{\qtype}$.
	Formally, $P$ is in $\procsetnr$ if $\fcr(P) = \emptyset$ where $\fcr(\blank)$ is defined as:
	\begin{gather*}
		\fcr(\nil_{\tilde{e}} ) = \emptyset
		\qquad
		\fcr(P \setminus c) = \fcr(P) \setminus \{c\}
		\qquad
		\fcr(t \tags c?x . P) = 
		\begin{cases}
		\{c\} \cup \fcr(P)	& \text{ if } c : \chtype{\qtype}\\
		\fcr(P)	& \text{ otherwise }
		\end{cases}	
		\\
		\fcr(t \tags \tau . P) =
		\fcr((t,t) \tags \tau. P ) =
		\fcr(t \tags \E(\tilde{e}).P ) =
		\fcr(t \tags \meas{\tilde{e}}{x}.P ) =
		\fcr(t \tags c!e . P) = \fcr(P)
		\\
		\fcr(\ite{e}{P}{P'}) =
		\fcr(P + P') =
		\fcr(P \parallel P') = \fcr(P) \cup \fcr{P'}
	\end{gather*}
\end{definition}

\begin{definition}[Labelled Ground Scheduled Bisimilarity]\label{def:simgdl}
	A relation $\mathord{\rel} \subseteq \qdist{\procset} \times \qdist{\procset}$ is a \emph{labelled ground scheduled bisimulation} if it is well-typed, transition-closed and environment-preserving.
	Let \emph{labelled ground scheduled bisimilarity}, denoted $\simgdl$, be the largest labelled ground scheduled bisimulation.
\end{definition}

We extend~\autoref{thm:demonicfa} to the case of $\procsetnr$, showing that labelled ground bisimilarity matches saturated bisimilarity if processes do not receive values from the environment.
\begin{restatable}{theorem}{demonicgfa}\label{thm:demonicgfa}
	For any $P, Q \in\procsetnr, \rho, \sigma$
	it holds that
	$\sconf{\rho, P} \simds \sconf{\sigma, Q}$ if and only if $\delem{\rho}{P} \simgdl \delem{\sigma}{Q}$
\end{restatable}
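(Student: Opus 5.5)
By \autoref{thm:demonicfa}, it suffices to show that, for $P, Q \in \procsetnr$, $\delem{\rho}{P} \simdl \delem{\sigma}{Q}$ holds if and only if $\delem{\rho}{P} \simgdl \delem{\sigma}{Q}$. The ``only if'' direction is immediate. Every labelled scheduled bisimulation is in particular well-typed, transition-closed and environment-preserving, so it is a labelled ground scheduled bisimulation, and hence $\simdl \subseteq \simgdl$.

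For the ``if'' direction I would construct a candidate relation
\[
\mathcal{S} = \{ (\E(\qd), \E(\qt)) \mid \qd \simgdl \qt,\ \qd, \qt \text{ have support in } \procsetnr,\ \E \text{ compatible with } \qd, \qt \}
\]
and prove that $\mathcal{S}$ is a labelled scheduled bisimulation. Closure under compatible superoperators holds by construction, since compatible superoperators compose. Well-typedness is preserved because $\E$ only modifies weights. Environment preservation follows from the fact that $\E$ acts only on $\Sigma \setminus \Sigma'$: in that case $\tr_{\Sigma'}$ commutes with $\E$, so $\env{\E(\qd)} = \E(\env{\qd}) = \E(\env{\qt}) = \env{\E(\qt)}$.

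The core of the argument is transition closure. I would prove a commutation lemma: for processes in $\procsetnr$ and $\E$ acting only on environment qubits, $\qd \qmoveto[\mu]{s} \qd'$ holds if and only if $\E(\qd) \qmoveto[\mu]{s} \E(\qd')$. I would proceed by induction on the rules of \autoref{fig:qsemantics}. For \textsc{Sop} and \textsc{Meas} the acted-on qubits lie in $\Sigma'$, disjoint from those of $\E$, so the tensor-padded Kraus operators commute, e.g.\ $\M_m^{\tilde q}(\E(\rho)) = \E(\M_m^{\tilde q}(\rho))$. The remaining rules do not touch the weight, and \textsc{Conv} and \textsc{Not} follow by linearity. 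The \textsc{Not} case is where the hypothesis matters: enabledness of a rule must not depend on the weight. This holds for all rules except that $\textsc{Not}$ fires when the weight is $\Z$, and $\E$ is trace-preserving on nonzero states if trace preserving; for trace non-increasing $\E$, a weight may vanish. Such components contribute $\Z$ on both sides, so linearity still applies.

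The step I expect to be the main obstacle is keeping $\E$ environment-only along transitions. The environment can shrink only via \textsc{Recv} of a qubit, i.e.\ a quantum input $c?v$. Since $\fcr(P) = \emptyset$, such an input cannot be an observable label and arises only inside a restricted synchronization with a qubit already owned by the process. Consequently, the typing $\Sigma'$ of the derivatives stays contained in the original one, and quantum outputs only enlarge the environment. I would therefore need to show that $\procsetnr$ is closed under transitions, using \autoref{thm:quasipreservation} together with a check that $\fcr$ does not grow under substitution and reduction. This ensures that $\E$ remains compatible with $\qd'$ and $\qt'$ and that the derivatives stay in $\procsetnr$. With this in hand, matching goes as follows. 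Given $\E(\qd) \qmoveto[\mu]{s} \Phi$, the move is reflected as $\qd \qmoveto[\mu]{s} \qd'$ with $\Phi = \E(\qd')$, using determinism (\autoref{thm:determdistr}) for uniqueness. The ground bisimilarity then supplies a matching $\qt \qmoveto[\mu]{s} \qt'$ with $\qd' \simgdl \qt'$, and this move is pushed forward to $\E(\qt) \qmoveto[\mu]{s} \E(\qt')$, so that $(\E(\qd'), \E(\qt')) \in \mathcal{S}$. Taking $\E$ to be the identity concludes.
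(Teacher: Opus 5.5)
Your proposal is correct and follows essentially the paper's route: the paper likewise shows that the superoperator images $\{(\E(\qd),\E(\qt)) \mid \qd \simgdl \qt\}$ on $\procsetnr$ form a ground bisimulation, using two commutation lemmas (moves of $\qd$ push forward to moves of $\E(\qd)$, and every move of $\E(\qd)$ is the $\E$-image of a move of $\qd$). From this it concludes that $\simgdl$ is superoperator-closed, hence coincides with $\simdl$, and then invokes \autoref{thm:demonicfa}. One small caveat: the converse half of your ``iff'' commutation lemma is false for non-injective $\E$ (e.g.\ the zero superoperator), but your matching step only needs that every move of $\E(\qd)$ has the form $\E(\qd'')$ for some move $\qd \qmoveto[\mu]{s} \qd''$; the paper proves this directly, and you get it from the forward half plus uniqueness of $\qmoveto[\mu]{s}$-targets, though note that \autoref{thm:determdistr} is stated for $\dmoveto[\mu]{s}$, so that uniqueness must be rechecked for the quantum semantics.
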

\begin{proofsketch}
	When limited to processes in $\procsetnr$, we can prove that ground bisimilarity is superoperator closed, thus $\simgdl$ and $\simdl$ coincide.
	Then the result follows from \autoref{thm:demonicfa}.
\end{proofsketch}

As a final remark, note that the problem with unconstrained non-determinism and unscheduled bisimilarity is critical also for processes that cannot freely receive input from the environment.
Indeed, both \autoref{ex:broken-nondet} and \autoref{ex:broken-ftl} are in $\procsetnr$, as well as the quantum coin flip introduced in~\autoref{sec:overview}.

\section{Toward Real-World Protocols}\label{sec:real}
We conclude with three real-world examples, showing the expressivity of lqCCS and what kind of properties can be verified thanks to labelled bisimilarity.
We will focus on Superdense Coding~\cite{SDC}, Quantum Teleportation~\cite{qteleportation}, and the already cited quantum Coin Flipping~\cite{bb84}.

\anote{Camera Ready: mettere tutto il codice in figura e formattarlo come nell'overview}

\begin{example}[Superdense Coding]
	The superdense coding protocol allows Alice to communicate a
	two-bit integer to Bob by sending him single qubit, exploiting an entangled pair $\kphip$.
	The protocol is as follows: Alice chooses an integer in $[0,3]$ and encodes it by applying suitable transformations to her  entangled qubit, which is then sent to Bob;
	Bob receives the qubit and decodes
	it
	by performing $\mathit{CNOT}$ and $H \otimes I$ on the pair of qubits (the received qubit and his original one).
	Finally, he measures the qubits in the standard basis, recovering the integer chosen by Alice.

	The actions of Alice (as well as the final result) are dependent on the integer input of the protocol, and are modelled as a non-deterministic choice.
	We can encode the different possible inputs directly in the tags: Alice starts by choosing either $t_0, t_1, t_2$ or $t_3$, which correspond to encoding $0, 1, 2$ or $3$ in the qubit sent by Alice.
	Bob's moves are all tagged with the same tag $t$.

	We consider the following encoding of the protocol $\proc{SDC} = (\proc{A} \parallel \proc{B}) \setminus c$,
	where we assume that Alice ($\proc{A}$) and Bob ($\proc{B}$) already share an entangled pair ($q_1,q_2$), and  we write $t\tags\tau^n$ for the repetition of $n$ consecutive $t$-tagged silent moves.

\[
	\begin{tabular}{c c c}
		{$\begin{aligned}
			\proc{A} \Coloneqq \   
			& (t_0\tags I(q_0). t\tags c!q_0.\nil)\ +\\
			&  (t_1\tags X(q_0). t\tags c!q_0.\nil)\ +\\
			& (t_2\tags Z(q_0). t\tags c!q_0.\nil)\ +\\
			& (t_3\tags ZX(q_0). t\tags c!q_0.\nil) 
		\end{aligned}$}
		&
		{\qquad
		$\begin{aligned}
			\proc{B} \Coloneqq \  
			& t\tags c?x.\\
			& t\tags \mathit{CNOT}(x, q_1).\\
			&t\tags H(x).\\
			& t\tags \meas[\mstd^2]{x,q_1}{y}.\\
			&t\tags out!y.\nil_{x,q_1} 
		\end{aligned}$}
		&
		{\qquad
		$\begin{aligned}
		\proc{Spec}  \Coloneqq \
		& t_0\tags\tau. (t,t)\tags\tau. t\tags\tau^3 . t\tags out!0. \nil_{q_0,q_1}\ +                                    \\
		& t_1\tags\tau. (t,t)\tags\tau. t\tags\tau^3 . t\tags out!1. \nil_{q_0,q_1}\ +                                    \\
		& t_2\tags\tau. (t,t)\tags\tau. t\tags\tau^3 . t\tags out!2. \nil_{q_0,q_1}\ +                                    \\
		& t_3\tags\tau. (t,t)\tags\tau. t\tags\tau^3 . t\tags out!3. \nil_{q_0,q_1}
		\end{aligned}$}
	\end{tabular}
\]

	We let $\qd = \delem{\kbphip}{\proc{SDC}}$, with $\qt = \delem{\kbphip}{\proc{Spec}}$ its specification.
	Note that $\proc{Spec}$ states that the choice of the initial unitary determines the final value to be communicated.
	Since the process has no unrestricted input, we can resort to the ground version of our bisimilarity to prove the adequacy of the protocol.
	We now sketch the proof for $\Delta \simgdl \Theta$.

	The evolutions of $\qd$ and $\qt$ are as follows, for $n = 0,1,2,3$
	\begin{align*}
		 & \qd \qmoveto{t_n} \qmoveto{(t,t)}\qmoveto{t}^3
		\qd_n = \delem{\ketbra{n}}{t\tags out!n.\nil_{q_0,q_1} \setminus c}                                                 \\
		 & \qt  \qmoveto{t_n} \qmoveto{(t,t)}\qmoveto{t}^3 \qt_n = \delem{\kbphip}{t\tags out!n.\nil_{q_0,q_1} \setminus c}
	\end{align*}
	where, abusing notation, we use $\ket{0} = \ket{00}, \ket{1} = \ket{01}, \ket{2} = \ket{10}$ and $\ket{3} = \ket{11}$ when speaking of pairs of qubits.

	The two qubits are always owned by the processes, so
	the environment of all the distributions is trivially the same.
	Note that $\qd_n \simgdl \qt_n$ for any $n$, because both distributions $\qd_n$ and $\qt_n$ sends the number $n$ on the channel $out$, and reduce to deadlock distributions.
	These conditions are enough to construct a ground bisimulation between $\qd$ and $\qt$. \exqed
\end{example}

\begin{example}[Quantum Teleportation]
	The objective of the teleportation protocol is to allow Alice to send quantum information to Bob 
	without a quantum channel.
	Alice and Bob must have each one of the qubits of an entangled pair $\kphip$.
	The protocol works as follows: Alice performs a fixed set of unitaries to the qubit to transfer and to her part of the entangled pair;
	then, she measures the qubits and sends the classical outcome to Bob, which applies different unitaries to his own qubit according to the received information.
	In the end, the qubit of Bob will be in the state of Alice's one, and the entangled pair is discarded.

	This protocol does not involve any non-deterministic choice, thus we have great freedom in the choice of tags.
	We follow a simple policy, where all Alice's and Bob's actions are tagged by $t$.

	Consider the following encoding of the protocol $\proc{Tel} = (\proc{A} \parallel \proc{B}) \setminus c$, where we assume that Alice ($\proc{A}$) and Bob ($\proc{B}$) already share an entangled pair ($q_1,q_2$).
\[
\begin{tabular}{c c c}
	$\begin{aligned}
		\proc{A} \Coloneqq \   
		& t \tags \cnot(q_0,q_1).\\
		& t \tags H(q_0).\\
		& t \tags \meas[\mstd^2]{q_0,q_1}{x}.\\
		& t \tags c!x. \nil_{q_0,q_1}
	\end{aligned}$ 
	&\qquad
	\qquad$\begin{aligned}
		\proc{B} \Coloneqq \  
		& t \tags c?y.\\
		& \ite{y = 0}{t \tags I(q_2).\\&\quad}
		{\ite{y = 1}{t \tags X(q_2). \\&\quad}
		{\ite{y = 2}{t \tags Z(q_2). \\&\quad}
		{t \tags \mathit{ZX}(q_2).}\\&}}
		t \tags \mathit{out}!q_2
	\end{aligned}$
	&\qquad
	$\begin{aligned}
	\proc{Spec}  \Coloneqq \
	& t\tags\tau^3.\\
	& (t,t)\tags\tau.\\
	& t\tags\mathit{SWAP}(q_0,q_2).\\
	& t\tags\mathit{out}!q_2.\nil_{q_0,q_1}
	\end{aligned}$
\end{tabular}
\]
	We write $\mstd^{2}$ to denote the two-qubit measurement in the basis $\{ \ket{00}, \ket{01}, \ket{10}, \ket{11}\}$.
	Note that $\proc{Spec}$ simply states that, after the tagged operations, the state of qubits is swapped and the correct state is communicated over the expected channel.
	We let $\qd = \delem{\ketbra{\psi\Phi^+}}{\proc{Tel}}$, with $\qt = \delem{\ketbra{\psi\Phi^+}}{\proc{Spec}}$ its specification for any $\kp = \alpha\kz + \beta\ko$, and sketch the proof for $\Delta \simgdl \Theta$ below.

	The relevant steps of the evolution of $\qd$ and $\qt$ are
	\begin{align*}
		\qd \qmoveto{t}\qmoveto{t}\qmoveto{t} \
		                                                   & \qd' = \osum\nolimits_{n = 0}^3 \delem{\frac{1}{4}\ketbra{n} \otimes \ketbra{\psi_n}}{{t\tags c!n.\nil_{q_0,q_1} \parallel \proc{B} \setminus c}} \\
		\quad\qmoveto{(t,t)}\qmoveto{t} \
		                                                   & \qd'' = \delem{\sfrac{\I}{4} \otimes \ketbra{\psi}}{{\nil_{q_0,q_1} \parallel t\tags \mathit{out}!q_2 \setminus c}}                               \\
		\qmoveto[out!q_2]{t} \
		                                                   & \qd''' = \delem{\sfrac{\I}{4} \otimes \kbp}{\nil_{q_0,q_1} \setminus c}                                                                           \\
		\qt  \qmoveto{t}^3  \qmoveto{(t,t)}\qmoveto{t}  \  &
		\qt' = \delem{\kbphip\otimes\kbp}{t\tags\mathit{out}!q_2.\nil_{q_0,q_1} \setminus c}                                                                                                                   \\
		\qmoveto[out!q_2]{t} \
		                                                   & \qt'' = \delem{\kbphip\otimes\kbp}{\nil_{q_0,q_1} \setminus c}
	\end{align*}
	where $\ket{\psi_{0}} = \kp$, $\ket{\psi_{1}} = \beta\kz + \alpha\ko$, $\ket{\psi_{2}} = \alpha\kz - \beta\ko$, $\ket{\psi_{3}} = \beta\kz - \alpha\ko$,
	and where, abusing notation, we use $\ket{0} = \ket{00}, \ket{1} = \ket{01}, \ket{2} = \ket{10}$ and $\ket{3} = \ket{11}$ when speaking of pairs of qubits.
	Until the last move, all the qubits are owned by the processes, and so the environments coincide.
	The last move is then a send on the channel $\mathit{out}$ for both distributions
	The bisimilarity then is easily checked as $\qd'''$ and $\qt''$ are in deadlock, and $\tr_{q_0,q_1}\mass{\qd''} = \kbp = \tr_{q_0,q_1}\mass{\qt''}$.
	These conditions are enough to construct a ground bisimulation between $\qd$ and $\qt$.
	Notice how there are no inputs in our processes, and so we do not have to consider a universal quantification over all possible compatible superoperators. \exqed
\end{example}

\begin{example}[Quantum Coin Flip]
	We restate here the Quantum Coin Flip protocol described in \autoref{sec:overview}.
	Similarly to the previous example, we employ the trivial policy of tagging almost everything with $t$, whereas the last operations are tagged by $t_a, t_b$ and $t_c$ to guarantee determinism.
	\[
		\begin{tabular}{c c}
			$\begin{aligned}
					 \proc{Alice} \Coloneqq \  & t\tags\measrand{s}.     \\[-.1cm]
					                           & \ite{s = 0}
					 {t\tags\un{H}(q).t\tags\measstd{q}{w}.              \\[-.1cm]&\quad}
					 {t\tags\un{I}(q).t\tags\meashad{q}{w}.}             \\[-.1cm]
					                           & t\tags\text{AtoB}!q.    \\[-.1cm]
					                           & t\tags\text{guess}?g.   \\[-.1cm]
					                           & t\tags\text{witness}!w. \\[-.1cm]
					                           & t\tags\text{secret}!s.  \\[-.1cm]
					                           & t_a\tags a!(g=s))
				 \end{aligned}$ &
			$\begin{aligned}
					 \proc{Bob} \Coloneqq \  & t\tags\text{AtoB}?x.        \\[-.1cm]
					                         & t\tags\measrand{g}.         \\[-.1cm]
					                         & \ite{g=0}
					 {t\tags\meashad{x}{\mathit{p}}.                       \\[-.1cm]&\quad}
					 {t\tags\measstd{x}{\mathit{p}}.}                      \\[-.1cm]
					                         & t\tags\text{guess}!g.       \\[-.1cm]
					                         & t\tags\text{witness}?w.     \\[-.1cm]
					                         & t\tags\text{secret}?s.      \\[-.1cm]
					                         & (t_b\tags b!(g=s) \parallel
					 t_c\tags \text{cheat}!(g \neq s \land p \neq w) \parallel \nil_x)
				 \end{aligned}$
		\end{tabular}
	\]

	Thanks to lqCCS and to ground labelled bisimilarity, we can investigate both correctness and security of this protocol, proving the properties presented in~\autoref{sec:overview}.

	\paragraph{Correctness}
	A correct execution of the protocol should correspond to tossing a fair coin to select the winner, and having both Alice and Bob agreeing on the winner.
	Since tags do not carry any semantics in this example, we consider a protocol equivalent to its specification when there exists a tagging making them bisimilar.
	Formally, the intended behaviour can be written down as
	$$\proc{FairCoin} \Coloneqq t\tags\measrand{x}.t\tags\tau^2.(t,t)\tags\tau.t\tags\tau^2.(t,t)\tags\tau^3.
		(t_a\tags a!x \parallel t_b\tags b!x \parallel t_c\tags \text{cheat}!0).
	$$
	This tagging of the specification is obtained simply by specifying which steps are synchronization and which are internal action.
		In the general case, there is a finite set of "reasonable" tagging policies to check.

	Since $\proc{QCF}   \Coloneqq (\proc{Alice} \parallel \proc{Bob}) \setminus \{ \text{AtoB}, \text{guess}, \text{secret}, \text{witness} \}$ is in $\procsetnr$,
	we do not need to consider superoperator closedness, and we can prove correctness by showing the bisimilarity below via tedious but simple calculations.
	\[
		\delem{\kbz}{\proc{QCF}} \simgdl \delem{\kbz}{\proc{FairCoin}}.
	\]

	\paragraph{Bob's Attacks}
	To guarantee that the protocol does not favour Bob, we show that he is not able to infer the Alice's secret bit $s$ from the received qubit $q$.
	It is sufficient to take the two versions of $\proc{Alice}$ when $s$ is $0$ or is $1$, and to consider the part of the process that happens before the secret is revealed, $\proc{Alice}_0$ and $\proc{Alice}_1$:
	\begin{align*}
		\proc{Alice}_0 = t\tags\un{H}(q).t\tags\measstd{q}{w}.t\tags\text{AtoB}!q
		\ \quad \
		\proc{Alice}_1 = t\tags\un{I}(q).t\tags\meashad{q}{w}.t\tags\text{AtoB}!q
	\end{align*}

	We have that $\delem{\kbz}{\proc{Alice}_0} \simgdl \delem{\kbz}{\proc{Alice}_1}$, as both of them evolve into $\delem{\frac{1}{2}\I}{t\tags\text{AtoB}!q}$ after two $\tau$ transitions.
	Note that the unscheduled saturated bisimilarity tells the two processes apart, as their evolution coincide with the two qubit sources of~\autoref{ex:broken-nondet}.
	Thus, analyses that do not constraint unfeasible non-determinism implicitly deem that Bob can distinguish the secret value of Alice and always win the toss.

	\paragraph{Alice's Attacks}
	To examine the protocol resistance to a dishonest Alice, we can write down two known attacks from~\cite{bb84}, and study the resulting transition system.
	As detailed in~\autoref{sec:overview}, we will consider two attackers, namely $\proc{Alison}$ and $\proc{Alix}$.

	\[
		\begin{tabular}{c c}
			$\begin{aligned}
					 \proc{Alison} \Coloneqq \  & t\tags\measrand{s}.                                \\[-.1cm]
					                            & \ite{s = 0}
					 {t\tags\un{H}(q).t\tags\meas[\mstd]{q}{w}.                                      \\[-.1cm]&\quad}
					 {t\tags\un{I}(q).t\tags\meas[\mhad]{q}{w}.}                                     \\[-.1cm]
					                            & t\tags\text{AtoB}!q.                               \\[-.1cm]
					                            & t\tags\text{guess}?g.                              \\[-.1cm]
					                            & t\tags\text{witness}!w.                            \\[-.1cm]
					                            & (t\tags\text{secret}!(1-g) \parallel t_a\tags a!0)
				 \end{aligned}$ &
			$\begin{aligned}
					 \proc{Alix} \Coloneqq \  & t\tags\un{H}(q).                                   \\[-.1cm]
					                          & t\tags\un{CNOT}(q,q').                             \\[-.1cm]
					                          & t\tags\text{AtoB}!q.                               \\[-.1cm]
					                          & t\tags\text{guess}?g.                              \\[-.1cm]
					                          & \ite{g = 0}
					 {t\tags\meas[\mhad]{q'}{w}.                                                   \\[-.1cm]&\quad}
					 {t\tags\meas[\mstd]{q'}{w}.}                                                  \\[-.1cm]
					                          & t\tags\text{witness}!w.                            \\[-.1cm]
					                          & (t\tags\text{secret}!(1-g) \parallel t_a\tags a!0)
				 \end{aligned}$
		\end{tabular}
	\]

	Alison's attack is not particularly dangerous, as we can prove that
	$$\delem{\kbz}{(\proc{Alison} \parallel \proc{Bob}) \setminus C} \simgdl \delem{\kbz}{\proc{LeakyUnfairCoin}},$$
	where we write $C$ for the set of channels $\{ \text{AtoB}, \text{guess}, \text{secret}, \text{witness} \}$, and $\proc{LeakyUnfairCoin}$ for the process that always select $0$ (i.e. Alison) as winner and
	goes undetected ($\text{cheat}!0$) three fourth of the times.
	$$\proc{LeakyUnfairCoin} \Coloneqq 	t\tags\meas[\M_{\frac{3}{4}}]{}{x}.t\tags\tau^2.(t,t)\tags\tau.t\tags\tau^2.(t,t)\tags\tau^3.
		(t_a\tags a!0 \parallel t_b\tags b!0 \parallel t_c\tags \text{cheat}!x).$$

	When executing the protocol with multiple qubits, the chances of Alison going undetected increases multiplicatively for each qubit, making it exponentially unlikely.

	On the contrary, Alix's attack goes undetected with probability $1$, and we can prove that
	$$\delem{\kbz}{(\proc{Alix} \parallel \proc{Bob}) \setminus C} \simgdl \delem{\kbz}{\proc{UnfairCoin}},$$
	where we write $\proc{UnfairCoin}$ for the process
	$$\proc{UnfairCoin} \Coloneqq
		t\tags\tau^3.(t,t)\tags\tau.t\tags\tau^2.(t,t)\tags\tau^3.(t_a\tags a!0 \parallel t_b\tags b!0 \parallel t_c\tags \text{cheat}!0).$$ \exqed

\end{example}

\section{Related Works}\label{sec:rel}

We consider approaches that share our objectives: $(i)$ a description language for concurrent or distributed quantum processes; $(ii)$ an adequate operational semantics; and $(iii)$ an approach to verify the expected properties of processes.

\subsection{Description Languages}

Proposed process calculi for quantum systems may have different syntax, but share mostly the same feature set.
The Quantum Process Algebra (QPAlg) \cite{lalireprocess2004, lalireRelationsQuantumProcesses2006} extends value-passing CCS with syntactic operators for unitary transformations and measurements.
As in lqCCS, quantum operations are silent $\tau$-moves.
Qubit names are treated according to an affine policy, in place of the linearity of lqCCS: a process can simply ignore a qubit that it owns, as e.g. in $H(q).\nil$.
In this case, QPAlg assumes that qubits are implicitly discarded, i.e. they cannot be observed by the environment ($H(q).\nil$ is the same as $H(q).\nil_q$).

The Communicating Quantum Processes (CQP) calculus \cite{gaycommunicating2005,gaytypes2006} is inspired by the $\pi$-calculus with primitives for qubit transformations, measurements and creation.
It features an affine type system with the objective of enforcing single ownership for qubits, and it assumes qubits to be implicitly discarded, similarly to QPAlg.
Although name passing is technically possible, it is hardly considered in examples, and it is also explicitly ruled out in a version of the calculus~\cite{davidsonformal2012}.
For technical reasons, the same version also rules out the classical if-then-else conditional composition, which greatly reduce the expressive power of CQP as a modelling language.

The Quantum Calculus of Communicating Systems qCCS \cite{fengprobabilistic2007,fengbisimulation2012} shares the core syntax with lqCCS, but also features recursive processes.
The adherence to the no-cloning theorem is obtained with syntactic requirements in place of a type system.
Like for CQP, qubit names are treated according to an affine policy in place of the linearity of lqCCS;
however, on the contrary of CQP, qCCS assumes that processes make their qubits implicitly visible ($H(q).\nil$ makes the qubit $q$ visible, roughly as if it was sent on some unrestricted channel $H(q).e!q$).

We first developed our description language, lqCCS, in~\cite{ceragioliQuantumBisimilarityBarbs2024}, where we give an asynchronous version of the calculus, equipped with a reduction semantics.
The linear type system is introduced both to comply with the no-cloning theorem and to precisely characterize the set of visible qubits, solving the discrepancy discussed above between the other proposed formalisms: lqCCS can recover the visibility implicitly assumed by QPAlg, CQP and qCCS by applying an adequate discarding discipline.


\subsection{Operational Semantics}

All the proposed process calculi discussed above have an operational semantics in terms of pLTSs, with transitions mapping configurations to distributions of configurations.
Communicating quantum values by sending qubits' name is nowadays the prevalent approach, since labelling send instructions with the sent qubits' reduced density operator, as in~\cite{lalireprocess2004, lalireRelationsQuantumProcesses2006}, misrepresents the effects of entanglement (see~\cite{ceragioliQuantumBisimilarityBarbs2024} for a counterexample).
The pLTS semantics is used \emph{as it is} in most related works~\cite{lalireprocess2004,gaycommunicating2005,fengprobabilistic2007};
however, this state-based approach has been ruled out in the most recent developments of qCCS and CQP, because of the problem of~\autoref{ex:broken-nondet}.

The author of~\cite{davidsonformal2012} proposed a novel semantics for CQP based upon \emph{mixed configurations}, i.e.\ single processes associated with distributions of classical and quantum values.
Mixed configurations encode private invisible information, like  measurement outcomes; they are resolved to classical, visible probabilistic distributions when the measurement outcome is revealed by sending it on some visible channel.
Indeed, this solves the problem of~\autoref{ex:broken-nondet}.
However, the calculus does not include any form of classical control primitive, and an extension of the semantic model in this sense seems challenging because of the formal constraints of mixed-configurations, making the semantics not fully adequate for modelling the behaviour of real-world protocols.

For qCCS, the authors of~\cite{fengtoward2015-1} have proposed an LTS of \emph{full} probability distributions of configurations, obtained by lifting the semantics in a standard manner~\cite{hennessyexploring2012}.
This approach simplifies working with quantum values, since the environment is made explicit by combining the visible values in a weighted sum of density operators.
In addition, lifted semantics forbid composing moves that choose different visible actions (or that choose not to move), with the objective of preventing a form of leakage of quantum data similar to the one we discuss in~\autoref{ex:broken-nondet}.
A known drawback of lifting a pLTS to a LTS of full distributions is that
actions available only in some processes of the support are in deadlock.
This requires ad-hoc solutions when defining behavioural equivalences, like imposing some form of decomposability~\cite{hennessyexploring2012}.
The most recent proposal for qCCS solves this problem by lifting to (not necessarily full) distributions (often named subdistributions in related works)~\cite{dengbisimulations2018}.
However, this works well only for visible actions and in action-deterministic transition systems: the proposal still permits invisible $\tau$-moves to access quantum data without performing measurements, and in general different choices that exhibit the same visible label are unconstrained.
As we will discuss later, the proposed labelled bisimilarity based on this semantics still suffers from some observational problems linked to unconstrained non-determinism.

None of the previous approaches considers schedulers, hence allowing the composition of moves obtained through different choices.
Instead, in our lifting, we obtain a constrained version of decomposability
by removing from the support the configurations that cannot evolve with the given scheduler choice.

The original presentation of lqCCS~\cite{ceragioliQuantumBisimilarityBarbs2024} comes only with a reduction-based semantics, while here we present a labelled one.
In addition, the syntactic locus of the process is used in place of explicit tags, and
only the moves of distinguishing contexts are constrained.
Here, we use syntactic tags and schedulers, which is a standard and more versatile approach, and we
constrain non-determinism in both processes and contexts with the same technique.
A limitation of the current proposal is that the schedulers considered in~\cite{ceragioliQuantumBisimilarityBarbs2024} are more expressive than the ones considered here.
In the present paper, we prefer a minimal approach, favouring well-behaved simple schedulers, and proving the equivalence of labelled and reduction characterizations of indistinguishability.
An extension to more sophisticated scheduling policies is left for future work.

Regarding our quantum-distribution based semantics, other works have proposed similar generalization of probability distributions in the setting of quantum protocols and process algebras.
A symbolic version of the pLTS semantics of qCCS is proposed in~\cite{fengsymbolic2014}, which uses boolean guards as additional labels and superoperators as weights.
Similarly, in~\cite{concur2024,ACT2025}, we give a semantic model for a minimal process calculus by moving from pLTSs to $\mathbb{E}$LTSs with weights from an effect algebra $\mathbb{E}$, discussing different notions of bisimilarity in the case of quantum effects~\cite{heinosaariMathematicalLanguageQuantum2011}.
The objective of these works is to model several instantiations of the same process with different quantum data in a symbolic way, therefore simplifying verification.
Although these promising contributions address the important problem of verifying protocols without assuming a specific initial quantum state, their proposed semantics are not lifted, and they do not consider schedulers nor other solutions to the problem of non-determinism.
As a result, they suffer from the same adequacy problems of the standard probabilistic approach, currently impeding their concrete applications to real-world protocols.

\subsection{Equivalence Relations and Other Verification Approaches}

The first proposed behavioural equivalences for QPAlg and qCCS were based on the standard probabilistic labelled bisimilarity,
adapted to the quantum setting by requiring bisimilar configurations to agree on visible qubits, including the ones sent on open channels.
Different works follow this approach; we focus hereafter on the most recent and well discussed, being the weak bisimilarity for qCCS of~\cite{fengbisimulation2012}.
An equivalence between this labelled bisimilarity and a contextual equivalence is proved in~\cite{dengopen2012}, where the authors also characterize them in terms of a modal logic.
In addition, a symbolic bisimilarity is used to prove equivalence of processes for any possible initial state~\cite{fengsymbolic2014}.
Finally, a probabilistic branching bisimulation \cite{long24} based on $\epsilon$-trees is given, that is a congruence for parallel composition.

This kind of bisimilarity was found too strict to adhere to the prescriptions of quantum theory~\cite{kubotasemi-automated2016,fengtoward2015-1}.
The proposed counterexample is similar to~\autoref{ex:broken-nondet}, where two mixtures of qubits that should be equated are distinguished because each element is considered in isolation.
To solve this problem, the most recent labelled bisimilarities for qCCS are based on a (possibly not full) distribution semantics~\cite{dengbisimulations2018} (or, equivalently, on full distributions with transition-consistent decompositions~\cite{fengtoward2015-1}).
An approximate version of this bisimulation is also discussed, which equates processes if they behave the same with  high enough probability, and it is applied to several real-world protocols~\cite{vitale}.
As previously stated, lifted semantics work well for labelled bisimilarity, because they allow comparing the visible qubits of sub-distributions as a whole, instead of comparing the states of single configurations independently.
The bisimilarity proposed in~\cite{dengbisimulations2018} correctly handles the identified counterexample (as well as~\autoref{ex:broken-nondet}), but falls short with similar cases, like~\autoref{ex:broken-ftl}.
As discussed in~\autoref{sec:sat}, the reason is actually the semantic model, which does not forbid ill-defined moves.
No general property is proved about the adherence with quantum theory,
nor a characterization in terms of distinguishing observers is made explicit.
Finally, the proposed bisimilarity is not a
congruence with respect to the parallel operator, even when reduced to its strong version, as shown in~\cite{ceragioliQuantumBisimilarityBarbs2024}.

Based on the mixed distribution semantics,~\cite{davidsonformal2012} proposed a branching bisimilarity for CQP that is a
congruence for the parallel operator.
The proposed bisimilarity correctly solves the problems of ~\autoref{ex:broken-nondet} and~\autoref{ex:broken-ftl}.
However, the language does not feature any classical control primitive, impeding its application to the cases considered here.

We employ the same sub-distribution approach of~\cite{dengbisimulations2018} for our strong labelled bisimilarity,
but we also discuss the saturated approach, showing consistency and thus congruence with respect to the parallel operator (\autoref{thm:demonicfa}).
Moreover, we prove that our version correctly relates processes acting on indistinguishable quantum states in general (\autoref{thm:propertyA}).
We restrict our focus to strong bisimilarity for simplicity and postpone to future work the investigation on weak and branching versions.

The choice of saturated bisimilarity as the touchstone behavioural equivalence is introduced in~\cite{ceragioliQuantumBisimilarityBarbs2024}, where it is shown that unconstrained non-deterministic contexts do not comply with the observational limitations of quantum theory.
The paper presents a different semantics for processes and contexts, constraining non-determinism only in the latter.
Here we treat them uniformly by adding tags and constraining non-determinism in both of them, obtaining a congruence with respect to parallel composition.
Moreover, we additionally propose an equivalent labelled bisimilarity in terms of a quantum-distributions based semantics,
explicitly representing the observable properties of concurrent systems.

Tagged processes have been introduced in~\cite{chatzikokolakisMakingRandomChoices2007}, and are vastly used in
probabilistic systems for characterizing which choices are ``admissible'' or ``realistic''~\cite{canettiTimeBoundedTaskPIOAsFramework2006,andresInformationHidingProbabilistic2011,songDecentralizedBisimulationMultiagent2015}.
Our tags are reminiscent of the ones used by~\cite{chatzikokolakisMakingRandomChoices2007,chatzikokolakisBisimulationDemonicSchedulers2009} to prevent schedulers from choosing a move based on private data.
In the same works, the authors show that different tagging policies correspond to different classes of schedulers.
Our usage of tags is somewhat similar, as we use them to impose limitations on what can affect the choice of schedulers, but our constraints are motivated by
the physical limitations prescribed by quantum theory instead of secrecy assumptions.
However, our schedulers lack conditionals, which are present in~\cite{chatzikokolakisMakingRandomChoices2007}.
This is intended as we focus on simple, constrained schedulers and not general ones.

Some works address the verification of quantum protocols with techniques different from bisimilarities.
Session types for quantum systems are proposed in~\cite{sessiontypes}.
Their objective is to guarantee communication safety properties, such as deadlock-freedom, as well as conformance with the no-cloning theorem.
Probabilistic model checkers have been applied to verify quantum cryptographic protocols~\cite{Gay2005ProbabilisticMO}.
A model checker designed specifically for quantum systems (QMC) is proposed in~\cite{QMC}, and a translation from CQP processes to the QMC description language is given in~\cite{CQPtoQMC}, allowing for the automatic verification of properties expressed in a quantum version of the Computational Tree Logic.
Both model checker approaches study the execution of protocols on their own, or in the presence of a known attacker.
In contrast, we consider simpler properties, but allow protocols to interact with an unspecified external environment.
The verification techniques for checking equivalence between concurrent processes proposed by~\cite{Gay18} allow comparing protocols that are functional, in the sense of computing a deterministic input-output relation.
By computing the superoperator describing the semantics of each possible interleaving arising from concurrency in the system, the authors reduce equivalence checking for a process calculus similar to qCCS to the comparison of a set of sequential programs.
The equivalence is implemented in a tool called QEC (Quantum Equivalence Checker), which is also applied to several real-world quantum protocols.
The proposed approach does not consider open systems that can receive any value from the environment, and the only non-determinism that is modelled is the pure interleaving of sequential programs.
In contrast, non-determinism and unrestricted communication is central in our investigation, as we consider it essential when reasoning about the properties of protocols.
In some exploratory works, we proposed both a trace semantics and a testing equivalence for lqCCS~\cite{rocco,wadt}.
Although preliminary, these relations demonstrate that ill-defined non-deterministic moves have severe impact on the correctness of behavioural equivalences along all the linear-time--branching-time spectrum~\cite{spectrum}, and that the definition of an operational semantics grounded in quantum theory is a main concern for verifying quantum systems.

\section{Conclusion}\label{sec:conc}
To address the problem of verifying quantum protocols, we introduced a novel, rigorous framework designed specifically for concurrent and distributed quantum processes.
This framework is built upon three core components: an expressive description language in terms of the lqCCS process calculus, an adequate semantics based on feasible schedulers, and a behavioural equivalence capable of verifying expected protocol properties.
Our process calculus seamlessly integrates classical communication, parallelism, and non-determinism with quantum capabilities, making it capable of expressing real-world protocols.

Previous approaches are shown to spuriously discriminate indistinguishable processes, in contrast with the prescriptions of quantum mechanics.
To address this problem, we introduced a labelled version of lqCCS, enriched by tag-based schedulers.
Resorting to simple schedulers allowed us to constrain processes so that they only perform physically admissible choices, i.e. those independent of the quantum states.
This suffices for making our operational semantics compliant with
the limited observational power prescribed by quantum theory,
ensuring that quantum values can only be inspected by causing measurement-induced disturbance.
As a result of that, we were capable of lifting a known result linked to the uncertainty principle, from equivalent qubit values to bisimilar lqCCS processes that cannot be distinguished by any observer (\autoref{thm:propertyA}).

Finally, we characterized the atomic observable properties of lqCCS by deriving a labelled bisimilarity, provably discriminating all and only those processes that can be operationally distinguished by some feasible observer (\autoref{thm:demonicfa}).
This makes our behavioural equivalence also a congruence for the parallel operator.
Moreover, we identified a rich class of lqCCS processes where labelled bisimilarity is checkable, without requiring any universal quantification (\autoref{thm:demonicgfa}).
To showcase our approach, we analysed three real-world protocols: superdense coding, quantum teleportation and quantum coin flip.

\paragraph*{Future Work}

A clear desideratum of a behavioural equivalence is decidability.
Our labelled bisimilarity goes in this direction, thereby avoiding the need to compare processes under every possible context.
However, to compare two protocols, it is still required to consider an infinite number of initial quantum states.
In line with~\cite{concur2024,ACT2025}, we will investigate symbolic versions of our semantics.
The goal is comparing lqCCS processes in a state-independent fashion, to verify if
they are bisimilar when instantiating the quantum input with any value.
Moreover, given that qubits received from the environment may be essentially in any possible state, we expect that this will also help us 
to characterize the superoperators closure in a decidable fashion.

Our schedulers are admittedly simple: all configurations in the support of a distribution must choose the same tag.
A simple extension would allow processes to resolve non-deterministic choices based on classical information, making our system comparable with the one in~\cite{ceragioliQuantumBisimilarityBarbs2024}.
Scheduled bisimilarities like ours, where processes must behave the same when choosing the same tag, are sometimes called \emph{demonic};
in contrast, an \emph{angelic} bisimilarity uses tags only to compute feasible moves.
We will investigate this different flavour of behavioural equivalence, which relieves the user from guessing the correct annotation of processes when tag names do not carry relevant information on the intended behaviour.

\begin{acks}
	Supported by the European Union through the MSCA-SE project QCOMICAL (Grant Agreement ID: 101182520).
\end{acks}
\bibliography{references}

\clearpage
\appendix

\section{Proofs Probability Distributions}

\begin{lemma}\label{lem:llift_normal}
	For any $\Delta, \Theta \in \dist{S}$ and relation $\mathord{\rel} \subseteq S \times \dist{S}$, $\Delta \mathrel{\lift(\rel)} \Theta$ if and only if there is a finite index set $I$ such that $\Delta = \osum_{i \in I} \delem{p_i}{s_i}$, $\Theta = \osum_{i \in I} \delem{p_i}{\Theta_i}$ and $s_i \rel \Theta_i$.
\end{lemma}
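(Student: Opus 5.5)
The plan is to prove both directions by characterizing $\lift(\rel)$ explicitly. Let $\mathcal{N}$ be the set of pairs $(\Delta, \Theta)$ for which there exist a finite index set $I$, elements $s_i$, probabilities $p_i$ and distributions $\Theta_i$ with $s_i \rel \Theta_i$, $\Delta = \osum_{i \in I} \delem{p_i}{s_i}$ and $\Theta = \osum_{i \in I} \delem{p_i}{\Theta_i}$. The $s_i$ need not be distinct, since one element may be split across several related targets. The statement then amounts to $\lift(\rel) = \mathcal{N}$. Because $\lift(\rel)$ is defined as the smallest left-linear relation containing the generators, this reduces to three checks: $\mathcal{N}$ contains the generators, $\mathcal{N}$ is left-linear (giving $\lift(\rel) \subseteq \mathcal{N}$), and every pair in $\mathcal{N}$ lies in $\lift(\rel)$.

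The first check is immediate: if $s \rel \Theta$, take $I$ a singleton with $p = 1$, so $\delem{1}{s} \mathrel{\mathcal{N}} \Theta$.

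For the inclusion $\mathcal{N} \subseteq \lift(\rel)$, start from $\delem{1}{s_i} \mathrel{\lift(\rel)} \Theta_i$ for each $i$. Since $\Delta = \osum_{i} \delem{p_i}{\delem{1}{s_i}}$ is defined, left-linearity of $\lift(\rel)$ gives that $\osum_i \delem{p_i}{\Theta_i}$ is defined and related to $\Delta$. This sum is exactly $\Theta$.

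The main step is showing that $\mathcal{N}$ is left-linear. Suppose $\Delta_j \mathrel{\mathcal{N}} \Theta_j$ for $j \in J$, witnessed by $\Delta_j = \osum_{i \in I_j} \delem{p_{ij}}{s_{ij}}$ and $\Theta_j = \osum_{i \in I_j} \delem{p_{ij}}{\Theta_{ij}}$, and suppose $\Delta = \osum_j \delem{q_j}{\Delta_j}$ is defined. Take the disjoint union $I = \biguplus_j I_j$ with weights $q_j p_{ij}$. Associativity and distributivity of scaling over the partial sum give $\Delta = \osum_{(i,j)} \delem{q_j p_{ij}}{s_{ij}}$.

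The obstacle is definedness of $\Theta := \osum_j \delem{q_j}{\Theta_j}$, which requires $\mass{\Theta} \leq 1$. I plan to use mass bookkeeping, relying on the hypothesis that $\rel \subseteq S \times \dist{S}$, so each $\Theta_{ij}$ is a full distribution with mass $1$. Then
\[
\mass{\Theta_j} = \sum_{i} p_{ij} = \mass{\Delta_j},
\qquad
\mass{\Theta} = \sum_j q_j \mass{\Delta_j} = \mass{\Delta} \leq 1 .
\]
So $\Theta$ is defined, and it equals $\osum_{(i,j)} \delem{q_j p_{ij}}{\Theta_{ij}}$. This exhibits $\Delta \mathrel{\mathcal{N}} \Theta$ and establishes left-linearity.

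Finally, $\lift(\rel) \subseteq \mathcal{N}$ follows from minimality, and together with the inclusion above this yields the claimed equivalence.
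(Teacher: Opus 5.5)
Your proposal is correct and follows the paper's proof. The paper likewise obtains ($\Leftarrow$) from left-linearity of $\lift(\rel)$ applied to the generators $\delem{1}{s_i} \mathrel{\lift(\rel)} \Theta_i$, and proves ($\Rightarrow$) by rule induction, which is your check that the normal-form relation $\mathcal{N}$ is closed under the left-linearity rule, flattening nested sums with product weights. The one difference is the definedness bound: you use $\mass{\Theta_{ij}} = 1$ to get $\mass{\Theta} = \mass{\Delta}$, whereas the paper only uses $\mass{\Theta_{ij}} \leq 1$ to get $\mass{\Theta} \leq 1$, and only that weaker version still works when the lemma is applied to $\operatorname{bot}(\cmoveto[\mu]{s})$, whose targets may be $\epsilon$.
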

\begin{proof}
	($\Leftarrow$) By definition of $\lift$, $\delem{1}{s_i} \mathrel{\lift(\rel)} \Theta_i$.
	Then, by definition of left linear relation it holds that $\osum_{i \in I} \delem{p_i}{s_i} \mathrel{\lift(\rel)} \osum_{i \in I} \delem{p_i}{\Theta_i}$.

	($\Rightarrow$) By rule induction.

	If $\Delta = \delem{1}{s}$, then it follows trivially.

	If $\Delta = \osum_{i \in I} \delem{p_i}{\Delta_i}$ and $\Theta = \osum_{i \in i} \delem{p_i}{\Theta_i}$ then by hypothesis $\Delta_i \mathrel{\lift(\rel)} \Theta_i$ for any $i \in I$ and $\osum_i \delem{p_i}{\Delta_i}$ is defined.
	By inductive hypothesis $\Delta_i = \osum_{j \in J_i} \delem{p_{ij}}{s_{ij}}$ and $\Theta_i = \osum_{j \in J_i} \delem{p_{ij}}{\Theta_{ij}}$ with $s_{ij} \rel \Theta_{ij}$.
	Then $\Delta = \osum_{i \in I}\osum_{j \in J_i}\delem{p_ip_{ij}}{s_{ij}}$ and $\Theta = \osum_{i \in I}\osum_{j \in J_i}\delem{p_ip_{ij}}{\Theta_{ij}}$, therefore it suffices to take the set $IJ = \{ij | i \in I, j \in J_i\}$ as indices with $\{p_ip_{ij} | i \in I, j \in J_i\}$ as probabilities.
	We still need to check definiteness.
	By hypothesis $\Delta$ defined means that $\sum_{i, j \in IJ}p_ip_{ij} \leq 1$.
	We prove that $\mass{\Theta} \leq 1$.
	$\mass{\Theta} = \sum_{i,j \in IJ}p_ip_{ij}\mass{\Theta_{ij}} \leq \sum_{i,j \in IJ}p_ip_{ij}\alpha \leq 1$, where $\alpha = \max_{i,j \in IJ} \mass{\Theta_{ij}}$ that by inductive hypothesis are all less than $1$ and thus $\alpha \leq 1$.
\end{proof}

\lliftDecomp*
\begin{proof}
	Assume $\Delta = \osum_{i \in I} \delem{p_i}{\Delta_i}$ and $\Delta \mathrel{\lift(\rel)} \Theta$.
	By~\autoref{lem:llift_normal} there is a family $J$ such that $\Delta = \osum_{j \in J} \delem{q_j}{s_j}$, $\Theta = \osum_{j \in J} \delem{q_j}{\Theta_j}$ with $s_j \rel \Theta_j$.
	Note that $\Delta_i = \osum_{s \in \supp{\Delta_i}} \delem{\Delta_i(s)}{\osum_{j \in J | s = s_j}\delem{\frac{q_j}{\Delta(s)}}{s_j}}$.
	Thus, take $\Theta_i$ defined as $\Theta_i = \osum_{s \in \supp{\Delta_i}} \delem{\Delta_i(s)}{\osum_{j \in J | s = s_j}\delem{\frac{q_j}{\Delta(s)}}{\Theta_j}}$.
	Since $s_j \rel \Theta_j$, by left linearity it follows that $\Delta_i \mathrel{\lift(\rel)} \Theta_i$ for any $i \in I$.

	We can simply compute that $\Theta = \osum_{i \in I} \delem{p_i}{\Theta_i}$
	\begin{align*}
		\osum_{i \in I} \delem{p_i}{\Theta_i} & = \osum_{s \in \supp{\Delta}}\osum_{i\in I} \delem{p_i\Delta_i(s)}{\osum_{j\in J\mid s=s_j} \delem{\frac{q_j}{\Delta(s)}}{\Theta_j}} \\
		                                      & = \osum_{s \in \supp{\Delta}}\delem{\Delta(s)}{\osum_{j\in J\mid s=s_j} \delem{\frac{q_j}{\Delta(s)}}{\Theta_j}}                     \\
		                                      & = \osum_{s \in \supp{\Delta}}\osum_{j\in J\mid s=s_j} \delem{q_j}{\Theta_j}                                                          \\
		                                      & = \osum_{j\in J} \delem{q_j}{\Theta_j}                                                                                               \\
		                                      & = \Theta\qedhere
	\end{align*}
\end{proof}


%
\section{Proofs about Type and Semantics}
\uniquetype*
\begin{proof}
	By induction on the derivation $\Sigma \vdash P$ and $\Sigma' \vdash P$.
	\begin{description}
		\item[Nil] It must be that $P = \nil_{\tilde{e}}$ for some sequence of qubits $\tilde{e}$, therefore $\tilde{e} \in \tilde{\Sigma}$ and $\tilde{e} \in \tilde{\Sigma}'$.
		      Since $\tilde{e}$ must contain all elements in both $\Sigma$ and $\Sigma'$ it must be that $\Sigma = \Sigma'$.
		\item[QRecv] By inductive hypothesis $\Sigma \cup \{e\} = \Sigma' \cup \{e\}$.
		      However, by $e \notin \Sigma$ and $e \notin \Sigma'$ it must be that $\Sigma = \Sigma'$.
		\item[QSend] It must be that $P = c!q.R$ for a qubit $q$, therefore both $q \in \Sigma$ and $q \in \Sigma'$.
		      By inductive hypothesis $\Sigma \setminus \{q\} = \Sigma' \setminus \{q\}$.
		\item[Par] By hypothesis of the rule $\Sigma = \Sigma_1 \cup \Sigma_2$ and $\Sigma' = \Sigma'_1 \cup \Sigma'_2$ with $\Sigma_1 \cap \Sigma_2 = \emptyset$ and $\Sigma'_1 \cap \Sigma'_2 = \emptyset$.
		      Since it must be that $P = Q \parallel R$, by inductive hypothesis $\Sigma_1 = \Sigma'_1$ and $\Sigma_2 = \Sigma'_2$, therefore $\Sigma = \Sigma'$.
		\item[Otherwise] By application of the inductive hypothesis.
	\end{description}
\end{proof}

\begin{lemma}\label{lem:quantumsubst}
	If $\Sigma \cup \{x\} \vdash P$ and $q \not\in \Sigma$ then $\Sigma \cup \{q\} \vdash P[\sfrac{q}{x}]$.
\end{lemma}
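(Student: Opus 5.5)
We argue by structural induction on the derivation of $\Sigma \cup \{x\} \vdash P$, generalising slightly so that the induction goes through binders. Concretely, we prove: for every $\Sigma''$ with $\Sigma'' \vdash P$, if $q$ does not occur in $P$ and $q \notin \Sigma''\setminus\{x\}$, then $\Sigma''[\sfrac{q}{x}] \vdash P[\sfrac{q}{x}]$. Here $\Sigma''[\sfrac{q}{x}]$ replaces $x$ by $q$ when $x \in \Sigma''$ and leaves $\Sigma''$ unchanged otherwise. The generalisation is needed because in subderivations the variable $x$ may or may not belong to the current typing set. For example, after a \textsc{QSend} of $x$, or in the branch of a \textsc{Par} that does not own $x$, it is absent. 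Instantiating $\Sigma'' = \Sigma \cup \{x\}$ gives the lemma. The freshness hypothesis on $q$ in $P$ is implicit in how the lemma is used (rule \textsc{Recv} only receives qubits not owned by the process), and we adopt it. Substitution stops at binders $c?x$ that rebind $x$, and we assume the usual Barendregt convention so that no capture occurs.

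Most cases are routine.
\begin{itemize}
\item \textsc{Nil}: the tuple $\tilde e \in \widetilde{\Sigma''}$ becomes $\tilde e[\sfrac{q}{x}]$, which lists exactly the elements of $\Sigma''[\sfrac{q}{x}]$ without repetition, since $q \notin \Sigma''\setminus\{x\}$.
\item \textsc{QOp} and \textsc{QMeas}: the side condition $E \subseteq \Sigma''$ becomes $E[\sfrac{q}{x}] \subseteq \Sigma''[\sfrac{q}{x}]$. Cardinalities are preserved for the same reason, and the continuation is handled by the induction hypothesis.
\item \textsc{Tau}, \textsc{TauPair}, \textsc{CRecv}, \textsc{CSend}, \textsc{Ite}, \textsc{Sum} and \textsc{Restr}: these follow directly from the induction hypothesis. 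Classical expressions have types unaffected by substituting a qubit for a qubit variable, and equality tests $e = e'$ remain boolean.
\item \textsc{QSend} with $c!e$: if $e = x$, then $q = e[\sfrac{q}{x}] \in \Sigma''[\sfrac{q}{x}]$, and the continuation is typed under $\Sigma''\setminus\{x\}$. There $x$ is absent, so the substitution acts trivially on the context while still renaming any occurrences of $x$ in the continuation. Otherwise the case follows by the induction hypothesis.
\item \textsc{QRecv} with binder $y \ne x$: the continuation is typed under $\Sigma'' \cup \{y\}$. Since $y$ is fresh, $y \neq q$ and the side conditions are preserved.
\end{itemize}

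The main obstacle is \textsc{Par}. We have $\Sigma'' = \Sigma_1 \cup \Sigma_2$ with the two sets disjoint, and we must ensure that after substitution $\Sigma_1[\sfrac{q}{x}] \cap \Sigma_2[\sfrac{q}{x}] = \emptyset$. By disjointness, $x$ belongs to at most one of the two sets, say $\Sigma_1$. The only possible collision would then be $q \in \Sigma_2$, but $q \notin \Sigma''\setminus\{x\}$ rules this out. If $x$ lies in neither set, the substitution leaves both contexts unchanged. The remaining worry is that $x$ could occur in the component that does not own it, for instance inside a classical expression. Linearity forbids qubit variables from being used outside their owner, so a short auxiliary fact is needed: if $x \notin \Sigma_2$ and $\Sigma_2 \vdash R$, then $x$ occurs in $R$ only in positions whose typing never refers to the context, such as comparisons. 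In that case $R[\sfrac{q}{x}]$ is typed by the same $\Sigma_2$. This fact is proved by the same induction, and we would establish it first, simultaneously with the main claim.
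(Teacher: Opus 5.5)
Your argument is correct in substance. It uses the same tool as the paper, structural induction on the typing derivation, but you strengthen the induction hypothesis where the paper does not.

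\textbf{How the two routes differ.} The paper keeps $x$ in the context every time it invokes the hypothesis. In the two cases where $x$ leaves the context (\textsc{QSend} of $x$ itself, and the \textsc{Par} branch that does not own $x$), it simply asserts that $x$ does not occur in the residual process, so the substitution is the identity there. You instead generalise to an arbitrary $\Sigma''$, with $\Sigma''[q/x]$ collapsing to $\Sigma''$ when $x\notin\Sigma''$. This lets those two cases be discharged by the same induction hypothesis.

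\textbf{What your route buys.} Your treatment is the more accurate one. The expression grammar allows equality tests on qubit variables, and the \textsc{Ite} and \textsc{CSend} rules never consult $\Sigma$. So $x$ can occur, e.g.\ as $x = x$ in a conditional, in a component that does not own it. The paper's ``no effect'' claim is therefore not literally true in general. It is harmless only because such occurrences are irrelevant to typing. The paper's proof is shorter, but yours covers every process.

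\textbf{Two tidy-ups.}
\begin{enumerate}
\item Your ``auxiliary fact'' need not be proved separately or simultaneously. It is exactly the instance $x \notin \Sigma_2$ of your generalised claim, so in the \textsc{Par} case you can apply the induction hypothesis to both components directly.
\item More importantly, drop the extra hypothesis that $q$ does not occur in $P$.
\begin{itemize}
\item It is not part of the lemma.
\item None of your cases uses it. You only need $q \notin \Sigma''\setminus\{x\}$, together with $q \neq y$ for bound variables $y$, which holds because names and variables are different sorts.
\item Your justification for it is wrong. \textsc{Recv} only guarantees that the received qubit is not \emph{owned} by the receiving process, not that it does not \emph{occur} in it; it may appear in a comparison.
\end{itemize}
With that hypothesis removed, instantiating $\Sigma'' = \Sigma \cup \{x\}$ gives the lemma exactly as stated.
\end{enumerate}
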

\begin{proof}
	By structural induction on the typing derivation of $\Sigma \cup \{x\} \vdash P$.

	\begin{description}
		\item[Nil] It must be that $P = \nil_{\tilde{e}}$ with $x \in \tilde{e}$.
		      By \textsc{Nil}, trivially $\Sigma \cup \{q\} \vdash \nil_{\tilde{e}}[\sfrac{q}{x}]$.
		\item[Sop] It must be that $P = \E(\tilde{e}).R$ for some process $R$.

		      If $x \not\in \tilde{e}$ then $P[\sfrac{q}{x}] = \E(\tilde{e}).R[\sfrac{q}{x}]$, which is typed by \textsc{Sop} and the inductive hypothesis.

		      If $x \in \tilde{e}$ and $\tilde{e} \in \tilde{\Sigma}'$, then by substitution in the \textsc{Sop} rule, $\tilde{e}[\sfrac{q}{x}] \in \Sigma' \setminus \{x\} \cup \{q\}$.
		      Trivially $\Sigma' \setminus \{x\} \cup \{q\} \subseteq \Sigma \cup \{q\}$, and by inductive hypothesis $\Sigma \cup \{q\} \vdash R[\sfrac{q}{x}]$.
		\item[Meas] It must be that $P = \meas{\tilde{e}}{y}.R$ for some process $R$.
		      The result follows by the same reasoning of the \textsc{Sop} case, noting that the new variable $y$ cannot be the same as the to-be-substituted variable $x$ due to mismatching types.
		\item[QSend] It must be that $P = c!e.R$ for some process $R$.

		      If $e = x$ then by typing the substitution has no effect on $R$, thus the result follows trivially by \textsc{QSend}, since $(c!x.R)[\sfrac{q}{x}] = c!q.R$.

		      If $e \neq x$ then the result follows from the inductive hypothesis.
		\item[Par] It must be that $P = R \parallel Q$ for some processes $R$ and $Q$, with $\Sigma \cup \{x\} = \Sigma_1 \cup \Sigma_2$, $\Sigma_1 \cap \Sigma_2 = \emptyset$ and $\Sigma_1 \vdash R$, $\Sigma_2 \vdash Q$.
		      Without loss of generality assume that $x \in \Sigma_1$.
		      By inductive hypothesis $\Sigma_1 \setminus \{x\} \cup \{q\} \vdash R[\sfrac{q}{x}]$ and $\Sigma_2 \vdash Q[\sfrac{q}{x}]$ since the substitution has no effect on $Q$ because the variable $x$ is not in $\Sigma_2$ and thus does not appear in $Q$.
		      The result follows by the \textsc{Par} rule.
		\item[Otherwise] By application of the inductive hypothesis. \qedhere
	\end{description}
\end{proof}

\quasitypingpreservation*
\begin{proof}
	Consider the function $f$ which takes as input a typing context and an action
	\[
		f(\Sigma, \mu) = \begin{cases}
			\Sigma \cup \{ q \}      & \text{if $\mu = c?q$ with $c : \chtype{\qtype}$} \\
			\Sigma \setminus \{ q \} & \text{if $\mu = c!q$ with $c : \chtype{\qtype}$} \\
			\Sigma                   & \text{otherwise}                                 \\
		\end{cases}
	\]
	Let $(\Sigma'', \Sigma) \vdash \mathcal{C}$ and $\mathcal{C} \moveto[\mu]{s} \Delta$. We will prove that $(\Sigma'', f(\Sigma, \mu)) \vdash \Delta$.

	By induction on the derivation of $\moveto[\mu]{s}$.
	\begin{description}
		\item[ParL] It must be that $\mathcal{C} = \conf{\rho, P_1 \parallel P_2}$, for some processes $P_1$ and $P_2$ such that $\Sigma = \Sigma_1 \cup \Sigma_2$, $\Sigma_1 \cap \Sigma_2 = \emptyset$ and $(\Sigma'', \Sigma_1) \vdash P_1$.
		      By induction $(\Sigma'', f(\Sigma_1,\mu)) \vdash \Delta$, with $\conf{\rho, P_1} \moveto[\mu]{s} \Delta$.
		      We must verify that $(\Sigma'', f(\Sigma_1, \mu) \cup \Sigma_2) \vdash \Delta \parallel P_2$ with $f(\Sigma_1,\mu) \cap \Sigma_2 = \emptyset$.
		      If $\mu = \tau$ or $\mu = c!v,c?v$ with $c$ a classical channel, then $f(\Sigma_1, \tau) = \Sigma_1$, which trivially satisfies the requirements.
		      If $\mu = c!v$ with $c : \chtype{\qtype}$ then $f(\Sigma_1,c!v) = \Sigma_1 \setminus \{v\}$, but by the \textsc{QSend} rule $v \in \Sigma_1$ and by \textsc{Par} $v \not\in \Sigma_2$, thus the requirements hold.
		\item[ParR] Symmetrically to the \textsc{ParL} case.
		\item[SynchL] It must be that $\mathcal{C} = \conf{\rho, P \parallel Q}$, for some processes $P$ and $Q$ such that $\conf{\rho, P} \moveto[c!v]{t} \sconf{\rho, P'}$, $\conf{\rho, Q} \moveto[c?v]{t'} \sconf{\rho, Q'}$, and $\Delta = \sconf{\rho, P' \parallel Q'}$.
		      If $c$ is a classical channel then as for the parallel cases, there exists $\Sigma_1$ and $\Sigma_2$ such that $\Sigma = \Sigma_1 \cup \Sigma_2$, $\Sigma_1 \cap \Sigma_2 = \emptyset$ and $(\Sigma'',\Sigma_1) \vdash P$, $(\Sigma'', \Sigma_2) \vdash Q$.
		      By induction and application of $f$, $(\Sigma'', \Sigma_1 \setminus \{v\}) \vdash P'$ and $(\Sigma'', \Sigma_2 \cup \{v\}) \vdash Q'$.
		      As for the previous case, $v \in \Sigma_1$, thus $v \in \Sigma''$, and by \textsc{Par} $v \not\in \Sigma_2$.
		      Thus, $f(\Sigma_1, c!v) \cup f(\Sigma_2,c?v) = (\Sigma_1 \setminus \{v\}) \cup (\Sigma_2 \cup \{v\}) = \Sigma = f(\Sigma, \tau)$.
		\item[Otherwise] By induction hypothesis and trivial correspondence with their respective typing rules. \qedhere
	\end{description}
\end{proof}

\typingpreservation*
\begin{proof}
	It follows from the proof~\autoref{thm:quasipreservation}, by noticing that for $\mathcal{C} \moveto{s} \Delta$, the unique $\Sigma_\tau$ is exactly $\Sigma_P$.
\end{proof}

\begin{lemma}\label{thm:distquasipreservation}
	Let $\Sigma \subseteq \qtype$ and $\mu \in \actset$. Then there exists $\Sigma'$ such that for all $(\Sigma'', \Sigma) \vdash \Delta$ and $\Theta \in \dist{\confset}$ with $\Delta \dmoveto[\mu]{s} \Theta$ it holds $(\Sigma'', \Sigma') \vdash \Theta$.
\end{lemma}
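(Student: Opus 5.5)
We take $\Sigma' = f(\Sigma,\mu)$, where $f$ is the function defined in the proof of \autoref{thm:quasipreservation}: $f(\Sigma,c?q)=\Sigma\cup\{q\}$ and $f(\Sigma,c!q)=\Sigma\setminus\{q\}$ when $c:\chtype{\qtype}$, and $f(\Sigma,\mu)=\Sigma$ otherwise. This choice depends only on $\Sigma$ and $\mu$, as the statement requires. We then show that every $\Theta$ with $\Delta \dmoveto[\mu]{s} \Theta$ is typed by $(\Sigma'',\Sigma')$, whenever $(\Sigma'',\Sigma)\vdash\Delta$.

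The first step uses the normal form of lifted relations. By definition, $\dmoveto[\mu]{s} = \lift(\operatorname{bot}(\cmoveto[\mu]{s}))$, so \autoref{lem:llift_normal} applies. It gives a finite index set $I$ with
\[
\Delta=\osum_{i\in I}\delem{p_i}{\mathcal{C}_i}, \qquad \Theta=\osum_{i\in I}\delem{p_i}{\Theta_i}, \qquad \mathcal{C}_i \mathrel{\operatorname{bot}(\cmoveto[\mu]{s})} \Theta_i.
\]
If some $p_i=0$, the corresponding $\mathcal{C}_i$ may lie outside the support of $\Delta$; these summands contribute nothing to $\Theta$ and can be discarded. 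We may therefore assume every $\mathcal{C}_i\in\supp{\Delta}$, and hence $(\Sigma'',\Sigma)\vdash\mathcal{C}_i$.

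The second step types each $\Theta_i$ by cases on how $\mathcal{C}_i \mathrel{\operatorname{bot}(\cmoveto[\mu]{s})} \Theta_i$ was obtained.
\begin{itemize}
\item If the pair comes from a genuine transition $\mathcal{C}_i \cmoveto[\mu]{s}\Theta_i$, then the proof of \autoref{thm:quasipreservation} gives $(\Sigma'',f(\Sigma,\mu))\vdash\Theta_i$ directly.
\item If the pair comes from the deadlock clause, then $\Theta_i=\epsilon$. Since $\epsilon$ has empty support, it is typed by any pair, in particular by $(\Sigma'',f(\Sigma,\mu))$.
\end{itemize}

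The third step concludes. The support of $\Theta$ is contained in $\bigcup_i \supp{\Theta_i}$, so every configuration in $\supp{\Theta}$ is typed by $(\Sigma'',\Sigma')$. Hence $(\Sigma'',\Sigma')\vdash\Theta$.

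The main obstacle is making sure $\Sigma'$ is genuinely uniform, i.e.\ that \autoref{thm:quasipreservation} yields the same $\Sigma'$ for every configuration typed by $\Sigma$ and not one that varies with the configuration. Its statement only asserts existence, so we rely explicitly on the witness $f(\Sigma,\mu)$ constructed in its proof. A secondary subtlety is the requirement $\Sigma'\subseteq\Sigma''$ when $\Theta=\epsilon$ and $\mu=c?q$. This is harmless: typing of $\epsilon$ holds vacuously, and in the non-trivial case the \textsc{Recv} premise gives $q\in\Sigma_\rho=\Sigma''$.
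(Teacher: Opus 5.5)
Your proof is correct and takes essentially the paper's route. The paper decomposes $\Delta$ over its support via left-decomposability (\autoref{lem:llift_decomp}, which is itself derived from the normal form \autoref{lem:llift_normal} that you use), applies \autoref{thm:quasipreservation} componentwise, and recombines; your explicit handling of the deadlock clause $\Theta_i=\epsilon$ and of zero-weight summands fills in steps the paper leaves implicit.

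Your ``main obstacle'' is not actually an obstacle. The statement of \autoref{thm:quasipreservation} already places $\exists\Sigma'$ before the quantification over configurations, so uniformity holds without appealing to the witness $f(\Sigma,\mu)$ from its proof, although using that witness does no harm.
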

\begin{proof}
	Let $\Delta = \osum_{i \in I} \delem{p_i}{\mathcal{C}_i}$, where every $\mathcal{C}_i$ is unique with $(\Sigma'', \Sigma) \vdash \Delta$ and $(\Sigma'', \Sigma) \vdash \mathcal{C}_i$ for all $i \in I$.
	If $I$ is empty then $\Delta = \epsilon$ and by definition of $\dmoveto[\mu]{s}$, $\Delta' = \epsilon$, which is always well-typed.
	Otherwise, since $\dmoveto[\mu]{s}$ is left decomposable then for any $i \in I$ there exist a $\Theta_i$ such that $\mathcal{C} \dmoveto[\mu]{s} \Theta_i$ and $\Theta = \osum_{i \in I} \delem{p_i}{\Theta_i}$.
	By~\autoref{thm:quasipreservation}, there exists a $\Sigma'$ such that for any $\mathcal{C}_i \moveto[\mu]{s} \Theta_i$ it holds $(\Sigma'', \Sigma') \vdash \Theta_i$, and thus also $(\Sigma'', \Sigma') \vdash \Theta$.
\end{proof}

We now prove some basic properties on our probabilistic semantics.

\begin{lemma}\label{lem:conf_equal_mu_deterministic}
	For any configuration $\mathcal{C} \in \confset$, $s \in \choiceset$, if $\mathcal{C} \cmoveto[\mu]{s} \Delta$ and $\mathcal{C} \cmoveto[\mu]{s} \Delta'$ then $\Delta = \Delta'$.
\end{lemma}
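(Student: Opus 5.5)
We argue by structural induction on the derivation of $\mathcal{C} \cmoveto[\mu]{s} \Delta$, more precisely by induction on the structure of the process $P$ in $\mathcal{C} = \conf{\rho, P}$. In each case we inspect the last rule of \autoref{semantics} that can derive a transition with the given label $\mu$ and scheduler $s$. Note that the statement does not rely on \autoref{ass:det}. The label and the scheduler are fixed, and we only need that the \emph{target} is a function of the source, the label and the scheduler. This is plausible because every axiom computes its target distribution deterministically from $\rho$ and the prefix.

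For the base cases (\textsc{Tau}, \textsc{TauSynch}, \textsc{Sop}, \textsc{Meas}, \textsc{Send}, \textsc{Recv}) the process is a prefix $\alpha.P'$, and only the rule matching $\alpha$ applies. Its conclusion determines $\Delta$ uniquely:
\begin{itemize}
\item for \textsc{Sop}, the target is $\E^{\tilde q}(\rho)$;
\item for \textsc{Meas}, it is the fixed sum over outcomes;
\item for \textsc{Send}, it is determined by the big-step evaluation $e \Downarrow v$, which we assume deterministic;
\item for \textsc{Recv}, the label $c?v$ fixes $v$, and hence fixes $P'[\sfrac{v}{x}]$.
\end{itemize}
The cases \textsc{IteT}/\textsc{IteF} are exclusive because $e$ evaluates to exactly one boolean, so we conclude by the induction hypothesis. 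For \textsc{Restr} we also conclude directly by the induction hypothesis.

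The nontrivial cases are $+$ and $\parallel$, where two different rules may both fire with the same $\mu$ and $s$.

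For $P+Q$, both \textsc{SumL} and \textsc{SumR} could derive a move with scheduler $s$. Here we rely on the standing convention that tags are distinct across alternatives. If the tag in $s$ occurs syntactically only in one summand, then only that summand can produce a move with scheduler $s$. This needs a small auxiliary fact: if $\conf{\rho,P}\cmoveto[\mu]{s}\Delta$ then $\tagset(s)\subseteq\tagset(P)$, which is proved by an easy induction. We then apply the induction hypothesis to that summand.

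For $P \parallel Q$ we distinguish three kinds of moves:
\begin{itemize}
\item \textsc{ParL}/\textsc{ParR} moves with a single tag $s=t$;
\item \textsc{SynchL}/\textsc{SynchR} moves with $s=(t,t')$;
\item \textsc{TauSynch} moves coming from inside a component, which also carry a pair.
\end{itemize}
The same tag-disjointness argument places $t$ in exactly one component. For a synchronization, $t$ lies in $P$ and $t'$ in $Q$, or the reverse, which fixes which of \textsc{SynchL}/\textsc{SynchR} applies. The sent value is then fixed by the induction hypothesis applied to the sender, and the receiver's residual is fixed by the induction hypothesis at label $c?v$.

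The main obstacle is exactly these $+$ and $\parallel$ cases. They require tags of distinct branches and distinct parallel components to be disjoint. This is not enforced by the syntax; it is only implicit in the determinism convention (\autoref{ass:det}) and in \autoref{lem:deterPar}. If that disjointness is not available, the plan is to invoke \autoref{ass:det} directly. Determinism of $P$ already gives: if $\mathcal{C}\cmoveto[\mu]{s}\Delta$ and $\mathcal{C}\cmoveto[\mu]{s}\Delta'$ with the same $\mu$, then $\Delta=\Delta'$. This discharges the lemma immediately for the processes considered.
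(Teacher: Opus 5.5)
Your final paragraph is exactly the paper's proof. Both transitions carry the same label $\mu$ and the same scheduler $s$. The reception clause in the definition of deterministic process requires $\mu\neq\mu'$, so it cannot apply, and \autoref{ass:det} forces $\Delta=\Delta'$. That argument is complete on its own.

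You should drop the structural induction before it rather than present it as the main route, because its premise is false: the statement does rely on \autoref{ass:det}. Take $P = t\tags\tau.\nil + t\tags\tau.t'\tags\tau.\nil$. Rules \textsc{SumL} and \textsc{SumR} give $\conf{\rho,P}\cmoveto[\tau]{t}\delem{1}{\conf{\rho,\nil}}$ and $\conf{\rho,P}\cmoveto[\tau]{t}\delem{1}{\conf{\rho,t'\tags\tau.\nil}}$: same label, same scheduler, different targets.

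Your $+$ and $\parallel$ cases rest on a ``standing convention'' that tags are disjoint across summands and across parallel components. That convention is not part of the paper, and the paper's own examples violate it:
\begin{itemize}
\item $t_3$ occurs in both summands of $\proc{Pr}$ in \autoref{ex:quantumlottery};
\item in the protocols of \autoref{sec:real}, both parties use the tag $t$ and synchronize under the scheduler $(t,t)$.
\end{itemize}
What makes these processes well behaved is not syntactic disjointness. It is the semantic fact that no two distinct moves with the same scheduler and the same label are enabled at once, which is precisely \autoref{ass:det}. So your inductive cases for $+$ and $\parallel$ need the very hypothesis you set out to avoid, and once you have it the induction is redundant.

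The one piece of your extra work worth keeping is the auxiliary fact that a move with scheduler $s$ forces $\tagset(s)\subseteq\tagset(P)$. The definition of deterministic process literally quantifies over $s\in\tagset(P)$, which does not cover pair schedulers. To apply it as both you and the paper do, read it as ranging over all schedulers with $\tagset(s)\subseteq\tagset(P)$. Your fact then shows that every other scheduler yields no move at all, so the lemma holds vacuously there.
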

\begin{proof}
	By \autoref{ass:det} since the two labels are equals it must be that $\Delta = \Delta'$.
\end{proof}

\begin{lemma}\label{lem:bot_deterministic}
	For any configuration $\mathcal{C} \in \confset$, $s \in \choiceset$, if $\mathcal{C} \mathrel{\operatorname{bot}(\moveto[\mu]{s})} \Delta$ and $\mathcal{C} \mathrel{\operatorname{bot}(\moveto[\mu]{s})} \Delta'$ then $\Delta = \Delta'$.
\end{lemma}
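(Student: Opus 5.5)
The plan is a case split on how the two $\operatorname{bot}$-transitions were obtained, using the fact that $\operatorname{bot}(\moveto[\mu]{s})$ is defined as the least relation containing $\moveto[\mu]{s}$ together with the deadlock pairs. Concretely, $\mathcal{C} \mathrel{\operatorname{bot}(\moveto[\mu]{s})} \Delta$ holds exactly in two situations. Either $\mathcal{C} \moveto[\mu]{s} \Delta$ is a genuine transition, or there is no $\Delta''$ with $\mathcal{C} \moveto[\mu]{s} \Delta''$ and $\Delta = \epsilon$. The first step is to make this characterization explicit: the union of the genuine transitions and these deadlock pairs already contains the genuine transitions and is closed under the deadlock clause, so by minimality it is the whole of $\operatorname{bot}(\moveto[\mu]{s})$.

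With this in hand, we consider the two given pairs $(\mathcal{C},\Delta)$ and $(\mathcal{C},\Delta')$ and distinguish four cases.
\begin{itemize}
\item If both come from genuine transitions, then $\mathcal{C} \moveto[\mu]{s} \Delta$ and $\mathcal{C} \moveto[\mu]{s} \Delta'$ carry the same label $\mu$ and scheduler $s$. Hence $\Delta = \Delta'$ by \autoref{lem:conf_equal_mu_deterministic}, which in turn rests on \autoref{ass:det}.
\item If both come from the deadlock clause, then $\Delta = \epsilon = \Delta'$ directly.
\item The two mixed cases are impossible. The deadlock clause requires that no transition $\mathcal{C} \moveto[\mu]{s} \Delta''$ exists, while the other pair witnesses precisely such a transition.
\end{itemize}

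No step here is a real obstacle. The only point requiring care is the characterization of $\operatorname{bot}$ as a least relation, which guarantees there are no other ways to derive a pair. Once that is established, everything reduces to \autoref{lem:conf_equal_mu_deterministic}.
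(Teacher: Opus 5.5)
Your proof is correct and follows the paper's argument: the paper also separates the deadlock case, where both targets are forced to be $\epsilon$, from the genuine-transition case, which it closes with \autoref{lem:conf_equal_mu_deterministic}. The only difference is that the paper splits on whether $\Delta = \epsilon$, tacitly relying on the fact that genuine transitions land in full distributions and so are never $\epsilon$; you split on how each pair arises and rule out the mixed cases explicitly, so you do not need that fact.
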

\begin{proof}
	Assume $\mathcal{C} \mathrel{\operatorname{bot}(\moveto[\mu]{s})} \Delta$ and $\mathcal{C} \mathrel{\operatorname{bot}(\moveto[\mu]{s})} \Delta'$ and proceed by cases:
	if $\Delta = \epsilon$ then, by definition of $\operatorname{bot}(\blank)$ there is no other $\Delta'$ such that $\mathcal{C} \mathrel{\operatorname{bot}(\moveto[\mu]{s})} \Delta$, thus $\Delta = \Delta' = \epsilon$.
	Otherwise, it holds by \autoref{lem:conf_equal_mu_deterministic}.
\end{proof}

\determdistr*
\begin{proof}
	Let $\Delta = \osum_{i \in I}\delem{p_i}{\confel_i}$ where every $\confel_i$ is unique, and assume $\Delta \dmoveto[\mu]{s} \Theta$ and $\Delta \dmoveto[\mu]{s} \Theta'$.
	Since $\dmoveto[\mu]{s}$ and $\dmoveto[\mu']{s}$ are left decomposable there exists families $\{\Theta_i\}_{i \in I}$ and $\{\Theta'_i\}_{i \in I}$ such that $\Theta = \osum_{i \in I}\delem{p_i}{\Theta_i}$ with $\delem{1}{\confel_i} \dmoveto[\mu]{s} \Theta_i$, and $\Theta' = \osum_{i \in I}\delem{p_i}{\Theta'_i}$ with $\delem{1}{\confel_i} \dmoveto[\mu]{s} \Theta'_i$.
	By \autoref{lem:bot_deterministic} for any $i \in I$ it holds $\Theta_i = \Theta'_i$ and thus $\Theta = \Theta'$.
\end{proof}

\begin{lemma}\label{thm:alwaysmove}
	For any process $\Sigma \vdash P$, and scheduler $s$, there is a (possibly empty) family $I$ of trace non-increasing superoperators $\E_i \in \soset{\hilb_\Sigma}$ and processes $P_i$ such that for all density operator $\rho$, $\conf{\rho, P} \moveto{s} \Delta$ if and only if
	\[
		\Delta = \osum_{i \in I \text{ s.t. } \tr(\E_i(\rho)) \neq 0} \delem{\tr(\E_i(\rho))}{\conf{\frac{\E_i(\rho)}{\tr(\E_i(\rho))}, P_i}}.
	\]
\end{lemma}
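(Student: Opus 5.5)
We argue by structural induction on $P$ (equivalently, on the derivation of $\Sigma \vdash P$), constructing for each $P$ and scheduler $s$ the family $\{(\E_i, P_i)\}_{i \in I}$ and checking that the $\tau$-transitions with scheduler $s$ are exactly those of the claimed shape. The key point is that the rules in \autoref{semantics} never inspect $\rho$ in their premises. Conditionals evaluate classical expressions only, sums and parallel compositions choose by tags, and \textsc{Recv} does not produce $\tau$. Hence, for $\tau$-moves, whether a rule applies depends only on $P$ and $s$, and the resulting distribution is a function of $\rho$ through fixed superoperators. The statement is read with the convention that when no move exists the family is empty and the sum is $\epsilon$; this matches the \textsc{Not} rule and $\operatorname{bot}$. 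Determinism (\autoref{ass:det}) ensures there is at most one $\tau$-transition per $s$, so the ``iff'' is about a single $\Delta$.

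\emph{Base cases.} For $t\tags\tau.P'$ with $s = t$, take the single pair $(\mathcal{I}, P')$. For $(t,t')\tags\tau.P'$, take the same pair. For $t\tags\E(\tilde q).P'$, take $(\E^{\tilde q}, P')$; trace preservation gives $\tr(\E^{\tilde q}(\rho)) = 1$, so the normalisation is harmless. For $t\tags\meas{\tilde q}{x}.P'$, take the pairs $(\M_m^{\tilde q}, P'[\sfrac{m}{x}])$ for $m < |\M|$. Here the index restriction $\tr(\E_i(\rho)) \neq 0$ is exactly what discards outcomes with zero probability, matching \textsc{Meas} read as a distribution with finite support. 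Prefixes $c!e$ and $c?x$, and all mismatching tags, give the empty family. So does $\nil_{\tilde e}$.

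\emph{Inductive cases.} For $\ite{e}{P_1}{P_2}$, $e$ is closed and evaluates to a boolean independent of $\rho$, so we inherit the family of the chosen branch. For $P_1 + P_2$, determinism and disjointness of the tag choices mean that at most one summand can move with $s$; we take that summand's family, or the empty family if neither moves. For $P \setminus c$, a $\tau$-move is unaffected, so we inherit the family of $P$. For $P_1 \parallel P_2$ there are two sub-cases.

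\begin{itemize}
\item \textbf{\textsc{ParL}/\textsc{ParR}.} We take the family of the moving component, with each $P_{1,i}$ replaced by $P_{1,i} \parallel P_2$. The superoperators act on $\hilb_{\Sigma_1}$ and are padded with the identity to $\hilb_\Sigma$.
\item \textbf{\textsc{SynchL}/\textsc{SynchR} with $s = (t,t')$.} The send and receive premises leave $\rho$ unchanged. Their existence depends only on the processes: the sent value comes from $e \Downarrow v$, and the received value must match it. The \textsc{Recv} side condition $v \in \Sigma_\rho \setminus \Sigma_Q$ seems to depend on $\rho$, but it concerns only the set of qubit names $\Sigma_\rho$, which is fixed by typing. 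So we take the single pair $(\mathcal{I}, P' \parallel Q')$.
\end{itemize}

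The main obstacle is precisely this independence from $\rho$ of rule applicability, in the parallel and synchronisation cases. We must argue that the typing context fixes $\Sigma_\rho$, so the qubit-ownership side conditions are uniform across all $\rho$ of the right dimension. We also need the auxiliary observation that labelled send and receive moves (needed as premises of synchronisation) exist independently of $\rho$ and leave it unchanged. This is a one-line induction on the same rules, done first as a separate claim.

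A secondary subtlety is the possibility that several $i$ carry the same $P_i$. This is harmless, because $\osum$ simply adds their weights, and the claimed form is then met by the distribution the rules produce.
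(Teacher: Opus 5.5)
Your proposal is correct and follows essentially the paper's own argument: an induction over the semantic rules, with the singleton trace-preserving family for \textsc{Sop}, the family of measurement branches for \textsc{Meas}, the identity for the remaining axioms, and families inherited (and padded with the identity) in the structural cases. You are in fact more careful than the paper, which selects the non-null measurement outcomes for the given $\rho$ (so its family depends on $\rho$) and leaves determinism and the empty-family reading implicit; your only slip is the claim that typing fixes $\Sigma_\rho$, which you do not need, since in a synchronisation the \textsc{Recv} side condition holds for every admissible $\rho$: the transmitted qubit lies in the sender's context and, by \textsc{Par}, outside the receiver's.
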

\begin{proof}
	We proceed by induction on the rules of the semantics.
	\begin{description}
		\item[Sop] It must be that $P = s \tags \E(x).R$ for some process $R$ and superoperator $\E \in \tsoset{\hilb_\Sigma}$.
		      Then $\Delta = \delem{1}{\conf{\E(\rho), R}}$.
		      Note that since $\E \in \tsoset{\hilb_\Sigma}$, and $\tr(\rho) = 1$, then $\tr(\E(\rho)) = 1$. Take the singleton family with $\E$ and $R$.
		\item[Meas] It must be that $P = s \tags \meas{\tilde{q}}{x}.R$ for some process $R$ and measurement $\M = \{M_m\}_m$.
		      Then $\Delta = \osum_m \delem{\tr(M_m(\rho))}{\conf{\frac{M_m(\rho)}{\tr(\M_m(\rho))},R[\sfrac{m}{x}]}}$.
		      Take the operators $M_m$ such that $\tr(\M_m(\rho)) \neq 0$ and their respective processes $R[\sfrac{m}{x}]$.
		\item[Other non-inductive] The rules \textsc{Tau}, \textsc{TauSynch}, \textsc{Send} and \textsc{Recv} do not modify the quantum state and require $\Delta$ to be of the form $\delem{1}{\conf{\rho,P'}}$.
		      Thus, take the singleton family with the identity superoperator $\I$ and process $P'$.
		\item[Otherwise] The other cases follow trivially from the inductive hypothesis.
	\end{description}
\end{proof}

We define the following notation for the family of superoperators and processes that represents the evolution in a parametric way with respect to density operators.
\begin{definition}
	For any process $P \in \procset$, scheduler $s \in \choiceset$ and action $\mu \in \actset$,
	we say $\{\E_i, P_i\}_{i \in I} \in \nextset(P, s, \mu)$ if $\{\E_i, P_i\}_{i \in I}$ is an $I$-indexed family of pairs of superoperators $\E_i \in \soset{\hilb_{\Sigma_P}}$ with $\sum_{i \in I}\E_i \in \tsoset{\hilb_{\Sigma_P}}$, and processes $P_i \in \procset$ such that either:

	\begin{itemize}
		\item if $I = \emptyset$ then $\conf{\rho, P} \ncmoveto[\mu]{s}$ for any state $\rho$; otherwise
		\item if $I \neq \emptyset$ then $\conf{\rho, P} \cmoveto[\mu]{s} \osum_{i \in I}\delem{\tr(\E_i(\rho))}{\conf{\frac{\E_i(\rho)}{\tr(\E_i(\rho))},P_i}}$ for any state $\rho$.
	\end{itemize}
\end{definition}
Note that under \autoref{ass:det} there is only one family $I$ in $\nextset(P, s, \mu)$ for all $P$, $s$ and $\mu$, and we will write $\nextset(P, s, \mu) = \{\E_i, P_i\}_{i \in I}$ to denote such family.

Moreover, we allow applying superoperators to configurations and distributions.
Given $\E \in \soset{\hilb}$, we define
$\nE : \dist{\confset} \to \dist{\confset}$
as
$$\nE\left(\osum_i \delem{p_i}{\conf{\rho_i, P_i}}\right) = \osum_i \delem{p_i\cdot tr(\E(\rho_i))}{\conf{\frac{\E(\rho_i)}{\tr(\E(\rho_i))}, P_i}}.$$
Roughly, if $\E$ is trace-preserving it is just applied to configurations.
If $\E$ is trace non-increasing, then after its application each $\rho_i$ is normalized and the weight of the $i$-th configuration is updated accordingly.

Finally, given $\Sigma, \Sigma' \vdash \Delta$, if $\E$ is defined on $\tilde{q}$, which are only some of the qubits in $\Sigma$, we write $\E(\Delta)$ for $\E^{\tilde{q}}(\Delta)$ (recall that $\E^{\tilde{q}}$ extends $\E$ by tensoring it with the identity).

\begin{lemma}\label{lem:conf_nonbot_cases}
	For any processes $P, R \in \procset$ with $\tagset(P) \cap \tagset(R) = \emptyset$, we have $\conf{\rho, P \parallel R} \cmoveto[\mu]{s} \Delta$ if and only if one of the following holds:
	\begin{enumerate}
		\item $\Delta = \Delta' \parallel R$ with $\conf{\rho, P} \cmoveto[\mu]{s} \Delta'$ and $\tagset(s) \subseteq \tagset(P)$;
		\item $\Delta = P \parallel \Delta'$ with $\conf{\rho, R} \cmoveto[\mu]{s} \Delta'$ and $\tagset(s) \subseteq \tagset(R)$;
		\item $\Delta = \delem{1}{\conf{\rho, P' \parallel R}}$ where there exists a channel $c$ and value $v$ such that $\conf{\rho, P} \cmoveto[c?v]{t_1}\delem{1}{\rho, P'}$, $\conf{\rho, R} \cmoveto[c!v]{t_2}\delem{1}{\rho, R'}$ with $s = (t_1, t_2)$, $t_1 \in \tagset(P)$ and $t_2 \in \tagset(R)$;
		\item $\Delta = \delem{1}{\conf{\rho, P' \parallel R}}$ where there exists a channel $c$ and value $v$ such that $\conf{\rho, P} \cmoveto[c!v]{t_1}\delem{1}{\rho, P'}$, $\conf{\rho, R} \cmoveto[c?v]{t_2}\delem{1}{\rho, R'}$ with $s = (t_1, t_2)$, $t_1 \in \tagset(P)$ and $t_2 \in \tagset(R)$.
	\end{enumerate}
\end{lemma}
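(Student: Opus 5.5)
The plan is to prove both directions by inspecting the last rule used in the derivation of $\conf{\rho, P \parallel R} \cmoveto[\mu]{s} \Delta$. Only \textsc{ParL}, \textsc{ParR}, \textsc{SynchL} and \textsc{SynchR} have a parallel composition as their conclusion, so these four rules exhaust the possible derivations.

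For the ($\Rightarrow$) direction, a derivation ending in \textsc{ParL} yields $\Delta = \Delta' \parallel R$ with $\conf{\rho, P} \cmoveto[\mu]{s} \Delta'$. We then need $\tagset(s) \subseteq \tagset(P)$. For this I will prove an auxiliary fact by a simple induction on the transition rules: whenever $\conf{\rho, Q} \cmoveto[\mu]{s} \Theta$, every tag of $s$ occurs syntactically in $Q$.
\begin{itemize}
\item The base rules \textsc{Tau}, \textsc{TauSynch}, \textsc{Sop}, \textsc{Meas}, \textsc{Send} and \textsc{Recv} read the tag directly off the prefix.
\item \textsc{IteT}, \textsc{IteF}, \textsc{SumL}, \textsc{SumR}, \textsc{ParL}, \textsc{ParR} and \textsc{Restr} inherit the property from their premise, since $\tagset$ of a subterm is contained in $\tagset$ of the whole term.
\item For \textsc{SynchL} and \textsc{SynchR}, both components $t, t'$ of the pair come from premises on the two subprocesses.
\end{itemize}
The \textsc{ParR} case of the lemma is symmetric, giving case 2.

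For \textsc{SynchL}, $P$ sends and $R$ receives, and $s = (t_1, t_2)$ with $t_1 \in \tagset(P)$ and $t_2 \in \tagset(R)$ by the auxiliary fact. The premises force point distributions with unchanged state, so $\Delta = \delem{1}{\conf{\rho, P' \parallel R'}}$; this is case 4. \textsc{SynchR} gives case 3 in the same way. (I read the statement's $P' \parallel R$ as $P' \parallel R'$, a typo.)

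For the ($\Leftarrow$) direction, each case rebuilds the transition by applying the corresponding rule. Cases 3 and 4 are immediate from \textsc{SynchR} and \textsc{SynchL}. Cases 1 and 2 need the side condition of \textsc{ParL}/\textsc{ParR}: $\mu \notin \{c?v \mid v \in \Sigma_R\}$ (resp.\ $\Sigma_P$). This is the main obstacle, because the hypotheses of case 1 say nothing about qubits owned by $R$. I would handle it through well-typedness. The configuration $\conf{\rho, P \parallel R}$ is assumed well-typed, and rule \textsc{Recv} for $P$ only receives $v \in \Sigma_\rho \setminus \Sigma_P$. Still, nothing prevents $v \in \Sigma_R$. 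So either the lemma must be read as implicitly carrying this side condition, or the case must be excluded using \autoref{thm:quasipreservation} together with typing of the parallel composition: the result $\Delta' \parallel R$ would fail to be well-typed, since \textsc{Par} requires disjoint qubit sets, while the restriction to well-typed distributions rules such a transition out. I would adopt the latter reading, adding a line that the resulting distribution must be well-typed, and note the condition explicitly in case 1 (and case 2) if needed.

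Finally, disjointness $\tagset(P) \cap \tagset(R) = \emptyset$ is what makes the four cases mutually exclusive with respect to $s$. That exclusivity is not needed for the equivalence itself, but it justifies reading the lemma as a case split.
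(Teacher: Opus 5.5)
Your proposal is correct and takes the same route as the paper. The paper's proof just observes that only \textsc{ParL}, \textsc{ParR}, \textsc{SynchL} and \textsc{SynchR} can derive a move of $P \parallel R$, and that the four cases are exactly their premises. Your extra steps fill in details that one-line proof leaves implicit: the induction showing $\tagset(s)$ lies within the tags of the moving process, reading $P' \parallel R$ as $P' \parallel R'$, and noting that the ``if'' direction of cases 1 and 2 needs the \textsc{ParL}/\textsc{ParR} side condition (supplied by requiring $\Delta$ to be well-typed, or stated explicitly).
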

\begin{proof}
		A transition of the form $\conf{\rho, P \parallel R} \cmoveto[\mu]{s} \Delta$ can happen only through one of the rules \textsc{ParL}, \textsc{ParR}, \textsc{SynchL} and \textsc{SynchR}.
		Each of the four cases is a sufficient and necessary condition to apply one of the rules above.
\end{proof}

\begin{lemma}\label{lem:conf_bot_cases}
		For any processes $P$ and $R$ with $\tagset(P) \cap \tagset(R) = \emptyset$, we have $\conf{\rho, P \parallel R} \mathrel{\operatorname{bot}(\cmoveto[\mu]{s})} \epsilon$ if and only if all of the following hold:
		\begin{itemize}
			\item P cannot move, i.e. $\conf{\rho, P} \ncmoveto[\mu]{s}$;
			\item R cannot move, i.e. $\conf{\rho, R} \ncmoveto[\mu]{s}$;
			\item P and R cannot synchronize, i.e. for all $c$ and $v$, if $\mu = \tau$ and $s = (t_1, t_2)$ we have
			      $next(R, t_2, c!v) \neq \emptyset \implies \conf{\rho, P} \ncmoveto[c?v]{t_1}$ and
			      $next(R, t_2, c?v) \neq \emptyset \implies \conf{\rho, P} \ncmoveto[c!v]{t_1}$.
		\end{itemize}
\end{lemma}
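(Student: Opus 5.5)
The statement to establish is \autoref{lem:conf_bot_cases}. My plan is to derive it as the negation of \autoref{lem:conf_nonbot_cases}, read through the definition of $\operatorname{bot}(\blank)$. By definition, $\conf{\rho, P \parallel R} \mathrel{\operatorname{bot}(\cmoveto[\mu]{s})} \epsilon$ holds exactly when there is no $\Delta$ with $\conf{\rho, P \parallel R} \cmoveto[\mu]{s} \Delta$. One caveat: the genuine transition relation only yields full distributions, so $\epsilon$ is never a real target. By \autoref{lem:conf_nonbot_cases}, a transition exists if and only if one of its four cases applies. The claim therefore reduces to showing that the three listed conditions are together equivalent to none of the four cases holding.

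\emph{Interleaving cases.} Cases (1) and (2) of \autoref{lem:conf_nonbot_cases} carry the side conditions $\tagset(s) \subseteq \tagset(P)$ and $\tagset(s) \subseteq \tagset(R)$. These inclusions are automatic whenever $P$ (respectively $R$) can actually perform a move with scheduler $s$, because the rules only produce transitions whose scheduler tags occur syntactically in the process. Hence case (1) fails exactly when $\conf{\rho, P} \ncmoveto[\mu]{s}$, and case (2) fails exactly when $\conf{\rho, R} \ncmoveto[\mu]{s}$. A further subtlety is the guard in \textsc{ParL}/\textsc{ParR} forbidding reception of qubits owned by the sibling. I will argue that it is subsumed, because \textsc{Recv} already requires $v \in \Sigma_\rho \setminus \Sigma_P$, read relative to the whole configuration; if needed I will state this as the intended reading of the first two bullets.

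\emph{Synchronisation cases.} Cases (3) and (4) require $\mu = \tau$ and $s = (t_1, t_2)$. For these I rewrite the premise on $R$ using $\nextset$. By \autoref{lem:llift_normal}-style reasoning, or directly from the definition of $\nextset$ together with \autoref{ass:det}, the condition $\nextset(R, t_2, c!v) \neq \emptyset$ holds if and only if $\conf{\rho, R} \cmoveto[c!v]{t_2} \Delta'$ for every $\rho$ (in particular the given one). This uses the fact that send and receive transitions do not depend on the quantum state, except for qubit availability, which is fixed by typing. Consequently, "there is no $c, v$ with both premises of \textsc{SynchL} or \textsc{SynchR}" is exactly the stated pair of implications, and it holds vacuously when $\mu \neq \tau$ or $s$ is not a pair.

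\emph{Main obstacle.} The expected main obstacle is this last translation. One must check that output and input transitions of $R$ are independent of $\rho$, so that "$\nextset \neq \emptyset$" and "$\conf{\rho,R}$ moves" coincide. I would prove this by a short induction on the rules, observing that \textsc{Send}, \textsc{Recv}, and the structural rules never inspect $\rho$ beyond $\Sigma_\rho$. It also matters that disjointness of tags prevents a synchronisation from being confused with a single-component move. Combining the three equivalences gives the lemma.
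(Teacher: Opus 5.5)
Your decomposition is the paper's own. The published proof simply notes that a move of $\conf{\rho,P\parallel R}$ can only come from \textsc{ParL}, \textsc{ParR}, \textsc{SynchL} or \textsc{SynchR}, and that the three bullets say none of them applies. Your extra observations are sound refinements of that sketch: the tags of $s$ always occur in the moving component, $\epsilon$ only arises from the deadlock clause of $\operatorname{bot}$, and the existence of send/receive moves does not depend on the value of $\rho$, so $\nextset(R,t_2,\cdot)\neq\emptyset$ may replace ``$\conf{\rho,R}$ moves''.

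The step that does not go through is your claim that the ownership guard of \textsc{ParL}/\textsc{ParR} is subsumed by the premise of \textsc{Recv}. In a derivation for $\conf{\rho,P\parallel R}$, \textsc{Recv} is applied to the component configuration $\conf{\rho,P}$. Its side condition $v\in\Sigma_\rho\setminus\Sigma_P$ therefore only rules out qubits of $P$, never those of the sibling $R$; excluding the latter is exactly the job of $\mu\notin\{c?v\mid v\in\Sigma_R\}$. For instance, take $c:\chtype{\qtype}$, $P=t\tags c?x.\nil_x$, $R=\nil_q$ and $\rho=\kbz$ on $\{q\}$. Then $\conf{\rho,P}\cmoveto[c?q]{t}\sconf{\rho,\nil_q}$, but \textsc{ParL} is blocked by the guard and no other rule yields the label $c?q$. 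So $\conf{\rho,P\parallel R}\mathrel{\operatorname{bot}(\cmoveto[c?q]{t})}\epsilon$ holds although the first bullet fails.

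Hence the ``only if'' direction is false as literally stated, and negating \autoref{lem:conf_nonbot_cases} does not rescue it, since its cases (1) and (2) omit the same guard. Your fallback is the correct repair: read the first bullet as ``$\conf{\rho,P}\ncmoveto[\mu]{s}$, or $\mu=c?v$ with $v\in\Sigma_R$'', and symmetrically for $R$. This is what the paper's proof tacitly does when it speaks of moves ``with \textsc{ParL}''. The ``if'' direction holds as stated. The discrepancy is also vacuous for $\mu=\tau$, which is the only label with which this lemma, \autoref{lem:distr_cases} and \autoref{lem:distr_epsilon_cases} are ultimately used (in \autoref{thm:correct}).
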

\begin{proof}
		For the only if direction, we know that $\conf{\rho, P\parallel R} \ncmoveto[\mu]{s}$, and this implies that $\conf{\rho, P \parallel R}$ can not move with any of the rules \textsc{ParL} (first case), \textsc{ParR} (second case), \textsc{SynchL} and \textsc{SynchR} (third case).
		For the if direction, since all three cases hold, we know that none of the above rule can be applied, and thus $\conf{\rho, P\parallel R} \ncmoveto[\mu]{s}$ and $\conf{\rho, P \parallel R} \mathrel{\operatorname{bot}(\cmoveto[\mu]{s})} \epsilon$.
\end{proof}


\begin{lemma}\label{lem:distr_cases}
	For any non-empty distribution $\Delta \in \sdist{\confset}$ and process $R \in \procset$ with $\tagset(\Delta) \cap \tagset(R) = \emptyset$, we have $\Delta \parallel R \dmoveto[\mu]{s} \Theta \neq \epsilon$ if and only if one of the following holds:
	\begin{enumerate}
		\item $\tagset(s) \subseteq \tagset(\Delta)$ and there exists $\Theta'$ such that $\Theta = \Theta' \parallel R$ and $\Delta \dmoveto[\mu]{s} \Theta'$;
		\item $\tagset(s) \subseteq \tagset(R)$ and $\Theta = \osum_{i \in I}\E_i(\Delta \parallel R_i)$ with $\nextset(R, s, \mu) = \{\E_i, R_i\}_{i \in I}$;
		\item $s = (t_1, t_2)$, $t_1 \in \tagset(\Delta)$, $t_2 \in \tagset(R)$, and $\Theta = \Theta' \parallel R'$, $\exists c,v\ldotp \Delta \dmoveto[c?v]{t_1} \Theta'$ with $\nextset(R, t_2, c!v) = \{\I, R'\}$;
		\item $s = (t_1, t_2)$, $t_1 \in \tagset(\Delta)$, $t_2 \in \tagset(R)$, and $\Theta = \osum_{v \in V}\Theta_v \parallel R_v$, $\exists c,V \ldotp \forall v \in V\ldotp \Delta \dmoveto[c!v]{t_1} \Theta_v$ with $\nextset(R, t_2, c?v) = \{\I, R_v\}$.
	\end{enumerate}
\end{lemma}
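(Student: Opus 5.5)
The plan is to reduce the statement about distributions to the configuration-level case analysis of \autoref{lem:conf_nonbot_cases} and \autoref{lem:conf_bot_cases}, and then lift it using linearity and left-decomposability of $\dmoveto[\mu]{s}$ (\autoref{lem:llift_decomp}). Write $\Delta = \osum_{j \in J}\delem{p_j}{\conf{\rho_j, P_j}}$ with pairwise distinct configurations. By left-decomposability, $\Delta \parallel R \dmoveto[\mu]{s} \Theta$ holds exactly when $\Theta = \osum_j \delem{p_j}{\Theta_j}$ with $\conf{\rho_j, P_j \parallel R} \mathrel{\operatorname{bot}(\cmoveto[\mu]{s})} \Theta_j$. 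Conversely, each of the four listed shapes can be assembled by left-linearity from componentwise moves. So the proof consists of classifying the scheduler $s$ and checking, component by component, which case of \autoref{lem:conf_nonbot_cases} applies.

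The first step is a split on $\tagset(s)$. Since $\tagset(\Delta)\cap\tagset(R)=\emptyset$, the scheduler either has all its tags in $\tagset(\Delta)$, or all in $\tagset(R)$, or is a pair $(t_1,t_2)$ with $t_1\in\tagset(\Delta)$ and $t_2\in\tagset(R)$; the symmetric pair is handled by the same argument. In any other situation every component deadlocks, so $\Theta=\epsilon$, which is excluded by hypothesis.

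In the first case only \textsc{ParL} can fire in each component. Hence $\Theta_j = \Theta'_j \parallel R$ when $P_j$ moves, and $\Theta_j=\epsilon=\epsilon\parallel R$ when it does not, since $R$ cannot move with a tag it does not own. Recombining gives $\Theta=\Theta'\parallel R$ with $\Delta\dmoveto[\mu]{s}\Theta'$. One must also check that the side condition of \textsc{ParL} on received qubit names is uniform across the support; it is, because all components share the same type.

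In the second case only \textsc{ParR} applies. By \autoref{thm:alwaysmove}, $\nextset(R,s,\mu)=\{\E_i,R_i\}_{i \in I}$ is the same for every $\rho_j$, so $\Theta_j = \osum_i \delem{\tr(\E_i(\rho_j))}{\conf{\E_i(\rho_j)/\tr(\E_i(\rho_j)), P_j\parallel R_i}}$. Summing over $j$ with weights $p_j$ yields $\osum_i \nE_i(\Delta\parallel R_i)$, which is the shape in item 2.

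The synchronization case is the main obstacle. For $s=(t_1,t_2)$, \textsc{SynchL} and \textsc{SynchR} require $R$ to perform a send or a receive with tag $t_2$. Because $R$ is deterministic, it cannot both send and receive under the same tag, so we are in exactly one of the two situations, and the resulting $\nextset$ is $\{\I, R'\}$ since communication does not touch the state.

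If $R$ receives (item 4), the components $P_j$ may send different values $v$ under $t_1$. This happens because $P_j$ depends on classical data produced by earlier measurements, so the send value is not uniform across $\Delta$. I would therefore group the components by the value $v$ they send, set $\Theta_v$ to be the result of $\Delta\dmoveto[c!v]{t_1}$ (which assigns $\epsilon$ to components sending a different value), and verify that $\osum_v \Theta_v\parallel R_v$ equals $\Theta$. The check that $\nextset(R,t_2,c?v)$ is nonempty for every $v$ that is actually sent uses the \textsc{Recv} typing side condition.

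If $R$ sends (item 3), its value $v$ is fixed, so a single $\Delta\dmoveto[c?v]{t_1}\Theta'$ suffices, since components unable to receive $v$ contribute $\epsilon$.

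The converse directions in each case follow by reading the same computations backwards through \autoref{lem:conf_nonbot_cases} and left-linearity.
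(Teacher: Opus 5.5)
Your proposal follows essentially the same route as the paper's proof: you split $\Delta$ into configurations by left-decomposability, classify each component's move via \autoref{lem:conf_nonbot_cases} (with \autoref{thm:alwaysmove} for moves of $R$), and reassemble by linearity, and your grouping by sent value in item~4 matches the paper's sets $J_v$. One small slip: the pair $(t_2,t_1)$ with $t_2\in\tagset(R)$ is not ``handled by the same argument'' but simply deadlocks, because \textsc{SynchL} and \textsc{SynchR} always take the first tag from the left component; it therefore belongs to your ``any other situation'' clause, and treating it as a genuine case would produce moves outside the four listed cases.
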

\begin{proof}
	We start with the ``only if'' direction.
	Let $\Delta \parallel R \dmoveto[\mu]{r} \Theta$ and let $\Delta = \osum_{i \in I}\delem{p_i}{\conf{\rho_i, P_i}}$ where every $\conf{rho_i, P_i}$.
	By left-decomposability, $\delem{1}{\conf{\rho_i, P_i \parallel R}} \dmoveto[\mu]{s} \Theta_i$ for all $i \in I$ with $\Theta = \osum_{i \in I}\delem{p_i}{\Theta_i}$.
	Since $\Theta \neq \epsilon$, there is a subset $J \subseteq I$ such that $\delem{1}{\conf{\rho_j, P_j \parallel R}} \dmoveto[\mu]{s} \Theta_j$ with $\Theta_j \neq \epsilon$, i.e.\ $\conf{\rho_j, P_j \parallel R} \cmoveto[\mu]{s} \Theta_j$ for all $j \in J$.
	Consider a single $\jhat \in J$.
	By \autoref{lem:conf_nonbot_cases} we have to consider four cases:
	\begin{enumerate}
		\item If $\tagset(s) \subseteq \tagset(P_{\jhat})$, $\conf{\rho_{\jhat}, P_{\jhat}} \dmoveto[\mu]{s} \Theta'_{\jhat}$ and $\Theta_{\jhat} = \Theta'_{\jhat} \parallel R$, then $\tagset(s) \cap \tagset(R) = \emptyset$.
		      Thus, for all $i \in I$ either $\conf{\rho_i, P_i \parallel R} \ndmoveto[\mu]{s}$ so $\Theta_i = \epsilon$ and by \autoref{lem:conf_bot_cases} $\conf{\rho_i, P_i} \ncmoveto[\mu]{s}$, or $\conf{\rho_i, P_i} \cmoveto[\mu]{s} \Theta'_i$ so $\conf{\rho_i, P_i \parallel R} \cmoveto[\mu]{s} \Theta'_i \parallel R = \Theta_i$ by \autoref{lem:conf_nonbot_cases}.
		      Note that the latter are exactly all the configuration in $J$.
		      Therefore, $\Delta = \osum_{i \in I}\delem{p_i}{\conf{\rho_i, P_i}} \dmoveto[\mu]{s} \osum_{j \in J}\delem{p_j}{\Theta'_j} = \Theta'$ and $\Theta = \Theta' \parallel R = \osum_{i \in I}\delem{p_i}{\Theta_i} = \osum_{j \in J}\delem{p_j}{\Theta'_j \parallel R}$.
		\item If $\tagset(s) \subseteq \tagset(R)$, by \autoref{thm:alwaysmove}, $\nextset(R, s, \mu) = \{\E_k, R_k\}_{k \in K}$. Since $\tagset(s) \cap \tagset(\Delta) = \emptyset$, by \autoref{lem:conf_nonbot_cases}, $\conf{\rho_i, R} \cmoveto[\mu]{s} \osum_{k \in K}\delem{\tr(\E_k(\rho_i))}{\conf{\frac{\E_k(\rho_i)}{\tr(\E_k(\rho_i))}, R_k}}$ for all $i$, and thus $\conf{\rho_i, P_i \parallel R} \cmoveto[\mu]{s} \osum_{k \in K}\delem{\tr(\E_k(\rho_i))}{\conf{\frac{\E_k(\rho_i)}{\tr(\E_k(\rho_i))}, P_i \parallel R_k}}$.
		      Therefore, $\Delta \parallel R = \osum_{i \in I}\delem{p_i}{\conf{\rho_i, P_i \parallel R}} \dmoveto[\mu]{s} \osum_{i \in I}\delem{p_i}{\osum_{k \in K}\delem{\tr(\E_k(\rho_i))}{\conf{\frac{\E_k(\rho_i)}{\tr(\E_k(\rho_i))}, P_i \parallel R_k}}} = \osum_{k \in K} \E_k(\Delta \parallel R_k)$.
		\item If $s = (t_1, t_2)$, $t_1 \in \tagset(\Delta)$, $t_2 \in \tagset(R)$ and there are $c$ and $v$ such that $\nextset(R, t_2, c!v) = \{\I, R'\}$ then by \autoref{lem:conf_nonbot_cases} $\conf{\rho_{\jhat}, P_{\jhat} \parallel R} \cmoveto{s} \delem{1}{\conf{\rho_{\jhat}, P'_{\jhat} \parallel R'}}$ with $\nextset(P_{\jhat}, t_1, c?v) = \{\I, P'_{\jhat}\}$.
		      Note that $c$ and $v$ are unique due to the assumption of determinism, and by \autoref{lem:conf_nonbot_cases}, the same holds for all $j \in J$.
		      By left-linearity $\Delta \parallel R \dmoveto{s} \Theta' \parallel R'$ where $\Theta' = \osum_{j \in J}\delem{p_j}{P'_j}$.
		\item If $s = (t_1, t_2)$, $t_1 \in \tagset(\Delta)$, $t_2 \in \tagset(R)$ and there are $c$ and $v$ such that $\nextset(R, t_2, c?v) = \{\I, R_v\}$.
		      Let $J_v$ be the subset of $J$ such that $\conf{\rho_j, P_j} \cmoveto[c!v]{t_1} \delem{1}{\conf{\rho_j, P'_j}}$, and let $\Theta_v = \osum_{j \in J_v}\delem{p_j}{\conf{\rho_j, P'_j}}$.
		      By \autoref{lem:conf_nonbot_cases} and linearity we have $\osum_{j \in J_v}\delem{p_j}{\conf{\rho_j, P_j}} \dmoveto[c!v]{t_1} \Theta_v$ and $\osum_{j \in J_v}\delem{p_j}{\conf{\rho_j, P_j \parallel R}} \dmoveto{s} \Theta_v \parallel R_v$.
		      Let $J^x$ be the subset of processes in deadlock.
		      Therefore, $\Delta \parallel R = \osum_{v \in V}\osum_{j \in J_v}\delem{p_j}{\conf{\rho_j, P_j \parallel R}} \oplus \osum_{j \in J^x}\delem{p_j}{\conf{\rho_j, P_j \parallel R}} \dmoveto{s} \osum_{v \in V}\Theta_v \parallel R_v \oplus \epsilon$.
	\end{enumerate}
	The ``if'' direction is similar: any of the four cases implies $\Delta \parallel R \dmoveto[\mu]{s} \Theta$.
\end{proof}

\begin{lemma}\label{lem:distr_epsilon_cases}
		For any non-empty distribution $\Delta \in \sdist{\confset}$ and process $R \in \procset$ with $\tagset(\Delta) \cap \tagset(R) = \emptyset$, we have $\Delta \parallel R \dmoveto[\mu]{s} \epsilon$ if and only if all of the following hold:
		\begin{itemize}
			\item $\Delta$ cannot move, i.e. $\Delta \dmoveto[\mu]{s} \epsilon$;
			\item R cannot move, i.e. $next(R, s, \mu) = \emptyset$;
			\item $\Delta$ and R cannot synchronize, i.e. for all $c$ and $v$, if $\mu = \tau$ and $s = (t_1, t_2)$ we have
			      $next(R, t_2, c!v) \neq \emptyset \implies \Delta \dmoveto[c?v]{t_1} \epsilon$ and
			      $next(R, t_2, c?v) \neq \emptyset \implies \Delta \dmoveto[c!v]{t_1} \epsilon$.
		\end{itemize}
\end{lemma}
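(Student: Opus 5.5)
The plan is to derive this lemma from the configuration-level characterization of \autoref{lem:conf_bot_cases}, combined with left-decomposability and linearity of $\dmoveto[\mu]{s}$ (\autoref{lem:llift_decomp}). Write $\Delta = \osum_{i \in I} \delem{p_i}{\conf{\rho_i, P_i}}$ with pairwise distinct configurations and every $p_i > 0$. The whole argument rests on one observation: for a lifted relation of the form $\lift(\operatorname{bot}(\cmoveto[\mu]{s}))$, a distribution moves to $\epsilon$ exactly when every configuration in its support moves to $\epsilon$. Indeed, by \autoref{lem:llift_normal} and \autoref{thm:determdistr} the target is $\osum_i \delem{p_i}{\Theta_i}$, where $\Theta_i$ is the unique target of $\delem{1}{\confel_i}$. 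Since all $p_i$ are positive, this sum is $\epsilon$ iff each $\Theta_i = \epsilon$, i.e.\ iff each $\confel_i \mathrel{\operatorname{bot}(\cmoveto[\mu]{s})} \epsilon$. Each $\Theta_i$ is either $\epsilon$ or a full distribution, since genuine moves produce full distributions.

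With this observation, $\Delta \parallel R \dmoveto[\mu]{s} \epsilon$ holds iff, for every $i$, $\conf{\rho_i, P_i \parallel R} \mathrel{\operatorname{bot}(\cmoveto[\mu]{s})} \epsilon$. By \autoref{lem:conf_bot_cases}, this is in turn equivalent to three pointwise conditions holding for every $i$: (a) $\conf{\rho_i, P_i} \ncmoveto[\mu]{s}$; (b) $\conf{\rho_i, R} \ncmoveto[\mu]{s}$; and (c) the synchronization clauses for $P_i$. The remaining task is to show that each condition, quantified over all $i$, is equivalent to the corresponding bullet of the lemma.

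\textbf{Condition (a).} Applying the same observation to $\Delta$ itself, (a) for all $i$ is exactly $\Delta \dmoveto[\mu]{s} \epsilon$.

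\textbf{Condition (b).} Since $\Delta$ is non-empty, there is at least one index $i$. By \autoref{thm:alwaysmove} and the definition of $\nextset$, whether $\conf{\rho, R}$ can move with $\mu, s$ does not depend on $\rho$. I expect this to be the subtle step: it requires that enabledness of a move be state-independent, which holds because measurement always yields a move, even when some outcome branches have zero trace. Hence (b) for some (equivalently, all) $i$ is equivalent to $\nextset(R, s, \mu) = \emptyset$.

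\textbf{Condition (c).} The synchronization clause is treated the same way. Its premise $\nextset(R, t_2, c!v) \neq \emptyset$ is independent of $i$, and its conclusion $\conf{\rho_i, P_i} \ncmoveto[c?v]{t_1}$ holding for all $i$ is equivalent, by the observation, to $\Delta \dmoveto[c?v]{t_1} \epsilon$; the case $c?v$ in the premise is symmetric. The universal quantifications over $c$, $v$ and over $i$ commute, so the pointwise clauses are equivalent to the lemma's third bullet.

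\textbf{Typing side conditions.} The \textsc{ParL}/\textsc{ParR} side conditions, which forbid receiving a qubit owned by the other component, must be checked to be handled consistently. All configurations in $\Delta$ share the type $\Sigma_P$, so these conditions behave uniformly across $i$. The tag-disjointness hypothesis $\tagset(\Delta) \cap \tagset(R) = \emptyset$ is what makes \autoref{lem:conf_bot_cases} applicable to each $P_i \parallel R$.
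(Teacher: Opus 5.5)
Your proposal is correct relative to the paper and follows the same route as its proof: split $\Delta$ into its support by left-decomposability, apply \autoref{lem:conf_bot_cases} to each $\conf{\rho_i, P_i \parallel R}$, and reassemble the three pointwise conditions (the first via the lifting, the second via \autoref{thm:alwaysmove} plus non-emptiness of $\Delta$, the third by pushing the quantifier over $i$ into the implication), and you are more careful than the paper about positive weights and about genuine moves never yielding $\epsilon$. One caveat, inherited from \autoref{lem:conf_bot_cases} rather than introduced by you: your typing paragraph does not actually dispose of the \textsc{ParL}/\textsc{ParR} side conditions, because for $\mu = c?v$ with $v$ a qubit owned by one component, the other component may receive $v$ on its own while $\Delta \parallel R$ cannot, so the first two bullets are necessary only for labels other than such receptions (in particular for $\mu = \tau$, the only case in which the paper uses the lemma).
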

\begin{proof}
		Let $\Delta = \osum_{i \in I}\delem{p_i}{\conf{\rho_i, P_i}}$ where every $\conf{\rho_i, P_i}$ is unique.
		By left-decomposability $\delem{1}{\conf{\rho_i, P_i \parallel R}} \dmoveto[\mu]{s} \epsilon$ for all $i \in I$.
		For all $i \in I$ we can apply \autoref{lem:conf_bot_cases}, and thus we know that \begin{itemize}
			\item For all $i \in I $, $\conf{\rho_i, P_i} \ncmoveto[\mu]{s}$;
			\item For all $i \in I $, $\conf{\rho_i, R_i} \ncmoveto[\mu]{s}$;
			\item For all $i$, $c$ and $v$, if $\mu = \tau$ and $s = (t_1, t_2)$ we have
			      $next(R, t_2, c!v) \neq \emptyset \implies \conf{\rho_I, P_i} \ncmoveto[c?v]{t_1}$ and
			      $next(R, t_2, c?v) \neq \emptyset \implies \conf{\rho_I, P_i} \ncmoveto[c!v]{t_1}$.
		\end{itemize}
		The first point is equivalent to saying that $\Delta \dmoveto[\mu]{s} \epsilon$.
		The second point holds if and only if $next(R, s, \mu) = \emptyset$, thanks to \autoref{thm:alwaysmove}.
		If we move the quantification on all the $i$s into the implication, the third point is equivalent to
		$next(R, t_2, c!v) \neq \emptyset \implies \Delta \dmoveto[c?v]{t_1} \epsilon$ and
		$next(R, t_2, c?v) \neq \emptyset \implies \Delta \dmoveto[c!v]{t_1} \epsilon$.
\end{proof}

\determpar*
\begin{proof}
	Let $\rho$ be a quantum state and assume $\conf{\rho, P \parallel R} \cmoveto[\mu]{s} \Delta$ and $\conf{\rho, P \parallel R} \cmoveto[\mu']{s} \Delta'$.
	By cases on $s$.

	If $\tagset(s) \subseteq \tagset(P)$, by \autoref{thm:alwaysmove} there exists $\Delta_P$ and $\Delta'_P$ such that $\Delta = \Delta_P \parallel R$ with $\conf{\rho, P} \cmoveto[\mu]{s} \Delta_P$ and $\Delta' = \Delta'_P \parallel R$ with $\conf{\rho, P} \cmoveto[\mu]{s} \Delta'_P$.
	By \autoref{ass:det} either $\mu = \mu'$ and $\Delta_P = \Delta'_P$ or $\mu = c?v$ and $\mu' = c?v'$.
	In either case the requirements of \autoref{ass:det} are trivially satisfied by $P \parallel R$.

	The case for $\tagset(s) \subseteq \tagset(R)$ follows the same line of reasoning.

	If $\tagset(s) \nsubseteq \tagset(P \parallel R)$ then the process must be in deadlock.

	If $\tagset(s) \subseteq \tagset(P \parallel R)$ but $\tagset(s) \nsubseteq \tagset(P)$ and $\tagset(s) \nsubseteq \tagset(R)$, then $s = (t_1, t_2)$ with $t_1 \in \tagset(P)$ and $t_2 \in \tagset(R)$.
	By the form of the process and the form of the scheduler the only applicable semantic rules are either \textsc{SynchL} or \textsc{SynchR}.
	Without loss of generality assume $\conf{\rho, P \parallel R} \cmoveto[\mu]{s} \Delta$ by \textsc{SynchL}.
	Therefore, it must be that $\mu = \tau$, $\conf{\rho, P} \cmoveto[c!v]{t_1} \conf{\rho, P_{\text{cont}}}$, $\conf{\rho, R} \cmoveto[c?v]{t_2} \conf{\rho, R_{\text{cont}}}$ and $\Delta = \conf{\rho, P_{\text{cont}} \parallel R_{\text{cont}}}$ for some channel $c$, value $v$ and processes $P_{\text{cont}}$, $R_{\text{cont}}$.
	Then it must be that also $\conf{\rho, P \parallel R} \cmoveto[\mu']{s} \Delta'$ by \textsc{SynchL}, because if it moves by \textsc{SynchR} then $\conf{\rho, P} \cmoveto[d?w]{t_1} \Delta''$, meaning that with the process $P$ can perform two moves with distinct labels with the same scheduler $t_1$, violating \autoref{ass:det}.
	Thus, $\mu' = \tau$, $\conf{\rho, P} \cmoveto[c'!v']{t_1} \conf{\rho, P'_{\text{cont}}}$, $\conf{\rho, P} \cmoveto[c'?v']{t_2} \conf{\rho, R'_{\text{cont}}}$ and $\Delta' = \conf{\rho, P'_{\text{cont}} \parallel R'_{\text{cont}}}$ for some channel $c'$, value $v'$ and processes $P'_{\text{cont}}$, $R'_{\text{cont}}$.
	By \autoref{ass:det} and since $c!v$ and $c'!v'$ are both not receptions, $P_{\text{cont}} = P'_{\text{cont}}$ and $c!v = c'!v'$, thus it must be that $c?v = c'?v'$ and $R_{\text{cont}} = R'_{\text{cont}}$.
\end{proof}

\section{Proofs about Saturated Bisimulations}
\begin{definition}
	Let $\mathord{\rel} \subseteq \sdist{\confset} \times \sdist{\confset}$.
	The \emph{weak linear closure} $\lcl(\rel)$ of $\rel$ is the least relation satisfying the following rule
	\begin{mathpar}
		\inferrule
		{\forall i \in I\ldotp \Delta_i \rel \Theta_i \\
			|\textstyle\osum_{i \in I} \delem{p_i}{\Delta_i}| \leq 1 \\
			|\textstyle\osum_{i \in I} \delem{p_i}{\Theta_i}| \leq 1}
		{(\textstyle\osum_{i \in I}\delem{p_i}{\Delta_i}) \mathrel{\lcl(\rel)} (\textstyle\osum_{i \in I}\delem{p_i}{\Theta_i})}
	\end{mathpar}
\end{definition}

Notice how this definition is based on "weak linearity", meaning that the composite distributions must be related only if they are both defined.
Instead, in a left-linear relation, $|\textstyle\osum_{i \in I} \delem{p_i}{\Theta_i}| \leq 1$ is a consequence and not a requirement.
Nonetheless, if $\rel$ is mass-preserving, the weak linear closure is indeed linear.

\anote{chiusura lineare di R equimassa e' lineare, chiusura lineare di R mass non-increasing e' left-linear (rivedere simbnolo: l o lc).}
\begin{proposition}\label{thm:lcl-leftlinear}
	Let $\mathord{\rel} \subseteq \sdist{\confset} \times \sdist{\confset}$, where $\mass{\Delta} \geq \mass{\Theta}$ whenever $\Delta \rel \Theta$.
	The weak linear closure $\lcl(\rel)$ is left-linear.
\end{proposition}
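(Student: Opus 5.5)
We have to show: if $\Delta_i \mathrel{\lcl(\rel)} \Theta_i$ for $i\in I$ and $\Delta=\osum_{i\in I}\delem{p_i}{\Delta_i}$ is defined, then $\Theta=\osum_{i\in I}\delem{p_i}{\Theta_i}$ is defined and $\Delta\mathrel{\lcl(\rel)}\Theta$. The plan is to flatten the nested combinations into a single one-level combination over the generators of $\rel$. By definition of $\lcl$, each pair $\Delta_i \mathrel{\lcl(\rel)} \Theta_i$ is obtained by a single application of the defining rule. So there is a finite index set $J_i$, probabilities $p_{ij}$, and pairs $\Delta_{ij}\rel\Theta_{ij}$ such that
\[
\Delta_i=\osum_{j\in J_i}\delem{p_{ij}}{\Delta_{ij}}
\qquad\text{and}\qquad
\Theta_i=\osum_{j\in J_i}\delem{p_{ij}}{\Theta_{ij}} .
\]
Both sides have mass at most $1$.

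Next, take the index set $IJ=\{(i,j)\mid i\in I, j\in J_i\}$ with weights $p_ip_{ij}$. Distributing the scaling over the pointwise sums gives
\[
\Delta=\osum_{(i,j)\in IJ}\delem{p_ip_{ij}}{\Delta_{ij}}
\qquad\text{and}\qquad
\Theta=\osum_{(i,j)\in IJ}\delem{p_ip_{ij}}{\Theta_{ij}} ,
\]
with $\Theta$ understood as a pointwise function for now. This is the same associativity computation used in the proof of \autoref{lem:llift_normal}. Since $\Delta$ is defined by hypothesis, the left combination has mass at most $1$.

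The only real obstacle is definedness of $\Theta$, i.e.\ showing $\mass{\Theta}\le 1$. Here the hypothesis on $\rel$ is used: each $\Delta_{ij}\rel\Theta_{ij}$ satisfies $\mass{\Theta_{ij}}\le\mass{\Delta_{ij}}$. Hence
\[
\mass{\Theta}=\sum_{(i,j)\in IJ}p_ip_{ij}\mass{\Theta_{ij}}\le\sum_{(i,j)\in IJ}p_ip_{ij}\mass{\Delta_{ij}}=\mass{\Delta}\le 1 .
\]
So $\Theta$ is a well-defined subdistribution. Both one-level combinations are then defined, so one application of the defining rule of $\lcl$ to the family $\{\Delta_{ij}\rel\Theta_{ij}\}_{(i,j)\in IJ}$ gives $\Delta\mathrel{\lcl(\rel)}\Theta$. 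This is the required left-linearity.

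A minor point is that the weights $p_i$ in the left-linearity definition need not sum to $1$, and the $p_{ij}$ may likewise be arbitrary. The argument above never uses normalisation, only finiteness of the index sets and nonnegativity, so this causes no issue.
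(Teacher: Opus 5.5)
Your proof is correct and follows the paper's argument: the mass hypothesis gives $\mass{\Theta}\le\mass{\Delta}\le 1$, so $\Theta$ is defined, and the pair then lies in $\lcl(\rel)$. You are in fact more careful than the paper. It bounds the masses directly at the level of the $\lcl(\rel)$-pairs and then concludes $\Delta \mathrel{\lcl(\rel)} \Theta$ without spelling out the step that the one-step defining rule of $\lcl$ actually requires, namely flattening to $\rel$-generators in the style of \autoref{lem:llift_normal}, which you supply explicitly.
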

\begin{proof}
	Let $I$ be the index of families $\{\Delta_i\}$ and $\{\Theta_i\}$ such that $\Delta_i \mathrel{\lcl(\rel)} \Theta_i$ for all $i \in I$.
	By hypothesis for any $i \in I$ it also holds that $\mass{\Delta_i} \geq \mass{\Theta_i}$.
	Take $\Delta = \osum_{i \in I}\delem{p_i}{\Delta_i}$ and $\Theta = \osum_{i \in I}\delem{p_i}{\Theta_i}$ with $\mass{\Delta} \leq 1$.
	By definition of mass:
	\begin{align*}
		1 \geq \mass{\Delta}
		= \mass{\osum_{i \in I} \delem{p_i}{\Delta_i}}
		= \sum_{i \in I}{p_i}{\mass{\Delta_i}}
		\geq \sum_{i \in I}{p_i}{\mass{\Theta_i}}
		= \mass{\osum_{i \in I} \delem{p_i}{\Theta_i}}
		= \mass{\Theta}
	\end{align*}
	Since $\Theta$ is defined, we have that $\Delta \mathrel{\lcl(\rel)} \Theta$.
\end{proof}

\begin{proposition}\label{thm:lcl-linear}
	Let $\mathord{\rel} \subseteq \sdist{\confset} \times \sdist{\confset}$, where $\mass{\Delta} = \mass{\Theta}$ whenever $\Delta \rel \Theta$.
	The linear closure $\lcl(\rel)$ is linear.
\end{proposition}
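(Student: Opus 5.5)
The plan is to reduce linearity to left-linearity, which Proposition~\ref{thm:lcl-leftlinear} already gives, and then get right-linearity by symmetry. Since $\mass{\Delta} = \mass{\Theta}$ whenever $\Delta \rel \Theta$, the hypothesis of Proposition~\ref{thm:lcl-leftlinear} ($\mass{\Delta} \geq \mass{\Theta}$) holds. So $\lcl(\rel)$ is left-linear, and only right-linearity needs an argument.

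For right-linearity I would consider the converse relation $\rel^{-1}$. It is again mass-preserving, so in particular $\mass{\Theta} \geq \mass{\Delta}$ whenever $\Theta \mathrel{\rel^{-1}} \Delta$. The defining rule of the weak linear closure is symmetric in its two components: both sums are required to be defined, with the same weights $p_i$. Hence $\lcl(\rel^{-1}) = (\lcl(\rel))^{-1}$, which I would check directly by unfolding the rule. Applying Proposition~\ref{thm:lcl-leftlinear} to $\rel^{-1}$ shows that $(\lcl(\rel))^{-1}$ is left-linear. Left-linearity of the converse is exactly right-linearity of $\lcl(\rel)$, so $\lcl(\rel)$ is both left- and right-linear, hence linear.

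Proposition~\ref{thm:lcl-leftlinear} is stated for pairs in $\rel$, but its proof uses the mass inequality for pairs in $\lcl(\rel)$. I would therefore also check that mass preservation lifts from $\rel$ to $\lcl(\rel)$. This follows from $\mass{\osum_i \delem{p_i}{\Delta_i}} = \sum_i p_i \mass{\Delta_i} = \sum_i p_i \mass{\Theta_i}$.

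The only delicate point is this bookkeeping: the closure is taken with a single family of weights shared by both sides, so that the converse-closure identity holds, and mass equality is what guarantees that definedness of one sum implies definedness of the other. I expect no real obstacle beyond that.
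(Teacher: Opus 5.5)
Your proof is correct and is essentially the paper's argument. The paper shows directly that $\mass{\osum_i \delem{p_i}{\Delta_i}} = \mass{\osum_i \delem{p_i}{\Theta_i}}$, so that definedness of one sum transfers to the other, and dismisses right-linearity with ``the same reasoning''; your converse-relation step $\lcl(\rel^{-1}) = (\lcl(\rel))^{-1}$ simply makes that symmetry precise, and your explicit lifting of mass preservation from $\rel$ to $\lcl(\rel)$ fills a step the paper's proof silently assumes.
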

\begin{proof}
	Let $I$ be the index of families $\{\Delta_i\}$ and $\{\Theta_i\}$ such that $\Delta_i \mathrel{\lcl(\rel)} \Theta_i$ for all $i \in I$.
	By hypothesis for any $i \in I$ it also holds that $\mass{\Delta_i} = \mass{\Theta_i}$.
	Take $\Delta = \osum_{i \in I}\delem{p_i}{\Delta_i}$ and $\Theta = \osum_{i \in I}\delem{p_i}{\Theta_i}$ with $\mass{\Delta} \leq 1$.
	By definition of mass:
	\begin{align*}
		1 \geq \mass{\Delta}
		= \mass{\osum_{i \in I} \delem{p_i}{\Delta_i}}
		= \sum_{i \in I}{p_i}{\mass{\Delta_i}}
		= \sum_{i \in I}{p_i}{\mass{\Theta_i}}
		= \mass{\osum_{i \in I} \delem{p_i}{\Theta_i}}
		= \mass{\Theta}
	\end{align*}
	Since $\Theta$ is defined, we have that $\Delta \mathrel{\lcl(\rel)} \Theta$.
	The same reasoning also holds if we assume that $\mass{\Theta} \leq 1$, meaning that $\lcl(\rel)$ is also right-linear and thus linear.
\end{proof}

\begin{definition}
	Let $\mathord{\rel} \subseteq \sdist{\confset} \times \sdist{\confset}$.
	The \emph{context closure} $\ccl(\rel)$ of $\rel$ is the least relation satisfying the following rule
	\begin{mathpar}
		\inferrule{\Delta \rel \Theta}{B[\Delta] \mathrel{\ccl(\rel)} B[\Theta]}
	\end{mathpar}
	where $\Sigma \vdash \Delta, \Theta$ and $B[\blank]$ has a hole of type $\Sigma$.
\end{definition}

\begin{definition}
	A relation $\mathord{\rel} \subseteq \sdist{\confset} \times \sdist{\confset}$ is a \emph{saturated bisimulation up-to} $\lcl \circ \ccl$ if $\Delta \rel \Theta$ implies $(\Sigma, \Sigma') \vdash \Delta, \Theta$ for some $\Sigma$, $\Sigma'$, $\mass{\Delta} = \mass{\Theta}$ and for any context $B[\blank]$ it holds
	\begin{itemize}
		\item whenever $B[\Delta] \dmoveto{s} \Delta'$, there exists $\Theta'$ such that $B[\Theta] \dmoveto{s} \Theta'$ and $\Delta' \mathrel{\lcl(\ccl(\rel))} \Theta'$;
		\item whenever $B[\Theta] \dmoveto{s} \Theta'$, there exists $\Delta'$ such that $B[\Delta] \dmoveto{s} \Delta'$ and $\Delta' \mathrel{\lcl(\ccl(\rel))} \Theta'$.
	\end{itemize}
\end{definition}

We now prove that bisimulation up-to $\lcl \circ \ccl$ is \emph{valid}, i.e.\ that we can use it to prove bisimilarity.
We define the function $b$ over relations, of which saturated bisimilarity is the greatest fixpoint.
\[
	b(\rel) \coloneq \left\{(\Delta, \Theta)\ \middle| \begin{array}{c}
		\mass{\Delta} = \mass{\Theta}                                                                                           \\
		B[\Delta] \dmoveto{s} \Delta' \Rightarrow \exists\Theta'\ldotp B[\Theta] \dmoveto{s} \Theta' \land \Delta' \rel \Theta' \\
		B[\Theta] \dmoveto{s} \Theta' \Rightarrow \exists\Delta'\ldotp B[\Delta] \dmoveto{s} \Delta' \land \Delta' \rel \Theta'
	\end{array} \right\}
\]
Observe that $b$, $\lcl$ and $\ccl$ are monotone functions on the lattice of relations.

\begin{lemma}[$\lcl$ is $b$-compatible]\label{lem:ll_compatible}
	Let $\mathord{\rel} \subseteq \sdist{\confset} \times \sdist{\confset}$. $\lcl(b(\rel)) \subseteq b(\lcl(\rel))$.
\end{lemma}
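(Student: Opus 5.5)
We take a pair $(\Delta,\Theta) \in \lcl(b(\rel))$ and show it belongs to $b(\lcl(\rel))$. By definition of $\lcl$ there are a finite $I$, weights $p_i$ and pairs $\Delta_i \mathrel{b(\rel)} \Theta_i$ with $\Delta = \osum_{i\in I}\delem{p_i}{\Delta_i}$ and $\Theta = \osum_{i\in I}\delem{p_i}{\Theta_i}$, both defined. We then check the three clauses of $b$ in turn.

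\emph{Mass.} Each pair in $b(\rel)$ has $\mass{\Delta_i}=\mass{\Theta_i}$, and mass is linear. Hence $\mass{\Delta}=\sum_i p_i\mass{\Delta_i}=\sum_i p_i\mass{\Theta_i}=\mass{\Theta}$.

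\emph{Transfer.} Fix a context $B[\blank]$ compatible with $\Delta$ and $\Theta$, and a scheduler $s$ with $B[\Delta]\dmoveto{s}\Delta'$. The context acts pointwise on the support, so $B[\Delta]=\osum_i\delem{p_i}{B[\Delta_i]}$. Each $B[\Delta_i]$ is defined and compatible, because $B$ is compatible with $\Delta$ and the supports of the $\Delta_i$ lie in $\supp{\Delta}$. The same holds for the $\Theta_i$. Since $\dmoveto{s}$ is a left-linear lifting, \autoref{lem:llift_decomp} makes it left-decomposable (its $n$-ary form follows by iterating the binary one, or directly from \autoref{lem:llift_normal}). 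We therefore get $\Delta'_i$ with $B[\Delta_i]\dmoveto{s}\Delta'_i$ and $\Delta'=\osum_i\delem{p_i}{\Delta'_i}$. Applying $\Delta_i \mathrel{b(\rel)}\Theta_i$ with the same $B$ and $s$ yields $\Theta'_i$ with $B[\Theta_i]\dmoveto{s}\Theta'_i$ and $\Delta'_i\rel\Theta'_i$.

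\emph{Recombination.} Set $\Theta'=\osum_i\delem{p_i}{\Theta'_i}$. Left-linearity of $\dmoveto{s}$ gives $B[\Theta]=\osum_i\delem{p_i}{B[\Theta_i]}\dmoveto{s}\Theta'$, with $\Theta'$ defined. Here we need the masses of the targets not to exceed those of the sources. This holds because the lifting of $\operatorname{bot}$ of moves into $\dist{\confset}$ or $\epsilon$ never increases mass, as in the proof of \autoref{lem:llift_normal}. Then $\Delta'=\osum_i\delem{p_i}{\Delta'_i}$ and $\Theta'$ are both defined with $\Delta'_i\rel\Theta'_i$, so $\Delta'\mathrel{\lcl(\rel)}\Theta'$ by the defining rule of the weak linear closure. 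The symmetric clause, starting from a move of $B[\Theta]$, is identical, using the second clause of $b$ on each $\Theta_i$.

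The main obstacle is the decomposition step for a non-binary sum whose components $\Delta_i$ may overlap in their supports. Left-decomposability is stated for binary splits, and the $\Delta'_i$ must be chosen so that they recombine exactly to $\Delta'$. I would handle this by inducting on $|I|$, splitting off one component at a time with weight $p_1$ against the renormalised rest. Since $\dmoveto{s}$ is deterministic (\autoref{thm:determdistr}), the pieces are unique and recombine correctly. Compatibility of $B$ with each $\Delta_i$ and $\Theta_i$, including tag disjointness, follows from $\tagset(\Delta_i)\subseteq\tagset(\Delta)$ and typing.
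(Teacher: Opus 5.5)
Your proof is correct and follows the paper's argument. You decompose the move of $B[\Delta]$ along the $\Delta_i$, apply $b(\rel)$ componentwise, and recombine by left-linearity to land in $\lcl(\rel)$; your checks of per-component compatibility and of the definedness of $\osum_i \delem{p_i}{\Theta'_i}$ fill in details the paper leaves implicit. The induction on $|I|$ is unnecessary, and its ``renormalised rest'' step breaks when $\sum_i p_i > 1$, which the weak linear closure permits. The paper's proof of \autoref{lem:llift_decomp}, via \autoref{lem:llift_normal}, already decomposes an arbitrary finite family $\osum_i \delem{p_i}{\Delta_i}$ directly.
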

\begin{proof}
	Assume $(\Delta, \Theta) \in \lcl(b(\rel))$.
	By definition of $\lcl$ and $b$, $\Delta = \osum_{i \in I}\delem{p_i}{\Delta_i}$, $\Theta = \osum_{i \in I}\delem{p_i}{\Theta_i}$ with $\Delta_i \mathrel{b(\rel)} \Theta_i$ and $\mass{\Delta_i} = \mass{\Theta_i}$ for any $i \in I$ and $\mass{\Delta} \leq 1$.
	Then $\mass{\Delta} = \mass{\Theta}$.
	Take any context $B[\blank]$, $B[\Delta] = \osum_{i \in I}\delem{p_i}{B[\Delta_i]} \dmoveto{s} \Delta'$.
	By decomposability of $\dmoveto{s}$, for any $i \in I$, $B[\Delta_i] \dmoveto{s} \Delta'_i$, and $\osum_{i \in I}\delem{p_i}{\Delta'_i} = \Delta'$ with $\mass{\Delta'} \leq 1$.
	Since $\Delta_i \rel \Theta_i$, for all $i \in I$ there exists $\Theta'_i$ such that $B[\Theta_i] \dmoveto{s} \Theta'_i$.
	By linearity of $\dmoveto{s}$ and by definition of $\lcl$, $B[\Theta] \dmoveto{s} \osum_{i \in I}\delem{p_i}{\Theta'_i} = \Theta'$ where $\mass{\Theta'}$ must be less than one and $\Delta' \mathrel{\lcl(\rel)} \Theta'$.
	The argument is symmetric for the last condition.
\end{proof}

\begin{lemma}[$\ccl$ is $b$-compatible]
	Let $\mathord{\rel} \subseteq \sdist{\confset} \times \sdist{\confset}$. $\ccl(b(\rel)) \subseteq b(\ccl(\rel))$.
\end{lemma}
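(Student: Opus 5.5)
Assume $(\Delta, \Theta) \in \ccl(b(\rel))$. By definition of $\ccl$, there are a context $C[\blank]$ and distributions $\Delta_0, \Theta_0$ with $\Delta_0 \mathrel{b(\rel)} \Theta_0$, $\Delta = C[\Delta_0]$ and $\Theta = C[\Theta_0]$. We must show $(C[\Delta_0], C[\Theta_0]) \in b(\ccl(\rel))$, which requires three things.

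The first is equality of masses. Applying a context does not change the weights, so $\mass{C[\Delta_0]} = \mass{\Delta_0} = \mass{\Theta_0} = \mass{C[\Theta_0]}$.

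The second is the transfer condition. Let $B[\blank]$ be any context compatible with both $C[\Delta_0]$ and $C[\Theta_0]$, and suppose $B[C[\Delta_0]] \dmoveto{s} \Delta'$. The plan is to reduce this to a single context applied to $\Delta_0$. Contexts have the form $[\blank]$ or $[\blank] \parallel P$, so the composite $B[C[\blank]]$ is, up to associativity of $\parallel$, a context of the form $[\blank] \parallel (R \parallel P)$, which we call $B'[\blank]$. We then check that $B'$ is compatible with $\Delta_0$ and $\Theta_0$:
\begin{itemize}
\item the tags of $R$ and of $P$ are disjoint from each other and from $\tagset(\Delta_0)$, by compatibility of $C$ and of $B$;
\item the qubits owned by $R \parallel P$ are disjoint from $\Sigma_{\Delta_0}$ and are present in the global state, by the typing rules \textsc{Hole} and \textsc{Par};
\item determinism of $B'$ follows from \autoref{lem:deterPar}.
\end{itemize}

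The step I expect to be the main obstacle is that the bracketings $(\Delta_0 \parallel R) \parallel P$ and $\Delta_0 \parallel (R \parallel P)$ are syntactically different processes. I would handle this by proving an auxiliary lemma: the lifted transitions $\dmoveto{s}$ of $(\Delta \parallel R) \parallel P$ and of $\Delta \parallel (R \parallel P)$ correspond step by step, with the same scheduler and label, and the targets are again related by reassociation. The proof would go by case analysis with \autoref{lem:distr_cases} and \autoref{lem:distr_epsilon_cases}. Each case (a move of $\Delta$, of $R$, of $P$, or a synchronization between two of the three components) is matched directly. The side conditions $\mu \notin \{c?v \mid v \in \Sigma\}$ in \textsc{ParL}/\textsc{ParR} are preserved because ownership sets combine by disjoint union.

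Granting this lemma, from $B[C[\Delta_0]] \dmoveto{s} \Delta'$ we get $B'[\Delta_0] \dmoveto{s} \Delta''$ with $\Delta''$ the reassociation of $\Delta'$. Since $\Delta_0 \mathrel{b(\rel)} \Theta_0$, there is $\Theta''$ with $B'[\Theta_0] \dmoveto{s} \Theta''$ and $\Delta'' \rel \Theta''$. Reassociating back gives $B[C[\Theta_0]] \dmoveto{s} \Theta'$.

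It remains to show $\Delta' \mathrel{\ccl(\rel)} \Theta'$. If we treat $\parallel$ up to associativity, so that both sides are identified as the same distribution, then $\Delta' \mathrel{\ccl(\rel)} \Theta'$ holds via the identity context. Otherwise, we pick the trivial context $[\blank]$ and use the fact that $\rel$-related pairs are trivially in $\ccl(\rel)$.

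The third condition, where the move comes from $B[C[\Theta_0]]$, is symmetric.
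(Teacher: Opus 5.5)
Your proof is the paper's. The paper likewise writes $B'[B[\Delta]]$ as $\Delta \parallel P \parallel R$, applies $\Delta \mathrel{b(\rel)} \Theta$ to the single context $[\blank] \parallel P \parallel R$, and concludes from $\rel \subseteq \ccl(\rel)$ via the trivial context. Your reassociation lemma makes explicit the identification of $\parallel$ up to associativity, which the paper uses silently here (and openly in the proof of \autoref{thm:ignoreSending}). Your compatibility check reads $\ccl$ as closing under compatible contexts only. The paper's definition does not say this, but it is the reading under which $[\blank] \parallel (R \parallel P)$ is an admissible context for $b(\rel)$.

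One correction: drop the ``Otherwise'' branch of your last step. Without the associativity identification, $b(\rel)$ gives you $\Delta'' \rel \Theta''$ in the $\Delta_0 \parallel (R \parallel P)$ bracketing. What you must relate are the differently bracketed $\Delta'$ and $\Theta'$. The trivial context only puts $(\Delta'', \Theta'')$ into $\ccl(\rel)$, and since $\rel$ is arbitrary, nothing puts $(\Delta', \Theta')$ there. You have to commit to working modulo associativity, which is exactly what your auxiliary lemma justifies.
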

\begin{proof}
	Assume $(B[\Delta], B[\Theta]) \in \ccl(b(\rel))$.
	If $B[\blank] = [\blank]$ then the proof is trivial ($(\Delta, \Theta) \in b(\rel) \subset b(\ccl(\rel))$ and $b(\rel) \subseteq b(\ccl(\rel))$ as $b$ is monotone and $\rel \subseteq \ccl(\rel)$).
	Let $B[\blank] = [\blank] \parallel P$.
	Since the contexts is applied linearly, for any $\Delta$, $\mass{B[\Delta]} = \mass{\Delta}$, therefore $\mass{B[\Delta]} = \mass{\Delta} = \mass{\Theta} = \mass{B[\Theta]}$.
	For the other requirements of $b$, let $B'[\blank]$ be any context.
	If $B'[\blank] = [\blank]$ then the proof is again trivial.
	Thus, let $B'[\blank] = [\blank] \parallel R$.
	We have that $B'[B[\Delta]] = \Delta \parallel P \parallel R$ and that $B'[B[\Theta]] = \Theta \parallel P \parallel R$.
	By assumption, we have that $\Delta \mathrel{b(\rel)} \Theta$.
	Therefore with the context $B''[\blank] = [\blank] \parallel P \parallel R$,
	if $B''[\Delta] \dmoveto{s} \Delta'$ then there exists a $\Theta'$ such that $B''[\Theta] \dmoveto{s} \Theta'$ and $\Delta' \rel \Theta'$.
	However, by definition of $\ccl$, $\Delta' \rel \Theta'$ implies that $\Delta' \mathrel{\ccl(\rel)} \Theta'$.
\end{proof}

\begin{theorem}\label{thm:lcb-valid}
	Bisimulation up-to $\lcl \circ \ccl$ is a valid proof technique for $\simds$, meaning that if $\Delta \rel \Theta$ for a saturated bisimulation up-to $\lcl \circ \ccl$, then $\Delta \simds \Theta$.
\end{theorem}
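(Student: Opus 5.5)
We use the standard theory of up-to techniques in the style of Sangiorgi and Pous, with $b$ as the monotone map on the complete lattice of relations over $\sdist{\confset}$ whose greatest fixpoint is $\simds$. The plan is to show that the composite $f = \lcl \circ \ccl$ is $b$-compatible, i.e.\ $f(b(\rel)) \subseteq b(f(\rel))$ for every $\rel$. From this, the standard result gives that $f$ is sound: every $f$-post-fixpoint $\rel \subseteq b(f(\rel))$ is contained in $\nu b = \simds$. By the definition of $b$, a saturated bisimulation up-to $\lcl \circ \ccl$ is exactly such a post-fixpoint, apart from the well-typedness requirement, which we handle separately below.

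Compatibility of the composite follows from the two preceding lemmas by monotonicity. By $b$-compatibility of $\ccl$ we have $\ccl(b(\rel)) \subseteq b(\ccl(\rel))$. Since $\lcl$ is monotone, $\lcl(\ccl(b(\rel))) \subseteq \lcl(b(\ccl(\rel)))$. By \autoref{lem:ll_compatible} applied to $\ccl(\rel)$, the right-hand side is contained in $b(\lcl(\ccl(\rel)))$.

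Next we carry out the soundness argument explicitly. Let $\rel$ satisfy $\rel \subseteq b(f(\rel))$ and set $f^\omega(\rel) = \bigcup_{n} f^n(\rel)$. We show by induction on $n$ that $f^n(\rel) \subseteq b(f^{n+1}(\rel))$. The base case is the hypothesis. For the inductive step, $f^{n+1}(\rel) \subseteq f(b(f^{n+1}(\rel))) \subseteq b(f^{n+2}(\rel))$, using monotonicity of $f$ and then compatibility. Taking the union and using monotonicity of $b$ gives $f^\omega(\rel) \subseteq b(f^\omega(\rel))$. Hence $f^\omega(\rel)$ satisfies the mass and transfer clauses, and it contains $\rel$ because $f$ is extensive: $\ccl$ contains the identity context, and $\lcl$ contains the trivial singleton combination with $p=1$.

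It remains to check that $f^\omega(\rel)$ is well-typed, so that it is a genuine scheduled saturated bisimulation. This is the step I expect to need the most care, because $b$ as written omits typing. Closing under $B[\blank]$ preserves the property that both sides carry the same type, since the context produces identical output types on both sides. The linear closure combines pairs which may have different types, but because the mixtures are formed componentwise, the same mixture on each side yields matching types whenever the components match pairwise. Beyond this, one must also confirm that contexts compatible with the combined distributions are compatible with each component, which holds since the tag sets and typings of the components are contained in those of the mixture. Finally, the mass equality required for linearity (\autoref{thm:lcl-linear}) is guaranteed because $b$ enforces $\mass{\Delta} = \mass{\Theta}$. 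With well-typedness in place, $f^\omega(\rel) \subseteq \simds$, and therefore $\rel \subseteq \simds$.
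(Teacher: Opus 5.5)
Your proposal is correct and follows the paper's proof. The paper cites Sangiorgi--Pous for two facts: $b$-compatible functions compose, and compatible up-to functions are sound. It applies these to the compatibility lemmas for $\lcl$ and $\ccl$. You unfold both facts explicitly (the monotonicity chain and the $f^\omega$ iteration), and you also address well-typedness, which the paper's function $b$ silently omits.

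One small correction to that extra step. Linearly combining component pairs of \emph{different} types gives ill-typed mixtures on both sides, not ``matching types''. The clean fix is either to invoke the paper's standing restriction to well-typed distributions, or to keep only the well-typed part of $f^\omega(\rel)$. That part still satisfies the transfer clauses, because $\tau$-moves from $B[\Delta]$ and $B[\Theta]$ preserve their common type (\autoref{thm:typepreservation}).
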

\begin{proof}
	From \cite{sangiorgienhancements2011}, we know that if two functions $f_1$ and $f_2$ are $b$-compatible, then also $f_1 \circ f_2$ is $b$-compatible.
	Furthermore, bisimulations up-to $f$ are a sound proof technique whenever $f$ is $b$-compatible.
\end{proof}

\linearityCongruence*
\begin{proof}
	By requirement of saturated bisimulations, the bisimilarity $\simds$ must associate only distributions with equal mass, therefore by \autoref{thm:lcl-linear} $\lcl(\simds)$ is linear.
	However, as a corollary of \autoref{thm:lcb-valid} $\mathord{\lcl(\simds)} = \mathord{\simds}$ therefore $\simds$ is linear.
\end{proof}

\propertyA*
\begin{proof}
	We will prove that $\delem{1}{\conf{p\cdot\rho + (1 - p)\cdot\sigma, P}} \simds \delem{p}{\conf{\rho, P}} \oplus \delem{1 - p}{\conf{\sigma, P}}$ for any operators $\rho$, $\sigma$, process $P$ and $p \in [0,1]$.
	By linearity of $\simds$ we have that
	\[
		\osum_{i} \delem{p_i}{\conf{\rho_i,P}} \simds \delem{1}{\conf{\sum_{i}p_i\rho_i,P}} \simds \delem{1}{\conf{\sum_{j}q_j\sigma_j,P}} \simds \osum_{i} \delem{q_j}{\conf{\sigma_j,P}}
	\]
	We define the relation $\rel$ and prove that is a bisimulation up-to $\lcl$.
	\[
		\mathord{\rel} = \left\{(\delem{1}{\conf{p\cdot\rho + (1 - p)\cdot\sigma, P}}, \delem{p}{\conf{\rho, P}} \oplus \delem{1 - p}{\conf{\sigma, P}})\ \middle| \rho, \sigma, p, P \right\}
	\]

	Take $(\Delta, \Theta) \in \mathord{\rel}$ and a compatible context $B[\blank]$.
	Let $\Delta = \delem{1}{\conf{\nu, P}}$, $\Theta = \delem{p}{\conf{\rho, P}} \oplus \delem{1-p}{\conf{\sigma, P}}$, with $\nu = p\rho + (1-p)\sigma$.
	Note that $\delem{1}{\conf{\nu, B[P]}} = \dmoveto{s} \Xi$ iff $\conf{\nu, B[P]} \mathrel{\operatorname{bot}(\moveto{s})} \Xi$ (and similarly for $\rho$ and $\sigma$).
	Thus, by~\autoref{thm:alwaysmove}, there exists a family $next(B[P], s)$ of trace non-increasing superoperators $\E_i$ and processes $P_i$ such that
	$B[\Delta] \dmoveto{s} \Delta'$ if and only if
	\begin{gather*}
		\Delta' = \osum_{{i \in I \text{ s.t. } \tr(\E_i(\nu)) \neq 0}} \delem{\tr(\E_i(\nu))}{\conf{\frac{\E_i(\nu)}{\tr(\E_i(\nu))}, P_i}}
		= \osum_{{i \in I \text{ s.t. } \tr(\E_i(\nu)) \neq 0}} \delem{\tr(\E_i(\rho)\psum{p}\E_i(\sigma))}{\conf{\frac{\E_i(\rho)\psum{p}\E_i(\sigma)}{\tr(\E_i(\rho)\psum{p}\E_i(\sigma))}, P_i}}
	\end{gather*}
	Where $\_ \posum{p} \_ = p\_ \oplus (1-p) \_$ and $\_ \psum{p} \_ = p\_ + (1-p)\_$.

	Moreover, by~\autoref{thm:alwaysmove}, left decomposability and left linearity of $\dmoveto{s}$, $B[\Theta] \dmoveto{s} \Theta'$ if and only if
	\begin{gather*}
		\Theta' = {\osum_{{i \in I \text{ s.t. } \tr(\E_i(\rho)) \neq 0}} \delem{\tr(\E_i(\rho))}{\conf{\frac{\E_i(\rho)}{\tr(\E_i(\rho))}, P_i}}}
		\posum{p} {\osum_{{i \in I \text{ s.t. } \tr(\E_i(\sigma)) \neq 0}} \delem{\tr(\E_i(\sigma))}{\conf{\frac{\E_i(\sigma)}{\tr(\E_i(\sigma))}, P_i}}}
	\end{gather*}

	Recall that both superoperators and the trace are linear operators.
	Also by linearity, $\tr(\E_i(\rho \psum{p} \sigma)) \neq 0$ iff $\tr(\E_i(\rho)) \psum{p} \tr(\E_i(\sigma)) \neq 0$, and $\Theta'$ can be rewritten in the following form
	\begin{gather*}
		\Theta' = \osum_{{i \in I \text{ s.t. } \tr(\E_i(\nu)) \neq 0}} \delem{\tr(\E_i(\rho))\psum{p}\tr(\E_i(\sigma))}{\left(\conf{\frac{\E_i(\rho)}{\tr(\E_i(\rho))}, P_i} \posum{\frac{p\tr(\E_i(\rho))}{\tr(\E_i(\rho)) \psum{p} \tr(\E_i(\sigma))}} \conf{\frac{\E_i(\sigma)}{\tr(\E_i(\sigma))}, P_i}\right)}
	\end{gather*}
	By performing some simple calculations, one can prove that $\Delta'$ and $\Theta'$ have the same weights, and thus $\Delta' \mathrel{\lcl(\rel)} \Theta'$.
\end{proof}

%

%
\section{Proofs about Probabilistic Labelled Bisimulations}

As an intermediate step for proving the full abstraction of~\autoref{thm:demonicfa}, we introduce a probabilistic version of the labelled bisimilarity, and prove its complete adherence with respect to the saturated equivalence.

We need some auxiliary definitions.
First, given a distribution $\Delta$ of lqCCS processes, we let $env(\Delta)$ be its visible quantum state $\sum_{\conf{\rho, P} \in \supp{\Delta}} \Delta(\conf{\rho, P}) \cdot tr_{\Sigma'}(\rho)$, where $\Sigma,  \Sigma' \vdash \Delta$.
Notice that $env(\Delta)$ is a partial density operator in $\sdist{\hilb_{\Sigma \setminus \Sigma'}}$, with $\tr(env(\Delta)) = \mass{\Delta}$.

%

We can now define our labelled bisimilarity.

\begin{definition}[Labelled Bisimilarity]\label{def:labBisim}
	A relation $\mathord{\rel} \subseteq \sdist{\confset} \times \sdist{\confset}$ is a
	\emph{labelled bisimulation} if $\Delta\,\rel\,\Theta$ implies
	$(\Sigma, \Sigma') \vdash \Delta$ and $(\Sigma, \Sigma') \vdash \Theta$ for some $\Sigma, \Sigma'$, and it holds that
	\begin{itemize}
		\item $env(\Delta) = env(\Theta)$;
		\item $\nE(\Delta)\ \rel\ \nE(\Theta)$ for any superoperator $\mathcal{E} \in \soset{\hilb_{\Sigma \setminus \Sigma'}}$;
		\item whenever $\Delta \dmoveto[\mu]{s} \Delta'$, there exists $\Theta'$
		      such that $\Theta \dmoveto[\mu]{s} \Theta'$ and $\Delta'\;\rel\;\Theta'$;
		\item whenever $\Theta \dmoveto[\mu]{s} \Theta'$, there exists $\Delta'$
		      such that $\Delta \dmoveto[\mu]{s} \Delta'$ and $\Delta'\;\rel\;\Theta'$.
	\end{itemize}

	Let \emph{labelled bisimilarity}, denoted $\simpl$, be the largest labelled bisimulation.
\end{definition}

We will now introduce some lemmas, useful to prove that $\simds$ and $\simpl$ coincide.

\begin{lemma}\label{thm:alphamove}
	For any $\Delta$ and $t$, and for any tag $t'$ fresh in $\Delta$,
	\[
		\Delta \dmoveto{s} \Delta' \quad \text{ if and only if }\quad \Delta[\sfrac{t'}{t}] \dmoveto{s[\sfrac{t'}{t}]} \Delta'[\sfrac{t'}{t}]
	\]
\end{lemma}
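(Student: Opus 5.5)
This is a renaming (alpha-conversion) property of tags, so I will prove it first at the level of single configurations and then lift it to distributions. Write $[\sfrac{t'}{t}]$ for the syntactic substitution of tag $t$ by $t'$. On processes it acts homomorphically on all constructors and only affects tag annotations. On schedulers it acts componentwise, e.g. $(t_1,t_2)[\sfrac{t'}{t}]$. On configurations and distributions it acts pointwise, leaving quantum states and weights untouched. Since $t'$ is fresh in $\Delta$, the substitution is injective on $\tagset(\Delta)$. Its inverse $[\sfrac{t}{t'}]$ therefore restores every process in the support, and $t$ is fresh in $\Delta[\sfrac{t'}{t}]$ for the tags it mentions. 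This reduces the ``if'' direction to the ``only if'' direction applied to $\Delta[\sfrac{t'}{t}]$ with the inverse renaming. Freshness also ensures that renaming does not merge distinct configurations of the support, so weights are preserved.

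\textbf{Step 1 (configurations).} I prove, by induction on the derivation using the rules of~\autoref{semantics}, that $\conf{\rho,P} \cmoveto[\mu]{s} \Delta$ iff $\conf{\rho,P[\sfrac{t'}{t}]} \cmoveto[\mu]{s[\sfrac{t'}{t}]} \Delta[\sfrac{t'}{t}]$ whenever $t'$ is fresh in $P$.
\begin{itemize}
\item Axioms (\textsc{Tau}, \textsc{TauSynch}, \textsc{Sop}, \textsc{Meas}, \textsc{Send}, \textsc{Recv}): the prefix tag and the scheduler are renamed simultaneously. Labels, quantum states, substitutions $[\sfrac{m}{x}]$ and $[\sfrac{v}{x}]$, expression evaluation and typing side-conditions all ignore tags, because renaming tags leaves $\Sigma_P$ unchanged.
\item Structural rules (\textsc{Ite}, \textsc{Sum}, \textsc{Par}, \textsc{Restr}): these follow directly from the induction hypothesis, using $(\Delta\parallel Q)[\sfrac{t'}{t}] = \Delta[\sfrac{t'}{t}]\parallel Q[\sfrac{t'}{t}]$.
\item \textsc{Synch} rules: both premises are renamed, and the composite scheduler is renamed componentwise.
\end{itemize}
The converse direction of Step 1 uses the inverse renaming together with injectivity.

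\textbf{Step 2 (the deadlock relation).} I show that $\mathcal{C}\mathrel{\operatorname{bot}(\cmoveto[\mu]{s})}\epsilon$ iff $\mathcal{C}[\sfrac{t'}{t}]\mathrel{\operatorname{bot}(\cmoveto[\mu]{s[\sfrac{t'}{t}]})}\epsilon$. This follows from Step 1 applied in both directions, since non-existence of a move is preserved. This is the delicate point: a scheduler $s$ that already mentions $t'$ could become enabled or disabled after renaming. I will rule this out by freshness. If $t'\in\tagset(s)$ and $t'\notin\tagset(\mathcal{C})$, then neither side can move. Indeed, on the left $s$ mentions a tag absent from $\mathcal{C}$. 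On the right, $s[\sfrac{t'}{t}]$ still mentions $t'$, and the only occurrences of $t'$ in $\mathcal{C}[\sfrac{t'}{t}]$ come from $t$. A matching move of $\mathcal{C}[\sfrac{t'}{t}]$ would pull back under the inverse renaming to a move of $\mathcal{C}$ with a scheduler containing $t$ in place of $t'$, not with $s$. I expect this bookkeeping to be the main obstacle. If it becomes awkward, the fallback is to observe that the statement is intended with $s$ ranging over schedulers not mentioning $t'$ (the use in full abstraction only needs that case) and to make that restriction explicit.

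\textbf{Step 3 (lifting to distributions).} By~\autoref{lem:llift_normal}, $\Delta\dmoveto{s}\Delta'$ iff $\Delta=\osum_i\delem{p_i}{\mathcal{C}_i}$ and $\Delta'=\osum_i\delem{p_i}{\Delta_i}$ with $\mathcal{C}_i\mathrel{\operatorname{bot}(\cmoveto{s})}\Delta_i$. Applying Steps 1–2 pointwise, and using that renaming commutes with $\osum$, gives the same decomposition for $\Delta[\sfrac{t'}{t}]$ and $\Delta'[\sfrac{t'}{t}]$. Left-linearity of the lifting then yields the claim in both directions.
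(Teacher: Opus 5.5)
Your Steps~1 and~3 follow the same route as the paper, whose proof is just ``by induction on the operational semantics''. Step~2, however, is wrong, and the case it tries to rescue cannot be rescued. The statement as printed is false when $t'\in\tagset(s)$, a case the paper's one-line proof does not address either. Take $\Delta=\delem{1}{\conf{\rho,t\tags\tau.\nil}}$ and $s=t'$. No configuration of $\Delta$ can move with scheduler $t'$, so $\Delta\dmoveto{t'}\epsilon$. But $\Delta[\sfrac{t'}{t}]=\delem{1}{\conf{\rho,t'\tags\tau.\nil}}$ moves with $s[\sfrac{t'}{t}]=t'$ to $\delem{1}{\conf{\rho,\nil}}$. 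By \autoref{thm:determdistr} this is its only $t'$-successor, so it does not reach $\epsilon=\epsilon[\sfrac{t'}{t}]$. With $\Delta'=\delem{1}{\conf{\rho,\nil}}$ the same pair refutes the ``if'' direction. Your pull-back argument only shows that the right-hand move corresponds to a move of $\mathcal{C}$ under the scheduler $t$ rather than $s$; it does not show that the right-hand side is stuck. What actually breaks is injectivity. When $t'$ occurs in $s$, the renaming can identify a tag of $\mathcal{C}$ with a tag of $s$ and thereby enable a scheduler that was disabled. So the converse half of Step~1 already fails, and with it the preservation of $\operatorname{bot}$ claimed in Step~2.

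Hence your fallback is not optional: $t'\notin\tagset(s)$ must be an explicit hypothesis, after which Steps~1--3 go through. Your inverse-renaming reduction of the ``if'' direction also silently uses $\Delta'[\sfrac{t'}{t}][\sfrac{t}{t'}]=\Delta'$, i.e.\ $t'\notin\tagset(\Delta')$. For $\Delta=\delem{1}{\conf{\rho,t\tags\tau.t\tags\tau.\nil}}$, $s=t$ and $\Delta'=\delem{1}{\conf{\rho,t'\tags\tau.\nil}}$ the right-hand side holds, yet the unique $t$-successor of $\Delta$ is $\delem{1}{\conf{\rho,t\tags\tau.\nil}}\neq\Delta'$. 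State that direction as ``every $s[\sfrac{t'}{t}]$-successor of $\Delta[\sfrac{t'}{t}]$ has the form $\Delta'[\sfrac{t'}{t}]$ for some $\Delta'$ with $\Delta\dmoveto{s}\Delta'$''; this is how \autoref{thm:ignoreSending} actually uses it. You are right that this restricted lemma suffices downstream. This requires the fresh tags in \autoref{thm:alpha} and \autoref{thm:ignoreSending} to be chosen after $s$ and fresh for it too, which can always be arranged there.
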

\begin{proof}
	By induction on the operational semantics.
\end{proof}

\begin{lemma}\label{thm:alpha}
	For any $\Delta$, $\Theta$ and $t$, and for any tag $t'$ fresh in $\Delta$ and $\Theta$,
	\[
		\Delta \simds \Theta \quad\text{ if and only if }\quad \Delta[\sfrac{t'}{t}] \simds \Theta[\sfrac{t'}{t}]
	\]
\end{lemma}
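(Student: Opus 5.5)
The natural approach is a coinductive argument. Renaming a tag everywhere in both the related distributions and the moves should preserve bisimilarity, since schedulers only care about tag equality. Fix $t$ and a tag $t'$ fresh in $\Delta$ and $\Theta$. Define
\[
\rel = \{(\Delta_0[\sfrac{t'}{t}], \Theta_0[\sfrac{t'}{t}]) \mid \Delta_0 \simds \Theta_0,\ t' \text{ fresh in } \Delta_0, \Theta_0\}.
\]
I will show that $\rel$ is a scheduled saturated bisimulation, which gives the ``only if'' direction. The ``if'' direction follows by applying the same result to the renaming $[\sfrac{t}{t'}]$. This renaming inverts the first one, because $t$ is fresh in $\Delta[\sfrac{t'}{t}]$ and $\Theta[\sfrac{t'}{t}]$, and $t'$ was fresh in the originals.

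Well-typedness and equality of masses are immediate: renaming tags changes neither the quantum states nor the probabilities, and it does not affect the typing rules. For the transfer condition, take a context $B[\blank]$ compatible with $\Delta_0[\sfrac{t'}{t}]$ and $\Theta_0[\sfrac{t'}{t}]$, a scheduler $s$, and a move $B[\Delta_0[\sfrac{t'}{t}]] \dmoveto{s} \Delta'$.

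The difficulty is that $B$ may itself contain the tag $t$, and possibly $t'$ as well, though $t'$ cannot occur in the hole's tags. So I first choose a tag $t''$ fresh in everything: $\Delta_0$, $\Theta_0$, $B$, $s$, $t$, $t'$. Let $\pi$ be the swap renaming on tags that exchanges $t \leftrightarrow t''$ and $t' \leftrightarrow t$. I apply \autoref{thm:alphamove} repeatedly, as a chain of single fresh renamings. This transports the move to
\[
B^\pi[\Delta_0] \dmoveto{s^\pi} \Delta'^\pi,
\]
where $B^\pi$ is still compatible with $\Delta_0$ and $\Theta_0$, because freshness is preserved by the swap. Since $\Delta_0 \simds \Theta_0$, there exists $\Theta''$ such that $B^\pi[\Theta_0] \dmoveto{s^\pi} \Theta''$ and $\Delta'^\pi \simds \Theta''$. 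Transporting back with \autoref{thm:alphamove} gives $B[\Theta_0[\sfrac{t'}{t}]] \dmoveto{s} \Theta''^{\pi^{-1}}$. The symmetric clause is handled identically.

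The main obstacle is closing the loop. The continuations $\Delta'$ and $\Theta''^{\pi^{-1}}$ are related only through a permutation of $\simds$-related distributions, not through a single fresh renaming of the form $[\sfrac{t'}{t}]$. To handle this, I would strengthen the candidate relation $\rel$ to contain all pairs $(\Delta_0^\pi, \Theta_0^\pi)$ with $\Delta_0 \simds \Theta_0$ and $\pi$ an arbitrary finite permutation of tags. The argument above then goes through verbatim with permutations in place of single renamings. This uses the fact that \autoref{thm:alphamove} extends to any permutation, by decomposing the permutation into renamings through fresh tags; such fresh tags always exist because $\tagset$ is denumerable while processes and contexts have finitely many tags. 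Finally, $[\sfrac{t'}{t}]$ with $t'$ fresh coincides on $\Delta$ and $\Theta$ with the transposition $(t\;t')$, so the statement follows as a special case.
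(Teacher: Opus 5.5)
Your proposal is correct and follows essentially the paper's proof. The paper likewise closes $\simds$ under iterated fresh tag renamings and then, given a move of the renamed pair in a context, applies \autoref{thm:alphamove} with a simultaneous substitution that moves the context's clashing tags to fresh ones and renames the substituted tags back. It then uses $\simds$ and transports the matching move back, so your closure under finite permutations is the same relation stated more cleanly. Two minor points: your map $t'\mapsto t$, $t\mapsto t''$ is a 3-cycle rather than a swap. Also, requiring the auxiliary tags to be fresh in the scheduler $s$ is a real precision the paper leaves implicit, since \autoref{thm:alphamove} can fail when $s$ already mentions the target tag.
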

\begin{proof}
	Let $\rel$ be the smallest relation such that $\mathord{\simds} \subseteq \mathord{\rel}$ and $\mathord{\rel} \subseteq \{ (\Delta[\sfrac{t'}{t}], \Theta[\sfrac{t'}{t}]) \mid  \Delta \rel \Theta \text{ and } t' \text{ is fresh in }\Delta, \Theta\}$.
	We will show that $\rel$ is a bisimulation.

	In the following, we write $[\sfrac{t_i'}{t_i}]_{i \in I}$ for the substitution $[\sfrac{t_{x_0}'}{t_{x_0}}][\sfrac{t_{x_1}'}{t_{x_1}}] \dots  [\sfrac{t_{x_n}'}{t_{x_n}}]$, with $I = \{x_0, x_1, \dots, x_n\}$.

	Take $\Delta[\sfrac{t_i'}{t_i}]_{i \in I}$ and $\Theta[\sfrac{t_i'}{t_i}]_{i \in I}$ in $\rel$, with $\Delta \simds \Theta$, and take a generic context $B[\blank] = [\blank] \parallel R$.
	Assume that $$\Delta[\sfrac{t_i'}{t_i}]_{i \in I} \parallel R \dmoveto{s} \Delta'.$$

	Take then a collection of distinct fresh tags $t_i''$.
	By~\autoref{thm:alphamove},
	\begin{align*}
		(\Delta[\sfrac{t_i'}{t_i}]_{i \in I} \parallel R)
		[\sfrac{t_i''}{t_i},\sfrac{t_i}{t_i'}]_{i \in I}
		 & =
		\Delta[\sfrac{t_i'}{t_i}, \sfrac{t_i''}{t_i},\sfrac{t_i}{t_i'}]_{i \in I} \parallel R[\sfrac{t_i''}{t_i},\sfrac{t_i}{t_i'}]_{i \in I}
		=                                                                                                                                                                                            \\
		 & = \Delta \parallel R[\sfrac{t_i''}{t_i},\sfrac{t_i}{t_i'}]_{i \in I} \dmoveto{s[\sfrac{t_i''}{t_i},\sfrac{t_i}{t_i'}]_{i \in I}} \Delta'[\sfrac{t_i''}{t_i},\sfrac{t_i}{t_i'}]_{i \in I}.
	\end{align*}
	Since $\Delta \simds \Theta$, then it must be that $\Theta \parallel R[\sfrac{t_i''}{t_i},\sfrac{t_i}{t_i'}]_{i \in I} \dmoveto{s[\sfrac{t_i''}{t_i},\sfrac{t_i}{t_i'}]_{i \in I}} \Theta'$, with $\Delta'[\sfrac{t_i''}{t_i},\sfrac{t_i}{t_i'}]_{i \in I} \simds \Theta'$.
	By~\autoref{thm:alphamove}, $(\Theta \parallel R[\sfrac{t_i''}{t_i},\sfrac{t_i}{t_i'}]_{i \in I})[\sfrac{t_i'}{t_i},\sfrac{t_i}{t_i''}]_{i \in I} = \Theta  [\sfrac{t_i'}{t_i}]_{i \in I} \parallel R \dmoveto{s} \Theta'[\sfrac{t_i'}{t_i},\sfrac{t_i}{t_i''}]_{i \in I}$.
	We conclude by noting that $\Delta'\ \rel\ \Theta'[\sfrac{t_i'}{t_i},\sfrac{t_i}{t_i''}]_{i \in I}$ because $\Delta' = \Delta'[\sfrac{t_i''}{t_i},\sfrac{t_i}{t_i'}]_{i \in I}[\sfrac{t_i'}{t_i},\sfrac{t_i}{t_i''}]_{i \in I}$.
\end{proof}

\begin{lemma}\label{thm:freshtag}
	For any $\Delta$, $\Delta'$, and $B[\blank]$, and for any $s$ with only tags that are fresh in $B[\blank]$,
	\[
		B[\Delta] \dmoveto{s} \Delta' \quad \text{ if and only if }\quad \Delta \dmoveto{s} \Delta'' \text{ and } \Delta' = B[\Delta'']
	\]
\end{lemma}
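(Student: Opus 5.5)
We need to prove: if $s$ has only tags fresh in $B[\blank]$, then $B[\Delta] \dmoveto{s} \Delta'$ iff $\Delta \dmoveto{s} \Delta''$ with $\Delta' = B[\Delta'']$. (Here $\dmoveto{s}$ is presumably for a generic label $\mu$.)

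The plan is a direct case split on the shape of the context, reducing to the characterization lemmas already proved for parallel composition with a process. If $B[\blank] = [\blank]$ the statement is trivial, taking $\Delta'' = \Delta'$. So we assume $B[\blank] = [\blank] \parallel R$. Since $B$ is compatible, $\tagset(\Delta) \cap \tagset(R) = \emptyset$. The hypothesis on $s$ gives $\tagset(s) \cap \tagset(R) = \emptyset$. If $\Delta = \epsilon$, both sides reduce only to $\epsilon = B[\epsilon]$, so we may also assume $\Delta$ non-empty.

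For the left-to-right direction we distinguish whether $B[\Delta] \dmoveto[\mu]{s} \Delta'$ has $\Delta' \neq \epsilon$ or $\Delta' = \epsilon$.

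In the first case we apply \autoref{lem:distr_cases}. Its cases 2, 3 and 4 each require some tag of $s$ to lie in $\tagset(R)$. Case 2 needs $\tagset(s) \subseteq \tagset(R)$, and $s$ contains at least one tag. Cases 3 and 4 need $t_2 \in \tagset(R)$. All of these contradict freshness, so only case 1 remains. It gives exactly $\Delta' = \Theta' \parallel R = B[\Theta']$ with $\Delta \dmoveto[\mu]{s} \Theta'$, and we set $\Delta'' = \Theta'$.

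In the second case we use \autoref{lem:distr_epsilon_cases}, whose first bullet yields $\Delta \dmoveto[\mu]{s} \epsilon$. Taking $\Delta'' = \epsilon$ gives $B[\epsilon] = \epsilon = \Delta'$.

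For the right-to-left direction, suppose $\Delta \dmoveto[\mu]{s} \Delta''$. If $\Delta'' \neq \epsilon$, the ``if'' direction of case 1 of \autoref{lem:distr_cases} gives $B[\Delta] \dmoveto[\mu]{s} \Delta'' \parallel R$. One subtlety is that case 1 asks for $\tagset(s) \subseteq \tagset(\Delta)$. This holds because a non-empty move of $\Delta$ needs some configuration in the support to move with $s$, and by the rules every tag of $s$ then occurs in that process.

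If instead $\Delta'' = \epsilon$, we check the three bullets of \autoref{lem:distr_epsilon_cases}:
\begin{itemize}
\item the first is exactly our hypothesis;
\item the second holds because $R$ cannot move with a scheduler whose tags do not occur in $R$, so $\nextset(R,s,\mu) = \emptyset$;
\item the third holds because its premises $\nextset(R,t_2,\cdot) \neq \emptyset$ would require $t_2 \in \tagset(R)$, so the implications are vacuous.
\end{itemize}
Hence $B[\Delta] \dmoveto[\mu]{s} \epsilon = B[\epsilon]$. By \autoref{thm:determdistr} the target is unique, so this is the only move.

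The main obstacle is the bookkeeping that a tag absent from a process cannot appear in any of its transitions, i.e.\ $P \cmoveto[\mu]{s}$ implies $\tagset(s) \subseteq \tagset(P)$. I would prove this by a short induction on the rules of \autoref{semantics}, since every axiom reads its tag off the prefix. This fact is also what allows the first-direction argument to discard cases 2 to 4 of \autoref{lem:distr_cases}.
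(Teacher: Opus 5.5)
Your argument is correct for the statement as intended, and it is essentially the paper's proof, which only says ``by the rules for the parallel composition'': with $\tagset(s)\cap\tagset(R)=\emptyset$, only \textsc{ParL} can fire. You make this precise through \autoref{lem:distr_cases} and \autoref{lem:distr_epsilon_cases}; the tag-disjointness of $\Delta$ and $R$ that you take from compatibility is needed only to invoke those lemmas, not for the claim itself.

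One correction to your reading: $\dmoveto{s}$ without a label denotes $\tau$-moves, and the generalisation to arbitrary $\mu$ is false. For $\mu = c?v$ with $v$ a qubit owned by $R$, the distribution $\Delta$ can receive $v$, because \textsc{Recv} only requires $v\notin\Sigma_P$. However, $\Delta\parallel R$ cannot, by the side condition of \textsc{ParL}, which the two cited lemmas do not record. Your argument should therefore be stated for $\mu=\tau$ only.
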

\begin{proof}
	By the rules for the parallel composition.
\end{proof}

\begin{lemma}\label{thm:ignoreSending}
	For any $\Sigma, \Sigma' \vdash \Delta, \Theta$, for any $q$ in $\Sigma \setminus \Sigma'$, we have that
	$\Delta \simds \Theta$ if and only if
	$\Delta \parallel t\tags c!q \simds \Theta \parallel t\tags c!q$ for some $t$ fresh both in $\Delta$ and $\Theta$.
\end{lemma}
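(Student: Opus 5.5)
The two directions of \autoref{thm:ignoreSending} are handled separately.

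\emph{Only if.} This direction is essentially closure of $\simds$ under contexts. I will take $\rel = \{(\Delta' \parallel t\tags c!q,\ \Theta' \parallel t\tags c!q) \mid \Delta' \simds \Theta'\} \cup \mathord{\simds}$ and show it is a saturated bisimulation up-to $\lcl \circ \ccl$. By \autoref{thm:lcb-valid} it then suffices to exhibit the pair $(\Delta \parallel t\tags c!q, \Theta \parallel t\tags c!q)$ as $B[\Delta], B[\Theta]$ with $B[\blank] = [\blank] \parallel t\tags c!q$. The hole is typed $\Sigma'$ and the context owns $q \in \Sigma \setminus \Sigma'$, and $t$ is fresh, so $B$ is compatible. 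Since $\ccl$ is $b$-compatible, $\ccl(\simds) \subseteq \simds$, and this direction follows at once. One point needs checking: a context applied to $B[\Delta]$ is $[\blank] \parallel R$, and it composes with $B$ into $[\blank] \parallel (t\tags c!q \parallel R)$. That composite is compatible with $\Delta$ because $t$ is fresh and $R$ cannot mention $q$ by typing. Its determinism is guaranteed by \autoref{lem:deterPar}.

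\emph{If.} Assume $\Delta \parallel t\tags c!q \simds \Theta \parallel t\tags c!q$. I will show that $\rel' = \{(\Delta,\Theta) \mid \Delta \parallel t\tags c!q \simds \Theta \parallel t\tags c!q,\ t \text{ fresh}\}$ is a saturated bisimulation. I expect three steps.

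First, equal mass is immediate, since parallel composition with a process preserves mass.

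Second, take a compatible context $[\blank] \parallel R$ and a move $\Delta \parallel R \dmoveto{s} \Delta'$. By \autoref{thm:alpha} I may rename $t$ so that it is fresh in $R$ and in $s$ as well. I then observe the move inside $\Delta \parallel t\tags c!q \parallel R$. The problem is that $R$ is typed without $q$, while the context for the larger process must avoid $q$ entirely. Since $q$ is environment-visible in $\Delta$, $R$ may use it in the original context but not in the new one.

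Third, this mismatch is the crux. The plan is to let the new context first receive $q$ with a fresh tag $t'$ through $t'\tags c?x$, then continue as $R[\sfrac{x}{q}]$. Concretely, let $R' = t' \tags c?x.R[\sfrac{x}{q}]$ with $t'$ fresh. Both sides perform the synchronization $(t,t')$ first, which is a deterministic $\tau$ step that leaves the quantum state untouched. Bisimilarity then yields matching distributions $\Delta \parallel \nil\ldots \parallel R \simds \Theta \parallel \ldots \parallel R$, up to the inert remainder $\nil$ of the sender. Using the assumption on $\Delta, \Theta$ and \autoref{thm:freshtag}, the move $\Delta \parallel R \dmoveto{s} \Delta'$ is replayed after the synchronization. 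The matching move by $\Theta$ transfers back. The resulting pair still lies in $\rel'$, up to the parallel inert $\nil$ component and the substitution. I expect to absorb the inert $\nil$ by a further small bisimulation argument, equating $P \parallel \nil$ with $P$.

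The main obstacle is making this "receive-then-act" context argument rigorous. In particular, I must check that the extra synchronization step does not let $\Theta$'s side answer in some other way, and that the residual pairs stay in the candidate relation. The latter may require closing $\rel'$ under $\lcl \circ \ccl$ and appealing to \autoref{thm:lcb-valid} rather than proving a plain bisimulation.
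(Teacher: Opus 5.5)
Your \emph{only if} direction is fine: it is just congruence of $\simds$, as in the paper. For the \emph{if} direction, your receive-then-act context $t'\tags c?x.R[\sfrac{x}{q}]$ is exactly the paper's construction for one case. However, the paper splits on whether the observing context $[\blank]\parallel R$ owns $q$, and your uniform argument fails without that split.

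\textbf{Case $q\notin\Sigma_R$.} Here $R[\sfrac{x}{q}]=R$, so $t'\tags c?x.R$ is ill-typed under the linear discipline, since \textsc{QRecv} needs $\Sigma_R\cup\{x\}\vdash R$. Patching it as $t'\tags c?x.(R\parallel\nil_x)$ does not help. For the residuals you would only learn $\Delta''\parallel\nil_q\simds\Theta''\parallel\nil_q$, with $q$ hidden. From this you cannot recover either $\Delta''\parallel t\tags c!q\simds\Theta''\parallel t\tags c!q$ or $\Delta''\simds\Theta''$, because discarding $q$ removes observations: measuring $q$, or later feeding it to an input of $\Delta''$.

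The missing idea is that this case needs no redirection at all. After renaming the tags of $R$ apart from $t$, $[\blank]\parallel R$ is directly a compatible context for $\Delta\parallel t\tags c!q$. The sender stays idle under $s$ by \autoref{thm:freshtag}. The matching move of $\Theta\parallel t\tags c!q\parallel R$ then has the form $\Theta''\parallel t\tags c!q$, so the residual pair lands in your relation.

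\textbf{Case $q\in\Sigma_R$.} Here your receive-then-act context is the right one, but your claim that ``the resulting pair still lies in $\rel'$'' is false. By \autoref{thm:typepreservation}, the $\tau$-move keeps $q$ owned by the residual of $R$, so $\Delta''\parallel t\tags c!q$ is not even well-typed. What the argument actually gives is $\Delta''\simds\Theta''$. The candidate must therefore be $\rel'\cup\mathord{\simds}$, which is harmless since $\simds$ is itself a bisimulation; closing under $\lcl\circ\ccl$ is not what is needed.

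With the two cases separated and handled this way, your argument becomes the paper's proof.
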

\begin{proof}
	$\Delta \simds \Theta$ implies $\Delta \parallel t\tags c!q \simds \Theta \parallel t\tags c!q$ because saturated bisimilarity is a congruence by definition.
	For the other side, it suffices showing that
	$\mathord{\rel} = \{(\Delta, \Theta)  \mid \Delta \parallel t\tags c!q \simds \Theta \parallel t\tags c!q\}$ is a bisimulation.

	Take $(\Delta,\Theta) \in \rel$ and $\Sigma'' \vdash B[\blank]_{\Sigma'} = [\blank] \parallel R$.
	We consider two cases depending on $q$ being in $\Sigma''$ or not.

	If $q \in \Sigma''$, then let $\Sigma'' \vdash B'[\blank]_{\Sigma' \cup \{q\}} = [\blank] \parallel t'\tags c?x.R[\sfrac{x}{q}][\sfrac{t''}{t}]$, with $t'$ and $t''$ fresh tags in $\Delta, \Theta$ and $R$.
	We have then that $B'[\Delta \parallel t\tags c!q] \dmoveto{{{(t,t')}}} \Delta \parallel R[\sfrac{t''}{t}]$, and
	$B'[\Theta \parallel t\tags c!q] \dmoveto{{{(t,t')}}} \Theta \parallel R[\sfrac{t''}{t}]$.
	Since $\Delta \parallel t\tags c!q \simds \Theta \parallel t\tags c!q$ by assumption, we know that $\Delta \parallel R[\sfrac{t''}{t}] \simds \Theta \parallel R[\sfrac{t''}{t}]$, and the property follows by~\autoref{thm:alpha} since $t''$ is chosen free in $\Delta, \Theta$ and $R$.

	If $q \notin \Sigma''$, then let $\Sigma'' \vdash B'[\blank]_{\Sigma'} = [\blank] \parallel R[\sfrac{t'}{t}]$, with $t' \neq t$ fresh in $\Delta, \Theta$ and $R$.
	Then, assume $B[\Delta] \dmoveto{s} \Delta'$.
	Notice that $B[\Delta][\sfrac{t'}{t}] = B'[\Delta]$ since $t$ is fresh in $\Delta$.
	Hence, $B'[\Delta] \dmoveto{s'} \Delta'[\sfrac{t'}{t}]$, with $s' = s[\sfrac{t'}{t}]$, by~\autoref{thm:alphamove}.
	Clearly, $t$ does not appear in $s$, hence, by~\autoref{thm:freshtag}, $B'[\Delta \parallel t\tags c!q] \dmoveto{s'} \Delta'[\sfrac{t'}{t}] \parallel t\tags c!q$ (where we exploit the associativity of the parallel operator).
	Since $\Delta \parallel t\tags c!q \simds \Theta \parallel t\tags c!q$, we know that
	$B'[\Theta \parallel t\tags c!q] \dmoveto{s'} \Theta'$, with $\Delta'[\sfrac{t'}{t}] \parallel t\tags c!q \simds \Theta'$.
	Moreover, by~\autoref{thm:freshtag}, $\Theta'$ is of the form $\Theta'' \parallel t\tags c!q$, and $B'[\Theta] \dmoveto{s'} \Theta''$.
	By~\autoref{thm:alphamove}, we then know that $\Theta'' = \Theta'''[\sfrac{t'}{t}]$ for some $\Theta$ such that $B[\Theta] \dmoveto{s} \Theta'''$.
	We know that $\Delta'[\sfrac{t'}{t}] \parallel t\tags c!q \simds \Theta'''[\sfrac{t'}{t}] \parallel t\tags c!q$, hence, by~\autoref{thm:alphamove},
	\begin{align*}
		(\Delta'[\sfrac{t'}{t}] \parallel t\tags c!q)[\sfrac{t}{t'}, \sfrac{t'}{t}] = \Delta' \parallel t'\tags c!q \simds
		\Theta''' \parallel t\tags c!q =
		(\Theta'''[\sfrac{t'}{t}] \parallel t\tags c!q)[\sfrac{t}{t'}, \sfrac{t'}{t}].
	\end{align*}
	Therefore, $(\Delta', \Theta''') \in \rel$.
\end{proof}

In the following we will write $\tilde{t}\tags c!\tilde{q}$ for the process
\[
	t_0\tags c!q_0 \parallel t_1\tags c!q_1 \parallel \dots \parallel t_n\tags  c!q_n,
\]
with $c$ any channel, and $\tilde{t} = t_0, t_1, \dots t_n$ and $\tilde{q} = q_0, q_1, \dots q_n$ any sequences of (distinct) tags and qubit names.

\begin{lemma}\label{lem:cinesi}
	Let $\Delta \simds \Theta$. Then
	\begin{itemize}
		\item $env(\Delta) = env(\Theta)$;
		\item $\nE(\Delta) \simds \nE(\Theta)$ for any trace non-increasing superoperator over the environment.
	\end{itemize}
\end{lemma}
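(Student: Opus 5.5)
Both items rest on one observation. The environment qubits $\Sigma\setminus\Sigma'$ are free for any compatible context. A context may grab them, process them, and expose the result through a barb. Saturated bisimilarity preserves mass under every such move, so it must preserve what those barbs reveal.

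\emph{First item.} Suppose, for contradiction, that $env(\Delta)\neq env(\Theta)$, and write $\tilde q$ for the environment qubits.
\begin{enumerate}
	\item Both partial density operators on $\tilde q$ have trace $\mass{\Delta}=\mass{\Theta}$. Since they differ, some projector $\Pi$ on $\hilb_{\tilde q}$ has $\tr(\Pi\, env(\Delta))\neq \tr(\Pi\, env(\Theta))$. Indeed, a Hermitian difference with zero trace has a nonzero eigenvalue, and we take $\Pi$ to be its eigenprojector.
	\item Let $\M=\{\Pi,\I-\Pi\}$ and build the context
	\[
	B[\blank]=[\blank]\parallel t_1\tags \meas{\tilde q}{y}.\bigl(\ite{y=0}{t_2\tags\tau}{\nil}\parallel\nil_{\tilde q}\bigr),
	\]
	with tags fresh for both $\Delta$ and $\Theta$. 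It is compatible because $\tilde q$ lies outside $\Sigma'$.
	\item By \autoref{thm:freshtag}, only the context moves under $t_1$, and then under $t_2$. By \autoref{thm:alwaysmove} and linearity, after these two steps the mass is $\tr(\Pi\,env(\Delta))$ on one side and $\tr(\Pi\,env(\Theta))$ on the other. This uses that $\tr(\Pi\rho_{\tilde q})$ depends only on the reduced state.
	\item By \autoref{thm:determdistr} the matching moves are unique, so bisimilarity forces equal masses, a contradiction.
\end{enumerate}

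\emph{Second item.} Fix $\E$ with Kraus operators $\{E_k\}_{k\in K}$ on $\tilde q$. Complete it to a measurement $\M'=\{E_k\}_{k\in K}\cup\{F\}$ with $F=\sqrt{\I-\sum_k E_k^\dag E_k}$.
\begin{enumerate}
	\item Use the context $[\blank]\parallel t\tags\meas[\M']{\tilde q}{y}.\bigl(\ite{y\neq |K|}{t'\tags c!\tilde q}{\nil_{\tilde q}}\bigr)$. Here $c!\tilde q$ abbreviates the parallel sends $\tilde{t}\tags c!\tilde q$ of the notation introduced above, with fresh tags.
	\item Let $\Delta\parallel\ldots\dmoveto{t}\Delta_1$, and let $\Theta$ match with $\Theta_1$, so $\Delta_1\simds\Theta_1$. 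By \autoref{thm:linearity} and left-decomposability, we split $\Delta_1$ and $\Theta_1$ according to $y$. The branches with $y\neq|K|$ become distinguishable from the others only through the subsequent behaviour of the continuation.
	\item We must isolate the Kraus branches while keeping the mixture over $k$. So we move each branch to a common continuation: the classical outcome is discarded by sending $\tilde q$ identically for every $k\neq|K|$. The $F$-branch is killed by letting the $\nil$ branch own the qubits, and an extra step $t'$ taken only by the good branches drops the $F$-part to $\epsilon$ under $\operatorname{bot}$.
	\item The resulting pair has the form $\nE(\Delta)\parallel \tilde t\tags c!\tilde q$ up to the renormalisation built into $\nE$, and likewise for $\Theta$. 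Here we use \autoref{thm:propertyA}: configurations with the same process and different $k$ merge into the summed operator $\sum_k E_k\rho E_k^\dag$, and the weights $\tr(\E(\rho_i))$ match the definition of $\nE$.
	\item Finally, repeated application of \autoref{thm:ignoreSending} strips the sends and yields $\nE(\Delta)\simds\nE(\Theta)$.
\end{enumerate}

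\emph{Main obstacle.} The hard step is the bookkeeping in step 3 of the second item. We must check that the surviving distribution is exactly $\nE(\Delta)\parallel\tilde t\tags c!\tilde q$, up to $\simds$. This requires the qubit-retyping to be consistent, since the hole type changes back to $\Sigma'$. It also requires the branch $y=|K|$, with its fresh tag $t'$, to vanish identically on both sides. My first attempt would use \autoref{lem:distr_cases} directly to compute both successors explicitly.
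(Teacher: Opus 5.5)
Your proposal is correct and follows the paper's proof essentially step for step. For the first item, a two-outcome measurement on the environment qubits followed by a fresh $\tau$-barb separates different environments by mass; your eigenprojector of $env(\Delta)-env(\Theta)$ plays the role of the paper's effect $E$ with $\{\sqrt{E},\sqrt{\I-E}\}$. For the second item, you complete the Kraus family to a measurement, drop the failure branch with a fresh $t'$-step, merge the outcomes with \autoref{thm:propertyA} and linearity, and strip the sends with \autoref{thm:ignoreSending}.

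Two small repairs are needed. First, the good branch must read $t'\tags\tau.\tilde t\tags c!\tilde q$, because $\simds$ only observes $\tau$-moves and the $t'$ step has to be one of them. Second, the fact that only the context moves under a tag fresh for $\Delta$ is case~2 of \autoref{lem:distr_cases}; \autoref{thm:freshtag} covers the converse situation, where the scheduler's tags are fresh for the context.
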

\begin{proof}
	For the first point, suppose that $\env{\Delta} \neq \env{\Theta}$, we will show that $\Delta \not\simds \Theta$.
	Since $\env{\Delta}  \neq \env{\Theta}$, then there exists a positive linear operator $E$ such that $\tr(E \cdot \env{\Delta}) \neq \tr(E \cdot \env{\Theta})$~\cite{heinosaariMathematicalLanguageQuantum2011}.
	We can construct the measurement $\M_{E}$ with measurement operators $\{\sqrt{E}, \sqrt{\I - E}\}$ and the context
	$B[\blank] = [\blank] \parallel t_0\tags\meas[\M_E]{\tilde{q}}{x}.R$, where $R = \ite{x = 0}{t_1\tags\tau.\nil_{\tilde{q}}}{\nil_{\tilde{q}}}$, $t_0, t_1$ are fresh tags and $\tilde{q}$ is the tuple of all the qubits outside of $\Delta$.
	For $\Delta = \osum_i \delem{p_i}{\conf{\rho_i, P_i}}$ and $\Theta = \osum_j \delem{p_j}{\conf{\rho_j, P_i}}$ we have
	\begin{align*}
		B[\Delta]
		 & \dmoveto{{t_0}} \osum_i \delem{p_i}{\left(\delem{q_i}{\conf{\rho_i^E, P_i \parallel R[0/x]}} \oplus \delem{(1-q_i)}{\conf{\rho_i^{\not E}, P_i \parallel R[1/x]}}\right)} \\
		 & \dmoveto{{t_1}} \osum_i \delem{p_iq_i}{\conf{\rho_i^E, P_i \parallel \nil_{\tilde{q}}}} = \Delta'
	\end{align*}
	and
	\begin{align*}
		B[\Theta]
		 & \dmoveto{{t_0}} \osum_j \delem{p_j}{\left(\delem{q_j}{\conf{\rho_j^E, P_j \parallel R[0/x]}} \oplus \delem{(1-q_j)}{\conf{\rho_j^{\not E}, P_j \parallel R[0/x]}}\right)} \\
		 & \dmoveto{{t_1}} \osum_j \delem{p_jq_j}{\conf{\rho_j^E, P_j \parallel \nil_{\tilde{q}}}} = \Theta'
	\end{align*}
	where $\rho_n^E$ (respectively $\rho_n^{\not E}$) is the state resulting from the successful (failed) measurement of $E$ on $\rho_n$, $q_n$ is the probability $\tr(E\otimes \mathbb{I} \cdot \rho_n)$.
	We have that $\tr(E \cdot \env{\Delta}) = \tr(E\otimes I \cdot \sum_ip_i\rho_i) = \mass{\Delta'}$, and thus $\mass{\Delta'} \neq \mass{\Theta'}$ and $\Delta, \Theta$ are not bisimilar.

	For the second point, we consider separately the case of a trace-preserving superoperator $\mathcal{E}$.
	In this case, we take the distinct fresh tags $t$ and $\tilde{t}$, and we build a context $B[\blank] = [\blank] \parallel t\tags\E(\tilde{q}). \tilde{t}\tags c!\tilde{q}$.
	After a $\xlongrightarrow{\tau}_{t}$ transition, $\Delta$ goes in $\nE(\Delta) \parallel \tilde{t}\tags c!\tilde{q}$ and $\Theta$ goes in $\nE(\Theta) \parallel \tilde{t}\tags c!\tilde{q}$, from which we know $\nE(\Delta) \simds \nE(\Theta)$ thank to \autoref{thm:ignoreSending}.

	Finally,
	Suppose $\E = \{K_0, \ldots, K_{n-1}\}$ is a trace non-increasing superoperator with $n$ Kraus operators, we define $n$ superoperators $\mathcal{M}_i = \{K_i\}$ for $i = 0,\ldots, n-1$.
	We have that, for any $\rho$, $\E(\rho) = \sum_{i = 0}^{n-1} \mathcal{M}_i(\rho)$ and $\tr(\E(\rho)) = \sum_{i = 0}^{n-1} \tr(\mathcal{M}_i(\rho))$.
	Since $\E$ is trace non-increasing, we know by definition that $\I - \sum_{i = 0}^{n-1} K_i^\dagger K_i$ is a positive matrix, and we call this difference $M$.
	We build a measurement $\M$ with measurement operators $\{K_0, \ldots, K_{n-1}, \sqrt{M}\}$.
	This is an $n+1$-outcome measurement, one for each Kraus operator of $\E$ and an additional one signifying that $\E$ does not happen.
	Take then the context $B[\blank] = [\blank] \parallel t\tags\meas{\tilde{q}}{x}.R$ where $R=\ite{x \neq n}{t'\tags\tau.\tilde{t}\tags c!\tilde{q}}{\tilde{t}\tags c!\tilde{q}}$ and $t, t', \tilde{t}$ are distinct fresh tags.
	Suppose that $\Delta = \osum_j \delem{p_j}{\conf{\rho_j, P_j}}$, then a possible sequence of transition of $B[\Delta]$ is
	\begin{align*}
		 & B[\Delta] \dmoveto{t} \osum_j \delem{p_j}{\left(
			\osum_{i = 0}^{n} \delem{\tr(\M_i(\rho_j))}{\conf{\frac{\M_i(\rho_j)}{\tr(\M_i(\rho_j))}, P_j \parallel R[i/x]}}
		\right)}                                            \\
		 & \dmoveto{t'}\osum_j \delem{p_j }{\left[
				\tr(\E(\rho_j))\left(
				\osum_{i=0}^{n-1} \delem{\frac{\tr(\M_i(\rho_j))}{\tr(\E(\rho_j))}}{ \conf{\frac{\M_i(\rho_j)}{\tr(\M_i(\rho_j))}, P_j \parallel \tilde{t}\tags c!\tilde{q} }}
				\right)
				\right]} = \Delta'
	\end{align*}
	Applying \autoref{thm:propertyA} to such $\Delta'$, we get by linearity
	\begin{align*}
		\Delta' & \sim \osum_j \delem{p_j}{\left[
				\delem{\tr(\E(\rho_j))}{\conf{\frac{\sum_{i=0}^{n-1} \M_i(\rho_j)}{\tr(\E(\rho_j))}, P_j \parallel \tilde{t}\tags c!\tilde{q}}
				}
		\right]}                                                                                                                             \\
		        & = \osum_j \delem{p_j \tr(\E(\rho_j))}{\conf{\frac{\E(\rho_j)}{\tr(\E(\rho_j))}, P_j \parallel \tilde{t}\tags c!\tilde{q}}} \\
		        & = \nE(\Delta) \parallel \tilde{t}\tags c!\tilde{q}.
	\end{align*}
	This sequence of transition must be replicated also by $B[\Theta]$, and so we know that
	\begin{align*}
		B[\Delta] \dmoveto{t}\dmoveto{t'} \  & \Delta' \simds \nE(\Delta) \parallel \tilde{t}\tags c!\tilde{q} \\
		                                     & \ \rotatebox{90}{$\simds$}                                      \\
		B[\Theta] \dmoveto{t}\dmoveto{t'} \  & \Theta' \simds \nE(\Theta) \parallel \tilde{t}\tags c!\tilde{q}
	\end{align*}
	We thus know that $\nE(\Delta) \simds \nE(\Theta)$ thanks to \autoref{thm:ignoreSending}.
	Notice how all the contexts we used in this proof use fresh tags, and thus can be legally applied to $\Delta$ and $\Theta$ due to \autoref{lem:deterPar}.
\end{proof}

\begin{restatable}{theorem}{complete}\label{thm:complete}
	$\mathord{\simds} \subseteq \mathord{\simpl}$.
\end{restatable}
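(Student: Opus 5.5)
The plan is the standard coinductive argument: I will show that $\simds$ itself is a labelled bisimulation in the sense of \autoref{def:labBisim}. Since $\simpl$ is the largest such relation, this gives $\mathord{\simds} \subseteq \mathord{\simpl}$. So I take $\Delta \simds \Theta$ and check the four clauses of \autoref{def:labBisim} in turn. Well-typedness is immediate, because $\simds$ is well-typed by definition.

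The first two clauses come directly from \autoref{lem:cinesi}. Its first item gives $env(\Delta) = env(\Theta)$. Its second item gives $\nE(\Delta) \simds \nE(\Theta)$ for every trace non-increasing superoperator $\E$ on $\hilb_{\Sigma \setminus \Sigma'}$, and this class includes the superoperators of $\soset{\hilb_{\Sigma\setminus\Sigma'}}$ quantified over in the definition.

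The transition clauses are the real content. Suppose $\Delta \dmoveto[\mu]{s} \Delta'$. I will split on $\mu$.

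\emph{Case $\mu = \tau$.} I use the trivial context $B[\blank] = [\blank]$. It is compatible with both distributions, since it adds no tags and no qubits. The saturated bisimulation clause then yields $\Theta \dmoveto{s} \Theta'$ with $\Delta' \simds \Theta'$.

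\emph{Case $\mu = c!v$.} A visible action must be turned into a $\tau$ step via a synchronising observer. I take $B[\blank] = [\blank] \parallel t'\tags c?x.R$, where $t'$ is fresh. The continuation $R$ is designed so that after the synchronisation the observer becomes a harmless remainder that can be stripped off again.
\begin{itemize}
\item If $v$ is classical, I take $R = \nil$, or $R = t''\tags d!x$ on a fresh channel $d$ so that the received value stays visible.
\item If $v$ is a qubit $q$, I take $R = t''\tags d!x$, which hands $q$ straight back to the environment.
\end{itemize}
By \autoref{lem:distr_cases}, case 4, the move $\Delta \dmoveto[c!v]{t}\Delta'$ induces $B[\Delta] \dmoveto{(t,t')} \Delta' \parallel R[\sfrac{v}{x}]$, together with extra summands for the other values that $\Delta$ could send under $t$. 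Determinism and uniqueness of the sent value per configuration let me isolate the $v$-summand. This isolation is where I expect trouble, since different configurations in the support may send different values. My first attempt is an observer $c?x.\ite{x=v}{\dots}{\nil}$ that deadlocks on all other values, which kills the unwanted summands.

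Saturated bisimilarity then provides a matching $\Theta'$. Using \autoref{lem:distr_cases} in the converse direction, I decompose it as $\Theta'' \parallel R[\sfrac{v}{x}]$ with $\Theta \dmoveto[c!v]{t} \Theta''$. Finally I cancel the observer residue. For quantum values this is exactly \autoref{thm:ignoreSending}; for classical residues it is an analogous lemma, proved in the same way as \autoref{thm:ignoreSending}. Throughout, \autoref{thm:alpha} and \autoref{thm:alphamove} handle the renaming of fresh tags.

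\emph{Case $\mu = c?v$.} This case is dual: the observer is $t'\tags c!v$. When $v$ is a qubit, typing forces $v$ to lie in the environment, so the context can legally own it. After the synchronisation the observer becomes $\nil$, which is removed trivially.

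The symmetric clause, moves of $\Theta$ matched by $\Delta$, is handled identically.

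The main obstacle is the output case. There I must extract a $c!v$-move of $\Theta$ from the synchronised $\tau$-move of $B[\Theta]$ and discharge the leftover observer while preserving $\simds$. The deadlocking, value-filtering observer together with \autoref{thm:ignoreSending} is the approach I would try first.
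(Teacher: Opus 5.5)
Your overall route is the paper's: you show that $\simds$ is a labelled bisimulation, take the environment and superoperator clauses from \autoref{lem:cinesi}, use the empty context for $\tau$, a sending observer for inputs, and a receiving observer followed by \autoref{thm:ignoreSending} for outputs. The gap is exactly at the step you flag, and your proposed fix does not work as stated.

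An observer $t_1\tags c?x.\ite{x=v}{\dots}{\nil}$ kills nothing at the synchronisation step. Every configuration of $\Delta$ that sends some value $w$ under $t$ synchronises with it. So $B[\Delta]\dmoveto{(t,t_1)}\osum_w \Delta_w\parallel O_w$, and every summand with $w\neq v$ keeps its full weight; the $\nil$ branch only matters at the \emph{next} move. Consequently, your claim that the matching $\Theta'$ decomposes as $\Theta''\parallel R[\sfrac{v}{x}]$ is false after a single step. Nor can you extract $\Delta_v \simds \Theta_v$ from $\osum_w\Delta_w\parallel O_w\simds\osum_w\Theta_w\parallel O_w$, because $\simds$ is only known to be linear, not decomposable.

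The missing idea is a second step that only the $v$-branch can take. This is what the paper's guarded continuation $\ite{x = v}{t' \tags \tau.t'' \tags c!x}{t'' \tags c!x}$ achieves. Concretely, use an observer such as $t_1\tags c?x.\ite{x=v}{t_2\tags\tau.t_3\tags d!x}{\nil_x}$ with fresh tags $t_1,t_2,t_3$, and apply saturated bisimilarity twice:
\begin{itemize}
\item once to the $(t,t_1)$ synchronisation;
\item once to the subsequent $t_2$ step.
\end{itemize}
Since $t_2$ is fresh, no summand with $w\neq v$ has a $t_2$-move, so $\operatorname{bot}$ sends all of them to $\epsilon$ on both sides. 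This leaves $\Delta_v\parallel t_3\tags d!v\simds\Theta_v\parallel t_3\tags d!v$, where $\Theta\dmoveto[c!v]{t}\Theta_v$; the matching distributions at both steps are forced by \autoref{thm:determdistr}.

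From there, \autoref{thm:ignoreSending} finishes the quantum case. In the classical case the residue after the $t_2$ step is just $\nil$, so you do not need a separate classical analogue of that lemma.

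Also note a typing issue: when $x$ is a qubit, the else branch must be $\nil_x$ (or must return the qubit), not $\nil$. The \textsc{Ite} rule requires both branches to have the same type $\{x\}$.
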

\begin{proof}
	We have to prove that $\simds$ is a labelled bisimulation. The first two conditions are already proven in \autoref{lem:cinesi}, we must check the last two.
	Suppose $\Delta \simds \Theta$ and $\Delta \dmoveto[\mu]{s} \Delta'$. We proceed by cases on $\mu$, and build contexts capable of bringing $\Delta$ to $\Delta'$. All the contexts we will build use  fresh tags, and thus can be legally applied to $\Delta$ and $\Theta$ due to \autoref{lem:deterPar}.

	If $\mu = \tau$, then $B[\Delta] \dmoveto{s} \Delta'$ for the empty context $B[\blank] = \blank$, thus there exists a $\Theta'$ such that $B[\Theta] = \Theta \dmoveto{s} \Theta'$ and $\Delta' \simds \Theta'$.


	If $\mu = c?v$, then $\Delta \parallel t \tags c!v \dmoveto{(s, t)} \Delta' \parallel \nil$. By $\Delta \simds \Theta$, there exists $\Theta'$ such that $\Theta \parallel t \tags c!v \dmoveto{(s,t)} \Theta' \parallel \nil$.
	From the semantics of synchronization we also know that $\Theta \dmoveto[c?v]{s} \Theta'$ and $\Delta' \parallel \nil \simds \Theta' \parallel \nil$.
	It is possible to show that $\Delta \parallel \nil \simds \Delta$ for any $\Delta$, and thus we get $\Delta' \simds \Theta'$.


	If $\mu = c!v$, when $v$ is a quantum name, we build the context $B[\blank] = [\blank] \parallel t \tags c?x.R$ with $R = \ite{x = v}{t' \tags \tau.t'' \tags c!x}{t'' \tags c!x}$.
	Then we have that $B[\Delta] \dmoveto{(s, t)} \Delta' \parallel R[\sfrac{v}{x}]$, and thus $B[\Theta] \dmoveto{(s, t)} \Theta' \parallel R[\sfrac{u}{x}]$ and $\Theta \dmoveto[c!u]{s} \Theta'$ for some quantum name $u$ possibly different from $v$.
	Thanks to the bisimilarity between $\Delta' \parallel R[\sfrac{v}{x}]$ and $\Theta' \parallel R[\sfrac{u}{x}]$, we can prove $u = v$:
	$\Delta' \parallel R[\sfrac{v}{x}] \dmoveto{t'} \Delta' \parallel t'' \tags c!v$, and also $\Theta' \parallel R[\sfrac{u}{x}]$ must have the same action available, so it must be $u = v$.
	To sum up, we know that $\Theta \dmoveto[c?v]{s} \Theta'$ and that $\Delta' \parallel t'' \tags c!v \simds \Theta' \parallel t'' \tags c!v$.
	This implies $\Delta' \simds \Theta'$ thanks to \autoref{thm:ignoreSending}.

	The case for $\mu = c!v$ for a classical value $v$ is the same, using the context $B[\blank] = \blank \parallel t \tags c?x.\ite{x = v}{t'\tags \tau. \nil}{\nil}$.
\end{proof}

\begin{restatable}{theorem}{correct}\label{thm:correct}
	$\mathord{\simpl} \subseteq \mathord{\simds}$.
\end{restatable}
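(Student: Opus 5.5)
We aim to show that $\simpl$ is itself a scheduled saturated bisimulation, so that $\simpl \subseteq \simds$. Since $\simpl$ is not obviously closed under contexts, we work up-to. We show that $\simpl$ is a saturated bisimulation up-to $\lcl \circ \ccl$, and then conclude by \autoref{thm:lcb-valid}. Well-typedness comes directly from \autoref{def:labBisim}. Equality of masses follows from $\env{\Delta}=\env{\Theta}$, because $\tr(\env{\Delta}) = \mass{\Delta}$.

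The core step is the following. Take $\Delta \simpl \Theta$, a compatible context $B[\blank] = [\blank] \parallel R$ (the empty context is just the transition clause), and a move $\Delta \parallel R \dmoveto{s} \Xi$. We analyse this move by \autoref{lem:distr_cases}, or by \autoref{lem:distr_epsilon_cases} when $\Xi=\epsilon$. There are four cases.

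\begin{enumerate}
\item $\tagset(s)\subseteq\tagset(\Delta)$, so $\Xi = \Delta'\parallel R$ with $\Delta\dmoveto{s}\Delta'$. The transition clause of $\simpl$ gives $\Theta\dmoveto{s}\Theta'$ with $\Delta'\simpl\Theta'$. Then $\Theta\parallel R\dmoveto{s}\Theta'\parallel R$, and this pair lies in $\ccl(\simpl)$.
\item $\tagset(s)\subseteq\tagset(R)$, so $\Xi=\osum_i \E_i(\Delta\parallel R_i)$ with $\nextset(R,s,\tau)=\{\E_i,R_i\}_i$. Here we use the superoperator clause of $\simpl$. The $\E_i$ act only on qubits owned by $R$ or the environment, so after restricting to the part outside $\Delta$ they are compatible superoperators. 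This gives $\nE_i(\Delta)\simpl\nE_i(\Theta)$, and these have equal masses $\tr(\E_i(\env{\cdot}))$. Rewriting $\E_i(\Delta\parallel R_i)$ as a weighted copy of $\nE_i(\Delta)\parallel R_i$, we see that the two results lie in $\lcl(\ccl(\simpl))$.
\item $\Delta$ inputs $c?v$ and $R$ outputs. The pair is handled by the $c?v$ transition clause followed by $\ccl$.
\item $\Delta$ outputs and $R$ inputs. Here $\Xi=\osum_v\Theta_v\parallel R_v$, summed over the values $v$ that the components of $\Delta$ send. For each $v$ we match $\Delta\dmoveto[c!v]{t_1}\Delta_v$ with $\Theta\dmoveto[c!v]{t_1}\Theta_v$ where $\Delta_v\simpl\Theta_v$. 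By determinism (\autoref{thm:determdistr}), $\Theta\parallel R$ then reduces to $\osum_v\Theta_v\parallel R_v$. Summing gives membership in $\lcl(\ccl(\simpl))$.
\end{enumerate}

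The deadlock case is symmetric via \autoref{lem:distr_epsilon_cases}. A qubit sent to $R$ in a quantum-output step changes ownership. Well-typedness then has to be checked using \autoref{thm:quasipreservation}, so that both sides agree on the new typing.

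The main obstacle I expect is case 4, together with the masses in case 2. In case 4 the output values may differ across the support of $\Delta$. Each $\Delta_v$ is a sub-distribution, so its mass is only partial, and $\lcl$ requires related components of equal mass. This should follow from $\env{\Delta_v}=\env{\Theta_v}$. It also needs the decomposition of $\Delta\parallel R$'s move into the per-$v$ pieces to be mirrored exactly on the $\Theta$ side, and the residual deadlocked part of the support must vanish on both sides.

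In case 2, the subtlety is that superoperators from \autoref{thm:alwaysmove} act on the whole $\hilb_\Sigma$. We must argue that they act trivially on the qubits of $\Delta$. This holds because, by typing, $R$ cannot touch qubits owned by $\Delta$, so each $\E_i$ factors as (identity on $\Sigma'$)$\otimes$(something on the rest). Once these points are settled, the up-to argument closes and $\simpl\subseteq\simds$ follows.
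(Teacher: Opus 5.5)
Your proposal is correct and follows the paper's own proof essentially step for step. You show that $\simpl$ is a saturated bisimulation up-to $\lcl\circ\ccl$, getting mass equality from $\env{\Delta}=\env{\Theta}$. You then split a move of $\Delta\parallel R$ via \autoref{lem:distr_epsilon_cases} and the four cases of \autoref{lem:distr_cases}, using the transition clauses in cases 1, 3 and 4 and superoperator closure in case 2. Your remarks on masses, on typing after a quantum output, and on matching the value sets in case 4 only spell out details the paper leaves implicit. The paper also dispatches the trivial case $\Delta=\Theta=\epsilon$ separately, which you should mention, since \autoref{lem:distr_cases} assumes a non-empty distribution.
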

\begin{proof}
	It is sufficient to show that $\simpl$ is a saturated bisimulation up-to $\lcl  \circ \ccl$.
	Assume $\Delta \simpl \Theta$.
	To see that $\mass{\Delta} = \mass{\Theta}$ it is sufficient to notice that $\env{\Delta} = \env{\Theta}$, by definition of $\simpl$, and that $\mass{\Delta} = tr(\env{\Delta})$ for any $\Delta$.

	We now show that for all $B[\blank], \Delta, \Theta, \Delta'$, if $\Delta \simpl \Theta$ and $B[\Delta] \dmoveto{s} \Delta'$ then
	$\Theta'$ exists such that $B[\Theta] \dmoveto{s} \Theta'$ and $\Delta' \mathrel{\lcl(\ccl(\simpl))} \Theta'$.
	If $B[\blank] = \blank$, then from $\Delta \simpl \Theta$ we have $\Delta' \simpl \Theta'$ for some $\Theta'$ such that $\Delta' \simpl \Theta'$, and $\simpl \subseteq \lcl(\ccl(\simpl))$.

	When $\Delta = \Theta = \epsilon$, we have that $B[\Delta] = B[\Theta]= \epsilon  \dmoveto[\mu]{s} \epsilon$ for any $B$, $\mu$ and $s$, and it is possible to prove that $\epsilon \simds \epsilon$.

	Otherwise, we know that $\Delta \neq \epsilon \neq \Theta$, and that $B[\blank] = \blank \parallel R$ for some $R$.
	If $\Delta\parallel R \dmoveto{s} \epsilon$, then we know from \autoref{lem:distr_epsilon_cases} that:
	\begin{enumerate}
		\item $\Delta \dmoveto[]{s} \epsilon$, which implies that $\Theta \dmoveto[]{s} \epsilon$, since $\Delta \simpl \Theta$;
		\item $next(R, s, \tau) = \emptyset$;
		\item For all $c$ and $v$, if $s = (t_1, t_2)$ we have
		      $$next(R, t_2, c!v) \neq \emptyset \implies \Delta \dmoveto[c?v]{t_1} \epsilon \implies  \Theta \dmoveto[c?v]{t_1} \epsilon$$
		      and
		      $$next(R, t_2, c?v) \neq \emptyset \implies \Delta \dmoveto[c!v]{t_1} \epsilon \implies \Theta \dmoveto[c!v]{t_1} \epsilon.$$
	\end{enumerate}
	In other words, we know all the three points necessary to apply \autoref{lem:distr_epsilon_cases} on $\Theta$ in the reverse direction, obtaining that $\Theta \parallel R \dmoveto[\tau]{s} \epsilon$.

	Suppose instead that $\Delta\parallel R \dmoveto{s} \Delta' \neq \epsilon$, then we know from \autoref{lem:distr_cases} that one of the following is true:
	\begin{enumerate}
		\item $\tagset(s) \subseteq \tagset(\Delta)$ and there exists $Delta''$ such that $\Delta' = \Delta'' \parallel R$ and $\Delta \dmoveto[\tau]{s} \Delta''$.
		      The latter implies $\Theta \dmoveto[\tau]{s} \Theta'$ for some $\Theta' \simpl \Delta''$, since $\Delta \simpl \Theta$.
		      But then, it must be $\tagset(s) \subseteq \tagset(\Theta)$, and from \autoref{lem:distr_cases} we know $\Theta\parallel R \dmoveto[\tau]{s} \Theta' \parallel R$, and $\Delta'' \parallel R \ \ccl(\simpl)\  \Theta' \parallel R$.

		\item $\tagset(s) \subseteq \tagset(R)$ and $\Delta' = \osum_{i \in I}\E_i(\Delta \parallel R_i)$ with $\nextset(R, s, \mu) = \{\E_i, R_i\}_{i \in I}$.
		      But then we know from \autoref{lem:distr_cases} that $\Theta \parallel R \dmoveto[\tau]{s}  \osum_{i \in I}\E_i(\Theta \parallel R_i)$.
		      Thanks to superoperator closure, we know for all the $i$s that $\E_i(\Delta) \simpl \E_i(\Theta)$, and thus $\E_i(\Delta \parallel R_i \ \ccl(\simpl)\  \Theta \parallel R_i$ (notice how $\E_i(\Delta) \parallel R = \E_i(\Delta \parallel R)$).
		      This proves that $\osum_{i \in I}\E_i(\Delta \parallel R_i) \mathrel{\lcl(\ccl(\simpl))} \osum_{i \in I}\E_i(\Theta \parallel R_i)$.

		\item $s = (t_1, t_2)$, $t_1 \in \tagset(\Delta)$, $t_2 \in \tagset(R)$, and $\Delta' = \Delta'' \parallel R'$, $\exists c,v\ldotp \Delta \dmoveto[c?v]{t_1} \Delta''$ with $\nextset(R, t_2, c!v) = \{\I, R'\}$.
		      This implies $\Theta \dmoveto[c?v]{t_1} \Theta'$ for some $\Theta' \simpl \Delta''$, since $\Delta \simpl \Theta$, and thus $\Theta \parallel R \dmoveto[]{s} \Theta' \parallel R'$ thanks to \autoref{lem:distr_cases}, where we have $\Delta'' \parallel R \mathrel{\ccl(\simpl)} \Theta' \parallel R$.

		\item $s = (t_1, t_2)$, $t_1 \in \tagset(\Delta)$, $t_2 \in \tagset(R)$, and $\exists c,V \ldotp \Delta' = \osum_{v \in V}\Delta_v \parallel R_v$, and $ \forall v \in V\ldotp \Delta \dmoveto[c!v]{t_1} \Delta_v$ with $\nextset(R, t_2, c?v) = \{\I, R_v\}$.
		      This implies $\Theta \dmoveto[c!v]{t_1} \Theta_v$ for some $\Theta_v \simpl \Delta_v$ for any $v \in V$, and thus $\Theta \parallel R \dmoveto[]{s} \osum_{v \in V} \Theta_v \parallel R_v = \Theta'$ thanks to \autoref{lem:distr_cases}, where we have $\Delta' \mathrel{\ccl(\simpl)} \Theta'$.
	\end{enumerate}

\end{proof}

\begin{theorem}\label{thm:probcc}
	For any $\Delta$ and $\Theta$, $\Delta \simpl \Theta$ if and only if $\Delta \simds \Theta$.
\end{theorem}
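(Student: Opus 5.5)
This is immediate from the two inclusions established just above. \autoref{thm:complete} gives $\mathord{\simds} \subseteq \mathord{\simpl}$, and \autoref{thm:correct} gives $\mathord{\simpl} \subseteq \mathord{\simds}$. Combining them yields the equality $\mathord{\simpl} = \mathord{\simds}$, that is, $\Delta \simpl \Theta$ if and only if $\Delta \simds \Theta$ for all $\Delta, \Theta$.

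For the \emph{only if} direction I will invoke \autoref{thm:correct}. It is worth recalling why it holds. By the up-to technique of \autoref{thm:lcb-valid}, it suffices to show that $\simpl$ is a saturated bisimulation up to $\lcl \circ \ccl$. Mass equality follows from $\mass{\Delta} = \tr(\env{\Delta})$. The transition clauses under a context $[\blank] \parallel R$ are handled by the case split of \autoref{lem:distr_cases}: the deadlock case uses \autoref{lem:distr_epsilon_cases}, and the case where $R$ moves alone uses superoperator closure.

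For the \emph{if} direction I will invoke \autoref{thm:complete}. Here $\simds$ is shown to satisfy the four clauses of \autoref{def:labBisim}. The environment and superoperator clauses come from \autoref{lem:cinesi}. For the labelled transitions, each label is reproduced by a fresh-tag observer context, and the extra output is stripped away via \autoref{thm:ignoreSending}.

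The only step needing any care is bookkeeping, namely checking that both inclusions are stated over the same domain. Both are stated over well-typed subdistributions of configurations, and both rely on \autoref{ass:det} and on fresh tags through \autoref{lem:deterPar}. So no additional hypothesis is introduced and the combination is direct. I expect no real obstacle here: all the substantive work, in particular the superoperator-closure argument in \autoref{lem:cinesi} that relies on \autoref{thm:propertyA}, was already carried out in those two theorems.
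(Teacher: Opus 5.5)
Your proposal is correct and is exactly the paper's proof. The paper also obtains the theorem by combining the inclusion $\mathord{\simds} \subseteq \mathord{\simpl}$ of Theorem~\ref{thm:complete} with $\mathord{\simpl} \subseteq \mathord{\simds}$ of Theorem~\ref{thm:correct}, and you have matched these correctly to the \emph{if} and \emph{only if} directions.
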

\begin{proof}
	By~\autoref{thm:complete} and~\ref{thm:correct}.
\end{proof}


%
\section{Proofs about Quantum Labelled Bisimulations}
\begin{definition}\label{def:alphaGamma}
	We define the following:
	\begin{itemize}
		\item The function $\alpha : \sdist{\confset} \to \qdist{\procset}$ is given by
		      \begin{align*}
			      \alpha\left(\osum_{i \in I}\osum_{j \in J_i} \delem{p_j}{\conf{\rho_j, P_i}}\right)
			      = \osum_{i \in I} \delem{\left(\sum_{j \in J_i} p_j\rho_j\right)}{P_i}
		      \end{align*}
		      where processes are indexed by $i \in I$, and the states and probabilities associated to a process $P_i$ are indexed by $j \in J_i$.
		\item The function $\gamma : \qdist{\procset} \to \sdist{\confset}$ is given by
		      \begin{align*}
			      \gamma\left(\osum_{i \in I} \delem{\rho_i}{P_i}\right) = \osum_{i \in I} \delem{\tr(\rho_i)}{\conf{\frac{\rho_i}{\tr(\rho_i)}, P_i}}
		      \end{align*}
		      where processes are indexed by $i \in I$.
	\end{itemize}
\end{definition}

\begin{lemma}\label{lm:alphaGammaInverse}
	$\alpha$ is the left inverse of $\gamma$, and $\gamma$ is the left inverse of $\alpha$ up-to bisimilarity.
	\[
		\qd = \alpha(\gamma(\qd)) \qquad
		\Delta \simds \gamma(\alpha(\Delta))
	\]
\end{lemma}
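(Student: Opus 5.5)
The plan is to prove the two claims separately. The first is a direct computation; the second reduces to \autoref{thm:propertyA} together with the linearity of $\simds$ (\autoref{thm:linearity}).

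For $\qd = \alpha(\gamma(\qd))$, write $\qd = \osum_{i \in I} \delem{\rho_i}{P_i}$ with the $P_i$ pairwise distinct and $\rho_i \neq \Z$, so that $I$ indexes exactly the support. Then
\[
\gamma(\qd) = \osum_{i\in I}\delem{\tr(\rho_i)}{\conf{\tfrac{\rho_i}{\tr(\rho_i)},P_i}} .
\]
Since the $P_i$ are distinct, each class of configurations sharing a process $P_i$ is the singleton $J_i=\{i\}$. Applying $\alpha$ therefore yields $\osum_i \delem{\tr(\rho_i)\cdot\rho_i/\tr(\rho_i)}{P_i} = \qd$. Two side points must be checked: that $\gamma(\qd)$ is a well-defined subdistribution, which holds because its mass is $\sum_i\tr(\rho_i) \le 1$ by the definition of quantum distribution, and that typing is preserved.

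For $\Delta \simds \gamma(\alpha(\Delta))$, group $\Delta$ by process as in \autoref{def:alphaGamma}:
\[
\Delta=\osum_{i\in I}\osum_{j\in J_i}\delem{p_j}{\conf{\rho_j,P_i}} .
\]
Set $m_i=\sum_{j\in J_i}p_j>0$ and $\nu_i=\sum_{j\in J_i}(p_j/m_i)\rho_j$, the latter a normalized density operator. Then $\gamma(\alpha(\Delta))=\osum_i\delem{m_i}{\conf{\nu_i,P_i}}$, because $\tr(\sum_j p_j\rho_j)=m_i$. Likewise $\Delta=\osum_i\delem{m_i}{\Delta_i}$ with $\Delta_i=\osum_{j\in J_i}\delem{p_j/m_i}{\conf{\rho_j,P_i}}$, a full distribution over configurations that all share the process $P_i$. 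By \autoref{thm:propertyA} with the trivial right-hand mixture, the identity $\sum_j (p_j/m_i)\rho_j = 1\cdot\nu_i$ gives $\Delta_i\simds\delem{1}{\conf{\nu_i,P_i}}$. Since $\simds$ is linear by \autoref{thm:linearity}, combining these pairs with weights $m_i$ yields $\Delta\simds\gamma(\alpha(\Delta))$.

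The main obstacle is bookkeeping rather than any conceptual step. First, \autoref{thm:propertyA} is stated for full mixtures, which is why the renormalization by $m_i$ is needed before invoking it. Second, one must ensure that every pair is well-typed, i.e.\ that all the $\rho_j$ in a class live on the same $\Sigma$; this follows from the standing restriction to well-typed distributions. Finally, the empty distribution is a degenerate case that is handled trivially, since $\gamma(\alpha(\epsilon))=\epsilon$.
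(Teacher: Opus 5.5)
Your proof is correct and follows the paper's route. The first identity is a direct computation from the definitions of $\alpha$ and $\gamma$, and the second comes from \autoref{thm:propertyA}; your explicit appeal to \autoref{thm:linearity} spells out the per-process gluing step that the paper's one-line proof leaves implicit. The paper additionally cites \autoref{thm:probcc}, but only to transfer the conclusion to $\simpl$, which is the form in which this lemma is used in the subsequent full-abstraction proofs.
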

\begin{proof}
	The first point follows from definition, the second from \autoref{thm:propertyA} and \autoref{thm:probcc}.
\end{proof}

\begin{lemma}\label{lm:alphaGammaSem}
	The function $\alpha$ and $\gamma$ respect transitions, i.e.:
	\begin{itemize}
		\item For any $\Delta, \Delta' \in \sdist{\confset}$, if $\Delta \dmoveto[\mu]{s} \Delta'$ then $\alpha(\Delta) \qmoveto[\mu]{s} \alpha(\Delta')$.
		\item For any $\qd, \qd' \in \qdist{\procset}$, if $\qd \qmoveto[\mu]{s} \qd'$ then $\gamma(\qd) \dmoveto[\mu]{s} \Delta'$ for some $\Delta' \in \sdist{\confset}$ such that $\qd' = \alpha(\Delta')$.
	\end{itemize}
\end{lemma}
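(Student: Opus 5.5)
The plan is to reduce both claims to the single-configuration level and then lift them along the structure of the two semantics. Both $\dmoveto[\mu]{s}$, defined as $\lift(\operatorname{bot}(\cmoveto[\mu]{s}))$, and $\qmoveto[\mu]{s}$, generated by \textsc{Id}, \textsc{Not} and \textsc{Conv}, are obtained by convex (pointwise) combination of moves of singletons with a common scheduler and label. The key observation is that $\alpha$ is additive and commutes with scaling: $\alpha(\osum_i \delem{p_i}{\Delta_i}) = \sum_i p_i\,\alpha(\Delta_i)$ as quantum distributions. This holds because $\alpha$ just groups configurations by process and sums the weighted states.

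For the first item, by \autoref{lem:llift_normal} write $\Delta = \osum_{k} \delem{p_k}{\conf{\rho_k, P_k}}$ and $\Delta' = \osum_k \delem{p_k}{\Delta'_k}$ with $\conf{\rho_k,P_k} \mathrel{\operatorname{bot}(\cmoveto[\mu]{s})} \Delta'_k$. The singleton claim to prove is: if $\conf{\rho,P} \mathrel{\operatorname{bot}(\cmoveto[\mu]{s})} \Xi$, then $\delem{p\rho}{P} \qmoveto[\mu]{s} p\,\alpha(\Xi)$. This goes by induction on the derivation in \autoref{semantics}, comparing rule by rule with \autoref{fig:qsemantics}. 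The rules are in bijection, and each quantum rule is linear in the weight $\rho$, so scaling by $p$ is harmless; I would first note that $\qcmoveto[]{}$ rules are homogeneous in the weight.

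The only interesting base case is \textsc{Meas}. The weight $\tr(M_m^{\tilde q}(\rho))$ times the normalised state $M_m^{\tilde q}(\rho)/\tr(\cdot)$ gives exactly $M_m^{\tilde q}(\rho)$ under $\alpha$, and outcomes with zero trace vanish in both semantics. The \textsc{ParL}/\textsc{ParR}/\textsc{Restr} cases use that $\alpha(\Delta\parallel Q) = \alpha(\Delta)\parallel Q$.

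The deadlock case uses \textsc{Not}. The side conditions of the two semantics differ only in \textsc{Recv} and \textsc{ParL}: the configuration rule requires $v \in \Sigma_\rho\setminus\Sigma_P$, while the quantum rule requires $v\in\Sigma_\rho$. I will check that for well-typed distributions these enabling conditions coincide, using \textsc{ParL}'s exclusion and the typing invariant $\Sigma_P \subseteq \Sigma_\rho$. Moreover, enabledness of a move does not depend on the weight, except that zero weight is not in the support. Then \textsc{Conv} assembles the singleton moves into $\sum_k p_k\alpha(\Delta'_k) = \alpha(\Delta')$ by additivity.

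For the second item, I will induct on the derivation of $\qd \qmoveto[\mu]{s} \qd'$. For \textsc{Id} with a singleton $\delem{\rho}{P}$, take $\gamma(\delem{\rho}{P}) = \delem{\tr\rho}{\conf{\rho/\tr\rho,P}}$. Apply the corresponding configuration rule to obtain $\Xi$, and set $\Delta' = \tr(\rho)\,\Xi$. Then $\alpha(\Delta') = \qd'$ by the same rule-by-rule correspondence as above, read backwards: \textsc{Meas} normalises and then re-multiplies. \textsc{Not} gives $\epsilon$ on both sides, by the same enabledness coincidence. For \textsc{Conv} with $\qd_1\oplus\qd_2$, use the induction hypothesis to get $\Delta'_1,\Delta'_2$ and combine by linearity of the lifting.

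The one subtlety is that $\gamma(\qd_1\oplus\qd_2)$ need not equal $\gamma(\qd_1)\oplus\gamma(\qd_2)$, because $\gamma$ merges equal processes by summing their weights first. I would handle this by first proving the claim for the canonical decomposition, so that a single process carries a single summed weight. Then I would show that any \textsc{Conv} derivation can be normalised to one over the distinct processes of the support, whose moves are again homogeneous, using determinism (\autoref{thm:determdistr}-style uniqueness for singletons). This normalisation, together with the side-condition matching for \textsc{Recv}/\textsc{ParL}, is where I expect the main care to be needed.
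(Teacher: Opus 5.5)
Your plan is the paper's proof. You use a rule-by-rule correspondence between moves of a single configuration and of a single weighted process. For the first item you lift it through the decomposition of $\dmoveto[\mu]{s}$ together with \textsc{Id} and \textsc{Conv}; for the second you treat $\qd$ process by process and assemble $\Delta'$ from the singleton targets. Your normalisation of \textsc{Conv} derivations to the canonical decomposition (one weight per process, via homogeneity of the singleton rules and determinism) is a real improvement. The paper simply assumes that the derivation of $\qd\qmoveto[\mu]{s}\qd'$ already splits along $\qd=\osum_{i}\delem{\rho_i}{P_i}$.

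The step that fails as you plan it is the \textsc{Recv} side condition. Neither of your two tools turns $v\in\Sigma_\rho$ into $v\in\Sigma_\rho\setminus\Sigma_P$:
\begin{itemize}
\item \textsc{ParL}'s exclusion only removes qubits owned by a parallel sibling;
\item $\Sigma_P\subseteq\Sigma_\rho$ says precisely that the receiver's own qubits \emph{are} in $\Sigma_\rho$.
\end{itemize}
Concretely, take $c:\chtype{\qtype}$, $\{q\}\vdash t\tags c?x.\nil_{q,x}$ and $\Sigma_\rho=\{q,q'\}$. The configuration can only receive $q'$, so $\delem{1}{\conf{\rho, t\tags c?x.\nil_{q,x}}}\dmoveto[c?q]{t}\epsilon$. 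The quantum rule, however, gives $\delem{\rho}{t\tags c?x.\nil_{q,x}}\qcmoveto[c?q]{t}\delem{\rho}{\nil_{q,q}}$, and \textsc{Not} is then not applicable. Read literally, the two sides disagree on this move and the first item fails.

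What repairs this is linearity. Substituting an already-owned name for the bound qubit variable yields an untypable continuation ($\nil_{q,q}$ here; in general the name would have to be consumed twice). Such moves are therefore discarded by the standing restriction to well-typed quantum distributions. Alternatively, you can give the quantum \textsc{Recv} rule the configuration premise. The paper's claim that the rules ``match one-to-one'' hides this as well, so you need to make the chosen fix explicit.

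Likewise, your induction on derivations must read \textsc{Not} as a rule for singletons only, as the paper tacitly does. Taken literally, a non-singleton $\qd$ has no $\qcmoveto[]{}$-move at all. \textsc{Not} then derives $\qd\qmoveto[\mu]{s}\epsilon$ for every $\mu,s$, and the second item would be false.
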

\begin{proof}

	By definition of $\alpha$ and since the rules for $\cmoveto[\mu]{s}$ and $\qcmoveto[\mu]{s}$ match one-to-one, it is trivial to show that $\Delta \cmoveto[\mu]{s} \Delta'$ if and only if $\alpha(\Delta) \qcmoveto[\mu]{s} \alpha(\Delta')$.

	Regarding the first point, if $\Delta' = \epsilon$ then every configuration in $\Delta$ is in deadlock, and thus also the corresponding quantum distribution mapped by $\alpha$ are in deadlock.
	If $\Delta' \neq \epsilon$, then $\Delta'$ is obtained via linear combinations of $\Delta'_j$ with weights $p_j$ where $\conf{\rho_j, P_i} \cmoveto[\mu]{s} \Delta'_j$.
	But then $\alpha(\delem{p_j}{\conf{\rho_j, P_i}}) \qcmoveto[\mu]{s} \alpha(\delem{p_j}{\Delta'_j})$ and the result follows from applications of \textsc{Id} and \textsc{Conv}.

	Regarding the second point, let $\qd = \osum_{i \in I}\delem{\rho_i}{P_i}$.
	If $\delem{\rho_i}{P_i} \qmoveto[\mu]{s} \epsilon$ and thus $\delem{\rho_i}{P_i} \qcnmoveto[\mu]{s}$ then also $\conf{\frac{\rho_i}{\tr(\rho_i)}, P_i} \ncmoveto[\mu]{s}$ and $\alpha(\epsilon) = \epsilon$.
	If $\delem{\rho_i}{P_i} \qcmoveto[\mu]{s} \qd'_i$, by \autoref{lm:alphaGammaInverse} and the equivalence noted earlier, $\alpha(\delem{1}{\conf{\rho_i, P_i}}) \qcmoveto[\mu]{s} \alpha(\gamma(\qd'_i))$ if and only if $\conf{\rho_i, P_i} \cmoveto[\mu]{s} \gamma(\qd'_i)$.
	Take $\Delta' = \osum_{i \in I}\gamma(\qd'_i)$, thus trivially $\qd' = \alpha(\Delta')$.
\end{proof}

\begin{lemma}\label{lm:alphaEnv}
	The function $\alpha$ preserves environment, i.e. $\env{\Delta} = \env{\alpha(\Delta)}$
\end{lemma}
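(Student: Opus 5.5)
The plan is to unfold both sides of the claimed equation and reduce it to linearity of the partial trace. Write $\Delta = \osum_{k} \delem{p_k}{\conf{\rho_k, P_k}}$ with typing $(\Sigma, \Sigma') \vdash \Delta$. The case $\Delta = \epsilon$ is immediate: $\alpha(\epsilon) = \epsilon$, and both environments are $0$ by convention.

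For non-empty $\Delta$, group the support by process, exactly as in the definition of $\alpha$. Write $\Delta = \osum_{i \in I}\osum_{j \in J_i} \delem{p_j}{\conf{\rho_j, P_i}}$, so that
\[
\alpha(\Delta) = \osum_{i \in I} \delem{\sum_{j \in J_i} p_j \rho_j}{P_i}.
\]
Before computing, I will check the typing. Every $P_i$ is typed by the same $\Sigma'$, so $(\Sigma, \Sigma') \vdash \alpha(\Delta)$. Hence the set traced out in $\env{\alpha(\Delta)}$ is the same $\Sigma'$ used in the probabilistic $env(\Delta)$. This uses uniqueness of typing (the unique-type theorem) to identify the $\Sigma'$ on both sides.

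Then the computation is
\[
\env{\alpha(\Delta)} = \tr_{\Sigma'}\Bigl(\mass{\alpha(\Delta)}\Bigr) = \tr_{\Sigma'}\Bigl(\sum_{i \in I}\sum_{j \in J_i} p_j \rho_j\Bigr) = \sum_{i \in I}\sum_{j \in J_i} p_j \tr_{\Sigma'}(\rho_j),
\]
where the last step is linearity of the partial trace. The final sum is exactly $\sum_{\conf{\rho,P} \in \supp{\Delta}} \Delta(\conf{\rho,P}) \cdot \tr_{\Sigma'}(\rho) = env(\Delta)$.

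The only delicate point is bookkeeping. If the same configuration appeared under several indices, the weights $\Delta(\conf{\rho,P})$ would be the sums of the corresponding $p_j$. Linearity absorbs this, so it needs only a remark.

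I also need $\alpha(\Delta) \neq \epsilon$ whenever $\Delta \neq \epsilon$, so that the non-empty clause of $\env{\cdot}$ applies. This holds because the $\rho_j$ are density operators with trace $1$ and the $p_j > 0$, so each weight $\sum_{j \in J_i} p_j \rho_j$ is non-zero. I expect no real obstacle here.
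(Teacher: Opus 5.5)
Your proposal is correct and matches the paper's proof, which is the same direct unfolding: the environment of $\Delta$ is the partial trace over the process qubits of $\sum_i p_i\rho_i$, and this sum is exactly the mass of $\alpha(\Delta)$ (the paper uses linearity of the partial trace implicitly). Your extra checks (the case $\Delta=\epsilon$, the shared type $(\Sigma,\Sigma')$ via uniqueness of typing, and $\alpha(\Delta)\neq\epsilon$) are bookkeeping that the paper leaves unstated.
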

\begin{proof}
	\begin{align*}
		\env{\Delta} & = \env{\osum_{i \in I}\delem{p_i}{\conf{\rho_i, P_i}}} = \tr_{\Sigma}\mass{\osum_{i \in I}\delem{p_i}{\conf{\rho_i, P_i}}}                                                        \\
		             & = \tr_{\Sigma}\left(\sum_{i \in I}p_i \cdot \rho_i\right)                                                                                                                         \\
		             & = \tr_{\Sigma}\mass{\osum_{i \in I}\delem{(p_i \cdot \rho_i)}{P_i}} = \tr_{\Sigma}\mass{\alpha\left(\osum_{i \in I}\delem{p_i}{\conf{\rho_i, P_i}}\right)} = \env{\alpha(\Delta)}
	\end{align*}
\end{proof}

\begin{lemma}\label{lm:alphaSop}
	The function $\alpha$ commutes with superoperator application, i.e. $\alpha(\E(\Delta)) = \E(\alpha(\Delta))$
\end{lemma}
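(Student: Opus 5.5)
The proof is a direct unfolding of the definitions of $\alpha$ and of superoperator application on both sides, followed by linearity of $\E$. Here $\E(\Delta)$ on probabilistic distributions is understood as the normalized application $\nE$ introduced before \autoref{lem:conf_nonbot_cases}. In both settings $\E$ is padded with the identity on the qubits outside its domain, in the same way (via $\E^{\tilde{q}}$), so I can treat $\E$ as a superoperator on the whole $\hilb_\Sigma$.

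I first write $\Delta$ in the grouped form used in \autoref{def:alphaGamma}: $\Delta = \osum_{i \in I}\osum_{j \in J_i} \delem{p_j}{\conf{\rho_j, P_i}}$, with the $P_i$ pairwise distinct. Applying $\nE$ gives
\[
\nE(\Delta) = \osum_{i \in I}\osum_{j \in J_i} \delem{p_j\tr(\E(\rho_j))}{\conf{\tfrac{\E(\rho_j)}{\tr(\E(\rho_j))}, P_i}}.
\]
Terms with $\tr(\E(\rho_j)) = 0$ are dropped, as in \autoref{thm:alwaysmove}. The key observation is that $\nE$ never changes the process component. Hence the result is still grouped by the same processes $P_i$. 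Two distinct states $\rho_j, \rho_{j'}$ may be mapped to the same normalized state and so merge into one configuration, but their weights then simply add. Either way, the grouping index $I$ needed to evaluate $\alpha$ is unchanged.

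Applying $\alpha$, the weight attached to $P_i$ is
\[
\sum_{j \in J_i} p_j \tr(\E(\rho_j)) \cdot \tfrac{\E(\rho_j)}{\tr(\E(\rho_j))} = \sum_{j \in J_i} p_j \E(\rho_j).
\]
In this sum the omitted indices (those with $\tr(\E(\rho_j)) = 0$) contribute nothing. Indeed, $\tr(\E(\rho_j)) = 0$ forces $\E(\rho_j) = \Z$, because $\E(\rho_j)$ is positive semidefinite. By linearity of $\E$ (from its Kraus form), this weight equals $\E\bigl(\sum_{j \in J_i} p_j\rho_j\bigr)$. That is exactly the weight of $P_i$ in $\E(\alpha(\Delta))$, by the definition of superoperator application on quantum distributions.

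A process $P_i$ whose new weight is $\Z$ drops out of the support on both sides, so the two quantum distributions coincide pointwise. The only delicate point is this bookkeeping: vanishing-trace terms, and merging of configurations with equal normalized states. Both are handled by the observations above, so I do not expect a real obstacle.
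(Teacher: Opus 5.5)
Your proposal is correct and follows essentially the same route as the paper: unfold the normalized application of $\E$ and the definition of $\alpha$, then use $\tr(\E(\rho_j))\cdot\frac{\E(\rho_j)}{\tr(\E(\rho_j))} = \E(\rho_j)$ together with linearity of $\E$ on the weights grouped by process. Your extra bookkeeping (dropping zero-trace terms, and merging configurations whose normalized states coincide) makes explicit details that the paper's one-line computation leaves implicit.
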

\begin{proof}
	\begin{align*}
		\alpha(\E(\Delta)) & = \alpha(\E(\osum_{i \in I}\delem{p_i}{\conf{\rho_i, P_i}}))                                                            \\
		                   & = \alpha\left(\osum_{i \in I}\delem{(p_i \cdot \tr(\E(\rho_i)))}{\conf{\frac{\E(\rho_i)}{\tr{\E(\rho_i)}}, P_i}}\right) \\
		                   & = \osum_{i \in I}\delem{(p_i \cdot \E(\rho_i))}{P_i}                                                                    \\
		                   & = \E\left(\osum_{i \in I}\delem{(p_i \cdot \rho_i)}{P_i}\right)                                                         \\
		                   & = \E(\alpha(\Delta))
	\end{align*}
\end{proof}

\begin{lemma}\label{lm:gammaSop}
	The function $\gamma$ commutes with superoperator application up to bisimilarity, i.e. $\gamma(\E(\qd)) \simds \E(\gamma(\qd))$
\end{lemma}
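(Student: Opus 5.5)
The plan is to show that the two distributions are in fact \emph{equal}, not just bisimilar, and then conclude by reflexivity of $\simds$ (the identity relation on well-typed distributions is trivially a scheduled saturated bisimulation). Throughout, $\E(\cdot)$ applied to a distribution of configurations is read as the normalised application $\nE$ defined above, padded with the identity on the qubits outside the environment, as stipulated in the paper.

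First, fix a representation $\qd = \osum_{i \in I} \delem{\rho_i}{P_i}$ in which the $P_i$ are pairwise distinct and every $\rho_i \neq \Z$, i.e.\ $\tr(\rho_i) > 0$. We compute both sides.
\begin{itemize}
\item Left side: $\E(\qd) = \osum_{i \in I} \delem{\E(\rho_i)}{P_i}$. The $P_i$ are still distinct, so $\gamma$ applies componentwise and gives
\[
\gamma(\E(\qd)) = \osum_{i \in I,\ \tr(\E(\rho_i)) \neq 0} \delem{\tr(\E(\rho_i))}{\conf{\tfrac{\E(\rho_i)}{\tr(\E(\rho_i))}, P_i}}.
\]
Here components with $\E(\rho_i) = \Z$ drop out of the support.
\item Right side: $\gamma(\qd) = \osum_{i} \delem{\tr(\rho_i)}{\conf{\hat\rho_i, P_i}}$ with $\hat\rho_i = \rho_i/\tr(\rho_i)$. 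Hence
\[
\E(\gamma(\qd)) = \osum_{i} \delem{\tr(\rho_i)\,\tr(\E(\hat\rho_i))}{\conf{\tfrac{\E(\hat\rho_i)}{\tr(\E(\hat\rho_i))}, P_i}}.
\]
Again, terms of zero weight vanish.
\end{itemize}

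The key step is linearity of superoperators and of the trace. Since $\E(\hat\rho_i) = \E(\rho_i)/\tr(\rho_i)$, we get:
\begin{itemize}
\item the weights agree: $\tr(\rho_i)\,\tr(\E(\hat\rho_i)) = \tr(\E(\rho_i))$;
\item the normalised states agree: $\E(\hat\rho_i)/\tr(\E(\hat\rho_i)) = \E(\rho_i)/\tr(\E(\rho_i))$, because the scalar $\tr(\rho_i)$ cancels;
\item the vanishing conditions coincide: $\tr(\E(\hat\rho_i)) = 0$ if and only if $\tr(\E(\rho_i)) = 0$.
\end{itemize}
Therefore the two distributions coincide pointwise, so $\gamma(\E(\qd)) = \E(\gamma(\qd))$. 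The typing is the same on both sides, because neither $\E$ nor $\gamma$ changes the processes.

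The only delicate points are bookkeeping rather than substance. The first is well-definedness of $\gamma$ with respect to the chosen representation of $\qd$; grouping by distinct processes makes this canonical. The second is checking that the padded $\E$ acts only on environment qubits, so that no process-owned qubit is affected and typing is preserved. If one prefers not to rely on the canonical form, a fallback is to note that any two representations yield distributions related by \autoref{thm:propertyA} together with linearity (\autoref{thm:linearity}). That would give the weaker ``up to bisimilarity'' statement, which is exactly the form claimed.
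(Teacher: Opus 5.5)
Your proof is correct, but it takes a different route from the paper's.

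\textbf{The paper's route.} It argues abstractly. It rewrites $\qd$ as $\alpha(\gamma(\qd))$ and commutes $\E$ past $\alpha$ using \autoref{lm:alphaSop}. It then invokes the second half of \autoref{lm:alphaGammaInverse}, namely $\Delta \simds \gamma(\alpha(\Delta))$. That half is itself derived from \autoref{thm:propertyA} together with \autoref{thm:probcc}.

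\textbf{Your route.} You compute both sides on the canonical representation of $\qd$, with pairwise distinct processes and nonzero weights. You then show the two sides coincide pointwise by linearity of $\E$ and of the trace.

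\textbf{What each buys.} Your argument is elementary and gives a strictly stronger conclusion: equality rather than bisimilarity. It also avoids any reliance on the bisimulation machinery (\autoref{thm:propertyA}, \autoref{thm:probcc}), which keeps the dependencies of \autoref{lm:fullAbstrBkwd} lighter. It further shows that the paper's final $\simds$ step is in fact an equality in this instance. The reason is that $\gamma \circ \alpha$ is the identity on any distribution whose configurations carry pairwise distinct processes, since configuration states have unit trace, and $\E(\gamma(\qd))$ is such a distribution. The paper's version is shorter on the page because it reuses lemmas already proved, and it never spells out the zero-trace bookkeeping. It still tacitly needs $\gamma$ to be read on the canonical form of $\qd$, exactly as you make explicit.
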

\begin{proof}
	Follows by \autoref{lm:alphaSop} and \autoref{lm:alphaGammaInverse}:
	\[\gamma(\E(\Delta)) = \gamma\circ\E\circ\alpha \circ \gamma(\Delta) = \gamma \circ \alpha \circ \E \circ \gamma(\Delta) \simds \E(\gamma(\Delta)).\qedhere\]
\end{proof}

We can now prove full-abstraction.
\begin{lemma}\label{lm:fullAbstrFwd}
	For any two probability distributions $\Delta, \Theta \in \sdist{\confset}$, if $\Delta \simpl \Theta$, then $\alpha(\Delta) \simdl \alpha(\Theta)$
\end{lemma}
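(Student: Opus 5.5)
We show that the image of $\simpl$ under $\alpha$ is a labelled scheduled bisimulation in the sense of \autoref{def:simdl}. Define
\[
\mathord{\rel} = \{(\alpha(\Delta), \alpha(\Theta)) \mid \Delta \simpl \Theta\}.
\]
Since $\alpha(\Delta) \rel \alpha(\Theta)$ whenever $\Delta\simpl\Theta$, it suffices to check four properties of $\rel$: well-typedness, environment preservation, closure under compatible superoperators, and transition closure.

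\emph{Well-typedness.} $\alpha$ leaves the processes and the underlying qubit set unchanged, so $(\Sigma,\Sigma')\vdash\Delta$ implies $(\Sigma,\Sigma')\vdash\alpha(\Delta)$. Related pairs in $\simpl$ share a type, hence so do their images.

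\emph{Environment preservation.} By \autoref{lm:alphaEnv} and the first clause of \autoref{def:labBisim},
\[
\env{\alpha(\Delta)}=\env{\Delta}=\env{\Theta}=\env{\alpha(\Theta)}.
\]
The case $\Delta=\epsilon$ needs a separate remark. Equal environments have equal traces, hence equal masses, so $\Delta=\epsilon$ forces $\Theta=\epsilon$, and both environments are $0$.

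\emph{Superoperator closure.} Take $\E$ compatible with both images. Its domain is the environment $\hilb_{\Sigma\setminus\Sigma'}$, so it is also an admissible superoperator in the second clause of \autoref{def:labBisim}, giving $\nE(\Delta)\simpl\nE(\Theta)$. By \autoref{lm:alphaSop}, $\E(\alpha(\Delta))=\alpha(\nE(\Delta))$, and likewise for $\Theta$. Hence $\E(\alpha(\Delta))\rel\E(\alpha(\Theta))$. The point to verify is that the $\E$ in \autoref{lm:alphaSop} is the normalized action $\nE$ on configurations; the computation in that lemma shows exactly this.

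\emph{Transition closure.} This is the main obstacle, because the move starts from $\alpha(\Delta)$, and $\Delta$ is only one of many preimages. Suppose $\alpha(\Delta)\qmoveto[\mu]{s}\qd'$. By the second part of \autoref{lm:alphaGammaSem}, $\gamma(\alpha(\Delta))\dmoveto[\mu]{s}\Delta''$ for some $\Delta''$ with $\alpha(\Delta'')=\qd'$. We need to transfer this move back to $\Delta$. By \autoref{lm:alphaGammaInverse} and \autoref{thm:probcc}, $\Delta\simpl\gamma(\alpha(\Delta))$. Hence $\Delta\dmoveto[\mu]{s}\Delta'$ for some $\Delta'$ with $\Delta'\simpl\Delta''$. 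Since $\Delta\simpl\Theta$, there is $\Theta'$ with $\Theta\dmoveto[\mu]{s}\Theta'$ and $\Delta'\simpl\Theta'$. By the first part of \autoref{lm:alphaGammaSem}, $\alpha(\Theta)\qmoveto[\mu]{s}\alpha(\Theta')$. Transitivity of $\simpl$ (it is an equivalence, as follows from \autoref{thm:probcc}) gives $\Delta''\simpl\Theta'$. Therefore
\[
\qd'=\alpha(\Delta'')\rel\alpha(\Theta').
\]
The symmetric clause is proved the same way. Each of the four properties holds, so $\rel$ is a labelled scheduled bisimulation, $\rel\subseteq\simdl$, and in particular $\alpha(\Delta)\simdl\alpha(\Theta)$.
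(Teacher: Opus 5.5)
Your proposal is correct and follows essentially the same route as the paper. It uses the same relation $\{(\alpha(\Delta),\alpha(\Theta)) \mid \Delta \simpl \Theta\}$, environment preservation via \autoref{lm:alphaEnv}, superoperator closure via \autoref{lm:alphaSop}, and the same diagram chase through $\gamma(\alpha(\Delta))$ with \autoref{lm:alphaGammaSem} and \autoref{lm:alphaGammaInverse} for transition closure. You also make explicit points the paper leaves implicit: well-typedness, the empty-distribution case, and the transitivity of $\simpl$ used in the final step. That transitivity follows most directly from the fact that composing two labelled bisimulations gives a labelled bisimulation, rather than from \autoref{thm:probcc} alone.
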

\begin{proof}
	We define the relation $\mathord{\rel} \subseteq \qdist{\procset} \times \qdist{\procset}$ as
	\[
		\mathord{\rel} = \left\{(\alpha(\Delta), \alpha(\Theta)) \mid \Delta \simpl \Theta\right\}
	\]
	and prove it to be a labelled quantum bisimilarity.
	It is environment preserving thanks to \autoref{lm:alphaEnv}, as
	\[ \env{\alpha(\Delta)} = \env{\Delta} = \env{\Theta} = \env{\alpha(\Theta)}.\]
	It is superoperator closed thanks to \autoref{lm:alphaSop}, as
	\begin{align*}
		\alpha(\Delta) \rel \alpha(\Theta)
		 & \Rightarrow
		\Delta \simpl \Theta \Rightarrow
		\E(\Delta) \simpl \E(\Theta) \\
		 & \Rightarrow
		\alpha(\E(\Delta)) \rel \alpha(\E(\Theta)) \Rightarrow
		\E(\alpha(\Delta)) \rel \E(\alpha(\Theta)).
	\end{align*}
	All is left to prove is that it is transition closed.
	Suppose that $\alpha(\Delta) \qmoveto[\mu]{s} \qd'$, we will prove that $\alpha(\Theta) \qmoveto[\mu]{s} \qt'$ and $\qd' \rel \qt'$.
	The other direction is symmetrical.

	\[
		\begin{tikzcd}[ampersand replacement=\&]
			\alpha(\Delta) \arrow[dd, "s:\mu"', two heads, tail] \& \gamma(\alpha(\Delta)) \arrow[dd, "s:\mu", tail] \arrow[r, "\simpl", phantom]  \& \Delta \arrow[dd, "s:\mu", tail] \arrow[r, "\simpl", phantom] \& \Theta \arrow[dd, "s:\mu"', tail]             \& \alpha(\Theta) \arrow[dd, "s:\mu", two heads, tail] \\
			{} \arrow[r, "\Rightarrow", phantom]                 \& {} \arrow[r, "\Rightarrow", phantom]                                          \& {} \arrow[r, "\Rightarrow", phantom]                          \& {} \arrow[r, "\Rightarrow", phantom]         \& {}                                                  \\
			\qd'                                                 \& \Delta' \arrow[r, "\simpl", phantom] \arrow[l, "\alpha", dashed, maps to]     \& \Delta'' \arrow[r, "\simpl", phantom]                         \& \Theta' \arrow[r, "\alpha", dashed, maps to] \& \alpha(\Theta')
		\end{tikzcd}
	\]
	From left to right, we have used \autoref{lm:alphaGammaSem}, \autoref{lm:alphaGammaInverse}, definition of $\rel$, \autoref{lm:alphaGammaSem}, and we use the dashed alpha arrow to indicate when a probability distribution gets transformed in a density distribution.
\end{proof}

\begin{lemma}\label{lm:fullAbstrBkwd}
	For any two density distributions $\qd, \qt \in \qdist{\procset}$, if $\qd \simdl \qt$, then $\gamma(\qd) \simpl \gamma(\qt)$
\end{lemma}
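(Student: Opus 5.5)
The plan mirrors \autoref{lm:fullAbstrFwd}, using $\gamma$ in place of $\alpha$ and closing the relation under $\simds$ to absorb the fact that $\gamma$ inverts $\alpha$ only up to bisimilarity. Concretely, I would define
\[
	\mathord{\rel} = \left\{(\Delta, \Theta) \mid \Delta \simds \gamma(\qd),\ \Theta \simds \gamma(\qt),\ \qd \simdl \qt\right\}
\]
and prove that $\rel$ is a labelled bisimulation in the sense of \autoref{def:labBisim}. Since $\simds = \simpl$ by \autoref{thm:probcc}, the relation $\simpl$ is transitive and closed under all four clauses. It then suffices to check the four clauses for pairs $(\gamma(\qd), \gamma(\qt))$ and transfer them across $\simds$ on each side, using the clauses already known for $\simpl$.

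The first two clauses are routine. For the environment clause, \autoref{lm:alphaEnv} together with $\alpha(\gamma(\qd)) = \qd$ (\autoref{lm:alphaGammaInverse}) gives $env(\gamma(\qd)) = \env{\qd} = \env{\qt} = env(\gamma(\qt))$. For superoperator closure, $\nE(\gamma(\qd))$ is defined because $\gamma$ preserves typing. By \autoref{lm:gammaSop}, $\nE(\gamma(\qd)) \simds \gamma(\E(\qd))$. Superoperator closure of $\simdl$ gives $\E(\qd) \simdl \E(\qt)$, so $(\nE(\gamma(\qd)), \nE(\gamma(\qt))) \in \rel$. For a general pair $\Delta \simds \gamma(\qd)$, I would first apply the superoperator clause of $\simpl$ to get $\nE(\Delta) \simds \nE(\gamma(\qd))$, and then compose.

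The transition clause is where most of the work lies. Suppose $\Delta \dmoveto[\mu]{s} \Delta'$. Since $\Delta \simpl \gamma(\qd)$, there is $\Xi$ with $\gamma(\qd) \dmoveto[\mu]{s} \Xi$ and $\Delta' \simds \Xi$. The first item of \autoref{lm:alphaGammaSem} gives $\qd = \alpha(\gamma(\qd)) \qmoveto[\mu]{s} \alpha(\Xi)$. Transition closure of $\simdl$ then yields $\qt \qmoveto[\mu]{s} \qt'$ with $\alpha(\Xi) \simdl \qt'$. The second item of \autoref{lm:alphaGammaSem} gives $\gamma(\qt) \dmoveto[\mu]{s} \Theta''$ with $\alpha(\Theta'') = \qt'$. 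Finally, $\Theta \simpl \gamma(\qt)$ gives $\Theta \dmoveto[\mu]{s} \Theta'$ with $\Theta' \simds \Theta''$.

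It remains to show $(\Delta', \Theta') \in \rel$. The chain gives $\Delta' \simds \Xi \simds \gamma(\alpha(\Xi))$ and $\Theta' \simds \Theta'' \simds \gamma(\alpha(\Theta'')) = \gamma(\qt')$, both via \autoref{lm:alphaGammaInverse}. Together with $\alpha(\Xi) \simdl \qt'$, this is exactly membership in $\rel$. The symmetric clause is handled the same way.

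I expect the main obstacle to be this transition clause, specifically the side conditions in \autoref{lm:alphaGammaSem} and \autoref{lm:alphaGammaInverse}. These use \autoref{thm:propertyA}, which is stated for full distributions and equal processes, so I must make sure it applies to subdistributions. I would handle this by rescaling, using the linearity of $\simds$ from \autoref{thm:linearity}. I also need the determinism of transitions (\autoref{thm:determdistr}) so that the $\Theta''$ produced by $\gamma(\qt)$ is the unique target, which keeps the chain of bisimilarities well defined.
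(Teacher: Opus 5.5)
Your proposal matches the paper's proof. The paper uses the same relation (with $\simpl$ in place of $\simds$, which coincide by \autoref{thm:probcc}), checks the environment and superoperator clauses via \autoref{lm:alphaEnv} and \autoref{lm:gammaSop}, and handles transitions with exactly your chain through \autoref{lm:alphaGammaSem} and \autoref{lm:alphaGammaInverse}. The paper does not take the extra precautions you raise (rescaling for \autoref{thm:propertyA}, and uniqueness of $\Theta''$ via \autoref{thm:determdistr}); the argument only needs the matching moves to exist.
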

\begin{proof}
	We define the relation $\mathord{\rel} \subseteq \sdist{\confset} \times \sdist{\confset}$ as
	\[
		\mathord{\rel} = \left\{(\Delta, \Theta) \mid \qd \simdl \qt, \Delta \simpl \gamma(\qd), \Theta \simpl \gamma(\qt) \right\}
	\]
	and prove it to be a labelled bisimilarity.
	It is environment preserving thanks to \autoref{lm:alphaEnv}, as
	\[ \env{\Delta} = \env{\gamma(\qd)} = \env{\gamma(\qt)} = \env{\Theta}.
	\]
	It is superoperator closed thanks to \autoref{lm:gammaSop} and the fact that $\simpl$ is superoperator closed, as
	\begin{align*}
		\Delta \rel \Theta
		 & \Rightarrow
		\Delta \simpl \gamma(\qd) \wedge \Theta \simpl \gamma(\qt)                                                               \\
		 & \Rightarrow
		\E(\Delta) \simpl \E(\gamma(\qd)) \simpl \gamma(\E(\qd)) \wedge \E(\Theta) \simpl \E(\gamma(\qt)) \simpl \gamma(\E(\qt)) \\
		 & \Rightarrow \E(\Delta) \rel \E(\Theta)
	\end{align*}
	where in the last row we have used the fact that $\qd \simdl \qt$ implies $\E(\qd) \simdl \E(\qt)$.
	All is left to prove is that it is transition closed.
	Suppose that $\Delta \dmoveto[\mu]{s} \Delta'$, we will prove that $\Theta \dmoveto[\mu]{s} \Theta''$ and $\Delta' \rel \Theta''$.
	The other direction is symmetrical.

	\[
		\begin{tikzcd}[ampersand replacement=\&]
			\Delta \arrow[dd, "s:\mu", tail] \arrow[r, "\simpl", phantom] \& \gamma(\qd) \arrow[dd, "s:\mu", tail]         \& \alpha(\gamma(\qd)) \arrow[dd, "s:\mu", two heads, tail] \arrow[r, "=", phantom] \& \qd \arrow[dd, "s:\mu", two heads, tail] \arrow[r, "\simdl", phantom] \& \qt \arrow[dd, "s:\mu", two heads, tail] \& \gamma(\qt) \arrow[dd, "s:\mu", tail] \arrow[r, "\simpl", phantom]        \& \Theta \arrow[dd, "s:\mu", tail] \\
			{} \arrow[r, "\Rightarrow", phantom]                         \& {} \arrow[r, "\Rightarrow", phantom]          \& {} \arrow[r, "\Leftrightarrow", phantom]                                         \& {} \arrow[r, "\Rightarrow", phantom]                                  \& {} \arrow[r, "\Rightarrow", phantom]     \& {} \arrow[r, "\Rightarrow", phantom]                                     \& {}                               \\
			\Delta' \arrow[r, "\simpl", phantom]                          \& \Delta'' \arrow[r, "\alpha", dashed, maps to] \& \alpha(\Delta'') \arrow[r, "=", phantom]                                         \& \alpha(\Delta'') \arrow[r, "\simdl", phantom]                         \& \qt'                                     \& \Theta' \arrow[r, "\simpl", phantom] \arrow[l, "\alpha", dashed, maps to] \& \Theta''
		\end{tikzcd}
	\]
	From left to right, we have used the definition of $\rel$, \autoref{lm:alphaGammaSem}, \autoref{lm:alphaGammaInverse}, definition of $\rel$, \autoref{lm:alphaGammaSem}, definition of $\rel$,
	and we use the dashed alpha arrow to indicate when a probability distribution gets transformed in a density distribution.
	To show that $\Delta' \rel \Theta''$, it is sufficient to notice that $\alpha(\Delta'') \simdl \qt$, and if we apply $\gamma$ to both we get by \autoref{lm:alphaGammaInverse} $\gamma(\alpha(\Delta'')) \simpl \Delta'' \simpl \Delta'$ and $\gamma(\qt) = \gamma(\alpha(\Theta')) \simpl \Theta' \simpl \Theta''$.
\end{proof}

\begin{lemma}\label{lm:fullAbstr}
	For any $\Delta, \Theta \in \sdist{\confset}$, $\Delta \simds \Theta$ if and only if $\alpha(\Delta) \simdl \alpha(\Theta)$.
\end{lemma}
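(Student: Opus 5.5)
The plan is to chain the equivalence $\simds = \simpl$ of \autoref{thm:probcc} with the two transfer lemmas \autoref{lm:fullAbstrFwd} and \autoref{lm:fullAbstrBkwd}, using \autoref{lm:alphaGammaInverse} to move between $\Delta$ and $\gamma(\alpha(\Delta))$.

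For the forward direction, assume $\Delta \simds \Theta$. By \autoref{thm:probcc} we get $\Delta \simpl \Theta$. Then \autoref{lm:fullAbstrFwd} yields $\alpha(\Delta) \simdl \alpha(\Theta)$ directly.

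For the backward direction, assume $\alpha(\Delta) \simdl \alpha(\Theta)$. Applying \autoref{lm:fullAbstrBkwd} with $\qd = \alpha(\Delta)$ and $\qt = \alpha(\Theta)$ gives $\gamma(\alpha(\Delta)) \simpl \gamma(\alpha(\Theta))$, hence $\gamma(\alpha(\Delta)) \simds \gamma(\alpha(\Theta))$ by \autoref{thm:probcc}. By \autoref{lm:alphaGammaInverse}, $\Delta \simds \gamma(\alpha(\Delta))$ and $\Theta \simds \gamma(\alpha(\Theta))$. We then conclude $\Delta \simds \Theta$ by symmetry and transitivity of $\simds$.

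The only point needing care is that $\simds$ is an equivalence, which the chain uses. Reflexivity and symmetry are immediate from the symmetric definition. For transitivity, the composition of two scheduled saturated bisimulations is again one: the mass equalities compose, well-typedness composes because related distributions share a type (so the same contexts are compatible throughout the chain), and by \autoref{thm:determdistr} each move is matched by a unique successor. This also validates the use of transitivity already made in \autoref{lm:alphaGammaInverse}. I expect no substantial obstacle; the main risk is bookkeeping around typing and compatible contexts in the transitivity step.
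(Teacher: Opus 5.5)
Your decomposition is the paper's. The forward direction goes through \autoref{thm:probcc} and \autoref{lm:fullAbstrFwd}, and the backward one through \autoref{lm:fullAbstrBkwd}. The paper's one-line proof leaves implicit the step you spell out, namely getting from $\gamma(\alpha(\Delta)) \simpl \gamma(\alpha(\Theta))$ back to $\Delta$ and $\Theta$. In the paper this step is absorbed into the relation $\{(\Delta,\Theta) \mid \qd \simdl \qt,\ \Delta \simpl \gamma(\qd),\ \Theta \simpl \gamma(\qt)\}$ used in the proof of \autoref{lm:fullAbstrBkwd}. That relation contains your pair once you take $\qd = \alpha(\Delta)$, $\qt = \alpha(\Theta)$ and apply \autoref{lm:alphaGammaInverse}. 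Either way transitivity is needed, and the paper uses it without proof.

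\textbf{The gap: your justification of transitivity.} Compatibility of $B[\blank]$ with a distribution is not only a typing condition; it also requires $\tagset(\Delta) \cap \tagset(B[\blank]) = \emptyset$. Suppose $\Delta \simds \Xi \simds \Theta$ and $B[\blank]$ is compatible with $\Delta$ and $\Theta$. Nothing prevents the witness $\Xi$ from using tags of $B[\blank]$, for instance in a branch that is never enabled such as $\ite{\false}{t \tags \tau}{\nil}$. In that case neither bisimulation can be applied to $B[\Xi]$, so your claim that ``the same contexts are compatible throughout the chain'' is false.

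\textbf{The repair: use \autoref{thm:alpha}.} The clashing tags $\tilde{t} = \tagset(B[\blank]) \cap \tagset(\Xi)$ occur in neither $\Delta$ nor $\Theta$. Renaming them in $\Xi$ to fresh tags $\tilde{t}''$ therefore leaves $\Delta$ and $\Theta$ unchanged. It yields $\Delta \simds \Xi[\sfrac{\tilde{t}''}{\tilde{t}}] \simds \Theta$, with a witness that is compatible with $B[\blank]$. Hence $\{(\Delta,\Theta) \mid \exists \Xi.\ \Delta \simds \Xi \simds \Theta\}$ is a scheduled saturated bisimulation, provided you re-choose the witness for each context. Your appeal to \autoref{thm:determdistr} is unnecessary, since composing strong bisimulations never requires successors to be unique.
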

\begin{proof}
	Let $\Delta \simds \Theta$, by \autoref{thm:probcc} it holds if and only if $\Delta \simpl \Theta$.
	It consequently holds if and only if $\alpha(\Delta) \simdl \alpha(\Theta)$ from \autoref{lm:fullAbstrFwd} and \autoref{lm:fullAbstrBkwd}.
\end{proof}

\demonicfa*
\begin{proof}
	By \autoref{lm:fullAbstr} $\delem{1}{\conf{\rho, P}} \simds \delem{1}{\conf{\sigma, Q}}$ if and only if $\alpha(\delem{1}{\conf{\rho, P}}) \simdl \alpha(\delem{1}{\conf{\sigma, Q}})$ and by definition of $\alpha$, $\delem{\rho}{P} \simdl \delem{\sigma}{Q}$.
\end{proof}

\begin{lemma}\label{thm:move_to_sop_move}
	For any $\qd, \qd' \in \qdist{\procset}$ with $(\Sigma, \Sigma') \vdash \qd$, if $\qd \moveto[\mu]{s} \qd'$ then for all $\E \in \soset{\Sigma \setminus \Sigma'}$ $\E(\qd) \moveto[\mu]{s} \E(\qd')$.
\end{lemma}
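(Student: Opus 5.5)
The plan is to reduce the statement to the single-support case through the rules \textsc{Id}, \textsc{Not} and \textsc{Conv}, and then handle that case by induction on the derivation of $\qcmoveto[\mu]{s}$. Superoperator application on quantum distributions is defined pointwise on weights, $\E(\osum_i \delem{\rho_i}{P_i}) = \osum_i \delem{\E(\rho_i)}{P_i}$, and is linear. So it commutes with $\oplus$, and the \textsc{Conv} case follows immediately from the inductive hypothesis on the two components.

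For \textsc{Not}, suppose $\qd \qcnmoveto[\mu]{s}$ and $\qd' = \epsilon$. Then $\E(\epsilon) = \epsilon$, and we must show that $\E(\qd) \qcnmoveto[\mu]{s}$ too. If $\E(\rho) = \Z$, then $\E(\qd)$ is itself $\epsilon$ (or has smaller support), and $\epsilon \qmoveto[\mu]{s} \epsilon$ holds by \textsc{Not} or \textsc{Conv} of nothing. Otherwise the process component is the same. The key observation is that the applicability of a rule of \autoref{fig:qsemantics} to $\delem{\rho}{P}$ depends only on $P$ and on the set of qubit names $\Sigma_\rho$, never on the value of $\rho$. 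Conditions are classical expressions evaluated by $\Downarrow$, and the side condition of \textsc{Recv} mentions only $\Sigma_\rho$. Hence enabledness is preserved, and \textsc{Id} cases reduce to the core relation.

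For the core relation I would prove, by rule induction, that $\delem{\rho}{P} \qcmoveto[\mu]{s} \osum_k \delem{\F_k(\rho)}{P_k}$ with the $\F_k$ superoperators acting on $\Sigma_P$ (or, for \textsc{Recv}, on $\Sigma_P \cup \{v\}$). Because $\E$ acts on $\Sigma \setminus \Sigma'$, each $\E \otimes \I$ commutes with $\F_k^{\tilde q}$ whenever $\tilde q \subseteq \Sigma'$. This holds by the typing rules \textsc{QOp} and \textsc{QMeas}, together with the fact that the padded operators act on disjoint tensor factors. For example, for \textsc{Meas} we get $\M_m^{\tilde q}(\E(\rho)) = \E(\M_m^{\tilde q}(\rho))$, which gives $\delem{\E(\rho)}{P} \qcmoveto{s} \E(\qd')$. \textsc{Tau}, \textsc{Send} and \textsc{Recv} are trivial since the weight is unchanged, and the structural rules \textsc{Ite}, \textsc{Sum}, \textsc{Par}, \textsc{Synch} and \textsc{Restr} follow from the inductive hypothesis. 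In the parallel case, $\E(\qd) \parallel Q = \E(\qd \parallel Q)$.

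The main obstacle is the typing bookkeeping. For \textsc{Recv}, the received qubit $v$ lies in $\Sigma \setminus \Sigma'$, so $\E$ may act on it; fortunately \textsc{Recv} leaves the weight unchanged, so the claim $\E(\qd') = \delem{\E(\rho)}{P[v/x]}$ still holds. For \textsc{Send} of a qubit, the continuation no longer owns the qubit, but again the weight is unchanged. The delicate point is that the commutation argument is needed only for \textsc{Sop} and \textsc{Meas}, where the acted qubits belong to $\Sigma_P = \Sigma'$ by \textsc{QOp} and \textsc{QMeas}. I would check carefully that substitution via \autoref{lem:quantumsubst} keeps the operated qubits inside the process's typing context, so that the disjointness argument goes through.
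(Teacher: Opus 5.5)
Your proposal is correct and follows the same idea as the paper's proof: which rule applies depends only on the process syntax, not on the weight, and $\E$ commutes with the process's own superoperators and measurements because they act on disjoint qubits. Your version spells out the induction through \textsc{Id}, \textsc{Not}, \textsc{Conv} and the core rules, which the paper leaves implicit.
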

\begin{proof}
	First, note that transitions are uniquely determined by the process syntax and thus, all the transitions available for $\qd$ are also available for $\E(\qd)$, although possibly with a different resulting distribution for the application of either a superoperator or a measurement.
	However, it is sufficient to note that, the operator $\E$ and the applications of superoperators and measurements are on distinct qubits and thus they commute.
\end{proof}

\begin{lemma}\label{thm:sop_move_exist}
	For all $\qd, \qd' \in \qdist{\procset}$ with $(\Sigma, \Sigma') \vdash \qd$, and $\E \in \soset{\Sigma \setminus \Sigma'}$, if $\E(\qd) \moveto[\mu]{s} \qd'$ then there exists $\qd'' \in \qdist{\procset}$ such that $\qd \moveto[\mu]{s} \qd''$ and $\qd' = \E(\qd'')$.
\end{lemma}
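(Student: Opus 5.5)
The plan is to follow the shape of \autoref{thm:move_to_sop_move}, but in the reverse direction. The key observation is that in the quantum distribution semantics the availability of a transition, its label, and the resulting processes depend only on the syntax of the processes in the support, never on the weights. The weights are only transformed by superoperators acting on the qubits owned by the processes. The one exception is the \textsc{Not} rule, which is triggered exactly when no \textsc{Id}-derivable move exists.

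First I reduce to single-element distributions. Write $\qd = \osum_{i \in I}\delem{\rho_i}{P_i}$ with distinct $P_i$, so that $\E(\qd) = \osum_{i}\delem{\E(\rho_i)}{P_i}$. Since $\qmoveto[\mu]{s}$ is generated by \textsc{Id}, \textsc{Not} and \textsc{Conv}, and deadlock is determined per process by \autoref{ass:det}, the move $\E(\qd) \qmoveto[\mu]{s} \qd'$ decomposes as $\qd' = \osum_i \qd'_i$, where each $\delem{\E(\rho_i)}{P_i} \qmoveto[\mu]{s} \qd'_i$ is either an $\epsilon$ move by \textsc{Not} or a $\qcmoveto[\mu]{s}$ move. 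Recombining through \textsc{Conv}, it then suffices to find, for each $i$, some $\qd''_i$ with $\delem{\rho_i}{P_i} \qmoveto[\mu]{s} \qd''_i$ and $\qd'_i = \E(\qd''_i)$, and to set $\qd'' = \osum_i \qd''_i$.

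Next I treat the single-element case using the parametric form of moves, the quantum analogue of \autoref{thm:alwaysmove}. I prove by induction on the rules of \autoref{fig:qsemantics} that whether $\delem{\rho}{P} \qcmoveto[\mu]{s}$ holds is independent of $\rho$, as long as $\rho \neq \Z$. The rule \textsc{Recv} depends only on $\Sigma_\rho$, which $\E$ leaves unchanged. When a move exists, it has the form $\delem{\rho}{P} \qcmoveto[\mu]{s} \osum_{k}\delem{\F_k(\rho)}{P_k}$, where the $\F_k$ are superoperators acting on $\Sigma_P$ possibly together with received qubits.

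The \textsc{Not} case then transfers directly. For the \textsc{Id} case I take $\qd''_i = \osum_k \delem{\F_k(\rho_i)}{P_k}$ and need $\F_k(\E(\rho_i)) = \E(\F_k(\rho_i))$. This holds because $\E$ acts on $\Sigma \setminus \Sigma'$ while each $\F_k$ acts on qubits owned by the process. In tensor-padded form they act on disjoint tensor factors, so they commute, by the same argument as in \autoref{thm:move_to_sop_move}.

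The main obstacle is the \textsc{Recv} case, together with the typing bookkeeping it forces. After a label $c?v$, the process may own a qubit $v \in \Sigma\setminus\Sigma'$ on which $\E$ acts, so a subsequent move could act on $v$. For the present step this is harmless: the \textsc{Recv} move does not change the weight, so $\qd'_i = \delem{\E(\rho_i)}{P[\sfrac{v}{x}]} = \E(\delem{\rho_i}{P[\sfrac{v}{x}]})$. For every other label, \autoref{thm:quasipreservation} and the typing rules ensure that the qubits touched by the $\F_k$ lie in $\Sigma'$, which is disjoint from the support of $\E$. A second subtlety is that some $\E(\rho_i)$ may vanish. In that case the corresponding summand is $\epsilon$ on both sides, and $\qd''_i$ can be any move of $\delem{\rho_i}{P_i}$, since $\E$ maps its weights to $\Z$ by linearity and commutation.
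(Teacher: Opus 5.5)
Your proposal is correct. It rests on the same two facts the paper uses: a move exists or not depending only on the syntax and on $\Sigma_\rho$, and the process's own superoperators commute with $\E$ because they act on disjoint qubits, as in \autoref{thm:move_to_sop_move}. The paper, however, only says ``by induction on the transition $\E(\qd) \qmoveto[\mu]{s} \qd'$''. You instead regroup the support by process and first establish a weight-linear parametric form of single-process moves, a quantum analogue of \autoref{thm:alwaysmove}.

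That reorganization matters, because the lemma itself is not a usable induction hypothesis at \textsc{Conv}. The premises may split $\E(\qd)$ into pieces that are not images under $\E$ of any splitting of $\qd$. For instance, with $\E$ the complete dephasing of an environment qubit, $\E(\delem{\kbpl}{P}) = \delem{\frac{1}{2}\I}{P} = \delem{\frac{1}{2}\kbpl}{P} \oplus \delem{\frac{1}{2}\kbm}{P}$, and neither summand has a diagonal weight, so neither lies in the image of $\E$. Your canonical decomposition, together with linearity of the $\mathcal{F}_k$, is the strengthening that makes the argument go through.

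The one step you leave implicit is a routine check: any \textsc{Conv}-derivation, however finely it splits the weight of a given process, sums back to the canonical result. This uses determinism (a unique move per piece for fixed $\mu$ and $s$), weight-independence of availability, and linearity.

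Finally, both your argument and the paper's read \textsc{Not} as firing only on point distributions that cannot move. Taken literally, its premise $\qd \qcnmoveto[\mu]{s}$ holds for every distribution with two distinct processes in its support. Under that literal reading the lemma can fail: \textsc{Not} can be used to discard a non-diagonal multi-element piece of $\E(\qd)$.
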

\begin{proof}
	By induction on the transition $\E(\qd) \moveto[\mu]{s} \qd'$.
\end{proof}

\begin{lemma}\label{thm:g_fullabs_rel}
	For all $\qd, \qt \in \qdist{\procsetnr}$, if $\qd \sim_g \qt$ with $(\Sigma, \Sigma') \vdash \qd$ and $(\Sigma, \Sigma') \vdash \qt$, then $\E(\qd) \sim_g \E(\qt)$ for all $\E \in \soset{\Sigma \setminus \Sigma'}$.
\end{lemma}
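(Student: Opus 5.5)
The plan is a coinductive argument. Define
\[
\mathord{\rel} = \left\{ (\E(\qd), \E(\qt)) \;\middle|\; \qd \simgdl \qt,\ \qd,\qt \in \qdist{\procsetnr},\ (\Sigma,\Sigma') \vdash \qd,\qt,\ \E \in \soset{\Sigma\setminus\Sigma'} \right\}
\]
and show that $\rel$ is a labelled ground scheduled bisimulation. Then $\rel \subseteq \simgdl$, which is the claim. Taking $\E = \I$ gives $\mathord{\simgdl} \subseteq \mathord{\rel}$, and $\rel$ is evidently closed under composition of environment superoperators, since composites of superoperators on $\Sigma\setminus\Sigma'$ are again such superoperators. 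It remains to check that $\rel$ is well-typed, environment-preserving and transition-closed.

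\emph{Well-typedness.} Applying $\E$ only changes weights on the environment qubits, so the types $(\Sigma,\Sigma')$ are preserved.

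\emph{Environment preservation.} Since $\E$ acts on $\Sigma\setminus\Sigma'$ only and the partial trace is over $\Sigma'$, the two operations commute:
\[
\env{\E(\qd)} = \tr_{\Sigma'}(\E(\mass{\qd})) = \E(\tr_{\Sigma'}\mass{\qd}) = \E(\env{\qd}).
\]
Equality of $\env{\qd}$ and $\env{\qt}$ then gives equality of $\env{\E(\qd)}$ and $\env{\E(\qt)}$. The case $\qd = \epsilon$ is trivial.

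\emph{Transition closure.} Suppose $\E(\qd) \qmoveto[\mu]{s} \qd'$. By \autoref{thm:sop_move_exist} there is $\qd''$ with $\qd \qmoveto[\mu]{s} \qd''$ and $\qd' = \E(\qd'')$. Since $\qd \simgdl \qt$, there is $\qt''$ with $\qt \qmoveto[\mu]{s} \qt''$ and $\qd'' \simgdl \qt''$. By \autoref{thm:move_to_sop_move}, $\E(\qt) \qmoveto[\mu]{s} \E(\qt'')$. We then need $(\E(\qd''), \E(\qt'')) \in \rel$, and this requires checking three things.
\begin{enumerate}
\item $\qd'',\qt''$ must still be in $\qdist{\procsetnr}$. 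This is a syntactic fact: $\fcr$ does not grow along transitions, because substitution of received values and measurement outcomes does not create unrestricted quantum inputs.
\item $\E$ must still be compatible with the new type. By \autoref{thm:quasipreservation}, lifted to quantum distributions as in \autoref{thm:distquasipreservation}, the new process type $\Sigma''$ is determined by $\mu$.
\item The symmetric direction follows by the same argument.
\end{enumerate}

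The main obstacle is the second check, and it is exactly where input-restriction is used. If $\mu = \tau$ or $\mu$ is a classical action, the type is unchanged and nothing needs doing. If $\mu = c!q$ with $q$ quantum, the environment grows to $(\Sigma\setminus\Sigma')\cup\{q\}$, and $\E$ padded with the identity on $q$ is still an environment superoperator. The only bad case is $\mu = c?q$ with $c : \chtype{\qtype}$, where a qubit on which $\E$ may act non-trivially is absorbed into the process, so $\E$ stops being compatible. I would rule this case out by arguing that processes in $\procsetnr$ cannot perform such a transition. Every quantum input channel is under a restriction, so by the \textsc{Restr} rule the only transitions involving it are internal synchronizations, which carry label $\tau$ and do not change the global type. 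I would prove this by a small induction on the derivation of $\qcmoveto[c?q]{s}$, using the definition of $\fcr$: the induction shows $c \in \fcr(P)$ whenever $P$ can emit $c?v$ with $c$ quantum. This exclusion is precisely what fails in \autoref{ex:superopneed}.
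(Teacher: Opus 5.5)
Your proof is correct and follows essentially the paper's route. The paper uses the same relation $\{(\E(\qd),\E(\qt))\}$, the same commutation of $\E$ with $\tr_{\Sigma'}$ for environment preservation, and the same transition closure via \autoref{thm:sop_move_exist} and \autoref{thm:move_to_sop_move}, with input-restriction used exactly to exclude quantum receptions that would break compatibility of $\E$. You are in fact more careful than the paper, which neither restricts its relation to $\procsetnr$ nor checks that $\procsetnr$ is preserved by transitions. One small case is missing: lifted $c?q$ moves with $c$ quantum still exist via \textsc{Not}. They lead to $\epsilon$ on both sides, and $(\epsilon,\epsilon)$ lies in your relation trivially.
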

\begin{proof}
	We will prove that the relation
	\[
		\soset{\sim_g} = \{(\E(\qd), \E(\qt)) \mid (\Sigma, \Sigma') \vdash \qd, (\Sigma, \Sigma') \vdash \qt,  \qd \sim_g \qt, \E \in \soset{\Sigma \setminus \Sigma'} \}
	\]
	is a ground bisimulation.

	The relation $\soset{\sim_g}$ is trivially well-typed because superoperator application does not affect the support of distributions. 
	Moreover, it is environment preserving, since the superoperator $\E$ only affects the qubits in the environment and thus commutes with the partial trace, i.e. $\tr_{\Sigma'}(\qs(\E(\qd))) = \E(\tr_{\Sigma'}(\qs(\qd)) = \E(\tr_{\Sigma'}(\qs(\qt)) = \tr_{\Sigma'}(\qs(\E(\qt)))$.

	We show transition closure by considering the case of reception separately.
	Assume a transition $\E(\qd) \moveto[\mu]{s} \qd'$ exists with $\mu \neq c?e$.
	By~\autoref{thm:sop_move_exist} there exists a $\qd''$ such that $\qd \moveto[\mu]{s} \qd''$ and $\qd' = \E(\qd'')$.
	However, $\qd \sim_g \qt$, therefore, there exists a $\qt''$ such that $\qt \moveto[\mu]{s} \qt''$ and $\qd'' \sim_g \qt''$.
	Since the move is not a reception, the environment of $\qd''$ and $\qt''$ contains the same qubit or one more, thus by~\autoref{thm:move_to_sop_move} $\E(\qt) \moveto[\mu]{s} \E(\qt'')$, and by definition of $\soset{\sim_g}$ we have $\E(\qd'') \mathrel{\soset{\sim_g}} \E(\qt'')$.

	Assume a transition $\E(\qd) \moveto[c?e]{s} \qd'$ exists.
	As in the previous case, $\qd \sim_g \qt$ together with~\autoref{thm:sop_move_exist} and~\autoref{thm:move_to_sop_move} imply that there exists a $\qt'$ such that $\E(\qt) \moveto[c?e]{s} \qt'$ with $\qd' \sim_g \qt'$.
	Note that $c$ cannot be a quantum channel because of the syntactic constraints of $\procsetnr$, and the type of $\qd$ is the same of the one of $qd'$.
	Then, since $\I(\qd') = \qd'$ and $\I(\qt') = \qt'$ we also have $\qd' \mathrel{\soset{\sim_g}} \qt'$.
\end{proof}

\begin{corollary}
	Let $\qd, \qt \in \qdist{\procset}$.
	If $\qd \simgdl \qt$ then $\qd \simdl \qt$.
\end{corollary}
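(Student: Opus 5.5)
The plan is to show that $\simgdl$ itself is a labelled scheduled bisimulation in the sense of~\autoref{def:simdl}. Then $\mathord{\simgdl} \subseteq \mathord{\simdl}$ follows at once, since $\simdl$ is the largest such relation. Comparing~\autoref{def:simgdl} with~\autoref{def:simdl}, three of the four required properties come for free: $\simgdl$ is by definition well-typed, transition-closed and environment-preserving. The whole proof therefore reduces to one claim: if $\qd \simgdl \qt$ with $(\Sigma,\Sigma') \vdash \qd,\qt$, then $\E(\qd) \simgdl \E(\qt)$ for every compatible $\E \in \soset{\hilb_{\Sigma\setminus\Sigma'}}$.

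For this closure I would reuse the argument of \autoref{thm:g_fullabs_rel}. Define $\soset{\simgdl} = \{(\E(\qd),\E(\qt)) \mid \qd \simgdl \qt,\ \E \text{ compatible}\}$ and show it is a ground bisimulation.

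\begin{itemize}
\item \emph{Well-typedness} holds because $\E$ does not touch supports.
\item \emph{Environment preservation} holds because $\E$ acts only on environment qubits and so commutes with $\tr_{\Sigma'}$.
\item \emph{Transition closure, non-input moves.} For every move other than a quantum input, \autoref{thm:sop_move_exist} pulls a move of $\E(\qd)$ back to a move $\qd \qmoveto[\mu]{s} \qd''$ with $\E(\qd'') = \qd'$. The hypothesis $\qd \simgdl \qt$ then gives a matching move $\qt \qmoveto[\mu]{s} \qt''$ with $\qd'' \simgdl \qt''$. \autoref{thm:move_to_sop_move} pushes this forward to $\E(\qt) \qmoveto[\mu]{s} \E(\qt'')$. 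For an output $c!q$ the environment grows, but $\E$ remains compatible after padding, so the resulting pair is again in $\soset{\simgdl}$.
\end{itemize}
Since $\E = \I$ is compatible, $\soset{\simgdl}$ contains $\simgdl$, and maximality of $\simgdl$ yields the claim.

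The main obstacle is the quantum reception case $\mu = c?q$ with $c : \chtype{\qtype}$. After such a move the received qubit $q$ leaves the environment and becomes owned by the process. The continuation is then $\E(\qd'')$ where $\E$ acts on a qubit that is no longer environmental, so the pair is not of the form required by $\soset{\simgdl}$. I would first try to split $\E$: write it as something acting on $\Sigma\setminus(\Sigma'\cup\{q\})$ composed with a residual part on $q$, and absorb the residual into the processes.

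I expect this attempt to fail for arbitrary processes. \autoref{ex:superopneed} exhibits exactly two distributions, receiving a qubit and measuring it in different bases, that are related by the largest environment-preserving transition-closed relation yet are separated by applying $SH$ to the environment qubit before reception. So for general $\qdist{\procset}$ this step cannot be completed, and the inclusion as literally stated is contradicted by that example. The argument above goes through precisely when no unrestricted quantum input occurs. That is the situation of \autoref{thm:g_fullabs_rel}, where the reception case only involves classical channels, the environment is unchanged, and one closes with $\I$.
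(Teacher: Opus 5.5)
You are right that the corollary as printed, over all of $\qdist{\procset}$, is false: it is refuted by the paper's own \autoref{ex:superopneed}, whose two distributions are ground bisimilar but are separated once $SH$ is applied to the environment qubit. The paper's actual justification, \autoref{thm:g_fullabs_rel}, only covers $\qdist{\procsetnr}$. For that intended version your argument is exactly the paper's proof: you show that $\soset{\simgdl}$ is a ground bisimulation, pulling moves back with \autoref{thm:sop_move_exist}, pushing matches forward with \autoref{thm:move_to_sop_move}, and noting that a quantum output only enlarges the environment.

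The only nit is in your closing remark about classical receptions. They should be handled exactly as in your non-input bullet, keeping the same $\E$. Closing with $\I$ would presuppose $\E(\qd'') \simgdl \E(\qt'')$, which is what is being proved.
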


\begin{lemma}\label{lm:GroundfullAbstr}
	For any $\Delta, \Theta \in \sdist{\procsetnr}$, $\Delta \simds \Theta$ if and only if $\alpha(\Delta) \simgdl \alpha(\Theta)$.
\end{lemma}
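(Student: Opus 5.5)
The plan is to reduce the statement to \autoref{lm:fullAbstr} by showing that, on input-restricted processes, $\simgdl$ and $\simdl$ coincide on the relevant distributions. Concretely, I would chain
\[
\Delta \simds \Theta \iff \alpha(\Delta) \simdl \alpha(\Theta) \iff \alpha(\Delta) \simgdl \alpha(\Theta),
\]
where the first equivalence is exactly \autoref{lm:fullAbstr}. This holds for arbitrary distributions, in particular for those over $\procsetnr$, and $\alpha$ maps such distributions to quantum distributions whose support lies in $\procsetnr$.

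For the second equivalence, one direction is immediate. Every labelled scheduled bisimulation is in particular well-typed, transition-closed and environment-preserving, hence a labelled ground scheduled bisimulation. Therefore $\simdl \subseteq \simgdl$, with no restriction on processes. For the converse I would show that $\simgdl$, restricted to pairs of quantum distributions over $\procsetnr$, is itself a labelled scheduled bisimulation. Well-typedness, transition closure and environment preservation are inherited from $\simgdl$. Closure under compatible superoperators is exactly \autoref{thm:g_fullabs_rel}.

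One point needs care: the restricted relation must stay within $\procsetnr$ along transitions, so that the invariant is preserved when transition closure is checked. Input-restricted processes evolve only into input-restricted processes. By inspection of the rules in \autoref{fig:qsemantics}, each rule takes a process to one of its syntactic subterms, possibly after substitution of values. Such subterms, and their substitution instances, still have no unrestricted quantum input channels, since $\fcr$ is computed compositionally and substitution does not affect channel names. The restriction case needs $\fcr(P\setminus c)=\fcr(P)\setminus\{c\}$. After $\textsc{Restr}$ the continuation keeps the restriction, so the inner $c$-inputs stay bound. Given this invariant, the restricted $\simgdl$ is a labelled scheduled bisimulation, and it is contained in $\simdl$. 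This is essentially the content of the corollary stated right after \autoref{thm:g_fullabs_rel}, read for $\procsetnr$.

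The main obstacle I expect is making \autoref{thm:g_fullabs_rel} interact correctly with this invariant. Its proof treats reception separately, relying on the claim that classical receptions do not change the type, so that the identity superoperator suffices. I would check explicitly that, because quantum receptions on free channels are absent, no transition can enlarge $\Sigma'$ in a way that makes a previously compatible superoperator incompatible. Sends may shrink $\Sigma'$, which only enlarges the environment, and \autoref{thm:move_to_sop_move} handles that case. Once this is in place, combining the two inclusions with \autoref{lm:fullAbstr} yields the lemma.
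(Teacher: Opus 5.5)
Your proposal is correct and follows essentially the paper's route: the paper also obtains the lemma by combining \autoref{lm:fullAbstr} with the corollary of \autoref{thm:g_fullabs_rel} that $\simgdl$ implies $\simdl$ on input-restricted distributions. Your version makes explicit two points the paper leaves implicit: the trivial inclusion $\simdl \subseteq \simgdl$, and the closure of $\procsetnr$ under transitions, which relies on reading \textsc{Restr} as keeping the restriction on the continuation, as the paper intends.
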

\begin{proof}
	Follows immediately from \autoref{thm:g_fullabs_rel}.
\end{proof}

\demonicgfa*
\begin{proof}
	Follows immediately from \autoref{lm:GroundfullAbstr}.
\end{proof}

\end{document}